\newtheorem{definition}{Definition}[section]
\newtheorem{lemma}[definition]{Lemma}
\newtheorem{proposition}[definition]{Proposition}
\newtheorem{theorem}[definition]{Theorem}
\newtheorem{assumption}[definition]{Assumption}
\theoremstyle{definition}
\newtheorem{remark}[definition]{Remark}
\numberwithin{equation}{section}
\renewcommand{\epsilon}{\varepsilon}
\renewcommand{\phi}{\varphi}
\newcommand{\R}{\mathbb{R}}
\newcommand{\Ss}{\mathbb{S}}
\newcommand{\Z}{\mathbb{Z}}
\newcommand\eps\varepsilon
\newcommand\udarrow{{\uparrow\!\downarrow}}
\renewcommand\rho\varrho
\newcommand\ind{\mathds{1}}
\begin{document}

\title{The Huang--Yang formula for the low-density Fermi gas:  upper bound}

\author[1]{Emanuela L. Giacomelli}
\affil[1]{LMU Munich, Department of Mathematics, Theresienstr. 39, 80333 Munich, Germany}
\author[1]{Christian Hainzl}
\author[1]{Phan Thành Nam}

\author[2]{Robert Seiringer}
\affil[2]{Institute of Science and Technology Austria, Am Campus 1, 3400 Klosterneuburg, Austria}

\maketitle

\abstract{We study the ground state energy of a gas of spin $1/2$ fermions with repulsive short-range interactions. We derive an upper bound that agrees, at low density $\rho$,  with the Huang--Yang conjecture. The latter captures the first three terms in an asymptotic low-density expansion, and in particular the Huang--Yang correction term of order $\rho^{7/3}$. Our trial state is constructed using an adaptation of the bosonic Bogoliubov theory to the Fermi system,  where the correlation structure of fermionic particles is incorporated by quasi-bosonic Bogoliubov transformations. In the latter, it is important to consider a modified zero-energy scattering equation that takes into account the presence of the Fermi sea, in the spirit of the Bethe--Goldstone equation.}

\tableofcontents

\section{Introduction and main result}

Establishing asymptotic formulas for the ground state energy of dilute gases has been an active area in Mathematical Physics in recent years, partly motivated by the advances in the physics of cold atoms and molecules. In the bosonic case, the validity of the Lee--Huang--Yang formula \cite{LHY} for the first two terms in a low-density expansion was recently established \cite{BCS,FS-20,FS-23,YY}, following earlier work  on the first term in \cite{Dys57,LY98}. In this work, we are concerned with the analogous question for fermions, for which the Huang--Yang formula \cite{HY} gives the first {\em three} terms in an asymptotic expansion at low density. We shall establish the validity of this formula, at least as an upper bound. 

We consider a system of $N$ fermions with spin $1/2$ 
in a box $\Lambda:= [-L/2,L/2]^3\subset\mathbb{R}^3$ with periodic boundary conditions. The interaction between particles is described by the periodization $V:\Lambda\to \R$ of a potential $V_\infty:\R^3\to \R$, i.e., $V(x) = \sum_{z\in\Z^3} V_\infty(x+Lz)$, where 
we assume that $V_\infty$ is nonnegative, radial and compactly supported. 
The Hamiltonian describing the system is given by 
\begin{equation}\label{eq: hamiltonian}
  H_N := -\sum_{j=1}^N\Delta_{x_j} + \sum_{1\leq i< j\leq N} V(x_i - x_j), 
\end{equation}
acting on $\bigwedge^N L^2(\Lambda,\mathbb{C}^2)$, the  Hilbert space of antisymmetric square-integrable functions of $N$ space-spin variables $(x_i,\sigma_i)$, with $x_i\in \Lambda$ and  $\sigma_i \in \{\uparrow,\downarrow\}$.  Since $H_N$ is spin-independent, if $N= N_\uparrow + N_\downarrow$, then $H_N$ leaves invariant the subspace $\mathfrak{h}(N_\uparrow, N_\downarrow) \subset \bigwedge^N L^2(\Lambda,\mathbb{C}^2)$ consisting of $N$-body wave functions with exactly $N_{\sigma}$ particles of spin $\sigma\in \{\uparrow,\downarrow\}$. 

The Hamiltonian $H_N$ is bounded from below and can be defined as a self-adjoint operator by Friedrichs' method. The ground state energy in the subspace $\mathfrak{h}(N_\uparrow, N_\downarrow)$ is given by 
\begin{equation}\label{eq: def gs energy}
  E_L(N_\uparrow, N_\downarrow) = \inf_{\psi\in \mathfrak{h}(N_\uparrow, N_\downarrow)}\frac{\langle \psi, H_N\psi\rangle}{\langle \psi, \psi\rangle}, 
\end{equation}
and the ground state energy density in the thermodynamic limit is
\begin{equation}\label{eq: def gs energy density}
  e(\rho_\uparrow, \rho_\downarrow) = \lim_{\substack{L \rightarrow \infty \\ N_\sigma/L^3 \rightarrow \rho_\sigma, \sigma \in \{\uparrow, \downarrow\}}}\frac{E_L(N_\uparrow, N_\downarrow)}{L^3}.
\end{equation}
It is well-known that the thermodynamic limit exists and is independent of the boundary conditions \cite{Ruelle-99,Robinson-71}.

By using a pseudopotential method, Huang and Yang \cite{HY} predicted that in the low-density limit $\rho_\uparrow = \rho_\downarrow = \rho/2 \to 0$,  
\begin{equation}\label{eq:HY}
  e(\rho/2, \rho/2) = \frac{3}{5}(3\pi^2)^{\frac{2}{3}}\rho^{\frac{5}{3}} + 2\pi a \rho^2 + \frac{4}{35}(11 -2\log2) (9\pi)^{\frac{2}{3}} a^2 \rho^{7/3} + o(\rho^{\frac{7}{3}})_{\rho\to 0}
  \end{equation}
  where $a>0$ is the $s$-wave scattering length of the interaction potential $V_\infty$, defined by
  $
  8\pi a = \int_{\R^3} V_\infty (1-\varphi_\infty)
  $
  with $\varphi_\infty:\R^3\to \R$ the solution of the zero-scattering equation 
\begin{equation}\label{eq: zero energy scatt eq-intro}
  2\Delta\varphi_\infty + V_\infty(1-\varphi_\infty) = 0, \qquad \varphi_\infty(x) \rightarrow 0\quad \mbox{as}\quad |x|\rightarrow \infty.
\end{equation}
The Huang--Yang conjecture was derived from a consideration of hard-sphere interactions, but it is expected that the formula extends to a large class of repulsive and short-range interaction potentials. The formula \eqref{eq:HY} is of great interest in the theory of dilute quantum gases, since it exhibits a remarkable universality, 
with the energy depending on the interaction potential only via its scattering length, at least to the order considered. We refer to \cite{NNCS} for an experimental verification.

On the mathematical side,  for a large class of repulsive interactions, including hard spheres, the validity of the first two terms of \eqref{eq:HY} was proved in   \cite{LSS-05}; more precisely,  \cite[Thm.~1]{LSS-05} states that 
\begin{align} \label{eq:LSS}
e(\rho_\uparrow, \rho_\downarrow)= \frac{3}{5}(6\pi^2)^{\frac{2}{3}} \left(  \rho_\uparrow^{5/3} + \rho_\downarrow^{5/3}\right)  +8\pi a \rho_\uparrow\rho_\downarrow + o(\rho^2)_{\rho\to 0}. 
\end{align}
The leading order term comes from the kinetic energy of a non-interacting Fermi gas, while the second-order term provides an effective description of the interaction between components of different spins. (In the case of fermions without spin, this second term vanishes; see \cite{LS2,LS1} for the corresponding result in this case.) An analogue of \eqref{eq:LSS} for the thermodynamic pressure at positive temperature was  derived in \cite{S06}.

An alternative proof of \eqref{eq:LSS} was recently given in \cite{FGHP}, by applying elements of bosonic Bogoliubov theory to Fermi systems. The method in \cite{FGHP} was later extended in \cite{Gia1} to derive an optimal error estimate $O(\rho^{7/3})$ in the upper bound, and will also be of crucial importance in our analysis here. (See also \cite{Lau} for an alternative approach.) 
%
In fact, in the present paper we shall  prove an upper bound containing the precise Huang--Yang correction of order $\rho^{7/3}$. We impose the following conditions on the interaction potential.  
\begin{assumption}\label{asu: potential V}
  The interaction potential $V_\infty\in L^2(\mathbb{R}^3)$ is nonnegative, radial and compactly supported.
  \end{assumption}

In order to state our main result, we shall need the function $F:\R_+\to \R_+$ defined as 
\begin{align}\label{eq:def-F}
F(x) =  \frac {(6\pi^2)^{1/3}}{35} \Biggl(  & {16 x^{7/3}}  \ln x  -   48 \left(x^{7/3} + 1\right)  \ln (1+x^{1/3}) \nonumber \\
&\quad +  6 \left( 15  x^{1/3} - 4x^{2/3} + 33 x + 33 x^{4/3}  -4 x^{5/3} + 15 x^{2}\right) \nonumber\\
 &\quad+   {21} \left( 1 - 6   x^{2/3} +5  x +5   x^{4/3} -6 x^{5/3}
 + x^{7/3} \right) \ln\frac {|1- x^{1/3}|}{1+x^{1/3}} 
  \Biggl).
\end{align}
It is a continuous, increasing function, with $F(1) = \frac{48}{35}(11 -2\log2) (6 \pi^2)^{\frac{1}{3}}$.

\begin{theorem}[Upper bound]\label{thm: main up bd} Let $V_\infty$ be as in Assumption \ref{asu: potential V}, and let $a>0$ be its scattering length. In the low-density limit $\rho_\uparrow+ \rho_\downarrow = \rho \rightarrow 0$, the ground state energy density defined in \eqref{eq: def gs energy density} satisfies
\begin{equation}\label{eq: main thm 1}
  e(\rho_\uparrow, \rho_\downarrow) \leq  \frac{3}{5}(6\pi^2)^{\frac{2}{3}} \left(\rho_\uparrow^{\frac{5}{3}} + \rho_\downarrow^{\frac{5}{3}}\right) + 8\pi a \rho_\uparrow\rho_\downarrow + a^2 \rho_\uparrow^{\frac{7}{3}} F\left(\frac{\rho_\downarrow}{\rho_\uparrow}\right) + O(\rho^{\frac{7}{3} +\frac{1}{9}}),
\end{equation}
where $F$ is defined in \eqref{eq:def-F}. 
In particular, if $\rho_\uparrow = \rho_\downarrow = \rho/2 \to 0$, we have
\begin{equation}\label{eq: main thm 2}
  e(\rho/2, \rho/2) \leq \frac{3}{5}(3\pi^2)^{\frac{2}{3}}\rho^{\frac{5}{3}} + 2\pi a \rho^2 + \frac{4}{35}(11 -2\log2) (9 \pi )^{\frac{2}{3}} a^2 \rho^{7/3} + O(\rho^{\frac{7}{3} + \frac{1}{9}}).
\end{equation}
\end{theorem}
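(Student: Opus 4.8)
The plan is to construct an explicit trial state in $\mathfrak{h}(N_\uparrow,N_\downarrow)$ and estimate its energy, adapting the quasi-bosonic Bogoliubov approach of \cite{FGHP,Gia1} but pushing it to one higher order. The starting point is the Slater determinant $\Phi_{\rm FS}$ filling the two Fermi seas (one per spin), which already produces the kinetic term $\frac{3}{5}(6\pi^2)^{2/3}(\rho_\uparrow^{5/3}+\rho_\downarrow^{5/3})$. To capture the interaction corrections I would dress $\Phi_{\rm FS}$ with a correlation operator $e^{B}$, where $B$ is an (approximately) particle-number-preserving quadratic expression in creation/annihilation operators built from pairs of particles of opposite spin excited above the Fermi surface. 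The crucial point, emphasized in the abstract, is that the pair correlation function must solve a \emph{modified} zero-energy scattering equation that incorporates the Fermi sea projections — a Bethe--Goldstone-type equation — rather than the free equation \eqref{eq: zero energy scatt eq-intro}; it is precisely the $\rho^{7/3}$-order deviation of the Bethe--Goldstone solution from $\varphi_\infty$ that generates the function $F(\rho_\downarrow/\rho_\uparrow)$.

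The key steps, in order: (i) localize the problem by fixing the box size $L$ with $N_\sigma/L^3\to\rho_\sigma$, and replace $V$ by its scattering-length data, controlling the periodization error; (ii) define the Bethe--Goldstone scattering solution $\phi$ on the relevant momentum sets and establish the estimates on $\phi$ and $1-\phi$ (decay, $L^1$/$L^2$ bounds, and the expansion of $\int V(1-\phi)$ in powers of $k_F$) that feed the energy; (iii) apply the quasi-bosonic Bogoliubov transformation $e^{B}$ to $H_N$ and expand $\langle e^{B}\Phi_{\rm FS}, H_N e^{B}\Phi_{\rm FS}\rangle$, isolating the constant (c-number) part, which should reproduce $8\pi a\rho_\uparrow\rho_\downarrow + a^2\rho_\uparrow^{7/3}F(\rho_\downarrow/\rho_\uparrow)$ up to the claimed error; (iv) bound all remaining operator terms — the cubic and quartic pieces, the commutator remainders from $e^{B}$, the non-bosonic corrections to the canonical commutation relations, and the kinetic cost of the correlation — by $O(\rho^{7/3+1/9})$, using the a priori bounds on $\phi$ and standard fermionic second-quantization estimates (number operator bounds, $\|a(f)\|\le\|f\|_2$, Fermi-sea kinetic gaps); (v) finally verify the arithmetic identity that the c-number part equals $a^2\rho_\uparrow^{7/3}F(\rho_\downarrow/\rho_\uparrow)$, and specialize to $\rho_\uparrow=\rho_\downarrow=\rho/2$ using $F(1)=\tfrac{48}{35}(11-2\log 2)(6\pi^2)^{1/3}$ together with $(6\pi^2)^{1/3}(1/2)^{7/3}=\tfrac{1}{2}(3\pi^2)^{2/3}\cdot(\ldots)$ to land on \eqref{eq: main thm 2}; the constant $\tfrac{4}{35}(11-2\log2)(9\pi)^{2/3}$ then follows from $2\cdot(1/2)^{7/3}F(1)$.

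I expect the main obstacle to be step (iv): controlling the error terms uniformly in the density to the precise order $\rho^{7/3+1/9}$. At this order one can no longer treat the correlation structure as a small perturbation of the free scattering solution — the Fermi-sea corrections to the Bethe--Goldstone kernel contribute at exactly the order being computed, so they must be tracked through the Bogoliubov transformation rather than discarded, and yet the genuinely higher-order remainders (third-order scattering, the quartic interaction term beyond its leading contribution, and the commutator tail of $e^B$) must be shown to be subleading. This requires a delicate choice of momentum cutoffs in the definition of $B$ (the exponent $1/9$ in the error presumably reflects the optimal balance between such cutoffs) and sharp decay estimates on the modified scattering solution near the Fermi surface. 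A secondary technical difficulty is handling the periodization of $V$ and the transition to the thermodynamic limit without losing the $\rho^{7/3}$ precision, which is why $\phi$ is defined directly on the torus momentum lattice adapted to the Fermi sea. The c-number computation in step (v), while lengthy, should be routine once the scattering data expansion in step (ii) is in hand.
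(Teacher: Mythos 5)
Your plan correctly identifies the new mechanism: the $\rho^{7/3}$ term is a Bethe--Goldstone correction, i.e., the deviation of a Fermi-sea-aware pair scattering amplitude from the free one. But your proposed trial state — a single Bogoliubov dressing $e^B$ built from the actual Bethe--Goldstone solution $\phi_{r,r'}$ — is precisely the ``more natural'' construction the paper discusses in a Remark and then deliberately avoids. The paper instead splits the unitary into two pieces acting in complementary momentum regimes ($|p|\gtrsim \rho^{1/3-\gamma}$ vs.\ $|p|\lesssim\rho^{1/3-\gamma}$): $T_1$ uses the \emph{free} scattering solution $\varphi$ of \eqref{eq: zero energy scatt eq-intro} and merely renormalizes $\hat V$ to the soft effective kernel $8\pi a\,\widehat\chi_<$, while $T_2$ uses not the true Bethe--Goldstone solution but the closed-form kernel $\hat\eta^\varepsilon_{r,r'}(p)=8\pi a/(\lambda_{r,p}+\lambda_{r',-p}+2\varepsilon)$ of \eqref{eq: def eta epsilon} — the Bethe--Goldstone kernel with its $V$-dependent numerator linearized to $8\pi a$ and a small infrared gap $\varepsilon=\rho^{2/3+\delta}$ inserted in the denominator. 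These simplifications buy two essential things your plan does not address. First, the explicit $\hat\eta^\varepsilon$ admits a heat-kernel time-integral representation \eqref{def:B2} that writes $B_2$ in configuration space, which is what makes the operator estimates of Section~\ref{sec: T2} tractable; no analogous formula exists for the implicit $\phi_{r,r'}$, whose decay and $L^1/L^2$ bounds near the Fermi surface would be much harder to establish and are not available off the shelf. Second, the gap $\varepsilon$ eliminates a logarithmic divergence in the $t$-integrals (Lemma~\ref{lem:t}) that your plan does not anticipate and that would otherwise ruin the $\rho^{7/3}$ precision; it is removed only at the very end (Lemma~\ref{lem: gap constant}). So while your strategy is the right one in broad outline, the specific technical choices in the paper — the two-scale momentum split, the linearization of the Bethe--Goldstone kernel to a fully explicit rational function, and the $\varepsilon$-regularization — are what make step (iv) of your plan feasible; attempting a direct attack with the genuine Bethe--Goldstone solution inside a single exponential would require you to reinvent them. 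Your step (v) is correct, including the specialization at $\rho_\uparrow=\rho_\downarrow$; the evaluation of $F$ is carried out in Appendix~\ref{app: function g}.
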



As we will see in the proof, the contribution $a^2 \rho_\uparrow^{\frac{7}{3}} F({\rho_\downarrow}/{\rho_\uparrow})$ in \eqref{eq: main thm 1} comes from the expression 
\begin{equation}\label{inte:F}
 \frac{ a^2}{8\pi^7}\int_{\mathbb{R}^3} dp \int_{|r| \le k_F^\uparrow }dr \int_{|r^\prime| \le k_F^\downarrow}dr^\prime 
 \left(\frac{1}{2|p|^2} - \frac{\ind_{\{|r+p|>k_{F}^\uparrow\}} \ind_{\{|r^\prime-p|>k_{F}^\downarrow\}}}{|r+p|^2 - |r|^2 + |r^\prime -p|^2 - |r^\prime|^2} \right)
\end{equation}
with $k_{F}^\sigma= (6\pi^2)^{\frac{1}{3}}\rho_\sigma^{1/3}$ the Fermi momenta of the two spin components. In particular, the roles of $\rho_\uparrow$ and $\rho_\downarrow$ are symmetric. In fact, $\rho_\uparrow^{\frac{7}{3}} F\left({\rho_\downarrow}/{\rho_\uparrow}\right)=\rho_\downarrow^{\frac{7}{3}} F\left({\rho_\uparrow}/{\rho_\downarrow}\right)$ since $F(x^{-1})= x^{-7/3} F(x)$ for $x>0$. 
The explicit form of $F$ in \eqref{eq:def-F} has appeared in the physics literature in \cite{Kanno} (see also \cite{ChaWoj,Pera}).

The Huang--Yang correction in \eqref{eq:HY} can be interpreted as the fermionic analogue of the Lee--Huang--Yang correction for the energy of a dilute Bose gas, in the sense that it captures the next to leading order contribution of the interactions, beyond the $\rho^2$ term. The validity of the Lee--Huang--Yang correction was proved in \cite{FS-20,FS-23} (lower bound) and \cite{YY,BCS} (upper bound) (see also \cite{HHNST-23,HHST-24,FJGMOT} for extensions to the free energy at positive temperature). In the analysis of the Bose gas, an important role is played by Bogoliubov's theory  \cite{Bogoliubov-47}, where the correlation between particles is captured by unitary transformations containing quadratic expressions in the creation and annihilation operators on Fock space. The solution of the zero-energy scattering equation \eqref{eq: zero energy scatt eq-intro} naturally enters in these transformations.

Our proof of Theorem \ref{thm: main up bd} is based on an adaptation of the bosonic Bogoliubov theory to Fermi systems. We will interpret suitable pairs of fermions as bosons, and then construct a trial state using the corresponding quasi-bosonic Bogoliubov transformations. This bosonization method goes back to  heuristic arguments in \cite{Sawada-57,SawBruFukBro-57} in the attempt to explain the random phase approximation proposed 
in the 1950s \cite{BohPin-51,BohPin-52,BohPin-53,Pines-53}. In the context of the high-density Fermi gas, the bosonization method has been recently made rigorous to study the correlation energy in the mean-field regime \cite{BNPSS-20,BNPSS-21,BPSS-21,ChrHaiNam-23a,CHN,ChrHaiNam-24}.

For the low-density Fermi gas, the bosonization method was used in \cite{FGHP} to give a new proof of \eqref{eq:LSS},  and was improved further in  \cite{Gia1} to obtain an optimal error estimate $O(\rho^{7/3})$ in the upper bound. However, to obtain the precise Huang--Yang correction of order $\rho^{7/3}$, we need to go substantially beyond the existing analysis. We will in fact introduce two quasi-bosonic Bogoliubov transformations to extract the correlation contribution of the particles. The first Bogoliubov transformation is closely related to the construction in \cite{Gia1}, and involves the solution to the  zero-energy scattering equation in \eqref{eq: zero energy scatt eq-intro}. In the present paper, this transformation carries  an additional momentum cut-off to ensure that all relevant error terms are of order $o(\rho^{7/3})$. The main new tool is the second Bogoliubov transformation, which contains the solution of a modified scattering equation (related to the {Bethe--Goldstone} equation \cite{BG}) taking into account the presence of the Fermi sea at low momentum. The main ingredients of the proof will be explained in the next section. 

Our analysis can be extended in a straightforward way to higher spin fermions or, equivalently, more than two species of spinless fermions. For simplicity, we stick to the case of spin $1/2$ here. We also remark that the method of the present paper can be used to derive a lower bound with the optimal error $O(\rho^{7/3})$; see \cite{Gia3} for details. 

\bigskip

{\bf Note added in proof.} After the completion of the present paper, the matching lower bound was established in \cite{GiaHaiNamSei-25}. Subsequently, the authors of \cite{CWZ-25} studied the free energy of the dilute Fermi gas for temperatures $T<\rho^{2/3+1/6}$ and proved an analogue of the Huang--Yang formula in this case.  Several ingredients of the present work, in particular the analysis of the Bethe--Goldstone equation explaining the Huang--Yang correction and factorization techniques yielding estimates uniform in volume, play an essential role in these subsequent developments.

\bigskip

{\bf Organization of the paper.} In the next section, we shall explain the main structure of the proof. In Sections~\ref{ss:Fock} and~\ref{sec: particle hole} we introduce the fermionic Fock space and the particle-hole transformation, which gives a convenient representation of the correlation Hamiltonian in Proposition~\ref{pro: fermionic transf}. In Section~\ref{sec: almost bos bog transfo} we construct the trial state in terms of two quasi-bosonic unitary transformations $T_1$ and $T_2$, defined in Definition~\ref{def:bosonic-transformations}. A heuristic discussion of the effect of the unitary transformations on the correlation Hamiltonian, and a sketch of the key estimates required, is then given in Section~\ref{sec:sub-strategy}. Section~\ref{sec:prel} contains some preliminary bounds that in particular will allow to reduce the correlation Hamiltonian to a simpler, effective one, as formulated in Proposition~\ref{lem:eff-Hamiltonian}. 
The main technical analysis needed in the proof of Theorem~\ref{thm: main up bd} is done in Sections~\ref{sec:T1} and~\ref{sec: T2}, respectively, where the effect of the two unitary transformations $T_1$ and $T_2$ on the various operators of relevance is investigated. The main results of these two sections are summarized in Propositions~\ref{lem:conjugation-T1} and~\ref{prop:T2}, respectively. Finally, in Section~\ref{sec: conclusions} we combine the results of the previous sections, and complete the proof of Theorem~\ref{thm: main up bd}. The appendix contains various auxiliary results needed in the proof.


\section{Main ingredients in the proof}\label{sec: strategy}

In this section we shall explain in detail the construction of the trial state and the main strategy of the proof of Theorem~\ref{thm: main up bd}.  We will briefly recall the Fock space formalism and the particle-hole transformation,  which will be convenient in order to focus on the excitations around the Fermi sea. Then we will introduce quasi-bosonic transformations, which are our main tool to impose the precise correlations among particles needed to capture the energy to the desired order. Finally we will explain heuristically some key estimates in the computation of the  energy, leading to the upper bound in Theorem \ref{thm: main up bd}.

\subsection{The fermionic Fock space}\label{ss:Fock}
 It will be convenient to work in second quantization. The fermionic Fock space is given by
$$\mathcal{F}_{\mathrm{f}} = \bigoplus_{n\geq 0} \mathcal{F}^{(n)}_{\mathrm{f}},\quad \mathcal{F}^{(n)}_{\mathrm{f}} = \bigwedge^n L^2(\Lambda, \mathbb{C}^2).$$ 
Any $\psi\in\mathcal{F}_{\mathrm{f}}$ is of the form $\psi = (\psi^{(0)}, \psi^{(1)}, \ldots, \psi^{(n)}, \ldots)$ with $\psi^{(n)}= \psi^{(n)}((x_1,\sigma_1), \ldots, (x_n,\sigma_n)) \in \mathcal{F}^{(n)}_{\mathrm{f}}$. The vacuum state will be denoted by $\Omega=(1,0,0,...)$.

For any $f\in L^2(\Lambda, \mathbb{C}^2)$, we denote by $a^\ast(f)$ and $a(f)$ the creation and annihilation operators, respectively. These operators satisfy the canonical commutation relations
\[
  \{a(f), a^\ast(g)\} = \langle f,g\rangle_{L^2(\Lambda;\mathbb{C}^2)}, \quad \{a(f), a(g)\} = \{a^\ast(f), a^\ast (g)\} = 0
\]
for any $f,g\in L^{2}(\Lambda; \mathbb{C}^2)$, where $\{A,B\}=AB+BA$. As a consequence, these operators are bounded, with  
\begin{equation}\label{bound:a}
\|a(f)\| =  \|a^\ast(f)\| = \|f\|_{L^2(\Lambda, \mathbb{C}^2)}.
\end{equation}
For $\sigma \in \{\uparrow, \downarrow\}$ and $k\in \Lambda^*:=(2\pi/L)\mathbb{Z}^3$, we denote $(\delta_\sigma f_k)(x,\sigma') = \delta_{\sigma,\sigma'}{L^{-3/2}} e^{i k\cdot x}$ and 
\begin{equation}\label{eq: def ak a*k}
\hat{a}_{k,\sigma} 
 = a(\delta_{\sigma} f_k) 
 = \frac 1{L^{3/2}} \int_\Lambda dx \, a_{x,\sigma} e^{- i k x}
\end{equation}
where 
we introduced the operator-valued distributions
$a_{x,\sigma} = a(\delta_{x,\sigma})$ with $\delta_{x,\sigma}(y,\sigma^\prime) = \delta_{\sigma,\sigma^\prime}\delta(x-y)$.
Furthermore, we denote by $\mathcal{N}_\sigma$, $\sigma \in\{ \uparrow, \downarrow\}$, the number operators counting the number of particles of spin $\sigma$, which can be written as
\[
 \mathcal{N}_\sigma = \sum_{k\in\Lambda^*} \hat{a}_{k,\sigma}^\ast \hat{a}_{k,\sigma}=  \int_\Lambda dx\, a_{x,\sigma}^\ast {a}_{x,\sigma},
\]
and the total number operator $\mathcal{N}=\sum_{\sigma \in\{\uparrow, \downarrow\}}\mathcal{N}_\sigma$. 


The Hamiltonian $H_N$ in \eqref{eq: hamiltonian} can be extended to the operator on Fock space 
\[
  \mathcal{H} = \sum_{\sigma\in\{\uparrow, \downarrow\}}\int_{\Lambda}dx\, \nabla_x a^\ast_{x,\sigma}\nabla_x a_{x,\sigma} + \frac{1}{2}\sum_{\sigma,\sigma^\prime \in \{\uparrow, \downarrow\}} \int_{\Lambda \times \Lambda} dxdy\, V(x-y) a^\ast_{x,\sigma}a^\ast_{y,\sigma^\prime}a_{y,\sigma^\prime}a_{x,\sigma} .
\]
That is, $H_N$ is the restriction of $\mathcal{H}$ to $\mathcal{F}^{(N)}_{\mathrm{f}} \subset \mathcal{F}_{\mathrm{f}}$. Equivalently, we can write
\[
  \mathcal{H} = \sum_{\sigma\in\{\uparrow, \downarrow\}} \sum_{k\in\Lambda^*} |k|^2 \hat{a}_{k,\sigma}^\ast \hat{a}_{k,\sigma}  + \frac{1}{2 L^3}\sum_{\sigma,\sigma^\prime \in \{\uparrow, \downarrow\}} \sum_{k,p,q\in\Lambda^*} \hat V(k) \hat{a}_{p+k,\sigma}^\ast \hat{a}_{q-k,\sigma'}^\ast  \hat{a}_{q,\sigma'} \hat{a}_{p,\sigma}
 \]
 where $\hat V$ are the Fourier coefficients of $V$. Here and in the following,  we use the convention 
\[
  \hat{f}(p) = \int_{\Lambda} dx\, f(x)e^{-ip\cdot x}\ , \qquad f(x) = \frac{1}{L^3}\sum_{p\in\Lambda^*}\hat{f}(p) e^{ip\cdot x}
\]
for the Fourier transform in a box $\Lambda$. 
(The Fourier transform in $\mathbb{R}^3$ will instead be denoted by $\mathcal{F}$ in the following.)

Finally, we can identity $\mathfrak{h}(N_\uparrow, N_\downarrow)$ with the subspace $\mathcal{F}_{\mathrm{f}}(N_\uparrow, N_\downarrow)\subset \mathcal{F}_{\mathrm{f}}$  where $\mathcal{N}_\sigma = N_\sigma$ for $\sigma\in\{\uparrow,\downarrow\}$ (which is left invariant by $\mathcal{H}$), and write the ground state energy in \eqref{eq: def gs energy} as
\[
  E_{L}(N_\uparrow,N_\downarrow) = \inf_{\psi\in\mathcal{F}_{\mathrm{f}}{(N_\uparrow, N_\downarrow)}}\frac{\langle \psi, \mathcal{H}\psi\rangle}{\langle \psi, \psi \rangle}.
\]

\subsection{The particle-hole transformation}\label{sec: particle hole}


For a non-interacting system, the ground state is given by the free Fermi gas state $\psi_{\mathrm{FFG}}$, which is simply the Slater determinant of the plane waves with momentum inside the Fermi sphere(s), namely 
$$
\psi_{\mathrm{FFG}} = \prod_{\sigma\in \{\uparrow,\downarrow\}} \prod_{k\in \mathcal{B}_F^\sigma} \hat a^*_{k,\sigma} \Omega
$$
where $\mathcal{B}_F^\sigma := \big\{ k\in (2\pi/L)\mathbb{Z}^3\, \,\,\vert\,\,\, |k| \leq k_F^\sigma\big\}$
is the Fermi ball of spin $\sigma$, with  
$k_F^\sigma = (6\pi^2)^{\frac{1}{3}}\rho_\sigma^{\frac{1}{3}} + o(1)_{L\rightarrow \infty}$.
\begin{equation}\label{eq: FFG energy}
  \lim_{L\to \infty}  \frac{E_{\mathrm{FFG}}}{L^3} 
  = \frac{3}{5}(6\pi^2)^{\frac{2}{3}}\left(\rho_\uparrow^{\frac{5}{3}} + \rho_{\downarrow}^{\frac{5}{3}}\right) +\hat{V}(0)\rho_\uparrow\rho_\downarrow + {O}(\rho^{\frac{8}{3}})
\end{equation}
for small $\rho$, where the shorthand notation $\lim_{L\to \infty}$ stands for the thermodynamic limit.


We will denote by $v$ the one-particle  reduced density matrix of $ \psi_{\mathrm{FFG}}$, which is an orthogonal projection with integral kernel given by
%
\begin{equation}\label{eq: def omega}
v_{\sigma,\sigma^\prime}(x;y) = \frac{\delta_{\sigma,\sigma^\prime}}{L^3}\sum_{k\in\mathcal{B}_F^\sigma}e^{ik\cdot (x-y)}.
\end{equation}
Note that $v_{\sigma,\sigma^\prime}(x;y)$ is real-valued. 
We shall also define $u = 1 - v$, with integral kernel
%
\begin{equation} \label{eq: def u,v}
  u_{\sigma, \sigma^\prime}(x;y) = \frac{\delta_{\sigma,\sigma^\prime}}{L^3} \sum_{k\notin\mathcal{B}_F^{\sigma}}e^{ik\cdot(x-y)}, 
\end{equation}
satisfying $u v = v u = 0$. 
We will write $u_\sigma=u_{\sigma,\sigma}$ and $v_\sigma=v_{\sigma,\sigma}$ for simplicity. Furthermore, we define $u_{x,\sigma}(\,\cdot\,) := u_\sigma(\,\cdot\,;x)$ and $v_{x,\sigma}(\,\cdot\,) := v_\sigma(\,\cdot\, ; x)$.

%

As in \cite{FGHP,Gia1} we factor out $E_{\rm FFG}$ by a unitary particle-hole transformation. 

\begin{definition}[Particle-hole transformation]\label{def: ferm bog}
Let $u$,$v: L^2(\Lambda;\mathbb{C}^2)\rightarrow L^2(\Lambda;\mathbb{C}^2)$ be given in \eqref{eq: def omega}--\eqref{eq: def u,v}. The particle-hole transformation $R: \mathcal{F}_{\mathrm{f}}\rightarrow\mathcal{F}_{\mathrm{f}}$ is a unitary operator such that the following properties hold: 
\begin{itemize}
  \item[(i)] The state $R\Omega$ is such that $(R\Omega)^{(n)} = 0$ whenever $n\neq N$ and $(R\Omega)^{(N)} = \Psi_{\mathrm{FFG}}$.
  \item[(ii)] 
  \begin{equation}\label{eq: prop II R}
    R^\ast a_{x,\sigma}^\ast R = a_\sigma^\ast(u_x) + a_\sigma(v_x),
  \end{equation}
  where
  \begin{equation}\label{recall_not}
    a_\sigma^\ast(u_x) = \int_{\Lambda} dy\, u_\sigma(y;x)a_{y,\sigma}^\ast, \qquad a_\sigma(v_x) = \int_{\Lambda} dy\,v_\sigma(y;x) a_{y,\sigma}.
  \end{equation}
\end{itemize}
\end{definition}

\begin{remark}
As explained in \cite{BPS-14}, the particle-hole transformation $R$ is a particular example of a fermionic Bogoliubov transformation, which allows  to focus on the excitations around the free Fermi gas. Equivalently, the transformation $R$ can be defined by 
\begin{equation}
  R^\ast \hat{a}_{k,\sigma} R = \begin{cases} \hat{a}_{k,\sigma} &\mbox{if}\,\,\, k\notin\mathcal{B}_F^\sigma, \\ \hat{a}^\ast_{-k,\sigma} &\mbox{if}\,\,\, k\in \mathcal{B}_F^\sigma, \end{cases}
\end{equation} 
which justifies the name \textit{particle-hole transformation}. 
\end{remark}
Note that we define the particle-hole transformation slightly differently from the one used in the previous work \cite{FGHP,Gia1}, in order to preserve translation-invariance. 
Since $R\Omega = \psi_{\rm FFG}$ we have $E_{\mathrm{FFG}}=  \langle R\Omega, \mathcal{H} R\Omega\rangle $. With the aid of \eqref{eq: prop II R} we can compute $R^* \mathcal{H} R$ explicitly.  Proceeding as in \cite{FGHP} (using \eqref{eq: prop II R} and normal-ordering all the terms), we obtain

%
%
%
%
%
%
%
%
%
\begin{proposition}[Conjugation of $\mathcal{H}$ by $R$]\label{pro: fermionic transf}
  Let $V_\infty$ be as in Assumption \ref{asu: potential V}. Let $\psi\in \mathcal{F}_{\mathrm{f}}$ be a normalized state satisfying $\mathcal{N}_\sigma\psi = N_\sigma\psi$ for $\sigma\in\{\uparrow,\downarrow\}$. 
  Then 
  \begin{equation} 
    \langle \psi, \mathcal{H}\psi\rangle = E_{\mathrm{FFG}} + \langle R^\ast\psi,   \mathcal{H}_{\mathrm{corr}} R^\ast\psi\rangle
      \end{equation}
   where $E_{\mathrm{FFG}}$ is the energy of the free Fermi gas introduced above \eqref{eq: FFG energy} and  $ \mathcal{H}_{\mathrm{corr}}= \mathbb{H}_0 + \mathbb{X} + \sum_{i=1}^4\mathbb{Q}_i$ is the correlation Hamiltonian given by 
%
\begin{eqnarray}\label{eq: def H-corr}
  \mathbb{H}_0 &=& \sum_\sigma\sum_k ||k|^2 -(k_F^\sigma)^2| \hat{a}_{k,\sigma}^\ast \hat{a}_{k,\sigma}, \nonumber\\
  \mathbb{X} &=& \sum_\sigma \int_{\Lambda^2} dxdy\, V(x-y)v_\sigma(x-y)\left(a^\ast_\sigma(u_x)a_\sigma(u_y) -a^\ast_\sigma(v_y)a_\sigma(v_x)\right)\nonumber\\
  \mathbb{Q}_1 &=& \sum_{\sigma,\sigma^\prime}\int_{\Lambda^2} dxdy\, V(x-y) a^\ast_\sigma(u_x)a^\ast_{\sigma}(v_x)a_{\sigma^\prime}(v_y)a_{\sigma^\prime}(u_y)\nonumber
  \\
  && + \frac{1}{2}\sum_{\sigma,\sigma^\prime}\int_{\Lambda^2} dxdy\, V(x-y) \left(a^\ast_\sigma(v_x)a^\ast_{\sigma^\prime}(v_y)a_{\sigma^\prime}(v_y)a_\sigma(v_x) -2a^\ast_\sigma(u_x)a^\ast_{\sigma^\prime}(v_y)a_{\sigma^\prime}(v_y)a_\sigma(u_x)\right)\nonumber,
 \\
  \mathbb{Q}_2 &=& \frac{1}{2}\sum_{\sigma,\sigma^\prime} \int_{\Lambda^2} dxdy\, V(x-y) a^\ast_\sigma(u_x)a^\ast_{\sigma^\prime}(u_y)a^\ast_{\sigma^\prime}(v_y)a^\ast_{\sigma}(v_x) + \mathrm{h.c.}\nonumber
  \\
   \mathbb{Q}_3 &=& \sum_{\sigma,\sigma^\prime} \int_{\Lambda^2} dxdy\, V(x-y)\left(  a^\ast_\sigma(u_x) a^\ast_{\sigma^\prime}(v_y) a^\ast_{\sigma}(v_x)a_{\sigma^\prime}(v_y)- a^\ast_\sigma(u_x) a^\ast_{\sigma^\prime}(u_y) a^\ast_{\sigma}(v_x)a_{\sigma^\prime}(u_y) \right) + \mathrm{h.c.}\nonumber,
  \\
    \mathbb{Q}_4 &=& \frac{1}{2}\sum_{\sigma,\sigma^\prime}\int_{\Lambda^2} dxdy \, V(x-y)  {a}^\ast_\sigma(u_x){a}^\ast_{\sigma^\prime}(u_y){a}_{\sigma^\prime}(u_y){a}_\sigma(u_x) . 
\end{eqnarray}
\end{proposition}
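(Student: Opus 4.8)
The plan is to compute $R^{*}\mathcal{H}R$ explicitly by normal ordering and to identify it, on the sector where $\mathcal{N}_{\sigma}=N_{\sigma}$, with $E_{\mathrm{FFG}}+\mathcal{H}_{\mathrm{corr}}$. Since $R$ is unitary with $R\Omega=\psi_{\mathrm{FFG}}$, for any normalized $\psi$ with $\mathcal{N}_{\sigma}\psi=N_{\sigma}\psi$ we have $\langle\psi,\mathcal{H}\psi\rangle=\langle R^{*}\psi,\,(R^{*}\mathcal{H}R)\,R^{*}\psi\rangle$, so the statement reduces to a purely algebraic identity. Write $\mathcal{H}=\mathbb{T}+\mathbb{V}$, where $\mathbb{T}=\sum_{\sigma}\int_{\Lambda}\nabla_{x}a^{*}_{x,\sigma}\nabla_{x}a_{x,\sigma}\,dx$ is the kinetic part and $\mathbb{V}$ the interaction. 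I would conjugate every operator appearing in $\mathcal{H}$ using $R^{*}a^{*}_{x,\sigma}R=a^{*}_{\sigma}(u_{x})+a_{\sigma}(v_{x})$ together with its adjoint $R^{*}a_{x,\sigma}R=a_{\sigma}(u_{x})+a^{*}_{\sigma}(v_{x})$ (valid because $u,v$ are real, symmetric projections), expand the products, and bring everything to normal order by the canonical anticommutation relations. All the bookkeeping is governed by the three contraction identities $\{a_{\sigma}(u_{x}),a^{*}_{\sigma'}(u_{y})\}=\delta_{\sigma\sigma'}u_{\sigma}(x-y)$, $\{a_{\sigma}(v_{x}),a^{*}_{\sigma'}(v_{y})\}=\delta_{\sigma\sigma'}v_{\sigma}(x-y)$, and $\{a_{\sigma}(u_{x}),a^{*}_{\sigma'}(v_{y})\}=0$, the last because $uv=0$.

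For the kinetic part it is cleanest to pass to momentum space. From $R^{*}\hat a_{k,\sigma}R=\hat a_{k,\sigma}$ for $k\notin\mathcal{B}_{F}^{\sigma}$ and $R^{*}\hat a_{k,\sigma}R=\hat a^{*}_{-k,\sigma}$ for $k\in\mathcal{B}_{F}^{\sigma}$, and using that $\mathcal{B}_{F}^{\sigma}$ is symmetric under $k\mapsto -k$, one obtains
\[
R^{*}\mathbb{T}R=\sum_{\sigma}\sum_{k\in\mathcal{B}_{F}^{\sigma}}|k|^{2}\;+\;\mathbb{H}_{0}\;+\;\sum_{\sigma}(k_{F}^{\sigma})^{2}\bigl(\mathcal{N}_{\sigma}^{>}-\mathcal{N}_{\sigma}^{<}\bigr),
\]
with $\mathcal{N}_{\sigma}^{>}=\sum_{k\notin\mathcal{B}_{F}^{\sigma}}\hat a^{*}_{k,\sigma}\hat a_{k,\sigma}$ and $\mathcal{N}_{\sigma}^{<}=\sum_{k\in\mathcal{B}_{F}^{\sigma}}\hat a^{*}_{k,\sigma}\hat a_{k,\sigma}$. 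The first sum is the kinetic energy of the free Fermi gas, part of $E_{\mathrm{FFG}}$. The last term is exactly where the hypothesis enters: since $R^{*}\mathcal{N}_{\sigma}R=\mathcal{N}_{\sigma}^{>}+|\mathcal{B}_{F}^{\sigma}|-\mathcal{N}_{\sigma}^{<}$ and $N_{\sigma}=|\mathcal{B}_{F}^{\sigma}|$, the state $R^{*}\psi$ satisfies $(\mathcal{N}_{\sigma}^{>}-\mathcal{N}_{\sigma}^{<})R^{*}\psi=0$, so this term drops out of the expectation and only $\mathbb{H}_{0}$ survives. This is precisely what turns the naive kinetic symbol $\sgn(|k|^{2}-(k_{F}^{\sigma})^{2})\,|k|^{2}$ into the gap $\bigl||k|^{2}-(k_{F}^{\sigma})^{2}\bigr|$ of $\mathbb{H}_{0}$.

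For the interaction I would expand $R^{*}\mathbb{V}R$ into the $2^{4}$ monomials obtained by replacing each of the four operators in $a^{*}_{x,\sigma}a^{*}_{y,\sigma'}a_{y,\sigma'}a_{x,\sigma}$ by one of its two pieces, normal-order each, and collect the result by the number of creation/annihilation operators. The $c$-number terms sum to the interaction energy of $\psi_{\mathrm{FFG}}$ (direct term $\hat V(0)\rho_{\uparrow}\rho_{\downarrow}$ plus exchange) and combine with the kinetic constant above to give $E_{\mathrm{FFG}}$ as in \eqref{eq: FFG energy}; this can be cross-checked against $E_{\mathrm{FFG}}=\langle\Omega,R^{*}\mathcal{H}R\,\Omega\rangle$, since every term of $\mathcal{H}_{\mathrm{corr}}$ has vanishing vacuum expectation. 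The quadratic part, after using $uv=0$ to discard the would-be pairing contributions, is exactly $\mathbb{X}$. The quartic part is sorted by normal-ordered structure: the four-creation (and four-annihilation) pieces give $\mathbb{Q}_{2}$, the three-creation/one-annihilation pieces and their adjoints give $\mathbb{Q}_{3}$, the purely $u$-type $a^{*}a^{*}aa$ piece gives $\mathbb{Q}_{4}$, and the remaining two-creation/two-annihilation monomials assemble into $\mathbb{Q}_{1}$.

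The main obstacle is simply the volume of fermionic sign bookkeeping: tracking every anticommutation sign across the sixteen interaction monomials and correctly merging the several partially-contracted contributions into the single expression $\mathbb{X}$. Since this is the same algebra already carried out in \cite{FGHP}---the only difference here being that $R$ is chosen translation invariant, which does not affect any of the identities above---an alternative is to invoke that reference directly and merely verify that the modified $R$ leaves the normal-ordered form of $R^{*}\mathcal{H}R$ unchanged.
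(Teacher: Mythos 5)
Your proposal is correct and follows the same route the paper takes; the paper's ``proof'' is literally a reference to \cite{FGHP} plus ``normal-order all terms,'' and that is exactly what you lay out. Your momentum-space treatment of the kinetic term is the right way to see both why $\bigl||k|^2-(k_F^\sigma)^2\bigr|$ appears and how the sector constraint $\mathcal{N}_\sigma\psi=N_\sigma\psi$ is used: as you observe, $R^*\mathcal{N}_\sigma R = \mathcal{N}^>_\sigma + |\mathcal{B}_F^\sigma| - \mathcal{N}^<_\sigma$ forces $(\mathcal{N}^>_\sigma-\mathcal{N}^<_\sigma)R^*\psi=0$, which kills the term $\sum_\sigma (k_F^\sigma)^2(\mathcal{N}^>_\sigma-\mathcal{N}^<_\sigma)$.

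One point where you should be slightly more careful: you claim that, after discarding pairing contractions via $uv=0$, the quadratic part of $R^*\mathbb{V}R$ is \emph{exactly} $\mathbb{X}$. That is not quite right. The single contraction taken between the two operators carrying the same spatial argument (i.e.\ $a_\sigma(v_x)$ paired with $a^*_\sigma(v_x)$, or the analogous contraction at $y$) produces the diagonal value $v_\sigma(x;x)=\rho_\sigma$, and after integrating $V(x-y)$ this contributes a quadratic term proportional to $\hat V(0)\,\rho\sum_\sigma(\mathcal{N}^>_\sigma - \mathcal{N}^<_\sigma)$. This is a genuine additional piece of the normal-ordered quadratic part: it is \emph{not} absorbed into $\mathbb{X}$, which keeps only the exchange contractions weighted by $v_\sigma(x-y)$. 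Like the kinetic leftover, it drops out of the expectation value precisely because of the sector constraint, so the conclusion is unharmed -- but the hypothesis $\mathcal{N}_\sigma\psi=N_\sigma\psi$ is invoked for the interaction part as well, not only for $\mathbb{T}$. In other words, one should not expect the operator identity $R^*\mathcal{H}R=E_{\mathrm{FFG}}+\mathcal{H}_{\mathrm{corr}}$; what holds is equality up to a multiple of $\sum_\sigma(\mathcal{N}^>_\sigma-\mathcal{N}^<_\sigma)$, which vanishes on the fixed-particle-number subspace.
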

%

In the following, it will be convenient to further decompose $\mathbb{Q}_2 =  \mathbb{Q}_2^\parallel  +  \mathbb{Q}_2^\udarrow$ into a part $\mathbb{Q}_2^\parallel$ involving interactions of particles with the same spin, and $\mathbb{Q}_2^{\udarrow}$ involving interactions of particles of opposite spin, i.e.,
\begin{eqnarray}
 \mathbb{Q}_2^\parallel  &=&  \frac{1}{2}\sum_{\sigma} \int_{\Lambda^2} dxdy\, V(x-y) a^\ast_\sigma(u_x)a^\ast_{\sigma}(u_y)a^\ast_{\sigma}(v_y)a^\ast_{\sigma}(v_x) + \mathrm{h.c.}\nonumber \\
   \mathbb{Q}_2^\udarrow  &=& \frac{1}{2}\sum_{\sigma\neq\sigma^\prime} \int_{\Lambda^2} dxdy\, V(x-y) a^\ast_\sigma(u_x)a^\ast_{\sigma^\prime}(u_y)a^\ast_{\sigma^\prime}(v_y)a^\ast_{\sigma}(v_x) + \mathrm{h.c.} \label{q2ud}
  \end{eqnarray}
For our trial state, the main contribution to the energy will come from the effective correlation Hamiltonian
%
\begin{equation}\label{eq: def eff corr en strategy}
  \mathcal{H}_{\mathrm{corr}}^{\mathrm{eff}} := \mathbb{H}_0 + \mathbb{Q}_2^\udarrow + \mathbb{Q}_4
\end{equation}
(see Section \ref{sec: eff corr en} for more details).

\subsection{Trial state and quasi-bosonic Bogoliubov transformations}\label{sec: almost bos bog transfo}

The trial state we are going to use in order to prove Theorem \ref{thm: main up bd} is of the form 
\begin{equation}\label{eq: def trial state}
 \psi_{\mathrm{trial}}  := RT_1T_2\Omega,
\end{equation}
where $R$ is the particle hole transformation defined in Definition~\ref{def: ferm bog}, $\Omega\in \mathcal{F}_{\mathrm{f}} $ is the vacuum state, and  $T_1$, $T_2$ are certain unitary transformations that will be defined  below. 


The main idea of our approach is to describe the low energy excitations around the Fermi ball by pairs of fermionic particles that display an approximate bosonic behavior. To do that we introduce the quasi-bosonic annihilation operators 
\begin{equation}\label{eq: def bp bpr}
  b_{p,k,\sigma} 
  =\hat{u}_\sigma(p+k)\hat{v}_\sigma(k)\hat{a}_{p+k,\sigma}\hat{a}_{-k,\sigma} \qquad \text{and} \quad  b_{p,\sigma} = \sum_{k\in\Lambda^*}  b_{p,k,\sigma} 
\end{equation}
and their adjoints, the corresponding creation operators, 
where $\hat{u}_\sigma,\hat{v}_\sigma$ are the Fourier coefficients of the kernels introduced in \eqref{eq: def u,v}. In particular, 
\begin{equation}\label{eq: def u hat, v hat}
\hat{u}_\sigma(k) = \begin{cases}
  0 &\mbox{if}\,\,\, |k| \leq k_F^\sigma, \\ 1 &\mbox{if}\,\,\, |k| > k_F^\sigma,
  \end{cases}\qquad \hat{v}_\sigma(k) = \begin{cases} 1 &\mbox{if}\,\,\, |k| \leq k_F^\sigma, \\ 0 &\mbox{if} \,\,\, |k| > k_F^\sigma. \end{cases}
\end{equation}

\begin{remark}
Unlike in \cite{FGHP, Gia1},  we use here the sharp projections $\hat{u},  \hat{v}$ (instead of regularized ones), which will help to simplify and improve many error terms.
\end{remark}
The  unitary transformations $T_1$ and $T_2$ are defined as exponentials of expressions quadratic in the $b,b^\ast$ operators. The two transformations correspond  to two different regimes of the momentum $p$ of the bosonic quasi-particle: for high-momenta (with respect to $k_F^\sigma$) we use a ``less refined'' expression which helps to extract the leading term $8\pi a \rho_\uparrow \rho_\downarrow$ of the interaction energy, and at the same time renormalizes the interaction potential, while for low momenta we use a ``more refined" expression which allows to extract the correct energy of order $\rho^{7/3}$. 
In order to introduce the momentum splitting,  we 
choose two smooth functions $\R^3 \to \R$, denoted  by $\mathcal{F}(\chi_{<,\infty})$ and $\mathcal{F}(\chi_{>,\infty})$, respectively, satisfying
%
\begin{equation}\label{eq: def chi< chi>}
\mathcal{F}(\chi_{<,\infty}) + \mathcal{F}(\chi_{>,\infty})  =1, \quad 
 \mathcal{F}({\chi}_{<,\infty})(p) = \begin{cases} 
                1 &\mbox{if}\,\,\,|p| < 4\rho^{\frac{1}{3} - \gamma}, \\ 
                0 &\mbox{if}\,\,\, |p| > 5 \rho^{\frac{1}{3} - \gamma}.
              \end{cases}
\end{equation}
Here $0<\gamma < 1/3$ is a parameter which will be optimized over at the end; we shall in fact choose $\gamma = 1/9$. From  $\mathcal{F}(\chi_{<,\infty})$ we can construct the periodic function $\chi_{<}:\Lambda\to \mathbb{C}$ using the Fourier coefficients $\{\mathcal{F}(\chi_{<,\infty}(p)\}_{p\in \Lambda^*}$, which satisfies the uniform bound $\|\chi_{<}\|_{L^1(\Lambda)}\le C$ (see Lemma \ref{lem:cut-off-chi}). In the following we will also use the notation $\widehat{\chi}_>(p) = 1-\widehat{\chi}_<(p)$. 

The operators $T_1$ and $T_2$ will implement the desired correlation structure between particles by using two different expressions both related to the zero-energy scattering equation \eqref{eq: zero energy scatt eq-intro}.  


\begin{definition}[Scattering solutions]\label{def:scattering-phi-eta} We  define the periodic function $\varphi:\Lambda\to \mathbb{C}$ by 
$\hat \varphi(p)=\mathcal{F}(\varphi_\infty)(p)$ for $0\ne p\in \Lambda^*$, $\hat{\varphi}(0) = 0$,  
where $\varphi_\infty:\R^3\to \R$ is the solution of the zero-scattering equation \eqref{eq: zero energy scatt eq-intro}. Moreover, for $\varepsilon = \rho^{2/3 +\delta}$ with a parameter $\delta >0$, we  define 
\begin{equation}\label{eq: def eta epsilon}
  \hat{\eta}_{r,r^\prime}^\varepsilon (p) := \frac{8\pi a}{ \lambda_{r,p} +  \lambda_{r',-p} + 2\varepsilon},\quad \lambda_{r,p} :=  |r+p|^2 - |r|^2 
\end{equation}
where $a$ denotes the scattering length of $V_\infty$. 
\end{definition}

\begin{remark}
We will see in Appendix~\ref{app: bounds phi} that $\varphi$  is well-defined. 
In fact $\|\varphi\|_{L^\infty(\Lambda)}\le C$ uniformly in $L$ but  $\|\varphi\|_{L^2(\Lambda)}$ diverges as $L\to \infty$ (see Remark \ref{rm:varphi}). 
\end{remark}

\begin{definition}[Quasi-bosonic transformations]\label{def:bosonic-transformations} For a parameter $\lambda\in \mathbb{R}$, we define
\begin{align}\label{eq: def T1 unitary}
T_{1;\lambda} &=  \exp \left(\lambda(B_1 - B_1^\ast)\right),\quad B_1=  \frac{1}{L^3}\sum_{p\in \Lambda^*}\hat{\varphi}(p)\widehat{\chi}_>(p)\, {b}_{p,\uparrow}{b}_{-p,\downarrow}, \\
\label{eq: def T2 unitary}
T_{2;\lambda} &=  \exp \left(\lambda(B_2 - B_2^\ast)\right),\quad B_2 =  \frac{1}{L^3}\sum_{p,r,r^\prime \in \Lambda^*}\hat{\eta}_{r,r^\prime}^\varepsilon(p)\widehat{\chi}_<(p) b_{p,r,\uparrow}b_{-p,r^\prime,\downarrow},
\end{align}
where $\hat{\varphi}$ and $\hat{\eta}^\varepsilon_{r,r^\prime}$ are given in Definition \ref{def:scattering-phi-eta}. We also write $T_1$ and $T_2$ in place of $T_{1;1}$ and $T_{2;1}$, respectively. 
\end{definition}



The cut-off $\widehat \chi_>(p)$ helps to regularize  $B_1$ since the function $\tilde \varphi:\Lambda\to \R$ with $\widehat {\tilde \varphi}(p)=\hat \varphi (p)\hat \chi_{>}(p)$ is uniformly bounded in $L^1(\Lambda)\cap L^\infty(\Lambda)$ (see Lemma \ref{lem: bounds phi}) as $L\to \infty$. In the expression for $B_2$ in \eqref{eq: def T2 unitary}, we only use $\hat{\eta}_{r,r^\prime}^{\varepsilon}(p)$ for momenta $r\in\mathcal{B}_F^\uparrow$, $r^\prime\in\mathcal{B}_F^\downarrow$, $r+p\notin\mathcal{B}_F^\uparrow$, $r^\prime - p\notin\mathcal{B}_F^\downarrow$, where  $\lambda_{r,p} = |r+p|^2 - |r|^2 >0$ and $\lambda_{r',-p} = |r'-p|^2 - |r'|^2 >0.$ The parameter $\varepsilon = \rho^{2/3 +\delta}$ in the denominator of $\hat{\eta}_{r,r^\prime}^{\varepsilon}(p)$  is introduced for technical reasons to avoid logarithmic divergences otherwise encountered in integrals. At the end we shall choose $\delta$ to satisfy $16/63\leq \delta \leq 8/9$.

%
%
%
%
%

\begin{remark}[Relation to Bethe–Goldstone equation]
From our analysis, it may seem more natural to consider a trial state of the form
$\widetilde{\psi}_{\mathrm{trial}} := RT\Omega$, with the unitary $T$ given by 
\begin{equation}\label{eq: def T}
  T = \exp \left(\frac{1}{L^3}\sum_{p,r,r^\prime} \hat\varphi_{r,r^\prime}(p) b_{p,r,\uparrow} b_{-p,r^\prime, \downarrow} -\mathrm{h.c.}\right), 
\end{equation}
where $\hat{\varphi}_{r,r^\prime}$ is chosen to satisfy the equation
\begin{equation} \label{BetheGold1}
  2 ( |p|^2 + p\cdot (r-r^\prime)) \hat\phi_{r,r^\prime}(p)  = \mathcal{F}(V_\infty (1- \phi_{r,r^\prime}))(p)
\end{equation}
(in the infinity volume limit). Note that \eqref{BetheGold1} reduces to \eqref{eq: zero energy scatt eq-intro} for $r=r'=0$. 
%
%
Equation \eqref{BetheGold1} can be interpreted as describing the scattering (at zero energy) of two particles at relative momentum $p$, with initial momenta $r,r'$ within the Fermi ball which is fully occupied. From the definition of 
$b_{p,r,\uparrow}$ and $b_{-p,r^\prime, \downarrow}$, the function $\varphi_{r,r'}(p)$ appearing in \eqref{eq: def T} is naturally supported on the set where $|r+p|\geq k_F^\uparrow, |r'-p| \geq k_F^\downarrow$. Note that  the effective kinetic energy 
$$
2|p|^2 + 2p\cdot (r-r^\prime) = \lambda_{r,p} + \lambda_{r',-p} 
$$ 
describes the difference between the kinetic energy of the two particles with initial momenta $r,r'$ and excited momenta $r+p, r'-p$, and is  positive. 
Equation \eqref{BetheGold1} plays a prominent role in the physics literature. With the help of 
$ G_{r,r'}(p) :=  (\lambda_{r,p} + \lambda_{r^\prime,-p})\hat{\phi}_{r,r^\prime}(p)$, it 
can be rewritten in the form 
\begin{equation}\label{BetheGold2}
G_{r,r'}(p) = \mathcal{F}(V_\infty)(p) - \int_{|r+q| \geq k_F^\uparrow, |r'-q| \geq k_F^\downarrow} dq\,   \frac {\mathcal{F}(V_\infty)(p-q)}{ \lambda_{r,q} + \lambda_{r',-q}} G_{r,r'}(q) ,
\end{equation}
which is known as the {\em Bethe--Goldstone} equation, see \cite[Equation 2.2]{BG}.

For technical reasons, we find it simpler not to work with 
 $\widetilde{\psi}_{\mathrm{trial}} = RT\Omega$, but rather split the unitary into two parts,  $T_1$ and $T_2$. Note that the $\hat{\varphi}$ in the definition of $T_{1}$ depends only on the momentum $p$; in this sense $T_1$ is ``less refined'' but it is sufficient to renormalize the interaction in $\mathbb{Q}_2^\udarrow$ into a softer one with integral given by $8\pi a$. To be precise, the transformation $T_1$ allows us to replace the original potential $V$ in $\mathbb{Q}_2^\udarrow$ by the softer potential $8\pi a \chi_<$, which is supported in small momenta ($|p|\le 5\rho^{1/3-\gamma}\ll 1$), and consequently the scattering length of $8\pi a\chi_<$ is well approximated by $(8\pi)^{-1}$ times its integral. On the other hand, the transformation $T_2$ is ``more refined" since the $\hat{\eta}^{\varepsilon}_{r,r^\prime}$ in $T_{2}$ depends on $p,r,r^\prime$, but it is also simpler than $T$ since the function $\hat\eta_{r,r'}^\eps$ is explicit.

Note that on the support of $\chi_>(p)$, the term $p \cdot (r-r')$ is subleading compared to $p^2$, hence it is natural to replace ${\phi}_{r,r^\prime}$ by the usual scattering solution $\varphi$ in $T_1$. On the other hand, on the support of $\chi_<$ 
%
and for $r\in \mathcal{B}_F^{\uparrow}, r'\in \mathcal{B}_F^{\downarrow}$, $\mathcal{F}(V_\infty (1- \phi_{r,r^\prime}))(p)$ on the right-hand side of \eqref{BetheGold1} can be replaced by $8\pi a$ to leading order, naturally reducing $\hat{\phi}_{r,r^\prime}$   to $\hat{\eta}^{\varepsilon}_{r,r^\prime}$ in this regime.
%
\end{remark}

\begin{remark}[Configuration space representation]\label{rem: T1 and T2 configuration space}
In the calculation below it will be convenient to estimate the error terms in configuration space. For this reason, we will often write 
\begin{equation}\label{eq: def tilde phi T1}
 B_1 = \int dzdz^\prime\, \tilde \varphi(z-z^\prime)a_\uparrow(u_z)a_\uparrow(v_z)a_\downarrow(u_{z^\prime})a_\downarrow(v_{z^\prime})
\end{equation}
where 
\begin{equation}\label{eq: def tilde phi}
\widehat {\tilde \varphi}(p)=\mathcal{F}(\varphi_\infty)(p)  \widehat{\chi}_>(p) .
\end{equation}
Useful bounds on the function $\widetilde\varphi$ will be given in Lemma \ref{lem: bounds phi}. 

Writing the unitary $T_2$ in configuration space is a bit more complicated, since $\hat{\eta}_{r,r'}^{\varepsilon}(p)$ depends on all the momenta involved in the definition of the transformation. 
First of all we notice that in the definition of $B_2$ in \eqref{eq: def T2 unitary}, the coefficients  $\hat u_\uparrow(r+p)$ and $\hat u_\downarrow(r'-p)$ appearing in \eqref{eq: def bp bpr} can be replaced by 
$\hat{u}_\uparrow^<(r+p)$ and $\hat{u}_\downarrow^<(r^\prime - p)$ with
\begin{equation}\label{eq: u<}
  \hat{u}^<_\sigma(k) := \begin{cases} 1 &\mbox{if}\,\,\, k_F^\sigma < |k| \leq 6\rho^{\frac{1}{3} -\gamma}, 
  \\
  0 &\mbox{if}\,\,\, |k| < k_F^\sigma, \,\,\,\mbox{or}\,\,\,|k| > 6\rho^{\frac{1}{3} -\gamma}, \end{cases}
\end{equation} 
since $|p|\leq 5 \rho^{1/3-\gamma}$ and both $r$ and $r'$ are bounded by (a constant times) $\rho^{1/3}$, hence $|r+p|, |r'-p| < 6 \rho^{1/3-\gamma}$ for $\rho$ small. 
Moreover, since both $\lambda_{r,p}$ and $\lambda_{r',-p}$ appearing in the dominator in 
%
the definition of $\hat{\eta}^\varepsilon_{r,r^\prime}$ in \eqref{eq: def eta epsilon} are positive, we can write
\[
  \hat{\eta}^{\varepsilon}_{r,r^\prime}(p) = 8\pi a \int_0^{\infty} dt\, e^{-(|r+p|^2 - |r|^2 + |r^\prime - p|^2 - |r^\prime|^2 + 2\varepsilon)t}
\]
for $r+p\notin \mathcal{B}_F^{\uparrow}$,  $r^\prime - p\notin \mathcal{B}_F^{\downarrow}$, $r\in \mathcal{B}_F^{\uparrow}$ and $r^{\prime}\in \mathcal{B}_F^{\downarrow}$.
Therefore, introducing the functions
\begin{equation}\label{eq: def ut vt}
  \hat{v}^t_\sigma(\,\cdot\,) := e^{t|\,\cdot\,|^2}\hat{v}_\sigma(\,\cdot\,), \qquad \hat{u}^t_\sigma(\,\cdot\,) := e^{-t|\,\cdot\, |^2}\hat{u}^<_{\sigma}(\,\cdot\,),
\end{equation}
with $\hat{u}_\sigma^<$ and $\hat{v}_\sigma$  defined  in \eqref{eq: u<} and \eqref{eq: def omega}, respectively, we can write $B_2$ in configuration space as 
\begin{equation}\label{def:B2}
  B_2 = 8\pi a\int_0^\infty dt\, e^{-2t\varepsilon}\int dzdz^\prime \chi_<(z-z^\prime) a_\uparrow(u_z^t)a_\uparrow(v_z^t)a_\downarrow(u^t_{z^\prime})a_\downarrow(v^t_{z^\prime}).
\end{equation}
A big advantage of choosing the explicit function \eqref{eq: def eta epsilon} (as opposed to the solution of \eqref{BetheGold1}) is to allow for this representation in configuration space. 
\end{remark}

\begin{remark} The transformation $T_1$ is related to the unitary transformation used in \cite[Definition 4.3]{Gia1} (see also Remark 4.4 and Remark 4.5 in \cite{Gia1}). 
In \cite[Definition 4.3]{Gia1}, $T_1$ contains a cut-off around $|x|\simeq \rho^{-1/3}$ in configuration space, which is essentially equivalent to  a cut-off around  $|p| \simeq \rho^{1/3}$ in momentum space. 
%
In the present paper we choose a weaker localization with a momentum cut-off around $|p| \simeq \rho^{1/3 -\gamma}$ in order to obtain better error estimates of order $o(\rho^{7/3})$. 
\end{remark}

We conclude this section by showing that the trial state introduced in \eqref{eq: def trial state} is admissible, i.e., is a state with $N_\uparrow$ particles with spin $\uparrow$ and $N_\downarrow$ particles with spin $\downarrow$. 
To see this, note that $R\Omega$ is clearly admissible, and $RB_jR$ commutes with $\mathcal{N_\sigma}$ for $j\in\{1,2\}$ and $\sigma\in\{\uparrow,\downarrow\}$, which can be deduced from the definitions in a straightforward way. Hence
\[
  \mathcal{N}_\sigma \psi_{\mathrm{trial}} = \mathcal{N}_\sigma R {T}_1 {T}_2 \Omega = R T_1 T_2 R \mathcal{N}_\sigma R \Omega  = N_\sigma \psi_{\mathrm{trial}}.
\]

\subsection{Key estimates}\label{sec:sub-strategy}

Using the trial state \eqref{eq: def trial state} and Proposition~\ref{pro: fermionic transf}, we obtain as an upper bound to the ground 
state energy
\[
  E_{L}(N_\uparrow, N_\downarrow) \leq E_{\mathrm{FFG}} +  \langle \Omega,T_2^* T_1^* \mathcal{H}_{\mathrm{corr}} T_1T_2\Omega\rangle  \simeq  E_{\mathrm{FFG}} +  \langle \Omega,T_2^* T_1^* \mathcal{H}_{\mathrm{corr}}^{\rm eff} T_1T_2\Omega\rangle
\]
where $\mathcal{H}_{\mathrm{corr}}^{\mathrm{eff}} = \mathbb{H}_0 + \mathbb{Q}_2^\udarrow + \mathbb{Q}_4$ was defined in \eqref{eq: def eff corr en strategy}. 
%
%
%
%
We shall now explain some key estimates in the computation of  $\langle \Omega,T_2^* T_1^* \mathcal{H}_{\mathrm{corr}}^{\rm eff} T_1T_2\Omega\rangle$. In the first step, the unitary operator $T_1$ introduced in \eqref{eq: def T1 unitary} is responsible for extracting the leading order of the interaction energy given by $8\pi a \rho_\uparrow \rho_\downarrow$. Moreover, via $T_1$ we also renormalize the interaction, effectively replacing $\mathbb{Q}_2^\udarrow + \mathbb{Q}_4$ by 
\begin{equation}\label{eq: definition Q2<}
  {\mathbb{Q}}_{2;<} = \frac{8\pi a}{L^3}\sum_{p}\widehat{\chi}_<(p) b_{p,\uparrow}b_{-p,\downarrow} + \mathrm{h.c.}.
\end{equation}
At this point the correlation operator to leading order reduces to $\mathbb{H}_0 + {\mathbb{Q}}_{2;<}.$
To extract the next order correction to the correlation energy we then conjugate $\mathbb{H}_0 + {\mathbb{Q}}_{2;<}$ by the unitary operator $T_2$ in the second step. 
%

\bigskip
{\textbf{Step 1:}} 
We conjugate the effective correlation operator with the unitary $T_1$ by writing
   \begin{eqnarray*}
     T_1^*  \mathcal{H}_{\mathrm{corr}}^{\mathrm{eff}}T_1   &=&  \mathcal{H}^{\mathrm{eff}}_{\mathrm{corr}}  + \int_0^1 d\lambda \, \partial_\lambda \, T_{1;\lambda}^*  \mathcal{H}_{\mathrm{corr}}^{\mathrm{eff}} T_{1;\lambda} 
    \\
    &=&  \mathcal{H}^{\mathrm{eff}}_{\mathrm{corr}}  +  \int_0^1 d\lambda \, T_{1;\lambda}^* [\mathcal{H}_{\mathrm{corr}}^{\mathrm{eff}} , B_1 - B_1^\ast] T_{1;\lambda},
   \end{eqnarray*}
   recalling the notation $T_{1;\lambda} = \exp(\lambda (B_1 -B_1^\ast))$ introduced in \eqref{eq: def T1 unitary}. To leading order, we shall see that
    \[
    [\mathbb{H}_0 + \mathbb{Q}_4, B_1 - B_1^\ast] \simeq - \frac{1}{L^3}\sum_{p}\big(2|p|^2\hat{\varphi}(p)\widehat{\chi}_>(p) + \hat{V}\ast_{\sum} (\widehat{\varphi}\widehat{\chi}_>)(p) \big) b_{p,\uparrow} b_{-p,\downarrow} + \mathrm{h.c.}
  \]
   where $\ast_{_{\sum}}$ denotes  the ``discrete convolution'' for functions defined on  momentum space, namely  
   \[
  (\hat{f}\ast_{_{\sum}}\hat{g})(p)  = \frac{1}{L^3}\sum_{q\in\Lambda^*} \hat{f}(p-q)\hat{g}(q) = \widehat {(fg)}(p). 
\]
This explains the choice of the function $\hat{\varphi}\hat \chi_{>}$ appearing in  $B_1$: 
%
%
%
From the zero-energy scattering equation \eqref{eq: zero energy scatt eq-intro} we deduce  that $\hat{V}\ast_{\sum} \widehat{\varphi}(p) \simeq \hat{V}(p)  -  2|p|^2\hat{\varphi}(p)$. 
  The operator $\mathbb{Q}_2^\udarrow$ in \eqref{eq: def H-corr} can equivalently be written as
  \[
  \mathbb{Q}_2^\udarrow = \frac{1}{L^3}\sum_{p}\hat{V}(p)b_{p,\uparrow}b_{-p,\downarrow} + \mathrm{h.c}.
\]
 Since $2|p|^2 \hat\varphi(p) \simeq 8\pi a$ for small $p$,   
  we conclude that 
  \begin{equation}\label{eq: scatt eq Q2< + tilde Q2<}
  \mathbb{Q}_2^\udarrow + [\mathbb{H}_0 + \mathbb{Q}_4, B_1 - B_1^\ast] \simeq  {\mathbb{Q}}_{2;<}  + \widetilde{\mathbb{Q}}_{2;<}
  \end{equation}
   where ${\mathbb{Q}}_{2;<}$ is defined  in \eqref{eq: definition Q2<} and 
\begin{equation}\label{def:q2tl}
    \widetilde{\mathbb{Q}}_{2;<} = \frac{1}{L^3}\sum_{p}\hat{V}\ast_{_{\sum}} (\hat\varphi \widehat{\chi}_<)(p)b_{p,\uparrow}b_{-p,\downarrow} + \mathrm{h.c.}.
\end{equation}

%

We  emphasize that $\mathbb{Q}_{2;<}$ is a renormalized version of $\mathbb{Q}_2^\udarrow$, where after taking the 2-body scattering process into account, we have the softer interaction potential $8\pi a \widehat{\chi}_<$ instead of $\hat{V}$. The operator $\widetilde{\mathbb{Q}}_{2;<}$ will be important only in the next order of the Duhamel expansion, where it leads via a commutator to a term relevant for the correct Huang--Yang correction of order $\rho^{7/3}$, as will be demonstrated below. In fact, applying \eqref{eq: scatt eq Q2< + tilde Q2<} we can extend the above Duhamel expansion as
  \begin{align}\label{eq: conjugation under T1}
  &   T_1^* (\mathbb{H}_0 +  \mathbb{Q}_4 + \mathbb{Q}_2^\udarrow)T_1 =  \mathbb{H}_0 + \mathbb{Q}_4  + \mathbb{Q}_2^\udarrow+ [\mathbb{H}_0 + \mathbb{Q}_4, B_1 - B_1^\ast]     \nonumber \\
    & + \int_0^1d\lambda \,  (1-\lambda) T_{1;\lambda}^*  [[\mathbb{H}_0 + \mathbb{Q}_4, B_1 -B_1^\ast],B_1 - B_1^\ast] T_{1;\lambda} \nonumber
    + \int_0^1 d\lambda\,  T_{1;\lambda}^*  [\mathbb{Q}_2^\udarrow,B_1 - B_1^\ast] T_{1;\lambda} \\
    &\simeq  \mathbb{H}_0 + \mathbb{Q}_4  + \mathbb{Q}_{2;<} + \widetilde{\mathbb{Q}}_{2;<} + \int_0^1d\lambda \,  T_{1;\lambda}^* [   \lambda \mathbb{Q}_2^\udarrow + (1-\lambda)\big(  {\mathbb{Q}}_{2;<} + \widetilde{\mathbb{Q}}_{2;<}\big)  ,B_1 -B_1^\ast] T_{1;\lambda}  .
  \end{align}
  Computing the commutator in the last integrand and normal-ordering the expression leads to constant term, which is the leading contribution. The integration over $\lambda$ then just gives a factor $1/2$. We shall see that 
  $$
  \frac 12  [  \mathbb{Q}_2^\udarrow +  {\mathbb{Q}}_{2;<} + \widetilde{\mathbb{Q}}_{2;<}  ,B_1 -B_1^\ast] \simeq
  -\rho_\uparrow\rho_\downarrow\sum_{p\in\Lambda^*}\hat{V}(p)\hat{\varphi}(p)  + (8\pi a)^2\rho_\uparrow \rho_\downarrow \sum_{0\ne p \in\Lambda^*} \frac{\widehat{\chi}_<(p)}{2|p|^2}.
  $$
 The first term on the right-hand side naturally combines with $\hat V(0) \rho_\uparrow \rho_\downarrow$ from \eqref{eq: FFG energy} to yield the scattering length, since 
 \[
  \hat{V}(0) -\frac{1}{L^3}\sum_{p\in\Lambda^*} \hat{V}(p)\hat{\varphi}(p) =  (\hat{V} - \hat{V}\ast_{\sum} \hat{\varphi})(0)  \simeq \int_{\R^3} V_\infty (1-\varphi_\infty) = 8\pi a.
\]
%
%
Therefore, we obtain the upper bound 
\begin{multline}\label{eq: second step app part I}
   \frac{E_L(N_\uparrow, N_\downarrow)}{L^3}\leq \frac{3}{5}(6\pi^2)^{\frac{2}{3}} \left(\rho_\uparrow^{\frac{5}{3}} + \rho_\downarrow^{\frac{5}{3}}\right) + 8\pi a \rho_\uparrow \rho_\downarrow +  \frac{(8\pi a)^2\rho_\uparrow\rho_\downarrow}{L^3}\sum_{0\neq p \in\Lambda^*} \frac{(\widehat{\chi}_<(p))^2}{2|p|^2}
  \\
  + \frac{1}{L^3}\langle T_{2}\Omega, (\mathbb{H}_0 + \mathbb{Q}_4 + {\mathbb{Q}}_{2;<}  )T_2\Omega\rangle + o(\rho^{\frac{7}{3}}).
\end{multline}
The rigorous justification of this statement is the content of Proposition \ref{lem:conjugation-T1}.

\bigskip
{\textbf{Step 2:}} To complete the proof, we still have to compute the last term in \eqref{eq: second step app part I}. 
We will see that  $\mathbb{Q}_4$ 
is negligible, i.e., that $L^{-3}\langle T_2\Omega, \mathbb{Q}_4  T_2\Omega\rangle = o(\rho^{7/3})$ (see Proposition \ref{pro: prop est Q4 T2}). 
It therefore remains  to conjugate the final effective correlation Hamiltonian $\mathbb{H}_0 + {\mathbb{Q}}_{2;<}$ under $T_2$.
     As above we can write
\begin{equation}\label{eq: conj T2 explanation}
       T_2^* (\mathbb{H}_0 + {\mathbb{Q}}_{2;<}) T_2  
      = \mathbb{H}_0 + T_2^*  {\mathbb{Q}}_{2;<}T_2 + \int_0^1 d\lambda\,  T_{2;\lambda}^*  [\mathbb{H}_0, B_2 - B_2^\ast] T_{2;\lambda} 
\end{equation}
where $T_{2;\lambda} = \exp(\lambda (B_2 - B_2^\ast))$.
From the definition of the unitary operator $T_2$ (see \eqref{eq: def T2 unitary} and \eqref{eq: def eta epsilon}), we find that 
\begin{equation}\label{tov}
  [\mathbb{H}_0, B_2 - B_2^\ast] \simeq -{\mathbb{Q}}_{2;<}.
\end{equation}
As a consequence, inserting  \eqref{tov} in \eqref{eq: conj T2 explanation}  
we obtain
$$
  T_2^*  (\mathbb{H}_0 + {\mathbb{Q}}_{2;<}) T_2   \simeq   \mathbb{H}_0 + T_2^*  {\mathbb{Q}}_{2;<}T_2  - \int_0^1 d\lambda\,  T_{2;\lambda}^* {\mathbb{Q}}_{2;<} T_{2;\lambda} 
  =  \mathbb{H}_0 + \int_0^1 \lambda d\lambda \,  \, T_{2;\lambda}^*  [{\mathbb{Q}}_{2;<}, B_2 - B_2^\ast] T_{2;\lambda} .
$$
Computing the commutator and normal-ordering leads to a constant contribution in the last expression, which is the leading term and given by
\begin{align}\label{eq: secon step appr part II}
 \frac 12  [{\mathbb{Q}}_{2;<}, B_2 - B_2^\ast] 
    \simeq - \frac{(8\pi a)^2}{L^6}\hspace{-0.3cm}\sum_{p,r,r^\prime \in\Lambda^*}  \frac{\hat{u}_\uparrow(r+p) \hat{u}_\downarrow(r^\prime - p) \hat{v}_\uparrow(r) \hat{v}_\downarrow(r^\prime)}{|r+p|^2 - |r|^2 + |r^\prime - p|^2 - |r^\prime|^2 + 2\varepsilon}   (\widehat{\chi}_<(p))^2,
\end{align}
where $ \hat{u}_\sigma$ and $ \hat{v}_\sigma$ are given in \eqref{eq: def u hat, v hat}.
The rigorous justification of this formula  is  contained in Proposition \ref{pro: T2 Q2<}.
Combining both transformations we find that the energy per unit volume satisfies
\begin{multline*}
  \frac{E_{L}(N_\uparrow, N_\downarrow)}{L^3}\leq \frac{3}{5}(6\pi^2)^{\frac{2}{3}} \left(\rho_\uparrow^{\frac{5}{3}} + \rho_\downarrow^{\frac{5}{3}}\right) + 8\pi a\rho_\uparrow\rho_\downarrow 
  \\
  - \frac{(8\pi a)^2}{L^{9}}\sum_{p,r,r^\prime \in\Lambda^*} \left(  \frac{\hat{u}_\uparrow(r+p) \hat{u}_\downarrow(r^\prime - p) \hat{v}_\uparrow(r) \hat{v}_\downarrow(r^\prime)}{|r+p|^2 - |r|^2 + |r^\prime - p|^2 - |r^\prime|^2 +2\varepsilon} - \frac{1}{2|p|^2}\right) (\widehat{\chi}_<(p))^2
  + o(\rho^{\frac{7}{3}}).
\end{multline*}
To conclude we can remove the $\varepsilon$ and the momentum cut-off in the third term (see Lemma \ref{lem: gap constant}) to get, in the thermodynamic limit, 
\begin{multline*}
  e(\rho_\uparrow, \rho_\downarrow)  \leq \frac{3}{5}(6\pi^2)^{\frac{2}{3}} \left(\rho_\uparrow^{\frac{5}{3}} + \rho_\downarrow^{\frac{5}{3}}\right) + 8\pi a\rho_\uparrow\rho_\downarrow 
  \\
 - \frac{(8\pi a)^2}{(2\pi)^9}\int_{\mathbb{R}^3} dp \int_{\mathcal{B}_F^\uparrow}dr \int_{\mathcal{B}_F^\downarrow}dr^\prime \left(\frac{\chi_{\mathcal{B}_F^\uparrow}^c(r+p) \chi_{\mathcal{B}_F^\downarrow}^c(r^\prime - p)}{|r+p|^2 - |r|^2 + |r^\prime -p|^2 - |r^\prime|^2} -\frac{1}{2|p|^2}\right)
 + o(\rho^{\frac{7}{3}}).
\end{multline*}
The evaluation of the last integral yields  the expression $a^2 \rho_\uparrow^{\frac{7}{3}} F({\rho_\downarrow}/{\rho_\uparrow})$ in Theorem \ref{thm: main up bd}, as will be demonstrated in Appendix \ref{app: function g}.

\section{Preliminary bounds and effective correlation operator}\label{sec:prel}
In this section, we collect some useful preliminary bounds that we will use frequently in our analysis. We  also identify the effective correlation operator from which we will extract the desired expression for the energy in the upper bound in Theorem \ref{thm: main up bd}. As usual, we will denote by $C$ generic constants in the following, which may have different values at different occurrences. 

\subsection{Some operator bounds}\label{sec: operator bounds}
In this subsection we collect some general operator inequalities. 
For $g\in L^1(\Lambda)$, define 
\begin{equation}\label{eq: def b phi}
  b_\sigma(g) := \int_{\Lambda} dz \, g(z) a_{\sigma}(u_{z})a_{\sigma}(v_{z}) 
\end{equation}
where we recall the notation \eqref{recall_not}. 
Similarly, we define 
\begin{equation}\label{eq: def b tilde}
  {b}_{j, \sigma}(g) = \int_{\Lambda} dz\, g(z) a_{\sigma}(u_{z})a_{\sigma}(\partial_j v_{z}), 
  \qquad j = 1,2,3.
\end{equation}
\begin{lemma}[Bounds for $b_\sigma(\widetilde\varphi_z)$, $b_{j,\sigma}(\widetilde\varphi_z)$]\label{lem: bounds b b*}
For $\widetilde{\varphi}$ defined in \eqref{eq: def tilde phi} let $\widetilde \varphi_z(z') = \varphi(z-z')$. 
Then
\begin{equation}\label{eq: est b phi, b phi j}
  \|b_\sigma(\widetilde{\varphi}_{z})\|\leq C\rho^{\frac{1}{3} +\frac{\gamma}{2}},\qquad \|{b}_{j,\sigma}(\widetilde{\varphi}_z)\| \leq C\rho^{\frac{2}{3} +\frac{\gamma}{2}}, \qquad j=1,2,3
\end{equation}
uniformly in $z\in \Lambda$. 
\end{lemma}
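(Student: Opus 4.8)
The plan is to bound the operator norms of $b_\sigma(\widetilde\varphi_z)$ and $b_{j,\sigma}(\widetilde\varphi_z)$ directly, uniformly in $z$, by expanding in momentum space and using the explicit support properties of $\widehat{\widetilde\varphi} = \mathcal{F}(\varphi_\infty)\widehat{\chi}_>$. First I would rewrite
\[
 b_\sigma(\widetilde\varphi_z) = \int_\Lambda dz'\,\widetilde\varphi(z-z')\,a_\sigma(u_{z'})a_\sigma(v_{z'}) = \sum_{k\in\Lambda^*}\widehat{\widetilde\varphi}\big(\text{(appropriate momentum)}\big)\,\hat u_\sigma(\cdot)\hat v_\sigma(\cdot)\,\hat a_{\cdot,\sigma}\hat a_{\cdot,\sigma},
\]
i.e. express it in terms of the $b_{p,k,\sigma}$ of \eqref{eq: def bp bpr} as $b_\sigma(\widetilde\varphi_z)=\frac{1}{L^3}\sum_p e^{-ip\cdot z}\widehat{\widetilde\varphi}(p)\,b_{p,\sigma}$ (up to signs/conventions). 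Since $\{a(f),a^\ast(g)\}$ bounds give $\|a(f)\|=\|f\|_{L^2}$, the cleanest route is to view $b_\sigma(\widetilde\varphi_z)$ as a product of two annihilation-type operators smeared against kernels whose $L^2$-type norms I can control, or — more efficiently — to estimate $\|b_\sigma(\widetilde\varphi_z)\psi\|$ directly by Cauchy--Schwarz in the momentum sums, pulling out $\sum_k \hat v_\sigma(k)\|\hat a_{-k,\sigma}\psi\|$-type factors and using that $\sum_{k\in\mathcal B_F^\sigma}1 \sim \rho L^3$.

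The key point driving the exponent is a simple counting/size estimate. Writing $b_\sigma(\widetilde\varphi_z) = \sum_{k}\int dz'\,\widetilde\varphi(z-z') u_\sigma(\cdot;z')v_\sigma(z';\cdot)\cdots$, one applies the standard bound $\|b_\sigma(g)\|\le C\|g\|_{?}\,\big(\sup\|v\text{-kernel}\|\big)$ and arrives at something like $\|b_\sigma(\widetilde\varphi_z)\|\le C\,\|\widehat{\widetilde\varphi}\|_{\ell^2_{\mathcal B_F^\sigma}}\cdot(\#\mathcal B_F^\sigma/L^3)^{1/2}$ — more precisely, a Cauchy--Schwarz split gives a factor $\big(\tfrac{1}{L^3}\sum_{k\in\mathcal B_F^\sigma}|\widehat{\widetilde\varphi}(k)|^2\big)^{1/2}$ times a factor $\big(\tfrac{1}{L^3}\sum_{k\in\mathcal B_F^\sigma}1\big)^{1/2}\sim\rho^{1/2}$. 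The nontrivial input is the first factor: on the Fermi ball $|k|\lesssim\rho^{1/3}$ the cut-off $\widehat\chi_>(k)$ vanishes unless $|k|\gtrsim\rho^{1/3-\gamma}$, so the relevant $k$ satisfy $\rho^{1/3-\gamma}\lesssim|k|\lesssim\rho^{1/3}$... but for $\gamma>0$ small this annulus is empty, which forces one instead to keep $k$ ranging over all momenta and use $\sum_k|\widehat{\widetilde\varphi}(k)|^2\hat v_\sigma(k)$ with the full Fermi ball — here one uses $|\mathcal F(\varphi_\infty)(k)|\le C/|k|^2$ together with the lower cut-off $|k|\ge 4\rho^{1/3-\gamma}$ on the \emph{transferred} momentum $p$, not on $k$. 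The correct bookkeeping (which is the one that yields $\rho^{1/3+\gamma/2}$) is: the momentum $p$ transferred by $\widetilde\varphi$ satisfies $|p|\gtrsim\rho^{1/3-\gamma}$, while $|k|\le k_F^\sigma\sim\rho^{1/3}$ and $|p+k|>k_F^\sigma$, so $\sum_p|\widehat{\widetilde\varphi}(p)|^2\lesssim L^3\int_{|p|\gtrsim\rho^{1/3-\gamma}}|p|^{-4}dp \sim L^3\rho^{1/3-\gamma}$, giving $(\tfrac{1}{L^3}\sum_p|\widehat{\widetilde\varphi}(p)|^2)^{1/2}\lesssim\rho^{1/6-\gamma/2}$; combined with the Fermi-ball counting factor $\rho^{1/2}$ this produces $\rho^{1/6-\gamma/2}\cdot\rho^{1/3}\cdot\rho^{\cdots}$ — so I would organize the Cauchy--Schwarz so that one $v$-factor contributes $\rho^{1/3}$ (full ball, counting), leaving $\rho^{1/6-\gamma/2}\cdot\rho^{1/6}$ — hmm, the clean statement to aim for is
\[
 \|b_\sigma(\widetilde\varphi_z)\|\le C\Big(\tfrac{1}{L^3}\sum_{k\in\mathcal B_F^\sigma}\big|\widehat{\widetilde\varphi\ast v_\sigma}(k)\big|^2\Big)^{1/2}\cdot\rho^{1/3},
\]
and then bound the remaining $\ell^2$ norm using Lemma~\ref{lem: bounds phi} (the $L^1\cap L^\infty$ bounds on $\widetilde\varphi$), Young's inequality, and $|\mathcal B_F^\sigma|^{1/2}\sim(\rho L^3)^{1/2}$.

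For the second bound, the only change is that $\partial_j v_\sigma$ replaces $v_\sigma$, which inserts an extra factor $|k|\le k_F^\sigma\sim\rho^{1/3}$ inside the momentum sum, hence the gain of one extra power $\rho^{1/3}$: $\|b_{j,\sigma}(\widetilde\varphi_z)\|\le C\rho^{1/3}\cdot\rho^{1/3+\gamma/2}=C\rho^{2/3+\gamma/2}$, following \emph{mutatis mutandis} the same Cauchy--Schwarz scheme. The main obstacle I anticipate is not the operator-theoretic part — that is routine given \eqref{bound:a} and the CAR — but rather pinning down the precise power of $\rho$: one must be careful to exploit the momentum cut-off $\widehat\chi_>$ on the \emph{transferred} momentum and the decay $|\mathcal F(\varphi_\infty)(p)|\lesssim|p|^{-2}$ (from Assumption~\ref{asu: potential V} and the scattering equation \eqref{eq: zero energy scatt eq-intro}, cf. Lemma~\ref{lem: bounds phi}) simultaneously, and to make the $z$-uniformity manifest — the latter is automatic here because translating $z$ only multiplies Fourier coefficients by phases $e^{-ip\cdot z}$, which drop out of every modulus. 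I would therefore structure the proof as: (i) reduce to a momentum-space bilinear form via \eqref{eq: def bp bpr}; (ii) apply Cauchy--Schwarz splitting the two $v$/$\partial v$ factors; (iii) estimate the Fermi-ball counting factor by $C\rho^{1/3}$ each; (iv) estimate the remaining $\widetilde\varphi$-weighted $\ell^2$ sum by $C\rho^{\gamma/2}$ using the cut-off and the $\varphi_\infty$-decay from Lemma~\ref{lem: bounds phi}; (v) collect powers.
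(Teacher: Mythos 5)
You correctly identify the setup (momentum-space representation of $b_\sigma(\widetilde\varphi_z)$, translation invariance making $z$-uniformity automatic, the decay $|\mathcal F(\varphi_\infty)(p)|\le C/|p|^2$, and the role of the lower momentum cut-off $\widehat\chi_>$), but the proposal never resolves the central obstruction and so does not contain a proof. The problem is that $a_\sigma(u_{z'})$ is not a bounded operator uniformly in $z'$: its norm equals $\|u_\sigma\|_2$, which is infinite since $u_\sigma$ is the projection onto the (infinite-dimensional) complement of the Fermi ball. Every Cauchy--Schwarz split you sketch runs into exactly this: either an explicit $\|u_\sigma\|_2$ factor appears, or one arrives at $\int dz'\,\|a_\sigma(u_{z'})\phi\|^2 \le \langle\phi,\mathcal N\phi\rangle$, which is a number-operator estimate and not an operator-norm bound. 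The ``standard bound $\|b_\sigma(g)\|\le C\|g\|_?\,(\sup\|v\text{-kernel}\|)$'' you invoke does not exist in the form you want; spelling out what the question mark should be is precisely where the proof is.

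The paper resolves this with a specific device you do not have: in the momentum sum \eqref{eq: b phi z momentum space} one observes that the support constraints $|p|\ge 4\rho^{1/3-\gamma}$ (from $\widehat\chi_>$) and $|k|\le k_F^\sigma$ (from $\hat v_\sigma$) force $|k+p|\ge 3\rho^{1/3-\gamma}$, so $\hat u_\sigma(k+p)$ can be replaced by $\mathbbm{1}_{\{|k+p|\ge 3\rho^{1/3-\gamma}\}}$; one then multiplies and divides by $|k+p|^2$, which in configuration space distributes a Laplacian over $\widetilde\varphi$ and $v_\sigma$ while replacing $a_\sigma(u_{z'})$ by $a_\sigma(\widetilde u_{z'})$ with $\widetilde u$ as in \eqref{eq: def tilde u}. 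Because $\widetilde u$ carries both the lower cut-off and the $|p|^{-2}$ weight, it \emph{is} square-integrable, $\|\widetilde u\|_2 \le C\rho^{\gamma/2-1/6}$, and then \eqref{bound:a} gives a genuine operator bound term by term: $\|v_\sigma\|_2\|\widetilde u\|_2\|\Delta\widetilde\varphi\|_1 + \|\nabla v_\sigma\|_2\|\widetilde u\|_2\|\nabla\widetilde\varphi\|_1 + \|\Delta v_\sigma\|_2\|\widetilde u\|_2\|\widetilde\varphi\|_1 \le C\rho^{1/3+\gamma/2}$. Without this maneuver no amount of reorganizing the Cauchy--Schwarz produces a finite answer. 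Separately, you have a sign error: $\|\widetilde\varphi\|_{L^2(\Lambda)}\le C\rho^{-1/6+\gamma/2}$ (Lemma \ref{lem: bounds phi}), not $\rho^{1/6-\gamma/2}$; your stated splitting into ``Fermi-ball counting factor $\rho^{1/3}$'' times ``$\widetilde\varphi$-weighted sum $\rho^{\gamma/2}$'' does not correspond to any correct bookkeeping (the natural factors are $\|v_\sigma\|_2\sim\rho^{1/2}$ and $\|\widetilde\varphi\|_2\sim\rho^{-1/6+\gamma/2}$, whose product does give $\rho^{1/3+\gamma/2}$, but one must still justify why those are the relevant norms, which again requires the $\widetilde u$ trick). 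The passage to $b_{j,\sigma}$, where $\partial_j v_\sigma$ supplies an extra $\rho^{1/3}$, you do describe correctly.
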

\begin{proof}
 The proof is analogous to the one of \cite[Lemma 4.8]{Gia1}. 
 We have 
  \begin{eqnarray}\label{eq: b phi z momentum space}
    b_\sigma(\widetilde{\varphi}_z) &=& \int_{\Lambda}dz^\prime \,\widetilde{\varphi}(z-z^\prime)a_\sigma(u_{z^\prime})a_\sigma(v_{z^\prime}) = \frac{1}{L^3}\sum_{p,k}\hat{\varphi}(p)\widehat\chi_>(p) e^{ip\cdot z}\hat{u}_\sigma(k+p)\hat{v}_\sigma(k)\hat{a}_{k+p,\sigma}\hat{a}_{-k,\sigma}\nonumber
    \\
    &=& \frac{1}{L^3}\sum_{p,k}\hat{\varphi}(p)\widehat\chi_>(p) e^{ip\cdot z}\hat{v}_\sigma(k)\mathbbm{1}_{\{|k+p|\geq 3\rho^{1/3 - \gamma}\}}\hat{a}_{k+p,\sigma}\hat{a}_{-k,\sigma},
  \end{eqnarray}
where we used that the conditions $|p| \geq 4\rho^{1/3 -\gamma}$ and $|k| < k_{F}^\sigma$ imply that $|k+p|\geq 3\rho^{1/3 - \gamma}$ for small $\rho$, and hence $\hat{u}_\sigma(k+p) = 1$. 
We  multiply and divide  the expression on the right hand side of \eqref{eq: b phi z momentum space} by $|k+p|^2$ and decompose $b_\sigma(\widetilde{\varphi}_z)$ as a sum of three terms. Writing them in configuration space, we obtain
  \begin{eqnarray*}
    b_\sigma(\widetilde{\varphi}_z) &=& -\int dz^\prime\, \Delta\widetilde{\varphi}(z-z^\prime)a_\sigma(\widetilde{u}_{z^\prime}) a_\sigma(v_{z^\prime}) + 2\sum_{j=1}^3\int dz^\prime\, \partial_j\widetilde{\varphi}(z-z^\prime)a_\sigma(\widetilde{u}_{z^\prime})a_\sigma(\partial_jv_{z^\prime})
    \\
    && - \int dz^\prime\, \widetilde{\varphi}(z-z^\prime) a_\sigma(\widetilde{u}_{z^\prime})a_\sigma(\Delta v_{z^\prime}),
  \end{eqnarray*}
where 
\begin{equation}\label{eq: def tilde u}
  \widetilde{u}_{x}(y) :=\frac{1}{L^3}\sum_{p\in\Lambda^*}\frac{1}{|p|^2}\mathbbm{1}_{\{|p| \geq 3\rho^{1/3-\gamma}\}}e^{ip\cdot (y-x)}.
\end{equation}
Using \eqref{bound:a} we thus obtain
\[
  \|b_\sigma(\widetilde{\varphi}_z)\| \leq  \|{v}_\sigma\|_2 \|{\widetilde{u}}\|_2 \|\Delta\widetilde{\varphi}\|_1 +  2 \|\nabla v_\sigma\|_2 \|{\widetilde{u}}\|_2\|\nabla\widetilde{\varphi}\|_1 +  \|\Delta{v}_\sigma\|_2 \|{\widetilde{u}}\|_2\|\widetilde{\varphi}\|_1 \leq C\rho^{\frac{1}{3} +\frac{\gamma}{2}}, 
\]
where in last step we used \eqref{eq: L1 norms phi} to bound the various norms involving $\widetilde\varphi$, as well as $\|\widetilde u\|_2 \leq C \rho^{\gamma/2-1/6}$ and $\|\partial^n v_\sigma\|_2 \leq C \rho^{1/2 + n/3}$.


For the proof of the second estimate in \eqref{eq: est b phi, b phi j} we can proceed in the same way. The extra factor $\rho^{1/3}$ arises from the additional derivative hitting $v_\sigma$.  
\end{proof}


\begin{lemma}\label{lem: b* phi tilde vs b phi tilde}
For any $g\in L^1(\Lambda)\cap L^2(\Lambda)$, we have 
\begin{equation}\label{eq: bound b ast g} 
  \|b_\sigma^\ast(g)\psi\|^2\leq \|b_\sigma(g)\psi\|^2 + C\rho \|g\|_2^2 \qquad \forall \psi \in\mathcal{F},\ \|\psi\| =1.
\end{equation}
\end{lemma}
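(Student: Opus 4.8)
The plan is to compare $\|b_\sigma^\ast(g)\psi\|^2$ and $\|b_\sigma(g)\psi\|^2$ by writing both as expectation values of operators obtained from the canonical anticommutation relations, so that the difference is a ``commutator term'' that can be controlled by the number operator and hence by $\rho$ on the relevant subspace. Concretely, recalling $b_\sigma(g) = \int_\Lambda dz\, g(z)\, a_\sigma(u_z)a_\sigma(v_z)$, I would first compute $b_\sigma^\ast(g)$ in momentum space, namely $b_\sigma^\ast(g) = \sum_{k\in\Lambda^*} \overline{\hat g(k)}\,$ (no $k$) --- more precisely, using \eqref{eq: def bp bpr}, $b_\sigma(g) = \tfrac{1}{L^3}\sum_{p,k} \hat g(p)\,\hat u_\sigma(p+k)\hat v_\sigma(k)\,\hat a_{p+k,\sigma}\hat a_{-k,\sigma}$, so that $b_\sigma^\ast(g) = \tfrac{1}{L^3}\sum_{p,k}\overline{\hat g(p)}\,\hat u_\sigma(p+k)\hat v_\sigma(k)\,\hat a^\ast_{-k,\sigma}\hat a^\ast_{p+k,\sigma}$. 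Then $\|b_\sigma^\ast(g)\psi\|^2 = \langle\psi, b_\sigma(g) b_\sigma^\ast(g)\psi\rangle$, and the task is to normal-order $b_\sigma(g)b_\sigma^\ast(g)$.

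The key step is to move the two annihilation operators in $b_\sigma(g)$ past the two creation operators in $b_\sigma^\ast(g)$ using the CAR. Each time an $\hat a$ crosses an $\hat a^\ast$ we pick up a Kronecker delta (times the appropriate $\hat u,\hat v$ factors), producing ``contraction'' terms; the fully normal-ordered piece is exactly $\langle\psi, b_\sigma^\ast(g)b_\sigma(g)\psi\rangle = \|b_\sigma(g)\psi\|^2$, plus a term with one surviving contraction (quadratic in $\hat a^\ast,\hat a$, i.e. bounded by a number-operator-type expression), plus a pure c-number term from the double contraction. I would bound the double-contraction c-number term by $\tfrac{1}{L^6}\sum_{p,k}|\hat g(p)|^2 \hat u_\sigma(p+k)\hat v_\sigma(k) \le \tfrac{1}{L^3}\|g\|_2^2 \cdot \tfrac{|\mathcal B_F^\sigma|}{L^3} \le C\rho\|g\|_2^2$, using $\hat v_\sigma(k)=\ind_{\{|k|\le k_F^\sigma\}}$, $\hat u_\sigma \le 1$, Parseval $\tfrac1{L^3}\sum_p|\hat g(p)|^2 = \|g\|_2^2$, and $|\mathcal B_F^\sigma|/L^3 \to \rho_\sigma \le \rho$. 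For the single-contraction term, which is of the schematic form $\tfrac{1}{L^6}\sum |\hat g|^2 \hat u\hat v \cdot \hat a^\ast \hat a$ with the operator part positive up to the $\hat u,\hat v$ weights, I would again use $\hat u\le 1$, $\hat v\le 1$ and the constraint that one momentum lies in $\mathcal B_F^\sigma$ to bound it by $C\rho\|g\|_2^2 \langle\psi,\ind\psi\rangle$; since $\hat v_\sigma$ restricts to the Fermi ball, the sum over that momentum again yields the volume factor $|\mathcal B_F^\sigma| \le C\rho L^3$, and the remaining operator is bounded by $\mathcal N_\sigma/L^3$-type quantities, but in fact the cleanest route is to observe that on the subspace where $\mathcal N_\sigma = N_\sigma$ all these are uniformly $O(\rho)$; alternatively one keeps it as $\le C\rho\|g\|_2^2$ directly because each $\hat a^\ast\hat a$ has norm $\le 1$ and the $k$-sum over the Fermi ball gives the volume. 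Collecting, $\|b_\sigma^\ast(g)\psi\|^2 \le \|b_\sigma(g)\psi\|^2 + C\rho\|g\|_2^2$.

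The main obstacle is the careful bookkeeping of the CAR contractions: there are several cross-terms (the annihilation $\hat a_{p+k,\sigma}$ can contract with either $\hat a^\ast_{-k',\sigma}$ or $\hat a^\ast_{p'+k',\sigma}$, and likewise $\hat a_{-k,\sigma}$), and one must check that the $\hat u_\sigma, \hat v_\sigma$ factors force each contraction to place a momentum inside the Fermi ball $\mathcal B_F^\sigma$ (this is what produces the gain of a factor $\rho$ rather than an uncontrolled volume factor), and that cross-contractions of the ``$u$ with $v$'' type vanish because $\hat u_\sigma\hat v_\sigma \equiv 0$. Once one confirms that every non-normal-ordered remainder carries at least one $\hat v_\sigma$ factor and is thus supported on a set of momenta of size $|\mathcal B_F^\sigma| \le C\rho L^3$, the estimate $C\rho\|g\|_2^2$ follows from Parseval. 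I expect this to parallel standard quasi-bosonic commutator computations (as in \cite{FGHP,Gia1}), so the argument is routine once the contraction structure is written out; no new idea beyond the CAR and Parseval is needed.
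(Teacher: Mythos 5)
Your proposal is correct and takes essentially the same route as the paper: compute the commutator $[b^\ast_\sigma(g),b_\sigma(g)]$ (equivalently, normal-order $b_\sigma(g)b_\sigma^\ast(g)$), identify the fully normal-ordered quartic piece as $\|b_\sigma(g)\psi\|^2$, and bound the c-number double-contraction $\tfrac{1}{L^6}\sum_{p,k}|\hat g(p)|^2\hat u_\sigma(p+k)\hat v_\sigma(k)$ by $C\rho\|g\|_2^2$ via $\hat u_\sigma\le 1$, Parseval, and $|\mathcal B_F^\sigma|/L^3\lesssim\rho$. The one place you deviate is in the treatment of the single-contraction quadratic remainder: the paper notices that, after normal ordering, this piece is $-\tfrac{1}{L^6}\sum_k \hat v_\sigma(k)\,A_k^\ast A_k$ (and a second such term), hence negative semi-definite, so it can simply be dropped from the upper bound; you instead propose to bound it separately, which also works (e.g., $\|A_k\|^2\le L^3\|g\|_2^2$ gives the same $C\rho\|g\|_2^2$), but is slightly less clean and your ``schematic form'' $\tfrac{1}{L^6}\sum|\hat g|^2\hat u\hat v\,\hat a^\ast\hat a$ is not literally correct since the coefficients are off-diagonal $\hat g(p)\overline{\hat g(p')}$; the factorization $A_k^\ast A_k$ is the right way to make your operator-norm argument precise.
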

\begin{proof}
The proof of \eqref{eq: bound b ast g} can be done as in \cite[Proposition 5.1, Eq. (5.10)]{FGHP}. The idea is to write 
 \[
    \|b^\ast_\sigma(g)\psi\|^2 = \|b_\sigma(g)\psi\|^2 - \langle \psi, [b^\ast_\sigma(g), b_\sigma(g)]\psi\rangle,
 \] 
 and to use the positivity of the (normal-ordered) quadratic terms in the commutator above, obtaining 
 \[
    -  [b^\ast_\sigma(g_z), b_\sigma(g_z)] \leq \frac{1}{L^6}\sum_{k,r} |\hat{g}(k)|^2 \hat{v}_\sigma(r) \hat{u}_\sigma(r+k).
 \]
Bounding $\hat{u}_\sigma$ by $1$, the result follows.
%
 \end{proof}


Next, we turn to the number operator $\mathcal{N}$ on Fock space, which, after the particle-hole transformation $R$, measures the number of excitations above the Fermi sea. We have the following estimates concerning its behavior  under the quasi-bosonic transformations $T_{1;\lambda}$ and $T_{2;\lambda}$ defined in Section \ref{sec: almost bos bog transfo}. 
 
\begin{proposition}[Estimate for $\mathcal{N}$ -- Part I]\label{pro: est N T1} Let $\lambda \in [0,1]$ and $0<\gamma< 1/3$. For any normalized $\psi\in\mathcal{F}_{\mathrm{f}}$, we have 
\begin{equation}\label{eq: est number operator T1}
  \langle T_{1;\lambda}\psi, \mathcal{N} T_{1;\lambda} \psi\rangle \leq C\langle \psi, \mathcal{N}\psi\rangle + CL^3\rho^{\frac{5}{3} + \gamma}.
\end{equation}
\end{proposition}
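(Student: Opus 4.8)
The plan is to use the standard Grönwall strategy for controlling the number operator under an exponential of a quadratic (quasi-bosonic) expression. Write $\mathcal{N}(\lambda) := \langle T_{1;\lambda}\psi, \mathcal{N} T_{1;\lambda}\psi\rangle$. Differentiating in $\lambda$ gives
\[
\partial_\lambda \mathcal{N}(\lambda) = \langle T_{1;\lambda}\psi, [\mathcal{N}, B_1 - B_1^\ast] T_{1;\lambda}\psi\rangle .
\]
Since $B_1$ is a sum of terms of the form $b_{p,\uparrow} b_{-p,\downarrow}$, each of which removes two $\uparrow$-excitations and two $\downarrow$-excitations, one has $[\mathcal{N}, B_1] = -4 B_1$ and $[\mathcal{N}, B_1^\ast] = 4 B_1^\ast$, so $[\mathcal{N}, B_1 - B_1^\ast] = -4(B_1 + B_1^\ast)$. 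Hence $\partial_\lambda \mathcal{N}(\lambda) = -4\langle T_{1;\lambda}\psi, (B_1 + B_1^\ast) T_{1;\lambda}\psi\rangle$, and the whole matter reduces to bounding $|\langle \xi, B_1 \xi\rangle|$ for a normalized $\xi$ in terms of $\langle \xi, \mathcal{N}\xi\rangle$ and $L^3 \rho^{5/3+\gamma}$, uniformly in $\lambda$ after closing the Grönwall loop.

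The key step is the estimate on $B_1$. Using the configuration-space representation \eqref{eq: def tilde phi T1},
\[
B_1 = \int dz\, a_\uparrow(u_z) a_\uparrow(v_z)\, b_\downarrow(\widetilde\varphi_z),
\]
where $b_\downarrow(\widetilde\varphi_z) = \int dz'\, \widetilde\varphi(z-z') a_\downarrow(u_{z'}) a_\downarrow(v_{z'})$ is exactly the operator estimated in Lemma~\ref{lem: bounds b b*}. For a normalized $\xi$ I would bound
\[
|\langle \xi, B_1 \xi\rangle| \le \int dz\, \| a_\uparrow(u_z) a_\uparrow(v_z)\xi\| \, \| b_\downarrow(\widetilde\varphi_z)^\ast \xi\|.
\]
The first factor is controlled by $\|a_\uparrow(v_z)\xi\| \le \|v_\uparrow\|_{\mathrm{op}}$-type bounds together with $a_\uparrow(u_z)$ being bounded by $\|u_\uparrow\|_2$; more precisely $\int dz\, \|a_\uparrow(u_z) a_\uparrow(v_z)\xi\|^2 \le C \langle \xi, \mathcal{N}_\uparrow \xi\rangle$ after using $\sum_z$-type identities (the $v_z$ integrates to the number operator, the $u_z$ contributes a factor $\|u\|_\infty$-bounded in momentum). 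The second factor is bounded using Lemma~\ref{lem: b* phi tilde vs b phi tilde} to pass from $b_\downarrow^\ast(\widetilde\varphi_z)$ to $b_\downarrow(\widetilde\varphi_z)$ at the cost of $C\rho\|\widetilde\varphi_z\|_2^2$, and then Lemma~\ref{lem: bounds b b*} gives $\|b_\downarrow(\widetilde\varphi_z)\| \le C\rho^{1/3+\gamma/2}$; alternatively one integrates $\int dz\, \|b_\downarrow(\widetilde\varphi_z)^\ast \xi\|^2$ directly. Cauchy--Schwarz in $z$ then yields
\[
|\langle \xi, B_1 \xi\rangle| \le C\rho^{\frac13+\frac{\gamma}{2}} \langle \xi, \mathcal{N}\xi\rangle^{1/2} \big(\langle\xi,\mathcal{N}\xi\rangle + L^3\rho\big)^{1/2} \le C\rho^{\frac13+\frac\gamma2}\langle\xi,\mathcal{N}\xi\rangle + C L^3 \rho^{\frac53+\gamma},
\]
using $\langle\xi,\mathcal N\xi\rangle^{1/2}(L^3\rho)^{1/2}\le \langle\xi,\mathcal N\xi\rangle + L^3\rho$ and absorbing $\rho^{2/3+\gamma}\cdot L^3\rho$ together with the coefficient.

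Plugging this into $\partial_\lambda \mathcal N(\lambda)$ gives the differential inequality $\partial_\lambda \mathcal N(\lambda) \le C\rho^{1/3+\gamma/2}\mathcal N(\lambda) + CL^3\rho^{5/3+\gamma}$, and Grönwall's lemma over $\lambda\in[0,1]$ — with $e^{C\rho^{1/3+\gamma/2}}$ bounded uniformly since $\rho$ is small — yields $\mathcal N(1) \le C\mathcal N(0) + CL^3\rho^{5/3+\gamma}$, which is \eqref{eq: est number operator T1}. I expect the main obstacle to be the bookkeeping in the $B_1$ bound: one must carefully track that the $u_\uparrow, v_\uparrow$ pair genuinely produces only $\langle\mathcal N_\uparrow\rangle^{1/2}$ (not an extra volume factor) while the $\widetilde\varphi$ pair supplies the small $\rho^{1/3+\gamma/2}$ and the $L^3\rho$ remainder, and to do the Cauchy--Schwarz in $z$ so that the cross term splits cleanly into a piece absorbed by $\mathcal N(\lambda)$ and a genuine $L^3\rho^{5/3+\gamma}$ error. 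This is essentially the computation behind \cite[Lemma 4.8]{Gia1} and the analogous number-operator bounds in \cite{FGHP,Gia1}, adapted to the sharp projections $\hat u,\hat v$ used here.
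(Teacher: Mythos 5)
Your proposal follows the same Gr\"onwall strategy as the paper: compute $[\mathcal{N}, B_1 - B_1^\ast] = -4(B_1 + B_1^\ast)$, bound $B_1$ in configuration space using $\|a_\uparrow(v_z)\| = \|v_\uparrow\|_2 \leq \rho^{1/2}$, the norm bound $\|b_\downarrow(\widetilde\varphi_z)\| \leq C\rho^{1/3+\gamma/2}$ from Lemma~\ref{lem: bounds b b*}, and $\int dz\,\|a_\uparrow(u_z)\xi\|^2 \leq \langle\xi,\mathcal N\xi\rangle$, then integrate the resulting differential inequality. Two small corrections that don't affect the conclusion: $a_\uparrow(u_z)$ is \emph{not} a bounded operator with norm $\|u_\uparrow\|_2$ (for the sharp projection this $L^2$-norm diverges; only the $z$-integrated square norm against $\xi$ is controlled by $\mathcal N$), and the detour through Lemma~\ref{lem: b* phi tilde vs b phi tilde} is unnecessary since $\|b_\downarrow(\widetilde\varphi_z)^\ast\| = \|b_\downarrow(\widetilde\varphi_z)\|$ as operator norms, which gives the paper's cleaner inequality $|\partial_\lambda\langle T_{1;\lambda}\psi,\mathcal N T_{1;\lambda}\psi\rangle| \leq CL^{3/2}\rho^{5/6+\gamma/2}\|\mathcal N^{1/2}T_{1;\lambda}\psi\|$ directly.
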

\begin{proof}
The proof is very similar to the one in \cite[Proposition 4.11]{Gia1}. We compute 
\begin{eqnarray*}
  \partial_\lambda  T_{1;\lambda}^\ast\mathcal{N} T_{1;\lambda}  &=&  T_{1;\lambda}^\ast [\mathcal{N}, B_1 - B_1^\ast] T_{1;\lambda} = - 4 T_{1;\lambda}^\ast \left(  B_1 + B_1^\ast \right) T_{1;\lambda}
  \\
  &=& 4  T_{1;\lambda}^\ast \left(\int dz\, b_\downarrow(\widetilde\varphi_z) a_\uparrow(v_z)a_\uparrow(u_z)+ \mathrm{h.c.}\right) T_{1;\lambda},
\end{eqnarray*}
where we  used the definition \ref{eq: def b phi} and the form of $B_1$ in \eqref{eq: def tilde phi T1}. 
Using that $\|a_\uparrow(v_z)\|= \|v_\uparrow\|_2 \leq \rho^{1/2}$ as well as the bound \eqref{eq: est b phi, b phi j} from Lemma \ref{lem: bounds b b*}, we obtain with the help of the Cauchy--Schwarz inequality 
$$
  |\partial_\lambda \langle T_{1;\lambda}\psi, \mathcal{N}T_{1;\lambda}\psi\rangle | \leq C\rho^{\frac{5}{6}  +\frac{\gamma}{2}} \int dz\, \|a_\uparrow(u_z)T_{1;\lambda}\psi\| \leq CL^{\frac{3}{2}}\rho^{\frac{5}{6} +\frac{\gamma}{2}}\|\mathcal{N}^{\frac{1}{2}}T_{1;\lambda}\psi\|.
$$
The bound \eqref{eq: est number operator T1} then follows from Gr\"onwall's Lemma.
\end{proof}

Before considering the analogous problem for $T_{2;\lambda}$ we shall prove the following Lemma. 
\begin{lemma}[Integration over $t$]\label{lem:t}
Let $0<\gamma<1/3$ and $0<\delta\leq  8\gamma$.  
For $v^t_\sigma$ and $u^t_\sigma$ defined in \eqref{eq: def ut vt}, we have
\begin{equation}\label{fac1}
 \int_0^\infty dt \, e^{-2t\varepsilon} e^{2t (k_F^\sigma)^2} \|{u}^{t}_\sigma\|^2_2 \leq C \rho^{\frac 13-\gamma}
\end{equation}
and
\begin{equation}\label{fac2}
\int_0^\infty dt \, e^{-2t\varepsilon} e^{-2t (k_F^\sigma)^2} \|{v}^{t}_\sigma\|^2_2  \leq C\rho^{\frac{1}{3} - \frac{\delta}{8}}.
\end{equation}
\end{lemma}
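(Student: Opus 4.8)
\textbf{Proof plan for Lemma \ref{lem:t}.}

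The plan is to reduce both estimates to a single scalar integral. Recall that $\hat v_\sigma$ is the sharp projection onto $\{|k|\le k_F^\sigma\}$ and $\hat u_\sigma^<$ is supported on the annulus $\{k_F^\sigma<|k|\le 6\rho^{1/3-\gamma}\}$, so by Parseval,
\[
 \|v_\sigma^t\|_2^2 = \frac{1}{L^3}\sum_{|k|\le k_F^\sigma} e^{2t|k|^2},
 \qquad
 \|u_\sigma^t\|_2^2 = \frac{1}{L^3}\sum_{k_F^\sigma<|k|\le 6\rho^{1/3-\gamma}} e^{-2t|k|^2}.
\]
Multiplying by the weights $e^{\mp 2t(k_F^\sigma)^2}$ and then by $e^{-2t\varepsilon}$, one gets for each fixed $k$ an elementary $t$-integral: $\int_0^\infty e^{-2t(\varepsilon + (k_F^\sigma)^2 - |k|^2)}\,dt = [2(\varepsilon + (k_F^\sigma)^2 - |k|^2)]^{-1}$ in the $v$ case (note the exponent is negative since $|k|\le k_F^\sigma$ and $\varepsilon>0$), and $\int_0^\infty e^{-2t(\varepsilon + |k|^2 - (k_F^\sigma)^2)}\,dt = [2(\varepsilon + |k|^2 - (k_F^\sigma)^2)]^{-1}$ in the $u$ case (exponent negative since $|k|>k_F^\sigma$). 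Passing to the thermodynamic limit replaces $L^{-3}\sum_k$ by $(2\pi)^{-3}\int dk$; the $L$-uniformity is standard since the summands are nonnegative and bounded.

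For \eqref{fac1}, after the $t$-integration we must bound
\[
 \frac{1}{(2\pi)^3}\int_{k_F^\sigma<|k|\le 6\rho^{1/3-\gamma}} \frac{dk}{2(|k|^2-(k_F^\sigma)^2+\varepsilon)}.
\]
Since $\varepsilon>0$ the denominator is bounded below, but the relevant scale near the inner boundary $|k|\simeq k_F^\sigma\simeq\rho^{1/3}$ is controlled by $\varepsilon=\rho^{2/3+\delta}$. Writing $|k|^2-(k_F^\sigma)^2 = (|k|-k_F^\sigma)(|k|+k_F^\sigma)\simeq \rho^{1/3}(|k|-k_F^\sigma)$ near the Fermi surface, the integral over the thin shell of width $\sim\rho^{1/3}$ just below the outer radius $\rho^{1/3-\gamma}$ contributes the dominant $\rho^{1/3-\gamma}$, with the logarithmic factor from the region near $k_F^\sigma$ absorbed because $\delta\le 8\gamma$ keeps $\log(1/\varepsilon)\lesssim\rho^{-\gamma+1/3}/\rho^{1/3}$-type factors subleading; more directly, one estimates $\int_{k_F^\sigma}^{6\rho^{1/3-\gamma}} \frac{s^2\,ds}{s^2-(k_F^\sigma)^2+\varepsilon} \le \int_{k_F^\sigma}^{6\rho^{1/3-\gamma}} ds + (k_F^\sigma)^2\int \frac{ds}{s^2-(k_F^\sigma)^2+\varepsilon}$, the first term giving $C\rho^{1/3-\gamma}$ and the second giving $C\rho^{2/3}\cdot\rho^{-1/3}\log(\rho^{1/3-\gamma}/\sqrt\varepsilon)\le C\rho^{1/3}\log(\rho^{-1})$, which is $o(\rho^{1/3-\gamma})$. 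For \eqref{fac2}, after $t$-integration we bound
\[
 \frac{1}{(2\pi)^3}\int_{|k|\le k_F^\sigma}\frac{dk}{2((k_F^\sigma)^2-|k|^2+\varepsilon)} \le C\int_0^{k_F^\sigma}\frac{s^2\,ds}{(k_F^\sigma)^2-s^2+\varepsilon},
\]
and again split: away from the Fermi surface the denominator is $\gtrsim(k_F^\sigma)^2\simeq\rho^{2/3}$, contributing $\lesssim\rho^{1/3}$; near $s\simeq k_F^\sigma$ the same substitution gives $\lesssim \rho^{1/3}\log\big((k_F^\sigma)^2/\varepsilon\big)\lesssim \rho^{1/3}\log(\rho^{-\delta})\lesssim \rho^{1/3}\rho^{-\delta/8}\cdot C$ upon using $\log(1/\rho)\le C_\kappa\rho^{-\kappa}$ for any $\kappa>0$ — so choosing to absorb the logarithm into $\rho^{-\delta/8}$ yields the claimed $C\rho^{1/3-\delta/8}$.

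The main obstacle is purely the bookkeeping of the logarithmic divergences: both scalar integrals are genuinely of size $\rho^{1/3}\log(1/\varepsilon)$ near the Fermi surface, and the point of the $\varepsilon=\rho^{2/3+\delta}$ regularization together with the hypotheses $\delta\le 8\gamma$ (for \eqref{fac1}) and the power loss $\rho^{-\delta/8}$ (for \eqref{fac2}) is exactly to trade that logarithm for a small polynomial factor. So the care needed is in the shell-by-shell estimate near $|k|=k_F^\sigma$ and in checking the thin-shell contribution near $|k|=\rho^{1/3-\gamma}$ dominates in \eqref{fac1}; everything else (Parseval, explicit $t$-integration, passage to the thermodynamic limit) is routine.
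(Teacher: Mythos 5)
Your proof takes essentially the same route as the paper: carry out the $t$-integration term by term in Fourier space, reduce each bound to a radial integral of $\frac{1}{||k|^2-(k_F^\sigma)^2|+\varepsilon}$, and separate a bounded contribution (of size $\rho^{1/3-\gamma}$ for $u$, $\rho^{1/3}$ for $v$) from a logarithmic contribution near the Fermi surface controlled by $\varepsilon$. The only cosmetic differences are the split used (you write $\frac{s^2}{s^2-k_F^2+\varepsilon}\le 1+\frac{k_F^2}{s^2-k_F^2+\varepsilon}$ whereas the paper cuts the integration domain at $|k|=2k_F^\sigma$) and, in \eqref{fac1}, that you absorb $\rho^{1/3}\log(1/\rho)$ into $\rho^{1/3-\gamma}$ directly rather than via $\log(\rho^{-\delta})\le C\rho^{-\delta/8}$ and the hypothesis $\delta\le 8\gamma$; both routes are correct and give the stated bounds.
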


The proof shows that the factor $8$ can be replaced by any other number. The choice $8$ guarantees that the final error bound is independent of it. 
In the following we will often use 
\begin{equation}\label{eq: est int t uv final}
  \int dt \, e^{-2t\varepsilon} \|{u}^{t}_\sigma\|_2 \|{v}^t_\sigma\|_2  \leq   C\rho^{\frac{1}{3} - \frac{\gamma}{2} - \frac{\delta}{16}}
\end{equation}
which follows from Lemma~\ref{lem:t} by the Cauchy--Schwarz inequality.

\begin{proof} 
We start with proving \eqref{fac1}. Writing the $L^2$-norm as a sum in momentum space, we have
$$
 \int_0^\infty dt \, e^{-2t\varepsilon} e^{2t (k_F^\sigma)^2} \|{u}^{t}_\sigma\|^2_2 = \frac{1}{2 L^3}\sum_k  \frac{ |\hat{u}_\sigma^<(k)|^2}{  |k|^2 - (k_F^\sigma)^2 + \varepsilon}.
$$
Since $\hat u_\sigma^<(k)$ is supported on $k_F^\sigma < |k| \leq   6\rho^{\frac{1}{3} -\gamma}$ (see \eqref{eq: u<}), the contribution to the sum from $|k| \geq 2 k_F^\sigma$ is bounded by $C \rho^{1/3-\gamma}$. For the remaining part, we bound the sum by the corresponding integral and obtain 
\begin{eqnarray*}
  \frac{1}{2L^3}\sum_{k_F^\uparrow < |k| < 2k_F^\sigma}  \frac{1}{(|k| - k_F^\sigma)(|k| + k_F^\sigma) + \varepsilon} &\leq& C\int_{{k_F^\sigma} \leq |k| \leq {2k_F^\sigma}} dk\frac{1}{(|k| - k_F^\sigma) (|k| +k_F^\sigma) + \varepsilon}
  \\
  &\leq& Ck_F^\sigma \log \left(1 + \frac{(k_F^\sigma)^2}{\varepsilon}\right) .
\end{eqnarray*}
The last factor grows only logarithmically with $(k_F^\sigma)^2/\varepsilon \lesssim \rho^{-\delta}$, hence we can bound it by $C\rho^{-\delta/8}$. This term is negligible compared to the one of order $\rho^{1/3-\gamma}$ above, and we arrive at \eqref{fac1}.
For \eqref{fac2} we have analogously 
$$
 \int_0^\infty dt \, e^{-2t\varepsilon} e^{-2t (k_F^\uparrow)^2} \|{v}^{t}_\uparrow\|^2_2 = \frac{1}{2 L^3}\sum_k  \frac{ |\hat{v}_\uparrow(k)|^2}{ (k_F^\uparrow)^2  - |k|^2  + \varepsilon}.
$$
The sum can be bounded very similarly as above, with the result that
$$
  \frac{1}{2L^3}\sum_k \frac{|\hat{v}_\sigma(k)|^2}{(k_F^\sigma)^2 - |k|^2 +\varepsilon} \leq C \int_{0 \leq |k| \leq {k_F^\sigma}} d k \frac{1}{(k_F^\sigma - |k|)(k_F^\sigma + |k|) + \varepsilon}
  \leq Ck_F^\sigma \log \left(1 + \frac{(k_F^\sigma)^2}{\varepsilon}\right)\leq C\rho^{\frac{1}{3} - \frac{\delta}{8}}.
$$
\end{proof}

\begin{proposition}[Estimate for $\mathcal{N}$ -- Part II]\label{pro: est number op}
 Let $0<\delta\leq 8\gamma$ for some $0< \gamma <1/3$, and $\lambda\in [0,1]$. For any normalized $\psi \in \mathcal{F}_{\mathrm{f}}$, we have 
\begin{equation}\label{eq: est N T2psi}
 |\partial_\lambda \langle T_{2;\lambda}\psi, \mathcal{N} T_{2;\lambda}\psi\rangle| \leq 
C L^{\frac{3}{2}} \rho^{\frac{5}{6} - \frac{\gamma}{2} - \frac{\delta}{16}} \|\mathcal{N}^{\frac{1}{2}}T_{2;\lambda}\psi\|.
\end{equation}
Furthermore, 
\begin{equation}\label{eq: est number operator T1 final}
  \langle T_{1;\lambda}T_2\Omega, \mathcal{N} T_{1;\lambda}T_2\Omega\rangle \leq  C L^3\rho^{\frac{5}{3} -\gamma - \frac{\delta}{8}}.
\end{equation}
\end{proposition}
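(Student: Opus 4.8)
The plan is to treat the two statements in sequence, deriving \eqref{eq: est N T2psi} first by a direct computation of the $\lambda$-derivative, and then obtaining \eqref{eq: est number operator T1 final} by combining \eqref{eq: est N T2psi} (applied to $\psi = T_2\Omega$, after integrating out $\lambda$ via Gr\"onwall) with Proposition \ref{pro: est N T1}. For the first bound, I would differentiate as in the proof of Proposition \ref{pro: est N T1}:
\[
  \partial_\lambda T_{2;\lambda}^\ast \mathcal{N} T_{2;\lambda} = T_{2;\lambda}^\ast [\mathcal{N}, B_2 - B_2^\ast] T_{2;\lambda} = -4\, T_{2;\lambda}^\ast (B_2 + B_2^\ast) T_{2;\lambda},
\]
using that $B_2$ lowers the particle number by $4$. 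The key is then to bound $\langle T_{2;\lambda}\psi, B_2 T_{2;\lambda}\psi\rangle$. Using the configuration-space representation \eqref{def:B2}, I would write
\[
  B_2 = 8\pi a \int_0^\infty dt\, e^{-2t\varepsilon} \int dz\, b_\downarrow^t(\chi_{<,z})\, a_\uparrow(v_z^t) a_\uparrow(u_z^t),
\]
where $b_\downarrow^t$ is the analogue of \eqref{eq: def b phi} with $v_\downarrow, u_\downarrow$ replaced by $v_\downarrow^t, u_\downarrow^t$. The $\uparrow$-operators are controlled by $\|a_\uparrow(v_z^t)\| = \|v_\uparrow^t\|_2$ and $\|a_\uparrow(u_z^t) T_{2;\lambda}\psi\|$, while for the $\downarrow$-pair one uses $\|b_\downarrow^t(\chi_{<,z})\| \le C \|\chi_<\|_1 \|u_\downarrow^t\|_2 \|v_\downarrow^t\|_2$ (exploiting $\|\chi_<\|_{L^1(\Lambda)} \le C$ from Lemma \ref{lem:cut-off-chi}, together with \eqref{bound:a}). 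After a Cauchy--Schwarz in $z$, the integral $\int dz\, \|a_\uparrow(u_z^t) T_{2;\lambda}\psi\| \le L^{3/2} \|\mathcal{N}^{1/2} T_{2;\lambda}\psi\|$ appears, and the remaining $t$-integral is
\[
  \int_0^\infty dt\, e^{-2t\varepsilon} \|u_\downarrow^t\|_2 \|v_\downarrow^t\|_2 \|v_\uparrow^t\|_2 \|u_\uparrow^t\|_2 \le C \rho^{\frac{1}{3} - \frac{\gamma}{2} - \frac{\delta}{16}} \cdot \rho^{\frac{1}{2}},
\]
where the first factor comes from \eqref{eq: est int t uv final} in Lemma \ref{lem:t} applied to one spin (note the extra $e^{\pm t (k_F^\sigma)^2}$ factors needed to invoke \eqref{fac1}--\eqref{fac2} must be inserted and then used carefully, since $\|v_\uparrow^t\|_2 \|u_\uparrow^t\|_2$ carries its own $t$-dependence — here one can bound $\|u_\uparrow^t\|_2 \le \|u_\uparrow^<\|_2 \le C\rho^{1/2 - 3\gamma/2}$ uniformly and $\|v_\uparrow^t\|_2 \le e^{t(k_F^\uparrow)^2}\|v_\uparrow\|_2$, absorbing the exponential against the $\downarrow$ gap factor, or more cleanly split the two $e^{-2t\varepsilon}$-type weights symmetrically between the spins). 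Collecting, one gets $|\partial_\lambda \langle T_{2;\lambda}\psi, \mathcal{N} T_{2;\lambda}\psi\rangle| \le C L^{3/2} \rho^{5/6 - \gamma/2 - \delta/16} \|\mathcal{N}^{1/2} T_{2;\lambda}\psi\|$, which is \eqref{eq: est N T2psi}.

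For \eqref{eq: est number operator T1 final}, I would first apply \eqref{eq: est N T2psi} with $\psi = \Omega$. Since $\langle T_{2;\lambda}\Omega, \mathcal{N} T_{2;\lambda}\Omega\rangle$ has $\lambda$-derivative bounded by $C L^{3/2}\rho^{5/6-\gamma/2-\delta/16} (\langle T_{2;\lambda}\Omega, \mathcal{N} T_{2;\lambda}\Omega\rangle)^{1/2}$ and vanishes at $\lambda = 0$, Gr\"onwall's lemma (in the form $\partial_\lambda \sqrt{f} \le C$ when $f(0)=0$) gives $\langle T_2 \Omega, \mathcal{N} T_2\Omega\rangle \le C L^3 \rho^{5/3 - \gamma - \delta/8}$. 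Then I plug $\psi = T_2\Omega$ into Proposition \ref{pro: est N T1}, obtaining
\[
  \langle T_{1;\lambda} T_2\Omega, \mathcal{N} T_{1;\lambda} T_2\Omega\rangle \le C \langle T_2\Omega, \mathcal{N} T_2\Omega\rangle + C L^3 \rho^{\frac{5}{3}+\gamma} \le C L^3 \rho^{\frac{5}{3} - \gamma - \frac{\delta}{8}} + C L^3 \rho^{\frac{5}{3}+\gamma}.
\]
Since $\delta > 0$ and $\gamma > 0$, the first term dominates (as $\rho^{5/3 - \gamma - \delta/8}$ is larger than $\rho^{5/3 + \gamma}$ for small $\rho$), giving \eqref{eq: est number operator T1 final}.

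The main obstacle I anticipate is the bookkeeping of the $t$-integral weights in the proof of \eqref{eq: est N T2psi}: the operators $u_\sigma^t, v_\sigma^t$ defined in \eqref{eq: def ut vt} carry exponential factors $e^{\mp t|\cdot|^2}$, and to apply Lemma \ref{lem:t} one needs the precise combination $e^{-2t\varepsilon} e^{\pm 2t(k_F^\sigma)^2}$ against each squared norm. Since the integrand of $B_2$ in \eqref{def:B2} involves the product $u_\uparrow^t v_\uparrow^t u_\downarrow^t v_\downarrow^t$ with a single overall $e^{-2t\varepsilon}$, one must distribute this weight (and insert/cancel the $e^{\pm 2t(k_F)^2}$ compensating factors, which telescope because $\lambda_{r,p} + \lambda_{r',-p}$ pairs the $\uparrow$ and $\downarrow$ gaps) so that each of the four norms is integrated against the correct Lemma \ref{lem:t} weight. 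This requires care but is the same mechanism already used to derive \eqref{eq: est int t uv final}; once the splitting is set up correctly the rest is routine. A secondary technical point is ensuring that the replacement of $\hat u_\sigma$ by $\hat u_\sigma^<$ inside $B_2$ (justified in Remark \ref{rem: T1 and T2 configuration space}) is used so that $\|u_\sigma^t\|_2$ is finite and controlled by $\rho^{1/2 - 3\gamma/2}$ rather than diverging.
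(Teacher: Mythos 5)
Your proof takes essentially the same route as the paper: differentiate to get $-4\,T_{2;\lambda}^\ast(B_2+B_2^\ast)T_{2;\lambda}$, use the configuration-space representation \eqref{def:B2}, bound three of the four creation/annihilation operators by operator norm (absorbing $\|\chi_<\|_1 \le C$), keep one $a_\sigma(u^t_\cdot)$ to act on $T_{2;\lambda}\psi$ and convert it via Cauchy--Schwarz and \eqref{eq: from aut to Number op} into $L^{3/2}e^{-t(k_F^\sigma)^2}\|\mathcal{N}^{1/2}T_{2;\lambda}\psi\|$, then handle the remaining $t$-integral with \eqref{eq: est int t uv final}; for \eqref{eq: est number operator T1 final} you use Gr\"onwall followed by Proposition~\ref{pro: est N T1}, exactly as the paper does. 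The only material difference is that you keep $a_\uparrow(u_z^t)$ for the number-operator bound, while the paper keeps $a_\downarrow(u^t_y)$; by the $\uparrow\leftrightarrow\downarrow$ symmetry of $B_2$ these choices are equivalent.

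One bookkeeping slip in your write-up: the displayed $t$-integral
\[
  \int_0^\infty dt\, e^{-2t\varepsilon} \|u_\downarrow^t\|_2 \|v_\downarrow^t\|_2 \|v_\uparrow^t\|_2 \|u_\uparrow^t\|_2
\]
carries a spurious $\|u_\uparrow^t\|_2$. Once $a_\uparrow(u_z^t)$ has been sent to act on $T_{2;\lambda}\psi$ and estimated through the number operator, it contributes $L^{3/2}e^{-t(k_F^\uparrow)^2}\|\mathcal{N}^{1/2}T_{2;\lambda}\psi\|$, not another factor of $\|u_\uparrow^t\|_2$; the correct remaining integral is $\int_0^\infty dt\, e^{-2t\varepsilon}\|u_\downarrow^t\|_2\|v_\downarrow^t\|_2\|v_\uparrow^t\|_2\, e^{-t(k_F^\uparrow)^2}$. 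Using $\|v_\uparrow^t\|_2\, e^{-t(k_F^\uparrow)^2}\le\rho^{1/2}$ (as your parenthetical correctly anticipates) and applying \eqref{eq: est int t uv final} to the $\downarrow$ pair then gives the claimed bound directly; neither the uniform bound $\|u_\uparrow^t\|_2\le C\rho^{1/2-3\gamma/2}$ nor any cross-spin absorption of exponentials is needed. The four-norm integral as you wrote it would instead correspond to bounding \emph{all} four operators in $B_2$ by operator norm, which yields an $L^3$ estimate with no $\|\mathcal{N}^{1/2}\|$ factor and thus cannot produce \eqref{eq: est N T2psi}.
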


\begin{proof}
As in Proposition \ref{pro: est N T1}, we compute
$$
    \partial_\lambda T_{2;\lambda}^\ast\mathcal{N} T_{2,\lambda} = T_{2;\lambda}^\ast [\mathcal{N}, B_2 - B_2^\ast] T_{2,\lambda} = - 4 T_{2;\lambda}^\ast \left(  B_2 + B_2^\ast\right) T_{2,\lambda}.
$$
In particular, inserting the expression \eqref{def:B2} for $B_2$, we have
  \[
    \partial_\lambda T_{2;\lambda}^\ast \mathcal{N}T_{2;\lambda} = 4 T^\ast_{2;\lambda}\left(8\pi a\int_0^\infty dt\, e^{-2t\varepsilon}\int dxdy\, \chi_<(x-y)  a_\uparrow(u^{t}_x)a_\uparrow(v^t_x)a_\downarrow(v^{t}_y)a_\downarrow(u^t_y) + \mathrm{h.c.}\right)T_{2;\lambda},
  \]              
  where ${u}^{t}_\sigma$ and ${v}^{t}_\sigma$ are defined in \eqref{eq: def ut vt}.
  We can then estimate
  \begin{eqnarray}\label{eq: est der N T2}
    |\partial_\lambda \langle T_{2;\lambda}\psi, \mathcal{N}T_{2;\lambda}\psi\rangle|  &\leq& C\int dt\, e^{-2t\varepsilon}\|{u}^{t}_\uparrow\|_2\|{v}^t_\uparrow\|_2\|{v}^t_\downarrow\|_2\int dxdy\, |\chi_<(x-y)|
    \|a_\downarrow(u^{t}_y)T_{2;\lambda}\psi\|\nonumber
    \\
    &\leq& C L^{\frac{3}{2}}\rho^{\frac{1}{2}} \|\mathcal{N}^{\frac{1}{2}}T_{2;\lambda}\psi\| \int dt\, e^{-2t\varepsilon} \|\hat{u}^{t}_\uparrow\|_2 \|\hat{v}^t_\uparrow\|_2 ,
  \end{eqnarray}
  where we used that $\|\chi_<\|_1 \leq C$, $\|v^t_\downarrow\|_2 \leq C \rho^{1/2}  e^{t (k_F^\downarrow)^2}$,  and that 
\begin{equation}\label{eq: from aut to Number op}
 \int dy\, \|a_\sigma(u^{t}_y) \psi\|^2 = \sum_{k}|\hat{u}_\sigma^t(k)|^2 \langle \psi, \hat{a}_{k,\sigma}^\ast \hat{a}_{k,\sigma} \psi\rangle \leq \sum_{|k| > k_F^\sigma} e^{-2t|k|^2}\langle  \psi, \hat{a}_{k,\downarrow}^\ast \hat{a}_{k,\downarrow} \psi\rangle \leq e^{-2t(k_F^\sigma)^2}\langle \psi, \mathcal{N}\psi\rangle,
\end{equation}
along with the Cauchy--Schwarz inequality. The desired bound \eqref{eq: est N T2psi} then follows from \eqref{eq: est int t uv final}. 
  
Gr\"onwall's Lemma readily implies \eqref{eq: est number operator T1 final} for $\lambda=0$. The general case then follows directly from  \eqref{eq: est number operator T1}. 
\end{proof}

\subsection{Simplified correlation operator} \label{sec: eff corr en}
In this section, we collect some bounds already discussed in \cite{FGHP, Gia1}, which allow to reduce the correlation operator  
$
  \mathcal{H}_{\mathrm{corr}} = \mathbb{H}_0 + \mathbb{X} + \sum_{i=1}^4\mathbb{Q}_i
$
defined in \eqref{eq: def H-corr} to a simplified one as far as the energy of our trial state $\psi_{\mathrm{trial}} = RT_1T_2\Omega$ is concerned. We shall see that we can ignore the contributions of $\mathbb{X}$, $\mathbb{Q}_1$, $\mathbb{Q}_2^\parallel$ and $\mathbb{Q}_3$.


\subsubsection*{Estimates for $\mathbb{X}$ and $\mathbb{Q}_1$}\label{sec: operators X and Q1}
 Proceeding as in the proof of  \cite[Proposition 3.3]{FGHP}, it is easy to see that 
\begin{equation}\label{eq: est X Q1}
  |\langle \psi, \mathbb{X}\psi\rangle | \leq C\rho\langle \psi, \mathcal{N}\psi\rangle, \qquad |\langle \psi, \mathbb{Q}_1 \psi\rangle |\leq C\rho\langle \psi,\mathcal{N}\psi\rangle, \qquad \forall\psi\in \mathcal{F_{\rm f}}.
\end{equation}
This estimate    guarantees that  $\langle T_1T_2\Omega, (\mathbb{X} + \mathbb{Q}_1) T_1T_2\Omega\rangle = L^3 o(\rho^{7/3})$, as a direct consequence of the bound for the number operator in \eqref{eq: est number operator T1 final}, as long as $\gamma+\delta/8 < 1/3$. 

\subsubsection*{Estimate for $\mathbb{Q}_3$}\label{sec: operator Q3} We note that the state $T_1 T_2\Omega$ is such that in the $n$-particle sector of the Fock space   $(T_1 T_2\Omega)^{(n)} = 0$ unless $n= 4k$ for $k\in\mathbb{N}$. This is a consequence of the fact that both $B_1$ and $B_2$ either create or annihilate $4$ particles. 
Since $\mathbb{Q}_3$ changes the particle number by $\pm 2$, we immediately conclude that
$
  \langle T_1T_2\Omega, \mathbb{Q}_3 T_1T_2\Omega\rangle  = 0
$.

\subsubsection*{Estimate for $\mathbb{Q}_2^\parallel$} 
We shall now argue, similarly as above, that also  $\langle T_{1}T_2\Omega,{\mathbb{Q}}_2^\parallel  T_1T_2\Omega\rangle  = 0$. This follows from the fact that $T_1T_2\Omega$ and ${\mathbb{Q}}_2^{\parallel} T_1T_2\Omega$ are orthogonal, since $\mathbb{Q}_2^\parallel$ creates or annihilates $4$ particles of the same spin, while in the state $T_1T_2\Omega$ the number of particles in each spin component is the same, since both $T_1$ and $T_2$ create particle (and annihilate) pairs of opposite spin simultaneously. 
%
%

\bigskip

In summary, we obtain the following simplification of the correlation Hamiltonian. 
\begin{proposition}[Effective correlation operator] \label{lem:eff-Hamiltonian} We have
$$
\langle T_1T_2 \Omega, \mathcal{H}_{\mathrm{corr}} T_1T_2 \Omega\rangle \le
\langle T_1T_2 \Omega, \mathcal{H}_{\mathrm{corr}}^{\mathrm{eff}}   T_1T_2 \Omega\rangle + C L^3\rho^{\frac{8}{3} -\gamma - \frac{\delta}{8}}
$$
where $\mathcal{H}_{\mathrm{corr}}^{\mathrm{eff}}= \mathbb{H}_0 + \mathbb{Q}_2^\udarrow + \mathbb{Q}_4$ (with the various terms defined in \eqref{eq: def H-corr}).
\end{proposition}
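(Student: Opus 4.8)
The plan is to prove Proposition~\ref{lem:eff-Hamiltonian} by showing that each of the four operators $\mathbb{X}$, $\mathbb{Q}_1$, $\mathbb{Q}_2^\parallel$ and $\mathbb{Q}_3$ contributes at most $CL^3\rho^{8/3-\gamma-\delta/8}$ (in fact, the last two contribute exactly zero) to the energy of the trial state $T_1T_2\Omega$, so that $\mathcal{H}_{\mathrm{corr}} = \mathcal{H}_{\mathrm{corr}}^{\mathrm{eff}} + (\mathbb{X} + \mathbb{Q}_1 + \mathbb{Q}_2^\parallel + \mathbb{Q}_3)$ and only the effective part survives. The point is that $\mathcal{H}_{\mathrm{corr}}^{\mathrm{eff}} = \mathbb{H}_0 + \mathbb{Q}_2^\udarrow + \mathbb{Q}_4$ collects precisely the terms one needs for the Huang--Yang correction, while the remaining terms are either ruled out by particle-number/spin parity or are small by a number-operator bound.

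First I would dispose of $\mathbb{Q}_3$ and $\mathbb{Q}_2^\parallel$ by the parity argument already indicated in the excerpt: since both $B_1$ and $B_2$ create or annihilate exactly four particles, $(T_1T_2\Omega)^{(n)}$ vanishes unless $n$ is a multiple of $4$; as $\mathbb{Q}_3$ changes the particle number by $\pm 2$, the vector $\mathbb{Q}_3 T_1T_2\Omega$ is orthogonal to $T_1T_2\Omega$, giving $\langle T_1T_2\Omega, \mathbb{Q}_3 T_1T_2\Omega\rangle = 0$. For $\mathbb{Q}_2^\parallel$ one uses a finer observation: $T_1$ and $T_2$ only create or annihilate pairs of opposite spin simultaneously, so $T_1T_2\Omega$ lies in the sector where $\mathcal{N}_\uparrow - \mathcal{N}_\downarrow$ equals its value on $\Omega$; since $\mathbb{Q}_2^\parallel$ creates or annihilates four particles of the \emph{same} spin, it shifts $\mathcal{N}_\uparrow - \mathcal{N}_\downarrow$ by $\pm 4$, again producing an orthogonal vector, so $\langle T_1T_2\Omega, \mathbb{Q}_2^\parallel T_1T_2\Omega\rangle = 0$. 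Both facts about $T_1, T_2$ follow directly from the definitions in Definition~\ref{def:bosonic-transformations}, exactly as used in the admissibility check at the end of Section~\ref{sec: almost bos bog transfo}.

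For $\mathbb{X}$ and $\mathbb{Q}_1$ the plan is to invoke the operator bounds \eqref{eq: est X Q1}, which give $|\langle\psi, \mathbb{X}\psi\rangle|, |\langle\psi,\mathbb{Q}_1\psi\rangle| \le C\rho\langle\psi,\mathcal{N}\psi\rangle$ for all $\psi\in\mathcal{F}_{\mathrm f}$ (these are proved as in \cite[Proposition 3.3]{FGHP}). Applying this with $\psi = T_1T_2\Omega$ and then using the number-operator estimate \eqref{eq: est number operator T1 final}, namely $\langle T_1T_2\Omega, \mathcal{N} T_1T_2\Omega\rangle \le CL^3\rho^{5/3-\gamma-\delta/8}$, we obtain $|\langle T_1T_2\Omega, (\mathbb{X}+\mathbb{Q}_1)T_1T_2\Omega\rangle| \le C\rho\cdot L^3\rho^{5/3-\gamma-\delta/8} = CL^3\rho^{8/3-\gamma-\delta/8}$, which is the claimed error. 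Collecting the four contributions and using that the omitted terms are all non-positive in effect (or simply bounding the error by its absolute value) yields the stated inequality $\langle T_1T_2\Omega, \mathcal{H}_{\mathrm{corr}}T_1T_2\Omega\rangle \le \langle T_1T_2\Omega, \mathcal{H}_{\mathrm{corr}}^{\mathrm{eff}} T_1T_2\Omega\rangle + CL^3\rho^{8/3-\gamma-\delta/8}$.

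The main obstacle here is not conceptual but bookkeeping: one must make sure that the number-operator bound \eqref{eq: est number operator T1 final} applies to the specific vector $T_1T_2\Omega$ (it does, being stated for $T_{1;\lambda}T_2\Omega$ at $\lambda=1$) and that the exponent $8/3-\gamma-\delta/8$ is genuinely larger than $8/3$ minus something small — in particular that with the eventual choice $\gamma = 1/9$ and $\delta$ in the stated range one has $\gamma + \delta/8 < 1/3$, so the error is $o(\rho^{7/3})$ per unit volume as required later. Since everything reduces to already-established lemmas, the proof is short; the only care needed is to confirm that \eqref{eq: est X Q1} is indeed applicable to Fock-space vectors that are not eigenstates of $\mathcal{N}$ with the prescribed particle numbers, which is fine because those bounds hold for arbitrary $\psi\in\mathcal{F}_{\mathrm f}$.
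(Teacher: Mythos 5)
Your proposal is correct and matches the paper's proof essentially line for line: the vanishing of $\mathbb{Q}_3$ by the mod-$4$ particle-number parity, the vanishing of $\mathbb{Q}_2^\parallel$ by the spin-imbalance observation that $T_1T_2\Omega$ lives in the $\mathcal{N}_\uparrow=\mathcal{N}_\downarrow$ sector while $\mathbb{Q}_2^\parallel$ shifts it by $\pm 4$, and the bound on $\mathbb{X}+\mathbb{Q}_1$ via \eqref{eq: est X Q1} combined with the number-operator estimate \eqref{eq: est number operator T1 final} are all exactly the paper's steps. The only cosmetic slip is the phrase about the omitted terms being ``non-positive in effect''; what is actually used is simply that $|\langle T_1T_2\Omega,(\mathbb{X}+\mathbb{Q}_1)T_1T_2\Omega\rangle|\le CL^3\rho^{8/3-\gamma-\delta/8}$, which upper-bounds the difference regardless of sign.
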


%
%

\section{First quasi-bosonic Bogoliubov transformation $T_1$}\label{sec:T1}
In this section we perform the first step  discussed in Section \ref{sec:sub-strategy}, i.e., we conjugate the effective correlation operator $\mathcal{H}_{\mathrm{corr}}^{\mathrm{eff}}= \mathbb{H}_0 + \mathbb{Q}_2^\udarrow + \mathbb{Q}_4$ in Proposition \ref{lem:eff-Hamiltonian} with the unitary $T_1$ introduced in \eqref{eq: def T1 unitary}. The main result of this section is the following rigorous version of \eqref{eq: second step app part I}. Throughout this  section, we shall denote by $o(1)_{L\to\infty}$ terms that vanish in the thermodynamic limit (and likewise $o(L^\alpha)_{L\to\infty}$ for those that vanish after dividing by $L^\alpha$). We shall not specify their $\rho$-dependence since we take $L\to \infty$ before considering the low density limit.

\begin{proposition}[Conjugation by $T_1$] \label{lem:conjugation-T1} 
For any $\gamma\in (0,\frac 16)$, $\delta \in (0,8\gamma]$ with $2\gamma +  \frac{\delta}{16} \leq \frac{1}{3}$ we have
\begin{multline}
 \frac{1}{L^3} \langle   T_1 T_2 \Omega, \mathcal{H}^{\rm eff}_{\rm corr}  T_1 T_2 \Omega \rangle \leq  
\rho_\uparrow\rho_\downarrow \biggl( 8\pi a - \hat V(0)   + \frac{1}{L^3}\sum_{0\ne p \in\Lambda^*} \frac{(8\pi a)^2(\widehat{\chi}_<(p))^2}{2|p|^2} \biggl)
  \\
  +\frac{1}{L^3}\langle T_2\Omega, (\mathbb{H}_0 + {\mathbb{Q}}_{2;<}+ \mathcal{E}_{\mathbb{H}_0} + C\rho^{\frac 23 + 2\gamma - \frac \delta 8} \mathbb{H}_0+ C\mathbb{Q}_4) T_2\Omega\rangle 
  + C\rho^{\frac{8}{3}  - 2\gamma} + o(1)_{L\to \infty}\label{eq: final T1 up bd}
\end{multline}
with ${\mathbb{Q}}_{2;<}$ defined in \eqref{eq: definition Q2<} and
\begin{equation}
  \mathcal{E}_{\mathbb{H}_0} = -\frac{2}{L^3}\sum_{k,s,s^\prime \in\Lambda^*} k\cdot(s-s^\prime)\hat{\varphi}(k)\widehat{\chi}_>(k)\, b_{k,s,\uparrow}b_{-k,s^\prime, \downarrow} + \mathrm{h.c.}. \label{eq: error kin energy}
\end{equation}
\end{proposition}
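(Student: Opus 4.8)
The plan is to follow the Duhamel expansion sketched in Step 1 of Section \ref{sec:sub-strategy} and turn each heuristic ``$\simeq$'' into a rigorous estimate, controlling all error terms by the number operator bounds of Section \ref{sec:prel}. Concretely, I would write
\[
  T_1^* \mathcal{H}_{\rm corr}^{\rm eff} T_1 = \mathcal{H}_{\rm corr}^{\rm eff} + [\mathcal{H}_{\rm corr}^{\rm eff}, B_1 - B_1^*] + \int_0^1 d\lambda\,(1-\lambda)\, T_{1;\lambda}^* [[\mathcal{H}_{\rm corr}^{\rm eff}, B_1 - B_1^*], B_1 - B_1^*] T_{1;\lambda},
\]
and then further expand the commutator term $[\mathbb{Q}_2^\udarrow, B_1-B_1^*]$ to one more order as in \eqref{eq: conjugation under T1}. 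The first task is to compute $[\mathbb{H}_0 + \mathbb{Q}_4, B_1 - B_1^*]$ explicitly in terms of $b,b^*$ operators; the commutator with $\mathbb{H}_0$ produces the kinetic term $-2|p|^2\hat\varphi(p)\widehat\chi_>(p)$ together with the genuinely off-diagonal remainder $\mathcal{E}_{\mathbb{H}_0}$ in \eqref{eq: error kin energy} (coming from the $p\cdot(s-s')$ part of $\lambda_{s,p}+\lambda_{s',-p}$ that $\hat\varphi$ does not see), while the commutator with $\mathbb{Q}_4$ yields $-\hat V\ast_\Sigma(\hat\varphi\widehat\chi_>)(p)$ up to errors that change the particle number or carry extra factors of $u_z$. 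Using the zero-energy scattering equation \eqref{eq: zero energy scatt eq-intro} in the form $\hat V\ast_\Sigma\hat\varphi \simeq \hat V - 2|p|^2\hat\varphi$ and $2|p|^2\hat\varphi(p)\simeq 8\pi a$ on the support of $\widehat\chi_<$, together with $\mathbb{Q}_2^\udarrow = L^{-3}\sum_p \hat V(p) b_{p,\uparrow}b_{-p,\downarrow}+\mathrm{h.c.}$, I recover \eqref{eq: scatt eq Q2< + tilde Q2<}, i.e. $\mathbb{Q}_2^\udarrow + [\mathbb{H}_0+\mathbb{Q}_4, B_1-B_1^*] = \mathbb{Q}_{2;<} + \widetilde{\mathbb{Q}}_{2;<} + (\text{errors})$; the errors here must be shown to be $L^3 o(\rho^{7/3})$ in the state $T_1 T_2\Omega$ using Proposition \ref{pro: est N T1} and Proposition \ref{pro: est number op}.

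The second task is the double commutator: after normal-ordering, $\tfrac12[\mathbb{Q}_2^\udarrow + \mathbb{Q}_{2;<} + \widetilde{\mathbb{Q}}_{2;<}, B_1 - B_1^*]$ contributes a constant (scalar) part plus operator remainders. The constant part is computed by tracking the contractions of the $b$'s against the $b^*$'s in $B_1 - B_1^*$, which produces
\[
  -\rho_\uparrow\rho_\downarrow \sum_{p}\hat V(p)\hat\varphi(p) + (8\pi a)^2 \rho_\uparrow\rho_\downarrow\sum_{0\ne p}\frac{(\widehat\chi_<(p))^2}{2|p|^2} + o(L^3\rho^{7/3}),
\]
using $|\mathcal{B}_F^\sigma|/L^3 \to \rho_\sigma$ and that the cross terms $\widehat\chi_<\widehat\chi_>$ are supported on an annulus of momenta $|p|\sim\rho^{1/3-\gamma}$ of small measure. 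Combined with $\hat V(0)\rho_\uparrow\rho_\downarrow$ from $E_{\rm FFG}$ and the identity $(\hat V - \hat V\ast_\Sigma\hat\varphi)(0)\simeq 8\pi a$, this gives the first line of \eqref{eq: final T1 up bd}. The remaining (non-scalar) part of the double commutator, along with $\int_0^1 d\lambda\, T_{1;\lambda}^*[\mathbb{Q}_2^\udarrow, B_1-B_1^*]T_{1;\lambda}$-type terms beyond the constant, must be bounded: quartic-in-$b$ terms are estimated using $\|b_\sigma(\widetilde\varphi_z)\| \le C\rho^{1/3+\gamma/2}$ from Lemma \ref{lem: bounds b b*} and $\|b_\sigma^*(g)\psi\|^2 \le \|b_\sigma(g)\psi\|^2 + C\rho\|g\|_2^2$ from Lemma \ref{lem: b* phi tilde vs b phi tilde}, reducing everything to powers of $\rho$ times $\langle T_1 T_2\Omega, \mathcal{N} T_1 T_2\Omega\rangle \le CL^3 \rho^{5/3-\gamma-\delta/8}$; I would also absorb a term proportional to $\mathbb{H}_0$ with a small prefactor $C\rho^{2/3+2\gamma-\delta/8}$, as it appears in the statement. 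Finally, $\mathbb{Q}_4$ must be carried along (it cannot be bounded yet at this stage and is handled in Section \ref{sec: T2}), and one replaces $T_1 T_2\Omega$ by $T_2\Omega$ in the remaining terms at the cost of further $\mathcal{N}$-controlled errors, using that conjugation of $\mathbb{H}_0$, $\mathbb{Q}_{2;<}$, $\mathbb{Q}_4$ by the ``residual'' part of $T_1$ (after extracting the leading commutators) is close to the identity. Care with the constraint $2\gamma + \delta/16 \le 1/3$ ensures all these $\mathcal{N}$-errors are $o(\rho^{7/3})$ per unit volume, i.e. of the stated orders $C\rho^{8/3-2\gamma}$ and $o(1)_{L\to\infty}$.

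The main obstacle I anticipate is the bookkeeping of the off-diagonal and particle-number-changing error terms generated when one replaces the sharp projections and the exact scattering solution by their approximations (e.g. $2|p|^2\hat\varphi(p)$ versus $8\pi a$, and $\hat V\ast_\Sigma\hat\varphi$ versus $\hat V - 2|p|^2\hat\varphi$, each valid only up to corrections supported at momenta $\sim\rho^{1/3-\gamma}$): one has to show that each such discrepancy, when paired with a $b$ or $b^*$ and evaluated in $T_1 T_2\Omega$, contributes $o(L^3\rho^{7/3})$, which requires simultaneously exploiting the smallness of the momentum region (giving factors of $\rho^{\gamma}$ or $\rho^{1/3-\gamma}$ from the volume of the annulus), the $L^1$-bounds on $\widetilde\varphi$ and $\chi_<$, and the $\mathcal{N}$-estimates — and the error $\mathcal{E}_{\mathbb{H}_0}$ must be isolated exactly rather than bounded away, since it will cancel against a matching term produced by $T_2$ in the next section. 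A secondary technical point is that the Duhamel remainder integrals involve $T_{1;\lambda}^*(\cdots)T_{1;\lambda}$, so every operator bound has to be stated in a form stable under conjugation by $T_{1;\lambda}$, which is exactly why the $\mathcal{N}$-growth estimate \eqref{eq: est number operator T1} is proved for all $\lambda\in[0,1]$.
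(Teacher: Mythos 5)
Your overall strategy matches the paper's: expand via Duhamel, compute $[\mathbb{H}_0+\mathbb{Q}_4, B_1-B_1^*]$ and $[\mathbb{Q}_2^\udarrow, B_1-B_1^*]$, use the zero-energy scattering equation to renormalize $\mathbb{Q}_2^\udarrow$ into $\mathbb{Q}_{2;<}+\widetilde{\mathbb{Q}}_{2;<}$, extract the constant by normal ordering, and control remainders with number-operator bounds. The paper organizes things slightly differently (separate first-order Duhamel expansions for $\mathbb{H}_0$, $\mathbb{Q}_4$, and $\mathbb{Q}_2^\udarrow$, with a re-expansion of the $\mathcal{E}_{\mathbb{H}_0}$ piece, rather than a single second-order double-commutator expansion of $\mathcal{H}_{\rm corr}^{\rm eff}$), but this is cosmetic.

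There is, however, one genuine gap in your plan. You state that the error bookkeeping reduces to ``powers of $\rho$ times $\langle T_1 T_2\Omega, \mathcal{N}\, T_1 T_2\Omega\rangle$''. This is not sufficient. The commutator error bounds (see Propositions~\ref{pro: H0}, \ref{pro: Q4}, \ref{pro: Q2} and their use in Section~\ref{sec: conj undert T1}) involve factors of $\|\mathbb{H}_0^{1/2}T_{1;\lambda}T_2\Omega\|$ and $\|\mathbb{Q}_4^{1/2}T_{1;\lambda}T_2\Omega\|$, which cannot be controlled by the number operator alone (recall $\mathbb{H}_0$ involves unbounded momenta and $\mathbb{Q}_4$ is not dominated by $\mathcal{N}$). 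Closing the estimate therefore requires a separate Gr\"onwall-type propagation estimate for $\mathbb{H}_0+\mathbb{Q}_4$ under the flow $T_{1;\lambda}$ (this is Proposition~\ref{pro: propagation estimates}), which in turn relies on feeding the scattering-equation cancellation of Proposition~\ref{prop: scatt canc} \emph{into} the Gr\"onwall derivative --- otherwise the differential inequality would not close. Your remark that ``conjugation of $\mathbb{H}_0$, $\mathbb{Q}_{2;<}$, $\mathbb{Q}_4$ by the residual part of $T_1$ is close to the identity'' gestures at this but conflates it with the $\mathcal{N}$-growth estimate; in fact it is an independent (and non-trivial) ingredient, and without it several error terms in the Duhamel integrand are uncontrolled. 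A second, minor, point: the operator $\mathcal{E}_{\mathbb{H}_0}$ does not ``cancel against a matching term produced by $T_2$''. It is estimated directly in $T_2\Omega$ (Proposition~\ref{pro: estimate EH0 fin}) via another Gr\"onwall argument; the reason it must be carried forward as an operator, rather than bounded at this stage, is that the $\mathcal{N}$-bounds on $T_1T_2\Omega$ are too coarse, whereas the Gaussian-regularized kernels $u^t,v^t$ available in the $T_2$ analysis give the needed gain. Your conclusion (carry it forward) is right, but the stated mechanism is not.
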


The operator $\mathbb{Q}_{2;<}$ can be interpreted as a renormalized version of $\mathbb{Q}_2^\udarrow$, where the effect of the 2-body scattering process have been taken into account, leading to the softer interaction potential $8\pi a \widehat{\chi}_<$ instead of $\hat{V}$. This operator, together with the kinetic energy operator $\mathbb{H}_0$, allows us to extract the full constant term of order $\rho^{7/3}$ after conjugation with $T_2$.  
The error term $\langle T_2\Omega, (\mathcal{E}_{\mathbb{H}_0} + C\rho^{\frac 23+2\gamma-\frac \delta 8} \mathbb{H}_0+ C\mathbb{Q}_4) T_2\Omega\rangle$ will be estimated in the next section.   

To prove Proposition \ref{lem:conjugation-T1}, we first consider the conjugation of the operators $\mathbb{H}_0$, $\mathbb{Q}_4$ and $\mathbb{Q}_2^\udarrow$ with respect to the unitary $T_1$ separately  in Sections~\ref{sec: conj H0 T1}--\ref{sec: conj Q2 T1}. Then in Section \ref{sec: scat eq T1} we use the zero-energy scattering equation \eqref{eq: zero energy scatt eq-intro} to justify \eqref{eq: scatt eq Q2< + tilde Q2<}, effectively leading to ${\mathbb{Q}}_{2;<}$ as a renormalization of $\mathbb{Q}_2^\udarrow$.  In Section \ref{sec: prop est H0 Q4} we prove  propagation bounds 
for the operators $\mathbb{H}_0$ and $\mathbb{Q}_4$, which help to control some of the error terms. Finally, in Section \ref{sec: conj undert T1} we collect all the estimates and complete the proof of Proposition~\ref{lem:conjugation-T1}.

In the following we will often use the expression \eqref{eq: def tilde phi T1} for $T_1$ in configuration space,  together with several estimates on $\tilde \phi$ from Lemma \ref{lem: bounds phi}. 


\subsection{Conjugation of $\mathbb{H}_0$}\label{sec: conj H0 T1}

First, let us consider the conjugation of $\mathbb{H}_0$ by $T_1$. Our goal is to extract a (quasi-bosonic) quadratic term which helps to renormalize the interaction potential in $\mathbb{Q}_2^\udarrow$. 

\begin{proposition}[Conjugation of $\mathbb{H}_0$ by $T_1$]\label{pro: H0} Let $\lambda\in [0,1]$ and  $\gamma\in (0,1/3)$.  Under the same assumptions as in Theorem \ref{thm: main up bd}  
\begin{equation} \label{eq: prop H0}
 \partial_\lambda T_{1;\lambda}^\ast \mathbb{H}_0 T_{1;\lambda} =  T^{\ast}_{1;\lambda}(\mathbb{T}_1 + \mathcal{E}_{\mathbb{H}_0})T_{1;\lambda},
\end{equation}
where $  \mathcal{E}_{\mathbb{H}_0}$ is given in \eqref{eq: error kin energy}, and
\begin{equation}\label{eq: def T1} 
  \mathbb{T}_1 = -\frac{2}{L^3}\sum_{k\in\Lambda^*} |k|^2 \hat{\varphi}(k)\widehat{\chi}_>(k)\,b_{k,\uparrow}b_{-k, \downarrow} + \mathrm{h.c.}.
\end{equation}
%
%
Furthermore, for any $\psi \in \mathcal{F}_{\mathrm{f}}$ 
\begin{equation}\label{eq: est err H0}
  \left| \partial_\lambda \langle T_{1;\lambda}\psi, \mathcal{E}_{\mathbb{H}_0} T_{1;\lambda}\psi\rangle\right| \leq C\rho^{1+\gamma}\|\mathbb{H}_0^{\frac{1}{2}}T_{1;\lambda}\psi\|\|\mathcal{N}^{\frac{1}{2}}T_{1;\lambda}\psi\| + C\rho^{\frac 43+\gamma}\langle T_{1;\lambda}\psi, \mathcal{N} T_{1;\lambda}\psi\rangle .
\end{equation}
\end{proposition}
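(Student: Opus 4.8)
The plan is to prove Proposition~\ref{pro: H0} by a direct commutator computation, using the fact that $\mathbb{H}_0$ is quadratic in the $\hat a^*,\hat a$ operators and that $B_1$ is (quasi-)quadratic in the $b,b^*$ operators. First I would write $\partial_\lambda T_{1;\lambda}^\ast \mathbb{H}_0 T_{1;\lambda} = T_{1;\lambda}^\ast[\mathbb{H}_0, B_1-B_1^\ast]T_{1;\lambda}$, so everything reduces to computing the commutator $[\mathbb{H}_0, B_1 - B_1^\ast]$ and identifying its ``leading'' part $\mathbb{T}_1$ and its ``error'' part $\mathcal{E}_{\mathbb{H}_0}$. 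Using the momentum-space form $B_1 = L^{-3}\sum_p \hat\varphi(p)\hat\chi_>(p)\, b_{p,\uparrow}b_{-p,\downarrow}$ with $b_{p,\uparrow}=\sum_k \hat u_\uparrow(p+k)\hat v_\uparrow(k)\hat a_{p+k,\uparrow}\hat a_{-k,\uparrow}$ (and similarly for the $\downarrow$ factor), I would compute $[\mathbb{H}_0, b_{p,k,\uparrow} b_{-p,k',\downarrow}]$. Since $\mathbb{H}_0 = \sum_\sigma\sum_q \big||q|^2-(k_F^\sigma)^2\big|\hat a_{q,\sigma}^\ast \hat a_{q,\sigma}$ is a number-operator-type quadratic expression, each commutator with a product of two annihilation operators $\hat a_{p+k,\uparrow}\hat a_{-k,\uparrow}$ simply multiplies by the (negative of the) sum of the corresponding eigenvalues, i.e.\ by $-\big(\big||p+k|^2-(k_F^\uparrow)^2\big| + \big||k|^2-(k_F^\uparrow)^2\big|\big)$. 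On the support of the $b$ operators we have $|p+k| > k_F^\uparrow$ and $|k|\le k_F^\uparrow$, so $\big||p+k|^2-(k_F^\uparrow)^2\big| + \big||k|^2-(k_F^\uparrow)^2\big| = |p+k|^2 - |k|^2 = \lambda_{k,p} = |p|^2 + 2p\cdot k$. Adding the analogous $\downarrow$ contribution $\lambda_{k',-p} = |p|^2 - 2p\cdot k'$ for the $b_{-p,k',\downarrow}$ factor, the total eigenvalue is $2|p|^2 + 2p\cdot(k-k')$.

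Summing over $k,k'$: the $2|p|^2$ part reassembles into $\sum_k b_{p,k,\uparrow}\sum_{k'}b_{-p,k',\downarrow} = b_{p,\uparrow}b_{-p,\downarrow}$, and after multiplying by $-L^{-3}\hat\varphi(p)\hat\chi_>(p)$ and summing over $p$ (plus adding the hermitian conjugate from the $-B_1^\ast$ term) this gives exactly $\mathbb{T}_1$ as in \eqref{eq: def T1}. The $2p\cdot(k-k')$ part does \emph{not} collapse, because the factor depends on the internal momenta; it produces precisely the operator $\mathcal{E}_{\mathbb{H}_0}$ in \eqref{eq: error kin energy}, namely $-\tfrac{2}{L^3}\sum_{k,s,s'} k\cdot(s-s')\hat\varphi(k)\hat\chi_>(k)\, b_{k,s,\uparrow}b_{-k,s',\downarrow} + \mathrm{h.c.}$ (with a relabelling of $p\to k$, $k\to s$, $k'\to s'$). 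This establishes \eqref{eq: prop H0}. Two small points require care here: first, one must check that $\mathbb{H}_0$ genuinely acts diagonally on the relevant momenta and that no cross terms from $\{\hat a,\hat a^\ast\}$ anticommutators survive — this works because $\mathbb{H}_0$ is a sum of commuting number operators and the standard identity $[\hat a_q^\ast\hat a_q,\hat a_{q_1}\hat a_{q_2}] = -(\delta_{q,q_1}+\delta_{q,q_2})\hat a_{q_1}\hat a_{q_2}$; second, on the support of $\hat\chi_>(p)$ we need $|p+k|>k_F^\uparrow$ genuinely (so that the $\hat u_\uparrow(p+k)$ cutoff is automatic and the absolute value $\big||p+k|^2-(k_F^\uparrow)^2\big|$ opens without a sign change), which is exactly the elementary geometric fact $|p|\ge 4\rho^{1/3-\gamma}$, $|k|\le k_F^\uparrow \le C\rho^{1/3}$ $\Rightarrow$ $|p+k|\ge 3\rho^{1/3-\gamma} > k_F^\uparrow$ used already in Lemma~\ref{lem: bounds b b*}.

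For the error bound \eqref{eq: est err H0}, I would estimate $\partial_\lambda \langle T_{1;\lambda}\psi, \mathcal{E}_{\mathbb{H}_0}T_{1;\lambda}\psi\rangle$. Writing $\mathcal{E}_{\mathbb{H}_0}$ back in configuration space (differentiate the kernel $\tilde\varphi$ of $B_1$, which brings down the $k\cdot(s-s')$ factor — this is essentially $\int dz\,dz'\, \nabla\tilde\varphi(z-z')\cdot(\text{gradient terms acting on }u,v)$), one lands on an expression structurally similar to the integrand controlled in Lemma~\ref{lem: bounds b b*}: a $b$-type operator built from $\nabla\tilde\varphi$ paired with $a_\uparrow(v_z), a_\uparrow(u_z)$ and the opposite-spin pair. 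Using the operator bound $\|b_{j,\sigma}(\tilde\varphi_z)\|\le C\rho^{2/3+\gamma/2}$ from \eqref{eq: est b phi, b phi j}, the bound $\|a_\sigma(v_z)\| = \|v_\sigma\|_2 \le C\rho^{1/2}$, and extracting one factor of $\mathbb{H}_0^{1/2}$ (to absorb one of the gradient/derivative factors, exploiting that $\mathbb{H}_0$ dominates $|\,|k|^2-(k_F^\sigma)^2|$ on each mode) and one factor of $\mathcal{N}^{1/2}$ from an $a_\sigma(u_z)$ (via $\int dz\,\|a_\sigma(u_z)\psi\|^2 \le \langle\psi,\mathcal N\psi\rangle$), one obtains the first term $C\rho^{1+\gamma}\|\mathbb{H}_0^{1/2}T_{1;\lambda}\psi\|\|\mathcal N^{1/2}T_{1;\lambda}\psi\|$; the remaining piece, where no $\mathbb{H}_0^{1/2}$ is available and two derivatives must be absorbed by $\tilde\varphi$-norms, yields the subleading $C\rho^{4/3+\gamma}\langle T_{1;\lambda}\psi,\mathcal N T_{1;\lambda}\psi\rangle$. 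The Cauchy--Schwarz and $L^1$/$L^2$ norm bookkeeping for $\tilde\varphi$ is routine given Lemma~\ref{lem: bounds phi}. The main obstacle, and the place where one must be most careful, is the commutator bookkeeping in the first part: $b_{p,\uparrow}b_{-p,\downarrow}$ is only \emph{quasi}-bosonic, so one should verify that the naive ``eigenvalue'' computation is not spoiled by the $\hat u,\hat v$ projections or by additional anticommutator terms, and that the absolute values in $\mathbb{H}_0$ behave as claimed on the support of the transformation — but these are exactly the points the cutoff $\hat\chi_>$ and the sharp projections were designed to make transparent.
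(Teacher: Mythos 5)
Your derivation of the identity \eqref{eq: prop H0} is correct and matches what the paper does (the paper only sketches this, calling it a straightforward commutator computation). You correctly identify that $\mathbb{H}_0$ acts diagonally on the momenta appearing in $b_{p,k,\uparrow}b_{-p,k',\downarrow}$, that on the support of the $\hat u,\hat v$ cut-offs the absolute values open as $\big||p+k|^2-(k_F^\uparrow)^2\big| + \big||k|^2-(k_F^\uparrow)^2\big| = \lambda_{k,p}$ (and similarly for $\downarrow$), and that splitting $\lambda_{k,p}+\lambda_{k',-p}=2|p|^2+2p\cdot(k-k')$ produces exactly $\mathbb{T}_1$ and $\mathcal{E}_{\mathbb{H}_0}$.

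There is, however, a genuine gap in your treatment of the error bound \eqref{eq: est err H0}. You write ``I would estimate $\partial_\lambda\langle T_{1;\lambda}\psi,\mathcal{E}_{\mathbb{H}_0}T_{1;\lambda}\psi\rangle$'' and then immediately describe $\mathcal{E}_{\mathbb{H}_0}$ in configuration space as a single $b$-type operator paired with $a_\uparrow(v_z),a_\uparrow(u_z)$, bounding it by $\|b_{j,\sigma}(\widetilde\varphi_z)\|\|a_\sigma(v_z)\|$ plus one $\mathbb{H}_0^{1/2}$ and one $\mathcal{N}^{1/2}$. But $\partial_\lambda\langle T_{1;\lambda}\psi,\mathcal{E}_{\mathbb{H}_0}T_{1;\lambda}\psi\rangle=\langle T_{1;\lambda}\psi,[\mathcal{E}_{\mathbb{H}_0},B_1-B_1^\ast]T_{1;\lambda}\psi\rangle$: what must be estimated is the \emph{commutator} of $\mathcal{E}_{\mathbb{H}_0}$ with $B_1-B_1^\ast$, which is a quartic (in the quasi-bosonic $b,b^\ast$) expression carrying \emph{two} factors of the kernel $\widetilde\varphi$, not $\mathcal{E}_{\mathbb{H}_0}$ itself. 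Your power counting confirms the confusion: the quantities you invoke give $\rho^{2/3+\gamma/2}\cdot\rho^{1/2}=\rho^{7/6+\gamma/2}$, which is not $\rho^{1+\gamma}$, and would moreover appear with a volume factor $L^{3/2}$ — the kind of bound the paper proves for $\langle\xi,\mathcal{E}_{\mathbb{H}_0}\xi\rangle$ directly in Proposition~\ref{pro: propagation estimates} (see \eqref{a2t}--\eqref{eq: est Err kin prop est}), not for the derivative. The exponent $\rho^{1+\gamma}$ in \eqref{eq: est err H0} arises from \emph{two} factors of $\|\widetilde\varphi\|_2\sim\rho^{-1/6+\gamma/2}$, which can only come from the product of two $B_1$-type kernels inside the commutator. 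To close the gap you must compute $[\mathcal{E}_{\mathbb{H}_0;1},B_1]$ explicitly, normal-order it, and estimate the resulting six terms $\mathrm{I}_1,\dots,\mathrm{I}_6$ separately (as the paper does in \eqref{eq: comm H0 T1}--\eqref{eq: est IIIa H0}), with each of $\mathrm{I}_1,\mathrm{I}_2,\mathrm{I}_3$ further split by integration by parts into a $\mathcal{N}$-controlled piece and an $\mathbb{H}_0^{1/2}\mathcal{N}^{1/2}$-controlled piece. This is a substantial part of the argument, not the ``routine bookkeeping'' your sketch suggests.
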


\begin{remark}\label{rem: Duhamel err kin energy}
  In our analysis, $\mathcal{E}_{\mathbb{H}_0}$ gives rise to an error term. In order to estimate it,  in Section \ref{sec: conj undert T1} (see \eqref{eq: duhamel H0 trial state}--\eqref{eq: est err H0 up bd}),  we shall apply Duhamel's formula again, in the form
  \[
  T^{\ast}_{1;\lambda}\mathcal{E}_{\mathbb{H}_0}T_{1;\lambda} =  \mathcal{E}_{\mathbb{H}_0} + \int_0^\lambda d\lambda^\prime \, \partial_{\lambda^\prime} T^\ast_{1;\lambda^\prime}\mathcal{E}_{\mathbb{H}_0}T_{1;\lambda^\prime},
  \]
and  use \eqref{eq: est err H0} to estimate the last term above.
\end{remark}

\begin{remark}[Comparison with \cite{Gia1}]
The proof of Proposition \ref{pro: H0} is similar to that in \cite[Proposition 5.1 and Lemma 5.2]{Gia1}. However, here we do not need to insert a smooth cut-off in the projection inside or outside the Fermi ball, which simplifies and improves many estimates. 
\end{remark}

\begin{proof}[Proof of Proposition \ref{pro: H0}] 
The equality in \eqref{eq: prop H0} is a straightforward consequence of
\[
  \partial_\lambda T_{1;\lambda}^\ast \mathbb{H}_0 T_{1;\lambda}  =  T_{1;\lambda}^\ast [\mathbb{H}_0 , B_1 - B_1^\ast]T_{1;\lambda} ,
\]
inserting the definitions of $\mathbb{H}_0 $ and $B_1$ in \eqref{eq: def H-corr} and \eqref{eq: def T1 unitary}, respectively, and computing the commutator. We shall skip the details. 
To prove \eqref{eq: est err H0}, we compute 
\[
 \partial_\lambda T_{1;\lambda}^\ast \mathcal{E}_{\mathbb{H}_0} T_{1;\lambda}   = T_{1;\lambda}^\ast  [\mathcal{E}_{\mathbb{H}_0}, B_1 -B_1^\ast]T_{1;\lambda} = T_{1;\lambda}^\ast  [\mathcal{E}_{\mathbb{H}_0}, B_1]T_{1;\lambda} + \mathrm{h.c.}
\]
For simplicity, we only consider the first term involving $k\cdot s$ in the definition  \eqref{eq: error kin energy} of $\mathcal{E}_{\mathbb{H}_0}$, i.e., 
\begin{equation}\label{def:E1}
  \mathcal{E}_{\mathbb{H}_0;1} := -\frac{2}{L^3}\sum_{k,s,s^\prime \in\Lambda^*} k\cdot s\,\hat{\varphi}(k)\widehat{\chi}_>(k)\, b_{k,s,\uparrow}b_{-k,s^\prime, \downarrow} + \mathrm{h.c.},
\end{equation}
the estimate for the other term involving $k\cdot s'$ is analogous. For the calculation of the commutator $ [\mathcal{E}_{\mathbb{H}_0;1}, B_1]$, we need to compute  
\begin{align}\label{eq: comm H0 T1}
  &[\hat{a}_{-s^\prime, \downarrow}^\ast\hat{a}_{s^\prime - k,\downarrow}^\ast \hat{a}^\ast_{-s,\uparrow}\hat{a}^\ast_{s+k,\uparrow}, \hat{a}_{r+p,\uparrow}\hat{a}_{-r,\uparrow}\hat{a}_{r^\prime - p, \downarrow}\hat{a}_{-r^\prime, \downarrow}]
  \\
  &= \hat{a}_{s+k,\uparrow}^\ast \hat{a}_{s^\prime - k, \downarrow}^\ast \hat{a}_{r+p,\uparrow}\hat{a}_{r^\prime - p, \downarrow}(\delta_{s,r}\delta_{s^\prime, r^\prime}-\delta_{s,r}\hat{a}^\ast_{-s^\prime, \downarrow}\hat{a}_{-r^\prime, \downarrow} - \delta_{s^\prime, r^\prime}\hat{a}_{-s,\uparrow}^\ast\hat{a}_{-r,\uparrow} )\nonumber
  \\
  &\quad +  (  \delta_{s+k,r+p}\hat{a}^\ast_{s^\prime - k, \sigma^\prime}\hat{a}_{r^\prime - p, \downarrow} + \delta_{s^\prime - k, r^\prime - p} \hat{a}^\ast_{s+k,\uparrow} \hat{a}_{r+p, \uparrow} - \delta_{s^\prime - k, r^\prime -p}\delta_{r+p,s+k} )\hat{a}_{-r,\uparrow}\hat{a}_{-r^\prime, \downarrow}\hat{a}^\ast_{-s^\prime, \downarrow}\hat{a}^\ast_{-s,\uparrow}\nonumber
\end{align}
Correspondingly, we have $[\mathcal{E}_{\mathbb{H}_0;1}, B_1] = \sum_{j=1}^6 \mathrm{I}_j$, 
with
 \begin{multline}\label{eq: def Ia}
  \mathrm{I}_1 = \frac{1}{L^6}\sum_{k,p,r,r^\prime} k\cdot r \hat{\varphi}(k)\widehat{\chi}_>(k) \hat{v}_\uparrow(r)\hat{\varphi}(p)\widehat{\chi}_>(p)\hat{v}_\downarrow(r^\prime)\hat{u}_\uparrow(r+k)\hat{u}_\downarrow(r^\prime - k)\hat{u}_\uparrow(r+p)\hat{u}_\downarrow(r^\prime - p)  
\\
  \times \hat{a}_{r+k,\uparrow}^\ast\hat{a}_{r^\prime - k, \downarrow}^\ast \hat{a}_{r+p,\uparrow}\hat{a}_{r^\prime - p, \downarrow},
\end{multline}
\begin{multline}\label{eq: def Ib}
  \mathrm{I}_2 = -\frac{1}{L^6}\sum_{k,p,r,r^\prime,s}k\cdot r\hat{\varphi}(k)\widehat{\chi}_>(k)\hat{v}_\uparrow(r)\hat{\varphi}(p)\widehat{\chi}_>(p) \hat{v}_\downarrow(r^\prime)\hat{v}_\downarrow(s)\hat{u}_\uparrow(r+k)\hat{u}_\downarrow(s-k)\hat{u}_\uparrow(r+p)\hat{u}_\downarrow(r^\prime - p)
  \\
  \times  \hat{a}^\ast_{r+k,\uparrow}\hat{a}^\ast_{s-k,\downarrow}\hat{a}^\ast_{-s,\downarrow}\hat{a}_{-r^\prime, \downarrow}\hat{a}_{r+p,\uparrow}\hat{a}_{r^\prime - p, \downarrow}, 
\end{multline}
\begin{multline}\label{eq: def Ic}
  \mathrm{I}_3 = -\frac{1}{L^6}\sum_{k,p,r,r^\prime,s} k \cdot s \hat{\varphi}(k)\widehat{\chi}_>(k)\hat{v}_\uparrow(s)\hat{\varphi}(p)\widehat{\chi}_>(p)\hat{v}_\downarrow(r^\prime)\hat{v}_\uparrow(r)\hat{u}_\uparrow(s+k)\hat{u}_\downarrow(r^\prime - k)\hat{u}_\uparrow(p+r)\hat{u}_\downarrow(r^\prime - p)  
\\
  \times \hat{a}_{s+k,\uparrow}^\ast\hat{a}_{r^\prime - k, \downarrow}^\ast \hat{a}_{-s,\uparrow}^\ast \hat{a}_{-r,\uparrow} \hat{a}_{r+p,\uparrow}\hat{a}_{r^\prime - p, \downarrow},
\end{multline}
\begin{multline}\label{eq: term II H0 T1}
  \mathrm{I}_4 = \frac{1}{L^6}\sum_{k,p, r,r^\prime, s} k\cdot(  r +p - k )  \hat{\varphi}(k)\widehat{\chi}_>(k)\hat{\varphi}(p)\widehat{\chi}_>(p)\hat{u}_\uparrow(r+p)\hat{u}_\downarrow(s - k)\hat{u}_\downarrow(r^\prime - p)  
  \\
  \times \hat{v}_\uparrow(r+p-k)\hat{v}_\downarrow(s)\hat{v}_\uparrow(r)\hat{v}_\downarrow(r^\prime)  \hat{a}_{s - k,\downarrow}^\ast\hat{a}_{r^\prime - p,\downarrow}\hat{a}_{-r,\uparrow}\hat{a}_{-r^\prime, \downarrow}\hat{a}_{-s, \downarrow}^\ast\hat{a}_{-r-p+k,\uparrow}^\ast,
\end{multline}
\begin{multline}\label{eq: term IIb H0 T1}
  \mathrm{I}_5 = \frac{1}{L^6}\sum_{k,p, r,r^\prime, s} k\cdot s \hat{\varphi}(k)\widehat{\chi}_>(k)\hat{\varphi}(p)\widehat{\chi}_>(p)\hat{u}_\uparrow(r+p)\hat{u}_\uparrow(s + k)\hat{u}_\downarrow(r^\prime - p)  
  \\
  \times  \hat{v}_\uparrow(s)\hat{v}_\downarrow(r^\prime)\hat{v}_\uparrow(r)\hat{v}_\downarrow(r^\prime +k-p) \hat{a}_{s +k,\uparrow}^\ast\hat{a}_{r+ p,\uparrow}\hat{a}_{-r,\uparrow}\hat{a}_{-r^\prime, \downarrow}\hat{a}_{-r^\prime -k+p, \downarrow}^\ast\hat{a}_{-s,\uparrow}^\ast,
\end{multline}
and 
\begin{multline}\label{eq: term III H0 T1}
  \mathrm{I_6} = -\frac{1}{L^6} \sum_{k,p, r,r^\prime} k\cdot( r+p-k) \hat{\varphi}(k)\widehat{\chi}_>(k)\hat{\varphi}(p)\widehat{\chi}_>(p)\hat{u}_\uparrow(r+p)\hat{u}_\downarrow(r^\prime - p)\hat{v}_\uparrow(r+p-k)\hat{v}_\downarrow(r^\prime - p + k)  
  \\
  \times \hat{v}_\uparrow(r)\hat{v}_\downarrow(r^\prime)\, \hat{a}_{-r,\uparrow}\hat{a}_{-r^\prime, \downarrow}\hat{a}_{-r^\prime + p-k, \downarrow}^\ast\hat{a}_{-r-p+k,\uparrow}^\ast.
  \end{multline}

In the following, we shall estimate all these terms. 
We shall repeatedly use the bound
\begin{equation}\label{eq: est a v H0}
  \|a_\sigma(\partial^n v_{x})\| \leq \||\cdot|^n\hat{v}_\sigma\|_{2} \leq C\rho^{\frac{1}{2}+ \frac{n}{3}}
\end{equation}
for the fermionic creation and annihilation operators. 
Moreover, we shall employ the bounds in Lemma \ref{lem: bounds b b*}  in Lemma \ref{lem: bounds phi}.
We will estimate the expectation value of all the terms in a general  state $\psi\in\mathcal{F}_{\mathrm{f}}$. 
We first consider the error terms coming from the second line on the right hand side in \eqref{eq: comm H0 T1}, i.e., $\mathrm{I}_1$, $\mathrm{I}_2$ and $\mathrm{I}_3$. It is convenient to divide them all into two parts, which corresponds to an integration by parts in configuration space: this allows us to get a factor $\rho^{1/3}$ when the derivative hits a $v_\sigma$, and to use $\mathbb{H}_0$ instead of $\mathcal{N}$ to estimate the error terms when the derivative hits a $u_\sigma$. We start by estimating the term $\mathrm{I}_1$ in \eqref{eq: def Ia}. 
 Writing $k\cdot r = - r^2 + r\cdot (r+k)$, 
 we split $\mathrm{I}_1$ into two terms, which we denote by $\mathrm{I}_{1;a}$ and $\mathrm{I}_{1;b}$, respectively. In $\mathrm{I}_{1;b}$ it is convenient to replace $\hat{u}_\uparrow(r+k)$ by $\hat{\nu}^>_\uparrow(r+ k)$ with $\hat{\nu}^>$ defined as 
 \begin{equation}\label{eq: def nu >}
 \hat{\nu}^>(k) = \begin{cases} 1&\mbox{if}\,\,\, |k| > 3\rho^{\frac{1}{3} - \gamma}, \\ 0 &\mbox{if}\,\,\, |k| \leq 3\rho^{\frac{1}{3} - \gamma}. \end{cases}
 \end{equation}
 This can be done since $|k| \geq 4\rho^{1/3 - \gamma}$ and $|r| \leq k_F^\uparrow$ for all the terms in the sum. The same argument applies to $I_2$ and $I_3$. 
 
 To estimate $\mathrm{I}_1$, we rewrite both $\mathrm{I}_{1;a}$ and $\mathrm{I}_{1;b}$ in configuration space. We have 
$$
   \mathrm{I}_{1;a} 
  = \int dxdydzdz^\prime\, \widetilde{\varphi}(x-y)\widetilde{\varphi}(z-z^\prime) \left(\Delta v_\uparrow\right)(x;z)v_\downarrow(y; z^\prime)  a^\ast_\uparrow(u_x)a^\ast_\downarrow (u_y) a_\uparrow(u_z)a_\downarrow(u_{z^\prime})  .
$$
For fixed $y$ and $z$, we can use \eqref{eq: est a v H0} and $0\leq \hat{u}_\uparrow\leq 1$ to bound 
\[
  \left\| \int dx\, \widetilde{\varphi}(x-y) \left( \Delta v_\uparrow\right)(x;z) a_\uparrow(u_x)\right\| 
  \leq \|\widetilde{\varphi}_y  \Delta v_{z,\uparrow}\|_2,
\]
where we used the notation $\widetilde{\varphi}_{y}(\,\cdot\,) = \widetilde{\varphi}(y-\,\cdot\,)$, $\Delta v_{z,\uparrow}(\,\cdot\,) = \Delta v_\uparrow(\,\cdot\,;z)$, as in Lemma~\ref{lem: bounds b b*} and the discussion after \eqref{eq: def u,v}. Similarly, 
\begin{equation}\label{eq: est conv L2}
  \left\| \int dz^\prime\, \widetilde{\varphi}(z-z^\prime) v_\downarrow(y;z^\prime) a_\downarrow(u_{z^\prime})\right\|\leq \|\widetilde{\varphi}_{z} v_{y, \downarrow}\|_2.
\end{equation}
With the aid of the Cauchy--Schwarz's inequality, we hence find that 
\begin{multline*}
  |\langle \psi, \mathrm{I}_{1;a}\psi\rangle| \leq \int dy dz\, \|\widetilde{\varphi}_y  \Delta v_{z,\uparrow}\|_2 \|\widetilde{\varphi}_{z} v_{y, \downarrow}\|_2 \|a_\downarrow(u_y)\psi\| \| a_\uparrow(u_{z})\psi\| 
\\
  \leq \left( \int dydzdw\, | \widetilde{\varphi}(w-y)|^2|\Delta v_\uparrow(z;w)|^2  \|a_\downarrow(u_y)\psi\|^2\right)^{\frac{1}{2}}\left(\int dydzdw\,  |\widetilde{\varphi}(z-w)|^2 |v_\downarrow(y;w)|^2 \| a_\uparrow(u_{z})\psi\|^2   \right)^{\frac{1}{2}}.
\end{multline*}
Thus, using also Lemma \ref{lem: bounds phi}, we obtain
\begin{equation}\label{eq: est Ia1}
  |\langle \psi, \mathrm{I}_{1;a}\psi\rangle| \leq   \|\widetilde{\varphi}\|_2^2\|\Delta v_\uparrow\|_2\|v_\downarrow\|_2  \langle \psi, \mathcal{N}\psi\rangle \leq 
  C\rho^{\frac{4}{3} + \gamma}\langle \psi, \mathcal{N}\psi\rangle.
\end{equation}
The estimate of the term $\mathrm{I}_{1;b}$ is similar. We  have 
$$
   \mathrm{I}_{1;b} = \sum_{\ell =1}^3\int dxdydzdz^\prime\, \widetilde{\varphi}(x-y)\widetilde{\varphi}(z-z^\prime) \partial_\ell v_\uparrow(x;z)v_\downarrow(y; z^\prime)  a^\ast_\uparrow(\partial_\ell \nu_x^>)a^\ast_\downarrow (u_y) a_\uparrow(u_z)a_\downarrow(u_{z^\prime})
$$
where $\nu_x^>(\cdot) = \nu^>(\cdot - x)$.
Proceeding as above, we can bound 
\begin{equation}\label{eq: est conv L2 partial ell}
  \left\| \int dz\, \widetilde{\varphi}(z-z^\prime) \partial_\ell v_\uparrow(x;z) a_\uparrow(u_z)\right\|\leq \|\widetilde{\varphi}_{z^\prime} \partial_\ell v_{x, \uparrow}\|_2.
\end{equation}
Using also \eqref{eq: est conv L2} and that 
\begin{equation}\label{eq: est partial u H0} 
\sum_{\ell =1}^3 \int dx\, \|a_\uparrow(\partial_\ell \nu_x^>)\psi\|^2 = \sum_{k}|k|^2 |\hat{\nu}^>_\uparrow(k)|^2 \|\hat{a}_{k,\uparrow}\psi\|^2
 \leq C\langle \psi, \mathbb{H}_0 \psi\rangle,
\end{equation}
we obtain the bound 
\begin{equation}\label{eq: est Ia2}
  |\langle\psi, \mathrm{I}_{1;b}\psi \rangle| \leq C \|\widetilde{\varphi}\|_2^2\|\nabla v_\uparrow\|_2\|v_\downarrow\|_2  \|\mathbb{H}_0^{\frac{1}{2}}\psi\| \|\mathcal{N}^{\frac{1}{2}}\psi\|\leq  C\rho^{1 + \gamma}\|\mathbb{H}_0^{\frac{1}{2}}\psi\| \|\mathcal{N}^{\frac{1}{2}}\psi\|.
\end{equation}

We now estimate the term $\mathrm{I}_2$ defined in \eqref{eq: def Ib}.
Similarly as above, we write $k\cdot r = - r^2 + r\cdot(r+k)$ 
and correspondingly we split $\mathrm{I}_{2}$ as a sum of two different terms,  $\mathrm{I}_{2;a}$ and $\mathrm{I}_{2;b}$. In order to bound $\mathrm{I}_{2;a}$, we  rewrite it partially in configuration space. 
Recalling the definition of $b_\sigma$  in \eqref{eq: def b phi}), we find
$$
  \langle\psi, \mathrm{I}_{2;a}\psi\rangle 
  = \frac{1}{L^3}\sum_r |r|^2 \hat{v}_\uparrow(r)\left\|  \int dx\, e^{-ir\cdot x} b_\downarrow(\widetilde{\varphi}_x)a_\uparrow(u_x) \psi\right\|^2.
$$
We can bound $|r|^2\hat{v}_\uparrow(r) \leq C\rho^{2/3}$ for all $r\in\Lambda^*$. 
From the bounds in Lemma \ref{lem: bounds b b*}, we find
\begin{align*}
  \frac{1}{L^3}\sum_r\left\|\int dx\, e^{-ir\cdot x} b_\downarrow(\widetilde{\varphi}_x)a_\uparrow(u_x) \psi\right\|^2 & = \int dx\, \| b_\downarrow(\widetilde{\varphi}_x) a_\uparrow(u_x)\psi\|^2  \nonumber
  \\
  &\leq \int dx\, \| b_\downarrow(\widetilde{\varphi}_x)\|^2 \|a_\uparrow(u_x)\psi\|^2 \leq C\rho^{\frac{2}{3} +\gamma}\langle \psi,\mathcal{N}\psi\rangle,
\end{align*}
and hence we obtain
\begin{equation}\label{eq: est Ib1}
  |\langle \psi, \mathrm{I}_{2;a}\psi\rangle| \leq C\rho^{\frac{4}{3} +\gamma}\langle \psi, \mathcal{N}\psi\rangle.
\end{equation}
 The estimate of $\mathrm{I}_{2;b}$ can be done analogously as the one of $\mathrm{I}_{2;a}$, using in addition \eqref{eq: est partial u H0}. We omit the details, and  directly write the final estimate as 
\begin{equation}\label{eq: est Ib2}
  |\langle \psi, \mathrm{I}_{2;b}\psi\rangle| \leq C\rho^{1+\gamma}\|\mathbb{H}_0^{\frac{1}{2}}\psi\|\|\mathcal{N}^{\frac{1}{2}}\psi\|.
\end{equation}

Next we consider $\mathrm{I}_3$  in \eqref{eq: def Ic}. Again we split the term in two, $\mathrm{I}_{3;a}$ and $\mathrm{I}_{3;b}$, by writing $k= r' - (r'-k)$.
%
The term $\mathrm{I}_{3;a}$ can then be written as
$$
  \mathrm{I}_{3;a}  =  -\sum_{\ell=1}^3 \sum_{r^\prime} r^\prime_\ell \hat{v}_\downarrow(r^\prime)\int dydz^\prime\, e^{i r^\prime \cdot (y-z^\prime)}   a^\ast_\downarrow(u_y){b}^\ast_{\ell,\uparrow}(\widetilde{\varphi}_y) b_\uparrow(\widetilde{\varphi}_{z^\prime}) a_\downarrow(u_{z^\prime})
$$
%
where we used the operators introduced in \eqref{eq: def b phi} and \eqref{eq: def b tilde}. Using the bounds in Lemma \ref{lem: bounds b b*} and that $|r^\prime_\ell \hat{v}_\downarrow(r^\prime)| \leq C\rho^{1/3}$, we get with the aid of the Cauchy--Schwarz inequality
\begin{align} \nonumber
  &|\langle\psi, \mathrm{I}_{3;a}\psi\rangle| \\ &\leq  C\rho^{\frac{1}{3}}\sum_{\ell = 1}^3\left(\frac{1}{L^3}\sum_{r^\prime}\left\| \int dy\, e^{-ir^\prime\cdot y}\,{b}_{\ell,\uparrow}(\widetilde{\varphi}_y)a_\downarrow(u_y)\psi \right\|^2\right)^{\frac{1}{2}} 
  \left(\frac{1}{L^3}\sum_{r^\prime}\left\|\int dz^\prime e^{-ir^\prime\cdot z^\prime} \, b_\uparrow(\widetilde{\varphi}_{z^\prime}) a_\downarrow( u_{z^\prime})\psi \right\|^2\right)^{\frac{1}{2}} \nonumber
  \\
  &= C\rho^{\frac{1}{3}}\sum_{\ell = 1}^3\left(\int dy\, \|{b}_{\ell,\uparrow}(\widetilde{\varphi}_y)a_\downarrow(u_y)\psi\|^2 \right)^{\frac{1}{2}} \left( \int dz^\prime\, \|{b}_{\uparrow}(\widetilde{\varphi}_{z^\prime})a_\downarrow(u_{z^\prime})\psi\|^2 \right)^{\frac{1}{2}}\leq C\rho^{\frac{4}{3}+\gamma}\langle \psi, \mathcal{N}\psi\rangle. \label{eq: est Ic1}
\end{align}
The analysis of  $\mathrm{I}_{3;b}$
 can be done similarly as the one for $\mathrm{I}_{3;a}$, using also the inequality in \eqref{eq: est partial u H0}. The result is
 \begin{equation}\label{eq: est Ic2}
  |\langle \psi, \mathrm{I}_{3;b}\psi\rangle| \leq C\rho^{1+\gamma}\|\mathbb{H}_0^{\frac{1}{2}} \psi\|\|\mathcal{N}^{\frac{1}{2}}\psi\|.
\end{equation}

The next error term  we estimate is $\mathrm{I}_4$ in \eqref{eq: term II H0 T1}. As above, it is convenient to rewrite it in configuration space. Since the constraints $|p| \geq 4\rho^{1/3 - \gamma}$ and $|r| \leq k_F^\uparrow$ imply that $|r+p|> 3\rho^{1/3 -\gamma}$, we have that $\hat{u}_\uparrow(r+p) = 1$ in \eqref{eq: term II H0 T1}, and hence
\[
   \mathrm{I_4}  =  \sum_{\ell = 1}^3\int dxdydz\, \partial_\ell\widetilde{\varphi}(x-y)\widetilde{\varphi}(x-z)  a^\ast_\downarrow(u_y) a_\uparrow(v_x) a^\ast_\downarrow(v_{z}) a_\downarrow(v_{y}) a^\ast_\uparrow(\partial_\ell v_x)   a_\downarrow(u_{z}) .
\]
Using Lemma \ref{lem: bounds phi} and  \eqref{eq: est a v H0}, we can bound 
\begin{eqnarray}\label{eq: est II H0}
  |\langle \psi, \mathrm{I}_4\psi\rangle| &\leq& C\rho^{2 + \frac{1}{3}}\sum_{\ell = 1}^3\int dxdydz\, |\partial_\ell\widetilde{\varphi}(x-y)||\widetilde{\varphi}(x-z)| \|a_\downarrow(u_y)\psi\|\|a_\downarrow(u_z)\psi\| \nonumber
  \\
  &\leq& C \rho^{2+\frac{1}{3}}\sum_{\ell = 1}^3\|\partial_\ell\widetilde{\varphi}\|_1 \|\widetilde{\varphi}\|_1 \langle \psi, \mathcal{N}\psi\rangle \leq C\rho^{\frac{4}{3} +3\gamma}\langle \psi, \mathcal{N}\psi\rangle.
\end{eqnarray}
The estimate for $\mathrm{I}_5$ can be done similarly, we omit the details. Also for this term we obtain
\begin{equation}\label{I5}
  |\langle \psi, \mathrm{I}_5\psi\rangle|\leq C\rho^{\frac{4}{3} + 3\gamma}\langle \psi, \mathcal{N}\psi\rangle.
\end{equation}

Finally we consider the last term $\mathrm{I}_6$ in \eqref{eq: term III H0 T1}. Similarly as above, we have 
$\hat{u}_\uparrow(r+p) = \hat{u}_\downarrow(r^\prime -p) = 1$ for all the terms in the sum. To estimate it, we shall write
$$
   -\hat{a}_{-r,\uparrow}\hat{a}_{-r^\prime, \downarrow}\hat{a}^\ast_{-r^\prime + p - k,\downarrow}\hat{a}^\ast_{-r-p+k, \uparrow} \\ = \hat{a}^\ast_{-r^\prime +p -k,\downarrow}\hat{a}_{-r,\uparrow}\hat{a}^\ast_{-r-p+k,\uparrow}\hat{a}_{-r^\prime, \downarrow} - \delta_{p,k}\hat{a}_{-r,\uparrow}\hat{a}_{-r,\uparrow}^\ast  
$$
and correspondingly split $\mathrm{I}_6$ in two terms, denoted by $\mathrm{I}_{6;a}$ and $\mathrm{I}_{6;b}$. The last one vanishes, 
$$
  \mathrm{I}_{6;b}   =  - \frac{1}{L^6} \sum_{p, r,r^\prime} p\cdot r \left| \hat{\varphi}(p)\widehat{\chi}_>(p) \right|^2  \hat{v}_\downarrow(r^\prime) \hat{v}_\uparrow(r)  \hat{a}_{-r,\uparrow} \hat{a}_{-r,\uparrow}^\ast  =0
$$
since the sum over $p$ is zero  by symmetry. 
%
The term $\mathrm{I}_{6;a}$ we can write in configuration space as
\begin{equation}\label{eq: term IIIa H0 T1}
  \mathrm{I}_{6;a}
= \sum_{\ell=1}^3\int dxdy\, \widetilde{\varphi}(x-y) \partial_\ell\widetilde{\varphi}(x-y)\widetilde{\varphi}(x-y) a^\ast_\downarrow(v_y) a_\uparrow(v_x)  a_\uparrow^\ast(\partial_\ell v_{x})a_\downarrow (v_y) .
\end{equation}
We use $\widetilde\varphi\partial_\ell \widetilde\varphi = \frac 12 \partial_\ell \widetilde\varphi^2$ and integrate by parts in $x$. The derivate then hits either $  a_\uparrow(v_x)$ or $a_\uparrow^\ast(\partial_\ell v_{x})$, and using \eqref{eq: est a v H0} we can bound
\begin{equation}\label{eq: est IIIa H0}
  |\langle \psi, \mathrm{I}_{6;a}\psi\rangle| \leq C\rho^{1 +\frac{2}{3}}\int dxdy\, |\widetilde{\varphi}(x-y)|^2 \|a_\downarrow(v_y)\psi\|^2
 \leq  C\rho^{\frac 43+\gamma}\langle \psi, \mathcal{N}\psi\rangle,
\end{equation}
where we used Lemma \ref{lem: bounds phi} in the last step. 
Collecting all the estimates, this concludes the proof.
\end{proof}

\subsection{Conjugation of $\mathbb{Q}_4$}\label{sec: conj Q4 T1}
We now conjugate $\mathbb{Q}_4$ by $T_1$ (see \eqref{eq: def H-corr} and \eqref{eq: def T1 unitary} for their definition).

\begin{proposition}[Conjugation of $\mathbb{Q}_4$ by $T_1$]\label{pro: Q4}
Let $\lambda\in [0,1]$ and $\gamma\in (0,1/3)$.  Under the same assumptions of Theorem \ref{thm: main up bd} 
\begin{equation} \label{eq: conj Q4 T1}
  \partial_\lambda T_{1;\lambda}^\ast \mathbb{Q}_4 T_{1;\lambda} =  T^{\ast}_{1;\lambda}(\mathbb{T}_2 + \mathcal{E}_{\mathbb{Q}_4})T_{1;\lambda},
\end{equation}
where 
\begin{equation}\label{eq: def T2}
  \mathbb{T}_2 :=  -\frac{1}{L^3}\sum_{k\in \Lambda^*}\hat{V}\ast_{_{\sum}} (\hat{\varphi}\widehat{\chi}_>)(k)\, b_{k,\uparrow}b_{-k,\downarrow} + \mathrm{h.c.}
\end{equation}
and $\mathcal{E}_{\mathbb{Q}_4}$ is such that for any $\psi \in \mathcal{F}_{\mathrm{f}}$
\begin{equation}\label{eq: est err Q4 after  T1}
  |\langle \psi, \mathcal{E}_{\mathbb{Q}_{4}} \psi\rangle| \leq  C\rho^{\frac{5}{6} +\frac{\gamma}{2}}\|\mathcal{N}^{\frac{1}{2}}\psi\|\|\mathbb{Q}^{\frac{1}{2}}_4 \psi\| .
\end{equation}
\end{proposition}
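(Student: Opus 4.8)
The proof follows the scheme of Proposition~\ref{pro: H0}, with $\mathbb{Q}_4$ in place of $\mathbb{H}_0$. The starting point is the Duhamel identity
\[
  \partial_\lambda T_{1;\lambda}^\ast \mathbb{Q}_4 T_{1;\lambda} = T_{1;\lambda}^\ast [\mathbb{Q}_4, B_1 - B_1^\ast] T_{1;\lambda} = T_{1;\lambda}^\ast\big([\mathbb{Q}_4, B_1] + \mathrm{h.c.}\big)T_{1;\lambda},
\]
using $\mathbb{Q}_4^\ast = \mathbb{Q}_4$. To evaluate $[\mathbb{Q}_4, B_1]$ we use the configuration-space form \eqref{eq: def tilde phi T1} of $B_1$: since $B_1$ is a product of four annihilation operators (two of each spin, each spin pair consisting of one $u$- and one $v$-factor), the commutator arises by anticommuting the two creation operators $a^\ast_\sigma(u_x), a^\ast_{\sigma'}(u_y)$ of $\mathbb{Q}_4$ past the two $u$-type annihilation operators of $B_1$; contractions with the $v$-type annihilation operators vanish since $uv=0$.

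Contracting \emph{both} creation operators of $\mathbb{Q}_4$ with the two $u$-type annihilation operators of $B_1$ forces the two spins to be opposite (compensating the factor $\tfrac12$ in $\mathbb{Q}_4$) and, after integrating out the internal variables, effectively replaces the kernel $\widetilde\varphi$ of $B_1$ by $V\widetilde\varphi$; in momentum space this is exactly the discrete convolution $\hat V\ast_{_{\sum}}(\hat\varphi\widehat{\chi}_>)$, so this contribution is the quasi-bosonic quadratic operator $\mathbb{T}_2$ of \eqref{eq: def T2}, its sign following from $V,\widetilde\varphi\ge 0$ and the overall minus in $[\mathbb{Q}_4,B_1]$. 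All remaining contributions — those with only one of the two creation operators of $\mathbb{Q}_4$ contracted, together with the entire same-spin part of $\mathbb{Q}_4$ — constitute $\mathcal{E}_{\mathbb{Q}_4}$; each is a six-operator term (one creation, five annihilation) carrying one factor $V(x-y)$ and one factor $\widetilde\varphi$.

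To prove \eqref{eq: est err Q4 after  T1} we rewrite each term of $\mathcal{E}_{\mathbb{Q}_4}$ in configuration space and, after anticommuting the pure annihilation operators, split it into three groups: a pair $a_\sigma(u_x)a_{\sigma'}(u_y)$ weighted by $V(x-y)$, which after a Cauchy--Schwarz in $x,y$ is controlled by $\|\mathbb{Q}_4^{1/2}\psi\|$ via $\langle\psi,\mathbb{Q}_4\psi\rangle=\tfrac12\sum_{\sigma,\sigma'}\int V(x-y)\|a_{\sigma'}(u_y)a_\sigma(u_x)\psi\|^2$; a $\widetilde\varphi$-weighted factor regrouped into $b_\sigma(\widetilde\varphi_z)$ or its adjoint, estimated by $\|b_\sigma(\widetilde\varphi_z)\|\le C\rho^{1/3+\gamma/2}$ from Lemma~\ref{lem: bounds b b*}; and one remaining $u$- or $v$-operator, bounded either by $\int\|a_\sigma(u_z)\psi\|^2\,dz\le\langle\psi,\mathcal{N}\psi\rangle$ (giving $\|\mathcal{N}^{1/2}\psi\|$) or by $\|a_\sigma(v_z)\|=\|v_\sigma\|_2\le C\rho^{1/2}$. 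Using also $\|\widetilde\varphi\|_1\le C$ from Lemma~\ref{lem: bounds phi} for the remaining integrations, and Lemma~\ref{lem: b* phi tilde vs b phi tilde} when an adjoint $b_\sigma^\ast$ must be moved across $\psi$, one collects the total power $\rho^{1/3+\gamma/2}\cdot\rho^{1/2}=\rho^{5/6+\gamma/2}$; the h.c.\ terms are treated identically.

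The main obstacle is the bookkeeping of the various partial-contraction terms (the analogues of $\mathrm{I}_1,\dots,\mathrm{I}_6$ in the proof of Proposition~\ref{pro: H0}): for each one must verify that the factor $V(x-y)$ stays attached to a $u_\sigma$--$u_{\sigma'}$ pair of matching spins, so that $\mathbb{Q}_4^{1/2}$ is applicable, and that the $\widetilde\varphi$-weighted leftover genuinely regroups into a $b_\sigma(\widetilde\varphi_z)$-type operator, so that only the finite quantities $\|\widetilde\varphi\|_1$ and $\|b_\sigma(\widetilde\varphi_z)\|$ (rather than the divergent $\|\varphi\|_2$) enter. Compared with Proposition~\ref{pro: H0} this is in fact somewhat easier, since $V$ is a fixed integrable potential and no integration by parts is needed to redistribute derivatives; the computation runs parallel to the analogous conjugation in \cite{Gia1}.
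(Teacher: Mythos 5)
Your proposal follows the paper's approach: Duhamel, compute $[\mathbb{Q}_4, B_1]$, identify the double-contraction terms as $\mathbb{T}_2$ (after noting that contractions with the $v$-factors in $B_1$ vanish by $uv=0$), and estimate the single-contraction remainders (one creation, five annihilations, weighted by $V$ and $\widetilde\varphi$) by isolating a $V$-weighted pair $a_\sigma(u_y)a_\uparrow(u_x)$ for $\|\mathbb{Q}_4^{1/2}\psi\|$, regrouping two of the $B_1$-operators into $b_\sigma(\widetilde\varphi_z)$ for $\rho^{1/3+\gamma/2}$, bounding the $v$-operator by $\rho^{1/2}$, and using the remaining $u$-creation for $\|\mathcal{N}^{1/2}\psi\|$. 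This is exactly what the paper does (the paper's $\mathrm{I}_1, \mathrm{I}_2$ are the single-contraction terms and $\mathrm{I}_3$ gives $\mathbb{T}_2$), and you reach the correct power $\rho^{5/6+\gamma/2}$.

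Two minor inaccuracies in your write-up, neither of which undermines the argument: (i)~you assert ``$\|\widetilde\varphi\|_1\le C$ from Lemma~\ref{lem: bounds phi}'', but that lemma gives $\|\widetilde\varphi\|_{L^1(\Lambda)}\le C\rho^{-2/3+2\gamma}$, which diverges as $\rho\to 0$; only $\|\Delta\widetilde\varphi\|_1$ and $\|\widetilde\varphi\|_\infty$ are uniformly bounded. Fortunately this claim is superfluous: once $\widetilde\varphi$ has been absorbed into $b_\sigma(\widetilde\varphi_z)$, as you correctly describe, the only $L^1$ norm that enters the remaining $x,y$ integration is $\|V\|_1\le C$, so the false assertion should simply be deleted rather than repaired. (ii)~Invoking Lemma~\ref{lem: b* phi tilde vs b phi tilde} for the h.c.\ terms is unnecessary: since the bound is on an expectation value $\langle\psi,\mathcal{E}_{\mathbb{Q}_4}\psi\rangle$, the h.c.\ of each error piece contributes the same modulus, and in any case $\|b_\sigma^\ast(\widetilde\varphi_z)\|=\|b_\sigma(\widetilde\varphi_z)\|$ as operator norms, so no separate lemma is needed. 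You should also note that in writing the single-contraction term in the clean form with $b_\sigma(\widetilde\varphi_z)$ you are implicitly using (as the paper does explicitly) that the support constraint $\widehat\chi_>(p)\neq 0$ forces $\hat u_\uparrow(r+p)=1$, which is what makes the contracted projector disappear and lets $\widetilde\varphi$ combine cleanly with the $u,v$-pair.
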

\begin{proof} 
We start by computing 
\[
 \partial_\lambda  T^\ast_{1;\lambda} \mathbb{Q}_4 T_{1; \lambda}  
 =  T^\ast_{1;\lambda}[\mathbb{Q}_4, B_1] T_{1;\lambda} + \mathrm{h.c.}
\]
with
\[
  [\mathbb{Q}_4, B_1] = \frac{1}{2L^6}\sum_{\sigma,\sigma^\prime}\sum_{k,p,s,s^\prime}\hat{V}(k)\hat{\varphi}(p)\widehat{\chi}_>(p)\hat{u}_\sigma(s+k)\hat{u}_{\sigma^\prime}(s^\prime -k)\hat{u}_\sigma(s)\hat{u}_{\sigma^\prime}(s^\prime)\, [\hat{a}^\ast_{s+k,\sigma}\hat{a}^\ast_{s^\prime - k,\sigma^\prime}\hat{a}_{s^\prime, \sigma^\prime}\hat{a}_{s,\sigma}, b_{p,\uparrow}b_{-p,\downarrow}].
\]
Recall from \eqref{eq: def bp bpr} that $b_{p,\sigma} = \sum_{r} \hat{u}_\sigma(p+r)\hat{v}_\sigma(r)\hat{a}_{p+r,\sigma}\hat{a}_{-r,\sigma}$.
Hence we need to compute
\begin{align*}
  &[\hat{a}^\ast_{s+k,\sigma}\hat{a}^\ast_{s^\prime - k,\sigma^\prime}\hat{a}_{s^\prime, \sigma^\prime}\hat{a}_{s,\sigma}, 
  \hat{a}_{p+r,\uparrow}\hat{a}_{-r,\uparrow} \hat{a}_{-p+r',\downarrow}\hat{a}_{-r',\downarrow}
] = - \delta_{s +k, r+p}\delta_{\sigma, \uparrow}\hat{a}^\ast_{s^\prime -k,\sigma^\prime}\hat{a}_{r^\prime - p,\downarrow}\hat{a}_{-r^\prime, \downarrow}\hat{a}_{-r,\uparrow}\hat{a}_{s^\prime, \sigma^\prime}\hat{a}_{s,\sigma} 
  \\
  & +\delta_{s^\prime - k, r+p}\delta_{\sigma^\prime,\uparrow}\hat{a}_{s+k,\sigma}^\ast\hat{a}_{r^\prime - p, \downarrow}\hat{a}_{-r^\prime, \downarrow}\hat{a}_{-r,\uparrow}\hat{a}_{s^\prime, \sigma^\prime}\hat{a}_{s,\sigma} - \delta_{s^\prime -k, r^\prime -p}\delta_{\sigma^\prime, \downarrow}\hat{a}^\ast_{s+k,\sigma}\hat{a}_{r+p,\uparrow}\hat{a}_{-r^\prime, \downarrow}\hat{a}_{-r,\uparrow}\hat{a}_{s^\prime, \sigma^\prime}\hat{a}_{s,\sigma} 
  \\
  & + \delta_{s+k,r^\prime - p}\delta_{\sigma,\downarrow}\hat{a}^\ast_{s^\prime - k, \sigma^\prime}\hat{a}_{r+p,\uparrow}\hat{a}_{-r^\prime, \downarrow}\hat{a}_{-r,\uparrow}\hat{a}_{s^\prime, \sigma^\prime}\hat{a}_{s,\sigma} +\delta_{s+k,r+p}\delta_{s^\prime - k, r^\prime -p}\delta_{\sigma,\uparrow}\delta_{\sigma^\prime,\downarrow}\hat{a}_{-r^\prime, \downarrow}\hat{a}_{-r,\uparrow}\hat{a}_{s^\prime, \sigma^\prime}\hat{a}_{s,\sigma}
  \\
  & - \delta_{s+k,r^\prime - p}\delta_{s^\prime -k,r+p}\delta_{\sigma,\downarrow}\delta_{\sigma^\prime, \uparrow}\hat{a}_{-r^\prime, \downarrow}\hat{a}_{-r,\uparrow}\hat{a}_{s^\prime, \sigma^\prime}\hat{a}_{s,\sigma}.
\end{align*}
Using that $\hat{V}(k) = \hat{V}(-k)$, $\hat{\varphi}(p)\widehat{\chi}_>(p) = \hat{\varphi}(-p)\widehat{\chi}_>(-p)$ and performing a suitable change of variables, we can combine the first two terms in the commutator above, as well as the third and the forth and the last two. Therefore, we find that 
\[
  [\mathbb{Q}_4, B_1] = \mathrm{I}_1 + \mathrm{I}_2 + \mathrm{I}_3,
\]
with
\begin{multline}\label{eq: def I1 Q4 T1}
   \mathrm{I}_1 =  \frac{1}{L^6}\sum_{\sigma}\sum_{k,p,r,r^\prime, s} \hat{V}(k)\hat{\varphi}(p)\widehat{\chi}_>(p) \hat{u}_\uparrow(r+p)\hat{u}_\downarrow(r^\prime - p)\hat{u}_\uparrow(r+p-k)\hat{u}_{\sigma}(s -k)\hat{u}_{\sigma}(s)\hat{v}_\uparrow(r)\hat{v}_\downarrow(r^\prime) 
    \\
    \times \hat{a}^\ast_{s - k, \sigma}\hat{a}_{r^\prime - p, \downarrow}\hat{a}_{-r,\uparrow}\hat{a}_{-r^\prime, \downarrow}\hat{a}_{s, \sigma}\hat{a}_{r+p-k,\uparrow},
\end{multline}
\begin{multline*}
   \mathrm{I}_2 =  \frac{1}{L^6}\sum_{\sigma}\sum_{k,p,r,r^\prime, s} \hat{V}(k)\hat{\varphi}(p)\widehat{\chi}_>(p) \hat{u}_\uparrow(r+p)\hat{u}_\downarrow(r^\prime - p)\hat{u}_\downarrow(r^\prime -p-k)\hat{u}_{\sigma}(s -k)\hat{u}_{\sigma}(s)\hat{v}_\uparrow(r)\hat{v}_\downarrow(r^\prime) 
    \\
    \times \hat{a}^\ast_{s - k, \sigma}\hat{a}_{r + p, \uparrow}\hat{a}_{-r^\prime,\downarrow}\hat{a}_{-r, \uparrow}\hat{a}_{s, \sigma}\hat{a}_{r^\prime -p-k,\downarrow},
\end{multline*}
\begin{multline*}
  \mathrm{I}_3 = - \frac{1}{L^6}\sum_{k,q, r,r^\prime}\hat{V}(k)\hat{\varphi}(p)\widehat{\chi}_{>}(p)\hat{u}_\uparrow(r+p)\hat{u}_\downarrow(r^\prime - p) \hat{u}_\uparrow(r+p+k)\hat{u}_\downarrow(r^\prime -p-k)\hat{v}_\uparrow(r)\hat{v}_\downarrow(r^\prime)
  \\
   \times \hat{a}_{r+p+k, \uparrow}\hat{a}_{-r,\uparrow}\hat{a}_{r^\prime -p-k, \downarrow}\hat{a}_{-r^\prime, \downarrow}.
 \end{multline*}  %
In $\mathrm{I}_3$ we have $\hat{u}_\uparrow(r+p) = \hat{u}_\downarrow(r^\prime -p) =1$ due to the constraints on $p,r,r^\prime$ and, with another change of variables, we find that $\mathrm{I}_3 +\mathrm{I}_3^\ast = \mathbb{T}_2$ defined in \eqref{eq: def T2}.
%
In other words, $ \mathcal{E}_{\mathbb{Q}_4} = \mathrm{I}_1 + \mathrm{I}_2  + \mathrm{h.c.}$.  
 
In the following we only consider  $\mathrm{I}_1$, the bound on $\mathrm{I}_2$ works in exactly the same way. Using again that $\hat{u}_\uparrow(r+p) = 1$ in \eqref{eq: def I1 Q4 T1}, we can write in configuration space 
\begin{equation}\label{eq: err Q1 position}
   \mathrm{I}_1 
=-\sum_\sigma\int dxdy\, V(x-y) a^\ast_{\sigma}(u_y)b_{\downarrow}(\widetilde{\varphi}_x) a_\uparrow(v_x)a_\sigma (u_y)a_{\uparrow}(u_x) ,
\end{equation}
where we recall the definition of the operator $b_\sigma$  in \eqref{eq: def b phi}. Using $\|a_{\downarrow}(v_x)\| = \|{v}\|_2 \leq \rho^{1/2}$,  Lemma \ref{lem: bounds b b*} as well as the Cauchy--Schwarz inequality, we obtain
\begin{equation}\label{eq: est err term Q4 T1}
  |\langle \psi, \mathrm{I}_1\psi\rangle | \leq  C \rho^{\frac{5}{6}  +\frac{\gamma}{2}} \sum_{\sigma}  \int dxdy\, V(x-y)\|a_{\sigma}(u_y)\psi\|\|a_\sigma (u_y)a_{\uparrow}(u_x)\psi\|\leq  C\rho^{\frac{5}{6} +\frac{\gamma}{2}}\|\mathcal{N}^{\frac{1}{2}}\psi\|\|\mathbb{Q}_4^{\frac{1}{2}}\psi\|.
\end{equation}
This yields  \eqref{eq: est err Q4 after  T1}. 
\end{proof}


\subsection{Conjugation of $\mathbb{Q}_2^\udarrow$}\label{sec: conj Q2 T1}
In the next proposition we conjugate the operator $\mathbb{Q}_2^\udarrow$ introduced in \eqref{q2ud} with respect to the unitary  operator $T_1$ defined in \eqref{eq: def T1 unitary}. 
\begin{proposition}[Conjugation of $\mathbb{Q}_2^\udarrow$ by $T_1$]\label{pro: Q2} Let $\lambda\in [0,1]$ and $\gamma\in (0,1/3)$. Under the same assumptions of Theorem \ref{thm: main up bd}
\[
  \partial_\lambda T_{1;\lambda}^\ast \mathbb{Q}_2^\udarrow T_{1;\lambda} = - 2\rho_\uparrow\rho_\downarrow \sum_{p\in \Lambda^*}\hat{V}(p)\hat{\varphi}(p)\widehat{\chi}_>(p) + T^\ast_{1;\lambda}\mathcal{E}_{\mathbb{Q}_2^\udarrow}T_{1;\lambda},
\]
with $\mathcal{E}_{\mathbb{Q}_2^\udarrow}$ such that for any $\psi \in \mathcal{F}_{\mathrm{f}}$
\begin{equation}\label{eq: est err Q2}
  |\langle \psi, \mathcal{E}_{\mathbb{Q}_2^\udarrow} \psi\rangle|\leq C\rho^{\frac{5}{6} +\frac{\gamma}{2}}\|\mathbb{Q}_{4}^{\frac{1}{2}}\psi\|\|\mathcal{N}^{\frac{1}{2}}\psi\|  + C\rho\langle \psi, \mathcal{N}\psi\rangle.
\end{equation}

\end{proposition}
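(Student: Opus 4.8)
The plan is to compute $\partial_\lambda T_{1;\lambda}^*\mathbb{Q}_2^\udarrow T_{1;\lambda} = T_{1;\lambda}^*[\mathbb{Q}_2^\udarrow, B_1-B_1^*]T_{1;\lambda} = T_{1;\lambda}^*([\mathbb{Q}_2^\udarrow, B_1] + \mathrm{h.c.})T_{1;\lambda}$, using the representation $\mathbb{Q}_2^\udarrow = \frac{1}{L^3}\sum_p \hat V(p)\, b_{p,\uparrow}b_{-p,\downarrow} + \mathrm{h.c.}$ noted in Section~\ref{sec:sub-strategy}. First I would expand $\mathbb{Q}_2^\udarrow$ in terms of the $\hat a, \hat a^*$ operators and compute the commutator of $\hat a^*_{s+k,\uparrow}\hat a^*_{-s,\uparrow}\hat a^*_{s'-k,\downarrow}\hat a^*_{-s',\downarrow}$ (from the $b^*b^*$ part of $\mathbb{Q}_2^\udarrow$) with $\hat a_{r+p,\uparrow}\hat a_{-r,\uparrow}\hat a_{r'-p,\downarrow}\hat a_{-r',\downarrow}$ (from $b_{p,\uparrow}b_{-p,\downarrow}$ in $B_1$). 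The fully contracted term (all four Kronecker deltas matching up) produces a scalar; after inserting $\hat v_\uparrow, \hat v_\downarrow, \hat u_\uparrow, \hat u_\downarrow$ and using that on the relevant support $\hat u_\uparrow(r+p)=\hat u_\downarrow(r'-p)=1$ (because $|p|\geq 4\rho^{1/3-\gamma}$ forces momenta outside the Fermi balls), this scalar collapses to $-\frac{2}{L^6}\sum_{r,r',p}\hat V(p)\hat\varphi(p)\widehat\chi_>(p)\hat v_\uparrow(r)\hat v_\downarrow(r') = -2\rho_\uparrow\rho_\downarrow\sum_p \hat V(p)\hat\varphi(p)\widehat\chi_>(p)$ in the thermodynamic limit, since $L^{-3}\sum_r \hat v_\sigma(r) = N_\sigma/L^3 \to \rho_\sigma$. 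This is the main term; the same contraction from the $bb$-part of $\mathbb{Q}_2^\udarrow$ with $B_1^*$ gives an identical contribution, accounting for the factor $2$ (and making the result real, so no separate $\mathrm{h.c.}$ is needed for the scalar).

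The remaining, partially contracted terms constitute $\mathcal{E}_{\mathbb{Q}_2^\udarrow}$. I would organize these by how many deltas fire. The terms where exactly one pair is contracted leave behind an operator that is quartic in $\hat a,\hat a^*$ — structurally a product of a $b$-type operator $b_\sigma(\widetilde\varphi_z)$ or $b_{j,\sigma}$ paired with two more $a(u)$ or $a(v)$ factors carried along by the surviving $\hat V$ and $\hat\varphi$ kernels; rewriting in configuration space à la \eqref{eq: err Q1 position}, these are bounded by pulling out $\|v_\sigma\|_2\leq\rho^{1/2}$, applying Lemma~\ref{lem: bounds b b*} to gain $\rho^{1/3+\gamma/2}$, and recognizing the remaining $\int dxdy\, V(x-y)\|a_\sigma(u_y)\psi\|\|a_\sigma(u_y)a_\uparrow(u_x)\psi\|$-type integral as controlled by $\|\mathcal{N}^{1/2}\psi\|\|\mathbb{Q}_4^{1/2}\psi\|$ (using $V\in L^1$ and the definition of $\mathbb{Q}_4$); this yields the $C\rho^{5/6+\gamma/2}\|\mathbb{Q}_4^{1/2}\psi\|\|\mathcal{N}^{1/2}\psi\|$ piece. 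The doubly-contracted terms that do \emph{not} fully contract leave a quadratic operator of the schematic form $\frac{1}{L^3}\sum \hat V(\cdot)\hat\varphi(\cdot)\widehat\chi_>(\cdot)\,\hat v_\sigma(\cdot)\,\hat a^*_{k,\sigma}\hat a_{k,\sigma}$, bounded by $\|\hat V\|_\infty\|\widetilde\varphi\|_\infty\cdot\rho\cdot\langle\psi,\mathcal{N}\psi\rangle \leq C\rho\langle\psi,\mathcal{N}\psi\rangle$ using $\|\widetilde\varphi\|_{L^\infty}\le C$ from Lemma~\ref{lem: bounds phi} and $L^{-3}\sum_r\hat v_\sigma(r)\le C\rho$; this gives the $C\rho\langle\psi,\mathcal{N}\psi\rangle$ piece.

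Finally I would assemble: $[\mathbb{Q}_2^\udarrow, B_1-B_1^*] = -2\rho_\uparrow\rho_\downarrow\sum_p\hat V(p)\hat\varphi(p)\widehat\chi_>(p) + \mathcal{E}_{\mathbb{Q}_2^\udarrow}$ with $\mathcal{E}_{\mathbb{Q}_2^\udarrow}$ the sum of all non-fully-contracted terms and their adjoints, and conjugating by $T_{1;\lambda}$ (which leaves the scalar invariant) gives the claimed identity, with the bound \eqref{eq: est err Q2} following by combining the two estimates above. I expect the main obstacle to be purely bookkeeping: correctly enumerating and combining the many contraction terms in the commutator (several of them combine in pairs after a change of variables, exploiting $\hat V(p)=\hat V(-p)$ and $\hat\varphi(p)\widehat\chi_>(p)$ even), and making sure the support constraints from $\widehat\chi_>$ are used consistently to replace various $\hat u_\sigma$ factors by $1$ (or by $\hat\nu^>$ as in the proof of Proposition~\ref{pro: H0}) so that the configuration-space rewriting goes through cleanly. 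There is no conceptual difficulty beyond what already appears in the proofs of Propositions~\ref{pro: H0} and~\ref{pro: Q4}; the argument is a direct adaptation, with the key simplification that $\mathbb{Q}_2^\udarrow$ is already quasi-bosonic quadratic, so the fully-contracted term is a scalar rather than requiring a further scattering-equation input.
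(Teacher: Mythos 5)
Your proposal is correct and follows essentially the same route as the paper: Duhamel to reduce to $[\mathbb{Q}_2^\udarrow, B_1]+\mathrm{h.c.}$, extracting the scalar from the fully contracted term (which the paper obtains by normal-ordering $\mathrm{I}_6$, the piece with both $u$-sector deltas firing), and estimating the remainders via Cauchy--Schwarz, $\|v_\sigma\|_2\le\rho^{1/2}$, Lemma~\ref{lem: bounds b b*}, and recognition of $\mathbb{Q}_4$. One small imprecision: your classification of the error terms by the number of contractions does not quite match the paper's actual estimates (the once-contracted terms leave sextic, not quartic, expressions; the paper's $\mathrm{I}_3$ has two $v$-sector contractions yet is still controlled by the $\mathbb{Q}_4$-bound, while $\mathrm{I}_4,\mathrm{I}_5$ have a single $u$-sector contraction but are bounded by $\rho^{4/3+2\gamma}\mathcal{N}\ll\rho\mathcal{N}$, not via $\mathbb{Q}_4$) --- but both bounds fit within \eqref{eq: est err Q2}, so the conclusion is unaffected.
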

\begin{proof}
As in the previous propositions, we have 
  \begin{equation}\label{eq: derivative lambda Q2}
    \partial_\lambda T_{1;\lambda}^\ast \mathbb{Q}_2^\udarrow T_{1;\lambda} 
    =  T_{1;\lambda}^\ast [\mathbb{Q}_2^\udarrow, B_1]T_{1;\lambda} + \mathrm{h.c.}.
  \end{equation}
 For the computation of $[\mathbb{Q}_2^\udarrow, B_1]$ we shall use that
 $$
 \mathbb{Q}_2^\udarrow = \frac 1{L^3}\sum_{k,s,s'} \hat V(k) \hat{v}_\uparrow(s^\prime) \hat{u}_\uparrow(s^\prime-k)  \hat{v}_\downarrow(s) \hat{u}_\downarrow(s+k) \hat{a}_{-s^\prime, \uparrow}^\ast\hat{a}_{s^\prime - k,\uparrow}^\ast\hat{a}_{-s,\downarrow}^\ast\hat{a}_{s+k,\downarrow}^\ast + \textrm{h.c.}
 $$
 and hence we need 
 \begin{align}\nonumber
&[\hat{a}_{-s^\prime, \uparrow}^\ast\hat{a}_{s^\prime - k,\uparrow}^\ast\hat{a}_{-s,\downarrow}^\ast\hat{a}_{s+k,\downarrow}^\ast, \hat{a}_{r+p,\uparrow}\hat{a}_{-r,\uparrow}\hat{a}_{r^\prime - p, \downarrow}\hat{a}_{-r^\prime, \downarrow}] 
\\ &= \hat{a}^\ast_{s^\prime-k, \uparrow}\hat{a}_{s+k,\downarrow}^\ast \hat{a}_{r+p,\uparrow}\hat{a}_{r^\prime -p,\downarrow}  
\left(\delta_{s^\prime, r}\hat{a}^\ast_{-s,\downarrow}\hat{a}_{-r^\prime, \downarrow}    +\delta_{s,r^\prime}\hat{a}_{-s^\prime, \uparrow}^\ast\hat{a}_{-r,\uparrow} - \delta_{s^\prime, r} \delta_{s,r^\prime}  \right) \nonumber
\\
  &\quad  - \left( \delta_{s^\prime - k, r+p} \hat{a}^\ast_{s+k,\downarrow} \hat{a}_{r^\prime - p,\downarrow}   +  \delta_{s+k,r^\prime -p}\hat{a}^\ast_{s^\prime - k, \uparrow}\hat{a}_{r+p,\uparrow}
 - \delta_{s^\prime -k, r+p}\delta_{s+k,r^\prime - p} 
  \right) 
  \hat{a}_{-r,\uparrow}\hat{a}_{-r^\prime, \downarrow}\hat{a}_{-s^\prime, \uparrow}^\ast\hat{a}^\ast_{-s,\downarrow}
 .\label{eq: initial comm Q2 B1}
\end{align}
Correspondingly, from the six terms on the right-hand side we obtain $[\mathbb{Q}_2^\udarrow, B_1] = \sum_{j=1}^6 \mathrm{I}_j$, and we shall now estimate these terms.
The main term comes from normal-ordering of $\mathrm{I}_6$. As in the previous proofs, in our estimates we often use that $\|V\|_1\leq C$, together with the bounds proved in Lemma \ref{lem: bounds b b*}, Lemma \ref{lem: bounds phi} and \eqref{eq: est a v H0} with $n=0$.

We start by considering $\mathrm{I}_1$. It is convenient to rewrite the term partially in configuration space, as 
\begin{equation}\label{46:I1}
  \mathrm{I}_1 = 
  -\frac{1}{L^3}\sum_{r}\hat{v}_\uparrow(r)\int dxdydz\, e^{ir\cdot (x-z)} V(x-y)a^\ast_\downarrow(u_y)a^\ast_\uparrow(u_x)a^\ast_\downarrow(v_y)b_\downarrow(\widetilde{\varphi}_{z})a_\uparrow(u_z),
\end{equation}
where we used again the operator $b_\sigma$ introduced in \eqref{eq: def b phi}. Using  that $0\leq \hat{v}_\uparrow(r) \leq 1$ and the Cauchy--Schwarz inequality, we get
\begin{align*}
  &|\langle \psi, \mathrm{I}_1 \psi\rangle| 
  \\
  &\leq \left(\frac{1}{L^3}\sum_{r} \left\| \int dxdy \,V(x-y)e^{-ir\cdot x} a_\downarrow(v_y)a_\uparrow(u_x)a_\downarrow(u_y) \psi\right\|^2 \right)^{\frac{1}{2}}\left(\frac{1}{L^3}\sum_{r} \left\| \int dz\, e^{-ir\cdot z}b_\downarrow(\widetilde{\varphi}_{z})a_\uparrow(u_z) \psi\right\|^2 \right)^{\frac{1}{2}}.
\end{align*}
We can further bound 
\begin{align*}
 &\frac{1}{L^3}\sum_{r} \left\| \int dxdy \,V(x-y)e^{-ir\cdot x} a_\downarrow(v_y)a_\uparrow(u_x)a_\downarrow(u_y) \psi\right\|^2  \\
  &\leq \int dxdydy^\prime V(x-y)V(x-y^\prime) \|a_\downarrow(v_y)a_\uparrow(u_x)a_\downarrow(u_y)\psi\|\|a_\downarrow(v_{y^\prime})a_\uparrow(u_x)a_\downarrow(u_{y^\prime})\psi\|
  \\
  &\leq C\rho \int dxdydy^\prime\, V(x-y)V(x-y^\prime) \|a_\uparrow(u_x)a_\downarrow(u_y)\psi\|^2 
  \leq C\rho\langle \psi, \mathbb{Q}_4\psi\rangle,
\end{align*}
where we used \eqref{eq: est a v H0} and $\|V\|_1 \leq C$.
Using  Lemma \ref{lem: bounds b b*} 
we can also estimate
$$
\frac{1}{L^3}\sum_{r} \left\| \int dz\, e^{-ir\cdot z}b_\downarrow(\widetilde{\varphi}_{z})a_\uparrow(u_z) \psi\right\|^2 
  = \int dz \, \left\| b_\downarrow(\widetilde{\varphi}_{z}) a_\uparrow(u_z)\psi\right\|^2 
 \leq C\rho^{\frac{2}{3} +\gamma}\langle \psi, \mathcal{N}\psi\rangle. 
$$
In combination, we thus obtain 
\begin{equation}\label{eq: est Ia Q2}
  |\langle \psi, \mathrm{I}_1\psi\rangle| \leq 
  C\rho^{\frac{5}{6} +\frac{\gamma}{2}}\|\mathbb{Q}_4^{\frac{1}{2}}\psi\|\|\mathcal{N}^{\frac{1}{2}}\psi\|. 
\end{equation}
The term $\mathrm{I}_2$ in  can be estimated similarly, with the same outcome, and we omit the details.

Next we consider $\mathrm{I}_3$. Similarly as for the other terms, we rewrite it in configuration space as
\begin{equation}\label{46:I3}
  \langle \psi, \mathrm{I}_3\psi\rangle 
= \int dxdydz^\prime\, V(x-y)v_\downarrow(y;z^\prime)\left\langle  a_\uparrow(u_x)a_\downarrow(u_y)\psi, \left(\int dz\, \widetilde{\varphi}(z-z^\prime)v_\uparrow(x;z) a_\uparrow(u_z)\right) a_\downarrow(u_{z^\prime})\psi \right\rangle .
\end{equation}
By the Cauchy--Schwarz's inequality, we have 
\begin{multline*}
  |\langle \psi, \mathrm{I}_3\psi\rangle| \leq \left(\int dxdydz^\prime\, V(x-y) |v_\downarrow(y;z^\prime)|^2 \|a_\uparrow(u_x)a_\downarrow(u_y)\psi\|^2\right)^{\frac{1}{2}}
  \\
  \times\left(\int dxdydz^\prime\, V(x-y)\left\|\int dz\, \widetilde{\varphi}(z-z^\prime)v_\uparrow(x;z) a_\uparrow(u_z)\right\|^2 \|a_\downarrow(u_{z^\prime})\psi\|^2\right)^{\frac{1}{2}}.
\end{multline*}
The first factor is bounded by $\|v_\downarrow\|_2  \|\mathbb{Q}_4^{\frac{1}{2}}\psi\|$. To bound the second, we proceed as in \eqref{eq: est conv L2}, i.e., using \eqref{eq: est a v H0}  and $0\leq \hat{u}_\sigma\leq 1$, to obtain 
\begin{align*}
  |\langle \psi, \mathrm{I}_3\psi\rangle| &\leq \|v_\downarrow\|_2  \|\mathbb{Q}_4^{\frac{1}{2}}\psi\| \left(\int dxdydz^\prime dw\, V(x-y)  |\widetilde{\varphi}(z^\prime -w)|^2|v_\uparrow(x; w)|^2\|a_\downarrow(u_{z^\prime})\psi\|^2\right)^{\frac{1}{2}}
  \\
&  \leq \|V\|_1^{\frac{1}{2}}\|v_\uparrow\|_2\|v_\downarrow\|_2\|\widetilde{\varphi}\|_2 \|\mathbb{Q}_4^{\frac{1}{2}}\psi\|\|\mathcal{N}^{\frac{1}{2}}\psi\| \leq 
  C\rho^{\frac{5}{6} +\frac{\gamma}{2}}\|\mathbb{Q}_4^{\frac{1}{2}}\psi\|\|\mathcal{N}^{\frac{1}{2}}\psi\|. 
\end{align*}
Here we have used  Lemma~\ref{lem: bounds phi} in the last step. 

The next term we consider is $\mathrm{I}_4$. Before estimating it, we note that from the constraint $|p| \geq 4\rho^{1/3 -\gamma}$ and $|r| \leq k_F^\uparrow$, we get that $|r+p|> 3\rho^{1/3 - \gamma}$. Hence $\hat{u}_\uparrow(r+p) = 1$. Rewriting $\mathrm{I}_4$ in configuration space, we then get
\begin{equation}\label{46:I4}
  \mathrm{I}_4 = \int dxdydz\, V(x-y)\widetilde{\varphi}(y-z)a^\ast_\downarrow(u_x)a_\downarrow(u_z)a_\uparrow(v_y)a_\downarrow(v_z)a^\ast_\downarrow(v_x) a^\ast_\uparrow(v_y).
\end{equation}
Hence, using again Lemma \ref{lem: bounds phi} and the fact that $a_\downarrow(u_z)$ commutes with $a_\uparrow(v_y)a_\downarrow(v_z)a^\ast_\downarrow(v_x) a^\ast_\uparrow(v_y)$, 
\begin{eqnarray}\nonumber
  |\langle \psi, \mathrm{I}_4 \psi\rangle | &\leq& C\rho^2 \int dxdydz\, V(x-y) |\widetilde{\varphi}(y-z)| \|a_\downarrow(u_x)\psi\|\|a_\downarrow(u_z)\psi\| 
  \\ &\leq& C\rho^{2}\|V\|_1\|\widetilde{\varphi}\|_1 \langle \psi, \mathcal{N}\psi\rangle 
  \leq C\rho^{\frac{4}{3} +2\gamma}\langle \psi, \mathcal{N}\psi\rangle.  \label{eq: est II Q2}
\end{eqnarray}
The term $\mathrm{I}_5$  can be estimated in the same way, obtaining the same bound.

%

Finally we consider the term $\mathrm{I}_6$,  from which we extract the leading contribution. We start by normal-ordering the terms, using
\begin{equation}\label{noI6}
  \hat{a}_{-r,\uparrow}\hat{a}_{-r^\prime, \downarrow}\hat{a}_{-s^\prime, \uparrow}^\ast\hat{a}^\ast_{-s,\downarrow} = -\left( \hat{a}_{-s^\prime, \uparrow}^\ast  \hat{a}_{-r,\uparrow} - \delta_{r,s^\prime} \right) \left( \hat{a}^\ast_{-s,\downarrow} \hat{a}_{-r^\prime, \downarrow} - \delta_{s,r^\prime} \right)
\end{equation}
This way, we obtain
\begin{align}\nonumber
\mathrm{I}_6 &= -  \rho_\uparrow \rho_\downarrow \sum_{p} \hat V(p)\hat{\varphi}(p)\widehat{\chi}_>(p) 
\\ \nonumber & \quad +   \frac 1{L^3}\sum_{p,r} \hat V(p)\hat{\varphi}(p)\widehat{\chi}_>(p)  \left( \rho_\downarrow  \hat{v}_\uparrow(r)  \hat{a}_{r, \uparrow}^\ast  \hat{a}_{r,\uparrow} + \rho_\uparrow  \hat{v}_\downarrow(r)   \hat{a}^\ast_{r,\downarrow} \hat{a}_{r, \downarrow}  \right)
\\ & \quad +    \frac 1{L^6}\sum_{k,p,r,r'} \hat V(k)\hat{\varphi}(p)\widehat{\chi}_>(p) \hat{v}_\uparrow(k+r+p)  \hat{v}_\downarrow(r'-p-k)  \hat{v}_\uparrow(r) \hat{v}_\downarrow(r')  \hat{a}_{-k-r-p, \uparrow}^\ast   \hat{a}^\ast_{p+k-r',\downarrow} \hat{a}_{-r,\uparrow} \hat{a}_{-r^\prime, \downarrow} \label{eq: aaa*a* in}
\end{align}
where we have used again that $\hat{u}_\uparrow(r+p) =\hat{u}_\downarrow(r'-p) =1$ in all the summands. 
The main contribution is the constant first term.  Since
\[
  \left|\frac{1}{L^3}  \sum_p \hat{V}(p)\hat{\varphi}(p)\widehat{\chi}_>(p) \right| = \left| \int_\Lambda dx\, V(x)\widetilde{\varphi}(x) \right| \leq \|V\|_{L^1(\Lambda)} \|\widetilde{\varphi}\|_{L^\infty(\Lambda)} \leq C,
\]
the term in the second line is bounded by $C \rho \mathcal{N}$. 
To bound the term in the last line in \eqref{eq: aaa*a* in}, we rewrite it in configuration space as
\begin{equation}\label{eq: def IIIb Q2}
-  \int dxdy\, V(x-y)\widetilde{\varphi}(x-y) a^\ast_\downarrow(v_x)a^\ast_\uparrow(v_y) a_\uparrow(v_y)a_\downarrow(v_x) .
\end{equation}
Hence also this term is bounded by $C\rho \mathcal{N}$, using that $\| V \widetilde\varphi\|_1 \leq \|V\|_1 \|\widetilde \varphi\|_\infty \leq C$ because of \eqref{eq: L2 norms phi}. 
%
Combining all the bounds 
the proof of Proposition \ref{pro: Q2} is complete. 
\end{proof}

\subsection{Renormalization of $\mathbb{Q}_2^\udarrow$ by scattering equation}\label{sec: scat eq T1}
In this subsection we will combine the ``$bb + b^\ast b^\ast$ operators'' from Propositions~\ref{pro: H0} and~\ref{pro: Q4}  with $\mathbb{Q}_2^\udarrow$ in order to obtain a renormalized interaction. Recall the definitions of $\mathbb{T}_1$ and $\mathbb{T}_2$ in \eqref{eq: def T1} and in \eqref{eq: def T2}, as well as the ones of  ${\mathbb{Q}}_{2;<}$ and  $\widetilde{\mathbb{Q}}_{2;<}$  in \eqref{eq: definition Q2<} and \eqref{def:q2tl}. Note also that, as all these operators, $\mathbb{Q}_2^\udarrow$ in \eqref{q2ud} is quadratic in the $b$'s, namely,
\[
  \mathbb{Q}_2^\udarrow 
  = \frac{1}{L^3}\sum_{p}\hat{V}(p)b_{p,\uparrow}b_{-p,\downarrow} + \mathrm{h.c}.
\]

\begin{proposition}[Scattering equation cancellation]\label{prop: scatt canc} Let $\gamma\in (0,1/3)$.  Under the same assumptions of Theorem \ref{thm: main up bd},
\begin{equation}\label{eq: canc scatt eq}
\mathbb{T}_1 + \mathbb{T}_2 + \mathbb{Q}_2^\udarrow = {\mathbb{Q}}_{2;<} + \widetilde{\mathbb{Q}}_{2;<} +  \mathcal{E}_{\mathrm{scatt}},
\end{equation}
where $\mathcal{E}_{\mathrm{scatt}}$ is such that for any normalized $\psi \in \mathcal{F}_{\mathrm{f}}$
\begin{equation}\label{eq: error scatt eq canc}
  |\langle \psi, \mathcal{E}_{\mathrm{scatt}} \psi\rangle| \leq  C\left( \rho^{\frac{5}{3} -2\gamma} +  o(1)_{L\to \infty} \right) \langle \psi, \mathcal{N}\psi\rangle + CL^{\frac{3}{2}}\left( \rho^{\frac{13}{6} -\frac{7}{2}\gamma} + o(1)_{L\to \infty} \right) \|\mathcal{N}^{\frac{1}{2}}\psi\|.
\end{equation}
\end{proposition}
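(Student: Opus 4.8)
The strategy is to reduce \eqref{eq: canc scatt eq} to an identity of Fourier symbols and then to estimate the symbol of $\mathcal E_{\mathrm{scatt}}$. All five operators have the quasi-bosonic quadratic form $\mathbb O(\hat c):=\tfrac1{L^3}\sum_{p\in\Lambda^*}\hat c(p)\,b_{p,\uparrow}b_{-p,\downarrow}+\mathrm{h.c.}$: by \eqref{eq: def T1}, \eqref{eq: def T2}, \eqref{q2ud}, \eqref{eq: definition Q2<} and \eqref{def:q2tl} one has $\mathbb T_1=\mathbb O(-2|\cdot|^2\widehat\varphi\,\widehat\chi_>)$, $\mathbb T_2=\mathbb O(-\hat V\ast_{_{\sum}}(\widehat\varphi\,\widehat\chi_>))$, $\mathbb Q_2^\udarrow=\mathbb O(\hat V)$, ${\mathbb Q}_{2;<}=\mathbb O(8\pi a\,\widehat\chi_<)$ and $\widetilde{\mathbb Q}_{2;<}=\mathbb O(\hat V\ast_{_{\sum}}(\widehat\varphi\,\widehat\chi_<))$. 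Using $\widehat\chi_<+\widehat\chi_>=1$ and linearity of $\hat V\ast_{_{\sum}}(\cdot)$, we get $\mathcal E_{\mathrm{scatt}}=\mathbb O(\widehat{\mathcal E})$ with
\[
  \widehat{\mathcal E}(p)=\hat V(p)-\big(\hat V\ast_{_{\sum}}\widehat\varphi\big)(p)-2|p|^2\widehat\varphi(p)\,\widehat\chi_>(p)-8\pi a\,\widehat\chi_<(p).
\]
Since $\hat u_\sigma\hat v_\sigma\equiv 0$, the operator $b_{0,\sigma}$ vanishes, so the value $\widehat{\mathcal E}(0)$ is irrelevant and may be freely normalized.

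Next we insert the scattering equation. From $\widehat\varphi(p)=\mathcal F(\varphi_\infty)(p)$ (Definition~\ref{def:scattering-phi-eta}) and the $\mathbb R^3$ equation \eqref{eq: zero energy scatt eq-intro}, i.e.\ $2|p|^2\mathcal F(\varphi_\infty)(p)=\mathcal F(V_\infty(1-\varphi_\infty))(p)$, together with the estimates on the discrepancy between the discrete convolution $\hat V\ast_{_{\sum}}\widehat\varphi$ and the $\mathbb R^3$ quantity $\mathcal F(V_\infty\varphi_\infty)$ collected in Appendix~\ref{app: bounds phi}, we obtain for $p\ne 0$
\[
  \hat V(p)-\big(\hat V\ast_{_{\sum}}\widehat\varphi\big)(p)=2|p|^2\widehat\varphi(p)+r_L(p),
\]
where $R_L$, the periodic function with Fourier coefficients $r_L$, obeys $\|R_L\|_{L^1(\Lambda)}+\|R_L\|_{L^2(\Lambda)}=o(1)_{L\to\infty}$. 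Substituting and using $\widehat\chi_>=1-\widehat\chi_<$ gives $\widehat{\mathcal E}(p)=\widehat g(p)+r_L(p)$ with $\widehat g(p):=\widehat\chi_<(p)\big(2|p|^2\widehat\varphi(p)-8\pi a\big)$, so that $\mathcal E_{\mathrm{scatt}}=\mathbb O(\widehat g)+\mathbb O(r_L)$.

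The core estimate is the bound on $\mathbb O(\widehat g)$. Since $2|p|^2\widehat\varphi(p)=\mathcal F(V_\infty(1-\varphi_\infty))(p)$ and $8\pi a=\mathcal F(V_\infty(1-\varphi_\infty))(0)$, a Taylor expansion at the origin (the linear term vanishing by radial symmetry of $V_\infty(1-\varphi_\infty)$) gives $|2|p|^2\widehat\varphi(p)-8\pi a|\le C|p|^2$, and moreover $p\mapsto 2\widehat\varphi(p)-8\pi a/|p|^2$ extends smoothly through $p=0$; hence $\widehat g=|\cdot|^2\,\widehat h$ for a smooth $\widehat h$ supported in $\{|p|\le 5\rho^{1/3-\gamma}\}$, whose associated periodic ``renormalized potential'' $h$ satisfies $\|h\|_{L^1(\Lambda)}\le C$ uniformly in $L$ and $\rho$ (indeed $h$ is a constant multiple of $\chi_<$ plus a correction of $L^1$-norm $O(\rho^{2/3-2\gamma})$, using $\|\chi_<\|_{L^1(\Lambda)}\le C$ from Lemma~\ref{lem:cut-off-chi}). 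Because $\widehat g$ lives at momenta $\lesssim\rho^{1/3-\gamma}$, the projections $\hat u_\sigma$ inside the $b_{p,\sigma}$'s entering $\mathbb O(\widehat g)$ may be replaced \emph{exactly} by the truncations $\hat u^<_\sigma$ of \eqref{eq: u<} (as $|r|\le k_F^\sigma$ and $|p|\le 5\rho^{1/3-\gamma}$ force $|r\pm p|\le 6\rho^{1/3-\gamma}$). Writing $\widehat g=|\cdot|^2\widehat h$ and integrating by parts in configuration space to move the Laplacian onto the kernels $u^<_\sigma$ and $v_\sigma$, one expresses $\mathbb O(\widehat g)$ through the operators of Section~\ref{sec:prel} built from $h,\nabla h$ (in place of $\widetilde\varphi$) and from $\Delta u^<_\sigma,\partial_j u^<_\sigma,u^<_\sigma$, $v_\sigma,\partial_j v_\sigma,\Delta v_\sigma$; using $\|\Delta u^<_\sigma\|_2\lesssim\rho^{7/6-7\gamma/2}$, $\|\partial_j u^<_\sigma\|_2\lesssim\rho^{5/6-5\gamma/2}$, $\|u^<_\sigma\|_2\lesssim\rho^{1/2-3\gamma/2}$, $\|\partial^n v_\sigma\|_2\lesssim\rho^{1/2+n/3}$, $\|h\|_1\le C$, Lemmas~\ref{lem: bounds b b*} and~\ref{lem: b* phi tilde vs b phi tilde} and the Cauchy--Schwarz inequality, one arrives at $|\langle\psi,\mathbb O(\widehat g)\psi\rangle|\le C\rho^{5/3-2\gamma}\langle\psi,\mathcal N\psi\rangle+CL^{3/2}\rho^{13/6-7\gamma/2}\|\mathcal N^{1/2}\psi\|$, the first contribution arising when both quasi-bosonic factors are estimated in $\mathcal N$, the second when one factor is controlled through the operator norm of the truncated $b^<_\sigma(h)$-type operator. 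Finally $\mathbb O(r_L)$ is treated by exactly the same mechanism, now with $\|R_L\|_{L^1(\Lambda)}$ and $\|R_L\|_{L^2(\Lambda)}$ replacing the $\rho$-powers, which produces the two $o(1)_{L\to\infty}$ terms; adding the two bounds yields \eqref{eq: error scatt eq canc}.

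I expect the main difficulty to lie in two places: (i) establishing, uniformly in $L$ \emph{and} $\rho$, the $L^1(\Lambda)$-bound on the renormalized potential $h$, and more generally controlling the finite-size remainder $R_L$ in $L^1$ and $L^2$ --- this is where the normalization $\widehat\varphi(0)=0$ and the slow ($|x|^{-1}$) decay of $\varphi_\infty$ have to be handled, drawing on the estimates of Appendix~\ref{app: bounds phi}; and (ii) organizing the Cauchy--Schwarz splitting so that the $\langle\psi,\mathcal N\psi\rangle$-part stays $L$-independent while only the $\|\mathcal N^{1/2}\psi\|$-part carries the (unavoidable) factor $L^{3/2}$.
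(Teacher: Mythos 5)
Your proposal is \emph{correct in structure} and reduces to the same symbol identity as the paper: you identify $\mathcal E_{\mathrm{scatt}}$ as a quasi-bosonic operator with symbol $\widehat{\mathcal E}(p)=\hat V(p)-(\hat V\ast_{_\sum}\hat\varphi)(p)-2|p|^2\hat\varphi(p)\widehat\chi_>(p)-8\pi a\widehat\chi_<(p)$, and you split $\widehat{\mathcal E}=\hat g+r_L$ with $\hat g(p)=\widehat\chi_<(p)(2|p|^2\hat\varphi(p)-8\pi a)$ and $r_L$ a finite-size remainder that vanishes in $L^1(\Lambda)\cap L^2(\Lambda)$ as $L\to\infty$. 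This decomposition is equivalent to the paper's $\hat h=\hat h_1+\hat h_2$ (just a different bookkeeping of how $r_L$ is distributed over $\widehat\chi_<$ and $\widehat\chi_>$), and your treatment of the $r_L$ piece via $\|R_L\|_{L^1},\|R_L\|_{L^2}=o(1)_{L\to\infty}$ matches Lemma~\ref{lem: fL} and the paper's estimate~\eqref{eq: est diff convolutions}.

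Where you take a genuinely different route is in the estimate of the core piece $\mathbb O(\hat g)$. You factor $\hat g=|\cdot|^2\hat h$ (normalizing $\hat g(0)$ to zero, which is legitimate since $b_{0,\sigma}=0$; note $\hat g(0)=-8\pi a\neq 0$ as written, so this normalization is essential for the factoring), then integrate by parts in configuration space, moving the Laplacian onto the kernels $u^<_\sigma,v_\sigma$, and combine $\|h\|_{L^1(\Lambda)}\le C$ with the $L^2$-bounds on $u^<,\partial u^<,\Delta u^<,v,\partial v,\Delta v$ and with Lemmas~\ref{lem: bounds b b*} and~\ref{lem: b* phi tilde vs b phi tilde}. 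The paper instead observes directly that the Taylor expansion at $p=0$ together with~\eqref{eq: phiinfty - phi} gives $\|\hat h_2\|_\infty\le C|p|^2\widehat\chi_<(p)+o(1)_{L\to\infty}\le C\rho^{2/3-2\gamma}+o(1)_{L\to\infty}$, and then uses a single Plancherel-type Cauchy--Schwarz over $p$ plus one normal-ordering to produce the $\rho\,\langle\psi,\mathcal N\psi\rangle$ and $\rho^{3(1-\gamma)/2}L^{3/2}\|\mathcal N^{1/2}\psi\|$ factors (see the discussion around~\eqref{def:h2}). Both arguments land on the bound~\eqref{eq: error scatt eq canc}. The paper's route is shorter because it needs only an $L^\infty$-bound on the symbol and no derivative bookkeeping, whereas your route requires tracking $\|h\|_1$ uniformly in $L,\rho$ and three separate Leibniz terms from the integration by parts; on the other hand, your approach makes more transparent \emph{why} the small factor is $\rho^{2/3-2\gamma}$ (it is precisely the $|p|^2$ on the support of $\widehat\chi_<$), and with your finer decomposition the $\mathcal N$-coefficient actually comes out slightly better than $\rho^{5/3-2\gamma}$ for most of the Leibniz terms. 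One small point to tighten: for the pure-annihilation quartic one cannot split Cauchy--Schwarz so that \emph{both} quasi-bosonic factors are estimated in $\mathcal N$ directly (one side necessarily carries creation operators); as you indicate, Lemma~\ref{lem: b* phi tilde vs b phi tilde} is what converts the creation side to an annihilation side plus the $L^2$-remainder that produces the $L^{3/2}$-term, and this step deserves to be written out explicitly.
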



\begin{proof} Using  $\widehat{\chi}_> + \widehat{\chi}_< = 1$, a simple computation yields
$$
\mathcal{E}_{\mathrm{scatt}} =  \frac{1}{L^3}\sum_{p} \hat h(p) b_{p,\uparrow}b_{-p,\downarrow} + \mathrm{h.c}
$$
with
\begin{equation}\label{def:h}
\hat h(p) = \hat V(p) -  (\hat{V}\ast_{_{\sum}} \hat\varphi) (p) -8\pi a \widehat\chi_<(p) - 2 p^2 \hat\varphi(p) \widehat\chi_>(p).
\end{equation}
In the following we will show that, as a consequence of the zero-energy scattering equation \eqref{eq: zero energy scatt eq-intro}, $\hat h$ is suitably small (for large $L$ and small $\rho$). We shall write $\hat h=\hat h_1+\hat h_2$, with 
\begin{equation}\label{def:h1}
\hat h_1(p) = \left( \hat V(p) -  (\hat{V}\ast_{_{\sum}} \hat\varphi) (p) - 2 p^2\hat\varphi(p) \right) \widehat\chi_>(p).
\end{equation}
According to \eqref{eq: zero energy scatt eq-intro}, $\hat V(p) - 2 p^2\hat\varphi(p) =(\mathcal{F} (V_\infty \varphi_\infty)  (p) $, hence $\hat h_1$ takes into account the finite-size effect of the periodic box, and vanishes in the thermodynamic limit.
%
In Lemma \ref{lem: fL}, we shall in fact show that the periodic function $h_1$ with Fourier coefficients $\hat h_1$ satisfies $\|h_1\|_{L^1(\Lambda)} \to 0$ and $\|h_1\|_{L^2(\Lambda)}\to 0$ as $L\to \infty$. 

In configuration space we can write
$$
 \mathrm{I}:= \frac{1}{L^3}\sum_{p} \hat h_1 (p) b_{p,\uparrow}b_{-p,\downarrow} =   \int dxdy\, h_1(x-y)   a_\uparrow(u_x)a_\uparrow(v_x)a_\downarrow(u_y)a_\downarrow(v_y)  = \int dy\,  b_\uparrow((h_1)_y)  a_\downarrow(u_y)a_\downarrow(v_y),
$$
where we used the notation $b_\sigma$ introduced in \eqref{eq: def b phi}. 
With the aid of Lemma \ref{lem: b* phi tilde vs b phi tilde} and the Cauchy--Schwarz inequality, we can estimate 
  \begin{align}\label{eq: est diff convolutions}
\left| \langle  \psi,  \mathrm{I} \psi \rangle \right | 
&\leq \|\hat{v}_\downarrow\|_2\left(\int dy\, \|b_\uparrow((h_1)_y)\psi\|\|a_\downarrow(u_y)\psi\| + CL^{\frac{3}{2}}\rho^{\frac{1}{2}} \|h_1\|_2\|\mathcal{N}^{\frac{1}{2}}\psi\| \right)\nonumber
\\
&\leq \|\hat{v}_\downarrow\|_2\|\hat{v}_\uparrow\|_2\int dxdy\, |h_1(x-y)| \|a_\uparrow(u_x)\psi\|\|a_\downarrow(u_y)\psi\|+ C L^{\frac{3}{2}}\rho \|h_1\|_2\|\mathcal{N}^{\frac{1}{2}}\psi\|\nonumber
\\
&\leq C\rho \|h_1\|_1 \langle \psi, \mathcal{N}\psi \rangle + CL^{\frac{3}{2}}\rho\|h_1\|_2\|\mathcal{N}^{\frac{1}{2}}\psi\|.  
 \end{align}
 This term is thus  $o(L^3)_{L\to \infty}$ in the thermodynamic limit.

We now consider the remaining term
\begin{equation}\label{def:h2}
 \mathrm{II}:= \frac{1}{L^3}\sum_{p} \hat h_2 (p) b_{p,\uparrow}b_{-p,\downarrow} \ , \qquad \hat h_2(p) =  \left( \hat V(p) -  (\hat{V}\ast_{_{\sum}} \hat\varphi) (p) - 8\pi a  \right) \widehat\chi_<(p).
\end{equation}
%
%
To start, we note that 
\begin{align*}
   \frac{1}{L^3}\sum_{p}\hat{h}_2(p) b_{p,\uparrow} b_{-p,\downarrow}
  = \frac{1}{L^3}\sum_{p,r,r^\prime}\hat h_2(p)  \hat{u}_\uparrow^<(r+p)\hat{u}_\downarrow^<(r^\prime -p)\hat{v}_\uparrow(r)\hat{v}_{\downarrow}(r^\prime) \hat{a}_{r+p,\uparrow}\hat{a}_{-r,\uparrow}\hat{a}_{r^\prime -p, \downarrow}\hat{a}_{-r^\prime,\downarrow},
\end{align*}
i.e., we can replace  $\hat{u}_\uparrow(r+p)$ and $\hat{u}_\downarrow(r^\prime-p)$ by $\hat{u}^<_\uparrow(r+p)$ and $\hat{u}^<_\downarrow(r^\prime - p)$, with $\hat{u}^<_\sigma$ defined  in \eqref{eq: u<},  due to the support properties of $\widehat\chi_<$ and $\hat v_\sigma$. 
Using also that 
\[
  \sum_{r}\hat{u}_\sigma^<(r+p)\hat{v}_\sigma(r)\hat{a}_{r+p,\sigma}\hat{a}_{-r,\sigma} = \int dx\, e^{-ip\cdot x}a_\sigma(u_x^<)a_\sigma(v_x),
\]
we can write 
$$
\langle  \psi,  \mathrm{II} \psi \rangle
  = \frac{1}{L^3}\sum_p\hat{h}_2(p)\left \langle \left(\int dx\, e^{-ip\cdot x}a^\ast_\uparrow({u}_x^<)a_\uparrow^\ast(v_x)\psi\right), \left( \int dy \, e^{-ip\cdot y} a_\downarrow(v_y)a_\downarrow({u}_y^<)\psi\right)\right\rangle
$$
and hence, by the Cauchy--Schwarz inequality, 
$$
\left| \langle  \psi,  \mathrm{II} \psi \rangle \right| 
\leq \| \hat h_2\|_\infty \left( \int dx\, \| a_\uparrow^\ast ({u}_x^<)a_\uparrow^\ast (v_x)\psi\|^2 \right)^{1/2} \left( \int dy\, \|a_\downarrow(v_y)a_\downarrow({u}^<_y)\psi\|^2\right)^{1/2}.
$$
The last factor can simply be bounded by $\rho^{\frac 12} \langle \psi, \mathcal{N} \psi\rangle^{1/2}$. In the first factor, we first normal and write $a_\uparrow ({u}_x^<) a_\uparrow^\ast ({u}_x^<) = - a_\uparrow^\ast ({u}_x^<)a_\uparrow ({u}_x^<) + \|u^<\|_2^2$, leading to the bound $\rho^{\frac 12} \langle \psi, \mathcal{N} \psi\rangle^{1/2} +\rho^{\frac 12}  \| u^<\|_2 L^{3/2}$. Since $\|u^<\|_2^2 \leq C \rho^{1-3\gamma}$, this yields
$$
\left| \langle  \psi,  \mathrm{II} \psi \rangle \right| 
\leq C \| \hat h_2\|_\infty \left( \rho  \langle \psi, \mathcal{N} \psi\rangle + \rho^{ \frac 32(1-\gamma)} L^{3/2} \langle \psi, \mathcal{N} \psi\rangle^{1/2}\right).
$$
We are left with bounding $\sup_p |\hat h_2(p)|$. Note that $8\pi a = \int_{\R^3} V_\infty ( 1 - \varphi_\infty)= \hat V(0) - \int_{\R^3} V_\infty \varphi_\infty$. Since $V_\infty$ is assumed to have compact support, $| \hat V(p) - \hat V(0)| \leq C |p|^2$. Using $V(x)=V(-x)$ and also \eqref{eq: phiinfty - phi}, we see that
\begin{equation}\label{ca5}
(\hat{V}\ast_{_{\sum}} \hat\varphi) (p) = \int_\Lambda dx\, V(x) \varphi(x) e^{ip\cdot x} = \int_{\R^3} V_\infty(x) \varphi_\infty(x) e^{ip\cdot x} + o(1)_{L\to \infty}
\end{equation}
and hence $|(\hat{V}\ast_{_{\sum}} \hat\varphi) (p) -  \int_{\R^3} V_\infty \varphi_\infty| \leq C |p|^2+ o(1)_{L\to \infty}$. Here note that $\varphi_\infty$ is bounded by Lemma~\ref{lem: prop of phi infty}. 
%
%
Altogether, this shows that $|\hat{h}_2(p)| \leq C|p|^2|\widehat{\chi}_<(p)| + o(1)_{L\to \infty}\leq C\rho^{2/3 - 2\gamma}+ o(1)_{L\to \infty}$, and thus completes the proof.
\end{proof}

\begin{proposition}[Propagation estimates for $\mathbb{Q}_{2;<}$ and $\widetilde{\mathbb{Q}}_{2;<}$]\label{pro: propagation of Q2< and Q2*<}
Let $\lambda \in [0,1]$ and $\gamma\in (0,1/3)$. Under the same assumptions as in Theorem \ref{thm: main up bd}
\begin{align}
 T_{1;\lambda}^\ast (\mathbb{Q}_{2;<} + \widetilde{\mathbb{Q}}_{2; <}) T_{1;\lambda} & = {\mathbb{Q}}_{2;<} + \widetilde{\mathbb{Q}}_{2; <} -2\lambda\rho_\uparrow \rho_\downarrow\sum_{p\in \Lambda^*} \left(8\pi a\widehat{\chi}_<(p) + \hat{V}\ast_{_{\sum}} (\widehat{\chi}_<\hat{\varphi})(p)\right)\hat{\varphi}(p)\widehat{\chi}_>(p) \\ & \quad + \int_0^\lambda d\lambda' \, T^\ast_{1;\lambda'}  \mathcal{E}_{{\mathbb{Q}}_{2;<} + \widetilde{\mathbb{Q}}_{2;<}} T_{1;\lambda'}, \label{eq: prop Q2<}
\end{align}
where the error term $\mathcal{E}_{{\mathbb{Q}}_{2;<} + \widetilde{\mathbb{Q}}_{2;<}}$ is such that for any $\psi \in \mathcal{F}_{\mathrm{f}}$  
\[
  |\langle \psi, \mathcal{E}_{{\mathbb{Q}}_{2;<} + \widetilde{\mathbb{Q}}_{2;<}}\psi \rangle|\leq C\rho^{\frac{4}{3} - \gamma}\langle \psi, \mathcal{N} \psi\rangle + C\rho^{\frac{7}{6} -\frac{\gamma}{2}}\|\mathbb{Q}_4^{\frac{1}{2}} \psi\|\|\mathcal{N}^{\frac{1}{2}}\psi\|. 
\]
Furthermore,  for any normalized $\psi \in \mathcal{F}_{\mathrm{f}}$, we have
\begin{equation}\label{eq: est Q2*<}
  |\langle \psi, \widetilde{\mathbb{Q}}_{2;<}\psi\rangle|  \leq CL^{3/2}\rho^{\frac{4}{3} -\gamma}  \| \mathbb{Q}_4^{\frac 12}\psi\|,
\end{equation}
\begin{equation}\label{eq: est Q2< prop est}
 |\langle \psi, {\mathbb{Q}}_{2;<}\psi\rangle|  \leq CL^{\frac{3}{2}}\rho^{\frac{3}{2}(1-\gamma)}\|\mathcal{N}^{\frac{1}{2}}\psi\|.
\end{equation}
\end{proposition}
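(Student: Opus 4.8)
The plan is to establish the four assertions in turn, beginning with the two standalone estimates. For \eqref{eq: est Q2< prop est} I would argue as in the proof of Proposition~\ref{prop: scatt canc}: on the support of $\widehat{\chi}_<$ and $\hat v_\sigma$ one may replace $\hat u_\sigma$ by $\hat u_\sigma^<$ (see \eqref{eq: u<}), so that $\mathbb{Q}_{2;<}=8\pi a\int dxdy\,\chi_<(x-y)\,a_\uparrow(u_x^<)a_\uparrow(v_x)a_\downarrow(u_y^<)a_\downarrow(v_y)+\mathrm{h.c.}$ Using the CAR to move the two operators $a_\sigma(v_\cdot)$ to the left, $a_\uparrow(u_x^<)a_\uparrow(v_x)a_\downarrow(u_y^<)a_\downarrow(v_y)=-a_\uparrow(v_x)a_\downarrow(v_y)\,a_\uparrow(u_x^<)a_\downarrow(u_y^<)$, then taking adjoints and discarding the leftmost factor via $\|a_\uparrow(u_x^<)\|=\|u^<\|_2$, Cauchy--Schwarz gives $|\langle\psi,\mathbb{Q}_{2;<}\psi\rangle|\le C\|u^<\|_2\int dxdy\,|\chi_<(x-y)|\,\|a_\downarrow(v_y)^\ast a_\uparrow(v_x)^\ast\psi\|\,\|a_\downarrow(u_y^<)\psi\|$. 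Now $\|a_\downarrow(v_y)^\ast a_\uparrow(v_x)^\ast\psi\|\le\|v_\downarrow\|_2\|v_\uparrow\|_2\le\rho$, $\|u^<\|_2\le C\rho^{1/2-3\gamma/2}$, $\|\chi_<\|_1\le C$, and $\int dy\,\|a_\downarrow(u_y^<)\psi\|\le L^{3/2}\|\mathcal{N}^{1/2}\psi\|$ (cf.\ \eqref{eq: from aut to Number op}), which together give $CL^{3/2}\rho^{3/2-3\gamma/2}\|\mathcal{N}^{1/2}\psi\|$, i.e.\ \eqref{eq: est Q2< prop est}.

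For \eqref{eq: est Q2*<} the key point is that $\hat\varphi(p)\widehat{\chi}_<(p)$ is the Fourier coefficient of $\phi_<(x):=\int_\Lambda\varphi(x-y)\chi_<(y)\,dy$, so in configuration space $\widetilde{\mathbb{Q}}_{2;<}=\int dxdy\,(V\phi_<)(x-y)\,a_\uparrow(u_x)a_\uparrow(v_x)a_\downarrow(u_y)a_\downarrow(v_y)+\mathrm{h.c.}$ The quantitative input I would use — and which follows from the $1/|x|$-decay of $\varphi$ together with the facts that $\widehat{\chi}_<$ is supported in a ball of radius $O(\rho^{1/3-\gamma})$ and $\|\chi_<\|_1\le C$ (cf.\ Lemma~\ref{lem: bounds phi}) — is that $\|\phi_<\|_{L^\infty(\supp V)}\le C\rho^{1/3-\gamma}$, hence $\|V\phi_<\|_1\le C\rho^{1/3-\gamma}$ as well (here $\gamma<1/3$ is used). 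Rearranging the four operators exactly as above and applying Cauchy--Schwarz twice, splitting $V\phi_<=|V\phi_<|^{1/2}\cdot|V\phi_<|^{1/2}$ and bounding $\int V(x-y)\|a_\uparrow(u_x)a_\downarrow(u_y)\psi\|^2\le 2\langle\psi,\mathbb{Q}_4\psi\rangle$, I get $|\langle\psi,\widetilde{\mathbb{Q}}_{2;<}\psi\rangle|\le C\|v_\uparrow\|_2\|v_\downarrow\|_2\,(L^3\|V\phi_<\|_1)^{1/2}\,(\|\phi_<\|_{L^\infty(\supp V)}\langle\psi,\mathbb{Q}_4\psi\rangle)^{1/2}\le CL^{3/2}\rho^{4/3-\gamma}\|\mathbb{Q}_4^{1/2}\psi\|$.

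For \eqref{eq: prop Q2<} and the accompanying bound on $\mathcal{E}_{\mathbb{Q}_{2;<}+\widetilde{\mathbb{Q}}_{2;<}}$ I would start from the Duhamel identity $T_{1;\lambda}^\ast(\mathbb{Q}_{2;<}+\widetilde{\mathbb{Q}}_{2;<})T_{1;\lambda}=\mathbb{Q}_{2;<}+\widetilde{\mathbb{Q}}_{2;<}+\int_0^\lambda d\lambda'\,T_{1;\lambda'}^\ast[\mathbb{Q}_{2;<}+\widetilde{\mathbb{Q}}_{2;<},B_1-B_1^\ast]T_{1;\lambda'}$, so that everything reduces to computing the ($\lambda'$-independent) commutator, extracting its c-number part, and estimating the rest. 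Since $\mathbb{Q}_{2;<}$, $\widetilde{\mathbb{Q}}_{2;<}$ and $B_1-B_1^\ast$ all have the form $\frac1{L^3}\sum_p\hat W(p)\,b_{p,\uparrow}b_{-p,\downarrow}\pm\mathrm{h.c.}$, the computation of $[\mathbb{Q}_{2;<},B_1]$ and $[\widetilde{\mathbb{Q}}_{2;<},B_1]$ is structurally identical to that of $[\mathbb{Q}_2^\udarrow,B_1]$ in the proof of Proposition~\ref{pro: Q2}, with $\hat V$ replaced by $8\pi a\widehat{\chi}_<$, respectively $\hat V\ast_{_{\sum}}(\hat\varphi\widehat{\chi}_<)$, and produces terms $\mathrm{I}_1,\dots,\mathrm{I}_6$ as in \eqref{eq: initial comm Q2 B1}. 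The only contribution surviving the vacuum expectation comes from normal-ordering $\mathrm{I}_6$, and here the crucial point is that on $\supp\widehat{\chi}_>$ one has $|p|>4\rho^{1/3-\gamma}>2k_F^\sigma$ for $\rho$ small, so $\sum_r\hat u_\sigma(p+r)\hat v_\sigma(r)=N_\sigma$; this collapses the double c-number to exactly $-2\rho_\uparrow\rho_\downarrow\sum_p\big(8\pi a\widehat{\chi}_<(p)+\hat V\ast_{_{\sum}}(\widehat{\chi}_<\hat\varphi)(p)\big)\hat\varphi(p)\widehat{\chi}_>(p)$, the factor $2$ coming from the $+\mathrm{h.c.}$ in $[\,\cdot\,,B_1-B_1^\ast]$. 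Since this number is invariant under $T_{1;\lambda'}$, integrating over $\lambda'$ yields the asserted constant term plus $\int_0^\lambda d\lambda'\,T_{1;\lambda'}^\ast\mathcal{E}_{\mathbb{Q}_{2;<}+\widetilde{\mathbb{Q}}_{2;<}}T_{1;\lambda'}$, where $\mathcal{E}_{\mathbb{Q}_{2;<}+\widetilde{\mathbb{Q}}_{2;<}}$ collects all the remaining operator terms.

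The bound on $\mathcal{E}_{\mathbb{Q}_{2;<}+\widetilde{\mathbb{Q}}_{2;<}}$ would then be obtained term by term, estimating each term in configuration space exactly as the corresponding term in the proof of Proposition~\ref{pro: Q2}, using Lemma~\ref{lem: bounds b b*} (in particular $\|b_\sigma(\widetilde\varphi_z)\|\le C\rho^{1/3+\gamma/2}$), the elementary bounds $\|a_\sigma(\partial^n v_x)\|\le C\rho^{1/2+n/3}$ and $\int dy\,\|a_\sigma(u_y)\psi\|^2\le\langle\psi,\mathcal{N}\psi\rangle$, together with the two improvements specific to the present situation: the replacement $\hat u_\sigma\to\hat u_\sigma^<$ with $\|u^<\|_2\le C\rho^{1/2-3\gamma/2}$ in the $\mathbb{Q}_{2;<}$-terms, and $\|V\phi_<\|_1,\|V\phi_<\|_2,\|\phi_<\|_{L^\infty(\supp V)},\|\chi_<\widetilde\varphi\|_1\le C\rho^{1/3-\gamma}$ in the $\widetilde{\mathbb{Q}}_{2;<}$-terms. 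The $\widetilde{\mathbb{Q}}_{2;<}$-terms in which the potential $V$ localizes the spatial variables produce, as in Proposition~\ref{pro: Q2}, a $\mathbb{Q}_4$ factor, and a careful bookkeeping of the powers of $\rho$ (combining $\rho^{1/3-\gamma}$ from $V\phi_<$, $\rho^{1/3+\gamma/2}$ from $b_\sigma(\widetilde\varphi_z)$, and a surviving half-power $\rho^{1/2}$ from a $v$-factor) gives $C\rho^{7/6-\gamma/2}\|\mathbb{Q}_4^{1/2}\psi\|\|\mathcal{N}^{1/2}\psi\|$; all remaining terms — where neither $\chi_<$ nor $\phi_<$ localizes — are controlled by $\mathcal{N}$ with prefactor $\rho^{1-3\gamma/2}\cdot\rho^{1/3+\gamma/2}=\rho^{4/3-\gamma}$ (for $\mathbb{Q}_{2;<}$), respectively $\rho\cdot\rho^{1/3-\gamma}=\rho^{4/3-\gamma}$ (for $\widetilde{\mathbb{Q}}_{2;<}$). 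The main obstacle will be precisely this bookkeeping: matching each of the many terms in the two commutators to one of the two available gains, and confirming that every term subleading to the c-number satisfies the stated bound — the delicate point being once more that $4\rho^{1/3-\gamma}>2k_F^\sigma$ on $\supp\widehat{\chi}_>$, which is what forces the c-number to be exactly proportional to $\rho_\uparrow\rho_\downarrow$.
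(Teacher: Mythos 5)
Your proposal is correct and follows essentially the same route as the paper: both standalone bounds \eqref{eq: est Q2*<}, \eqref{eq: est Q2< prop est} are obtained by passing to configuration space and extracting $\|u^<\|_2\le C\rho^{1/2-3\gamma/2}$, respectively $\|\varphi_<\|_\infty\le C\rho^{1/3-\gamma}$ (your Cauchy--Schwarz splitting of $|V\varphi_<|^{1/2}\cdot|V\varphi_<|^{1/2}$ is a cosmetic variant of the paper's pulling out $\|\varphi_<\|_\infty$ first), while for \eqref{eq: prop Q2<} both proofs reduce the commutators to the six terms of Proposition~\ref{pro: Q2} with $\hat V$ replaced by $8\pi a\widehat\chi_<$, resp.\ $\hat V\ast_{_\Sigma}(\widehat\chi_<\hat\varphi)$, read off the c-number from normal-ordering $\mathrm I_6$ using $\hat u_\sigma(r+p)=1$ on $\supp\widehat\chi_>$, and control the remainder exactly as you describe. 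The power counting you give matches the paper's term by term.
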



\begin{proof}[Proof of Proposition \ref{pro: propagation of Q2< and Q2*<}]
  By Duhamel's formula, 
\begin{equation}\label{eq: Duhamel Q2< + tildeQ2<}
    T_{1;\lambda}^\ast ({\mathbb{Q}}_{2;<} + \widetilde{\mathbb{Q}}_{2;<})  T_{1;\lambda} 
    = {\mathbb{Q}}_{2;<} + \widetilde{\mathbb{Q}}_{2;<}   + \int_0^\lambda d\lambda^\prime\,  T_{1;\lambda^\prime}^\ast [{\mathbb{Q}}_{2;<} + \widetilde{\mathbb{Q}}_{2;<}, B_1 - B_1^\ast] T_{1;\lambda^\prime}.
\end{equation}
We start by computing $[{\mathbb{Q}}_{2;<}, B_1 -B_1^\ast]$. The terms  have the same structure as in Proposition \ref{pro: Q2}, the only difference is that $\hat V(k)$ has to be replaced by $8\pi a \widehat \chi_<(k)$ everywhere. We again write them as 
\[
  [{\mathbb{Q}}_{2;<}, B_1 -B_1^\ast] = \sum_{j=1}^6 \mathrm{I}_j + \mathrm{h.c.}.
\]
As in Proposition \ref{pro: Q2}, putting $\mathrm{I}_6$ in normal order, we get the constant term. All the other terms are error terms, which we estimate in expectation in a state  $\psi \in \mathcal{F}_{\mathrm{f}}$. In some error terms it  will again be convenient to replace $\hat{u}_\sigma$ by $\hat{u}^<_\sigma$, with $\hat{u}^<_\sigma$ defined  in \eqref{eq: u<}, which is  possible by the compact support of $\widehat \chi_<$. In the estimates below, we often use the bounds in Lemma \ref{lem: bounds b b*}, Lemma \ref{lem: bounds phi} and Lemma \ref{lem:cut-off-chi} together with \eqref{bound:a} and \eqref{eq: est a v H0}. 
%

For $\mathrm{I}_1$, we have, analogously to \eqref{46:I1},
\begin{equation}\label{48:I1}
  \mathrm{I}_1 = 
  -\frac{8\pi a}{L^3}\sum_{r}\hat{v}_\uparrow(r)\int dxdydz\, e^{ir\cdot (x-z)} \chi_<(x-y)a^\ast_\downarrow(u^<_y)a^\ast_\uparrow(u^<_x)a^\ast_\downarrow(v_y)b_\downarrow(\widetilde{\varphi}_{z})a_\uparrow(u_z).
\end{equation}
%
Using $0\leq v_\uparrow(r)\leq 1$ and the Cauchy--Schwarz inequality, we can bound 
\begin{align}\label{eq: est I1 Q2< T1}
 |\langle \psi, \mathrm{I}_1\psi\rangle | &\leq 
  8\pi a \left( \int dx  \left\| \int dy\, \chi_<(x-y)  a_\downarrow(v_y)a_\uparrow(u_x^<)a_\downarrow(u_y^<)\psi\right\|^2 \right)^{1/2}
  \left(  \int dz \left\|  b_\downarrow(\widetilde{\varphi}_z) a_\uparrow(u_z)\psi\right\|^2\right)^{1/2} \nonumber
  \\
  &\leq  8\pi a \|\hat{v}_\downarrow\|_2 \|\hat{u}^<_\downarrow\|_2 \|\chi_<\|_1  \rho^{\frac{1}{3} +\frac\gamma 2} \langle \psi, \mathcal{N}\psi\rangle   \leq  C\rho^{\frac{4}{3} -\gamma}\langle \psi, \mathcal{N}\psi\rangle.
\end{align}
The  term $\mathrm{I}_2$ can be estimated in the same way, with the same result.
%

 Next we consider $\mathrm{I}_3$. In analogy with \eqref{46:I3}, it is given by 
$$
  \langle \psi, \mathrm{I}_3\psi\rangle 
  = 8\pi a\int dxdydz^\prime\, \chi_<(x-y) v_\downarrow(y;z^\prime)\left\langle a_\uparrow(u^<_x)a_\downarrow(u^<_y)\psi, \left(\int dz\, \widetilde{\varphi}(z-z^\prime) v_\uparrow(x;z)a_\uparrow (u_z)\right)a_\downarrow(u_{z^\prime})\psi\right\rangle.
$$
Using  \eqref{eq: est conv L2} and the Cauchy--Schwarz inequality, we obtain the bound
\begin{multline}\label{eq: est Ib Q2<}
  |\langle \psi, \mathrm{I}_3\psi\rangle| \leq 8\pi a\left(\int dxdydz^\prime |\chi_<(x-y)|  |v_\downarrow(y;z^\prime)|^2 \|a_\uparrow(u_x^<)a_\downarrow(u_y^<)\psi\|^2\right)^{\frac{1}{2}}
  \\
  \times\left(\int dxdydz^\prime dw\, |\chi_<(x-y)|  |\widetilde{\varphi}(z^\prime -w)|^2|v_\uparrow(x; w)|^2\|a_\downarrow(u_{z^\prime})\psi\|^2\right)^{\frac{1}{2}}
  \\
  \leq 8\pi a \|\chi_<\|_1\|v_\uparrow\|_2\|v_\downarrow\|_2  \|u^<_\uparrow\|_2\|\widetilde{\varphi}\|_2\langle \psi, \mathcal{N}\psi\rangle \leq C\rho^{\frac{4}{3} - \gamma}\langle \psi, \mathcal{N}\psi\rangle,
\end{multline}
where we used the bounds mentioned at the beginning of the proof.

The term $I_4$ can be written, similarly to \eqref{46:I4}, as
$$
  \mathrm{I}_4 = 8\pi a \int dxdydz\, \chi_<(x-y)\widetilde{\varphi}(y-z)a^\ast_\downarrow(u_x)a_\downarrow(u_z)a_\uparrow(v_y)a_\downarrow(v_z)a^\ast_\downarrow(v_x) a^\ast_\uparrow(v_y).
$$
%
Hence
\begin{equation}\label{eq: est I4 Q2< T1}
  |\langle \psi, \mathrm{I}_4\psi\rangle| 
  \leq C\rho^2 \|\chi_<\|_1\|\widetilde{\varphi}\|_1 \langle \psi, \mathcal{N}\psi\rangle = C\rho^{\frac{4}{3} +2\gamma}\langle \psi, \mathcal{N}\psi\rangle, 
\end{equation}
where used that $a_\downarrow(u_x)$ commutes with $ a_\uparrow(v_y)a_\downarrow(v_z)a^\ast_\downarrow(v_x) a^\ast_\uparrow(v_y)$, as well as Lemmas~\ref{lem: bounds phi} and~\ref{lem:cut-off-chi} in the last step. The term $\mathrm{I}_5$ can be estimated in the same way, with the same result.

Finally we consider the last term $\mathrm{I}_6$ coming from the commutator $[\mathbb{Q}_{2;<}, B_1 - B_1]$, which equals \eqref{eq: aaa*a* in} with $\hat V$ replaced by $8\pi a \widehat\chi_<$. The term in the first line is the desired constant. For the term in the second line, we use 
\begin{equation}\label{eq: phi tilde <}
  \left|\frac{1}{L^3}  \sum_p \widehat\chi_<(p) \hat{\varphi}(p) \widehat{\chi}_>(p)\right| \leq \frac{C}{L^3}\sum_{p\neq 0} \frac{\widehat\chi_<(p)}{|p|^2} \leq C\rho^{\frac{1}{3} -\gamma},
\end{equation}
which follows from the fact that $p^2 \hat\varphi(p)$ is bounded, as a consequence of \eqref{eq: zero energy scatt eq-intro}. Hence the expectation value of this term is bounded by $C\rho^{\frac{4}{3} - \gamma}\langle \psi, \mathcal{N}\psi\rangle$. In the last term corresponding to the third line of \eqref{eq: aaa*a* in}, 
we observe that the summation over $p$ is restricted to $|p| \leq 7\rho^{1/3-\gamma}$, due to the constraints on $k, r,r^\prime$ and $r+p+k, r^\prime - p-k$ in the summands.
In other words, we can add for free the cut-off function $\widetilde{\chi}$ given by 
\[
  \widetilde{\chi}(p)= \begin{cases} 1 &\mbox{if}\,\,\, |p| \leq 7 \rho^{\frac{1}{3} - \gamma}, \\ 0 &\mbox{if}\,\,\, |p| > 7\rho^{\frac{1}{3} - \gamma},\end{cases}
\]
arriving at
%
\[
  \frac{8\pi a}{L^6}\sum_{k,p,r,r^\prime}\widehat{\chi}_<(k)\hat{\varphi}(p)\widehat{\chi}_>(p)\widetilde{\chi}(p)\hat{v}_\uparrow(r)\hat{v}_\downarrow(r^\prime)\hat{v}_\uparrow(k+r+p)\hat{v}_\downarrow(r^\prime - p -k) \hat{a}_{-k-r-p, \uparrow}^\ast   \hat{a}^\ast_{p+k-r',\downarrow} \hat{a}_{-r,\uparrow} \hat{a}_{-r^\prime, \downarrow} .
\]
Introducing the function $\widetilde{\varphi}_<$ with Fourier coefficients $\widehat{\chi}_>(p)\hat{\varphi}(p)\widetilde{\chi}(p)$ and rewriting the term in configuration space as in \eqref{eq: def IIIb Q2}, we obtain the bound $C \rho \|\chi_< \widetilde\varphi_<\|_1 \mathcal{N}$. We have $\|\chi_< \widetilde\varphi_<\|_1 \leq \| \chi_< \|_1 \| \widetilde\varphi_<\|_\infty \leq C \rho^{\frac 13 -\gamma}$ using the same bound as in \eqref{eq: phi tilde <}. Altogether, we have thus shown that
$$
  [{\mathbb{Q}}_{2;<}, B_1 - B_1^\ast] =   - 16\pi a  \rho_\uparrow \rho_\downarrow \sum_{p} \widehat\chi_<(p)\hat{\varphi}(p)\widehat{\chi}_>(p) 
   + \mathcal{E}_{{\mathbb{Q}}_{2;<}},
$$
with $\mathcal{E}_{{\mathbb{Q}}_{2;<}}$ such that 
$$
  |\langle \psi, \mathcal{E}_{{\mathbb{Q}}_{2;<}}\psi\rangle|\leq  C\rho^{\frac{4}{3} -\gamma}\langle \psi, \mathcal{N}\psi\rangle.
$$


We now consider $[\widetilde{\mathbb{Q}}_{2;<}, B_1 - B_1^\ast]$.  Again all the terms in the commutator are of the same type as the ones in $[{\mathbb{Q}}_{2}, B_1 - B_1^\ast]$ in Proposition \ref{pro: Q2}, now with $\hat{V}$ replaced by $\hat{V}\ast_{_{\sum}} (\widehat{\chi}_< \hat{\varphi})$. They can then be treated exactly in the same way as the corresponding error terms in Proposition \ref{pro: Q2}, 
using that 
\begin{equation}\label{eq: def phi<}
\sup_{x\in\Lambda} \left|   \frac{1}{L^3}\sum_{p}\hat{\varphi}(p)\widehat{\chi}_<(p) e^{ip\cdot x} \right| \leq C \rho^{\frac 13-\gamma}
\end{equation}
as in \eqref{eq: phi tilde <}. 
%
We thus get that
$$
  [\widetilde{\mathbb{Q}}_{2;<}, B_1 - B_1^\ast] =  -2\rho_\uparrow \rho_\downarrow\sum_{p\in \Lambda^*} \hat{V}\ast_{_{\sum}} (\widehat{\chi}_<\hat{\varphi})(p)\hat{\varphi}(p)\widehat{\chi}_>(p) + \mathcal{E}_{\widetilde{\mathbb{Q}}_{2;<}},
$$
with $\mathcal{E}_{\widetilde{\mathbb{Q}}_{2;<}}$ such that 
$$
  |\langle \psi, \mathcal{E}_{\widetilde{\mathbb{Q}}_{2;<}}\psi\rangle|\leq C\rho^{\frac{7}{6} -\frac{\gamma}{2}}\|\mathbb{Q}_{4}^{\frac{1}{2}}\psi\|\|\mathcal{N}^{\frac{1}{2}}\psi\|  + C\rho^{\frac{4}{3} -\gamma}\langle \psi, \mathcal{N}\psi\rangle.
$$
This proves \eqref{eq: prop Q2<}.

Finally we discuss the bounds in \eqref{eq: est Q2*<} and \eqref{eq: est Q2< prop est}. Rewriting $\widetilde{\mathbb{Q}}_{2;<}$ in configuration space, we have
\[
  \widetilde{\mathbb{Q}}_{2;<} = \int dxdy \, V(x-y)\varphi_<(x-y)a_\uparrow(u_x)a_\uparrow(v_x)a_\downarrow(u_y)a_\downarrow(v_y) + \mathrm{h.c.},
\]
where $\varphi_<$ is the function with Fourier coefficients $\hat\varphi(p)\widehat\chi_<(p)$.   The bound in  \eqref{eq: est Q2*<} is then a consequence of $\|\varphi_<\|_\infty \leq C\rho^{1/3 - \gamma}$ as shown in \eqref{eq: def phi<}. Indeed, via the Cauchy--Schwarz inequality, we get 
\[
  |\langle\psi, \widetilde{\mathbb{Q}}_{2;<}\psi\rangle| \leq C\rho^{\frac{4}{3} - \gamma}\int dxdy\, V(x-y) \|a_\uparrow(u_x) a_\downarrow(u_y) \psi\| \leq  CL^{3/2}\rho^{\frac{4}{3} -\gamma}  \| \mathbb{Q}_4^{\frac 12}\psi\| .
\]
Similarly, we can write $ {\mathbb{Q}}_{2;<}$ in configuration space as
$$
 {\mathbb{Q}}_{2;<}  
 = {8\pi a} \int dx dy \,\, {\chi}_<(x-y) a_{\uparrow}({u}_{x}^< ) a_{\downarrow} ({u}_y^<) a_{\downarrow}( {v}_y)   a_{\uparrow} ( {v}_x) + \mathrm{h.c.} 
 $$
where we again could replace  $\hat{u}_\sigma$ by $\hat{u}^<_\sigma$ defined in \eqref{eq: u<}. 
We then find
\begin{equation}\label{eq:T2-H0-err2}
  |\langle \psi, {\mathbb{Q}}_{2;<} \psi \rangle |
  \leq 16\pi a  \|u^<_\uparrow\|_2 \|v_\uparrow\|_2 \|v_\downarrow\|_2  
  \int dxdy\,|\chi_<(x-y)|\|a_\downarrow(u^<_y) \psi\|
   \leq  CL^{\frac{3}{2}}\rho^{\frac{3}{2}(1-\gamma)} \|\mathcal{N}^{\frac{1}{2}}\psi\| .
\end{equation}
This completes the proof of Proposition \ref{pro: propagation of Q2< and Q2*<}.
\end{proof}


\subsection{Propagation of the estimates for $\mathbb{H}_0$ and $\mathbb{Q}_4$}\label{sec: prop est H0 Q4}

In this subsection we obtain propagation estimates for $\mathbb{H}_0$ and $\mathbb{Q}_4$, which will be useful to estimate some of the error terms.

\begin{proposition}[Propagation of the estimates for $\mathbb{H}_0$ and $\mathbb{Q}_4$]\label{pro: propagation estimates} Let $\lambda \in [0,1]$, $\gamma \in (0,1/6)$, $\delta \in (0,8\gamma]$ with $2\gamma + \frac{\delta}{16}\leq \frac{1}{3}$. Under the same assumptions as in Theorem \ref{thm: main up bd}
\[
  \langle T_{1;\lambda}T_2\Omega, (\mathbb{H}_0 + \mathbb{Q}_4)T_{1;\lambda}T_2\Omega \rangle \leq C \langle T_2\Omega, (\mathbb{H}_0 + \mathbb{Q}_4)  T_2\Omega \rangle + CL^3\rho^2 + o(L^3)_{L\to \infty}. 
\]
\end{proposition}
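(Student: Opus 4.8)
The plan is to run a Grönwall argument in the parameter $\lambda$. Set $\psi_\lambda:=T_{1;\lambda}T_2\Omega$, which is normalized since $T_{1;\lambda}$ is unitary, and put $f(\lambda):=\langle\psi_\lambda,(\mathbb{H}_0+\mathbb{Q}_4)\psi_\lambda\rangle\ge 0$; note $\langle\psi_\lambda,\mathbb{H}_0\psi_\lambda\rangle\le f(\lambda)$ and $\langle\psi_\lambda,\mathbb{Q}_4\psi_\lambda\rangle\le f(\lambda)$ by positivity. Combining Propositions~\ref{pro: H0} and~\ref{pro: Q4} and using that $T_{1;\lambda}$ is unitary, $f'(\lambda)=\langle\psi_\lambda,(\mathbb{T}_1+\mathbb{T}_2+\mathcal{E}_{\mathbb{H}_0}+\mathcal{E}_{\mathbb{Q}_4})\psi_\lambda\rangle$. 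Throughout I would use $\langle\psi_\lambda,\mathcal{N}\psi_\lambda\rangle\le CL^3\rho^{\frac53-\gamma-\frac\delta8}$ from \eqref{eq: est number operator T1 final} and the standing assumptions $\delta\le 8\gamma$, $2\gamma+\frac\delta{16}\le\frac13$, so that $\langle\psi_\lambda,\mathcal{N}\psi_\lambda\rangle^{1/2}\le CL^{3/2}\rho^{\frac56-\frac\gamma2-\frac\delta{16}}$. The goal is to show each of the four terms on the right, in expectation on $\psi_\lambda$, is bounded by $\epsilon f(\lambda)$ (for a small absorption constant $\epsilon$) plus $\epsilon f(0)+\epsilon\int_0^\lambda f+CL^3\rho^2+o(L^3)_{L\to\infty}$; integrating, using $\int_0^\lambda\!\int_0^{\lambda'}f\le\int_0^\lambda f$, and applying Grönwall's lemma then gives $f(1)\le Cf(0)+CL^3\rho^2+o(L^3)_{L\to\infty}$, which is the claim.

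For the quadratic operators I would write $\mathbb{T}_2$ of \eqref{eq: def T2} in configuration space as $\int dx\,dy\,V(x-y)\widetilde\varphi(x-y)\,a_\uparrow(u_x)a_\uparrow(v_x)a_\downarrow(u_y)a_\downarrow(v_y)+\mathrm{h.c.}$ Moving the two $v$-operators onto the bra and using $\|a^\ast_\uparrow(v_x)a^\ast_\downarrow(v_y)\|\le\|v_\uparrow\|_2\|v_\downarrow\|_2\le C\rho$, then Cauchy--Schwarz with weight $V\le |V\widetilde\varphi|^{-1}\cdot V\,$... more precisely with weight $|V(x-y)\widetilde\varphi(x-y)|\le\|\widetilde\varphi\|_\infty V(x-y)\le CV(x-y)$ (Lemma~\ref{lem: bounds phi}), I get
\[
|\langle\psi_\lambda,\mathbb{T}_2\psi_\lambda\rangle|\le C\rho\, L^{3/2}\langle\psi_\lambda,\mathbb{Q}_4\psi_\lambda\rangle^{1/2}\le\epsilon f(\lambda)+C_\epsilon L^3\rho^2 .
\]
For $\mathbb{T}_1$ of \eqref{eq: def T1}, write its kernel as $2\Delta\widetilde\varphi$ and invoke the periodic scattering equation (Lemma~\ref{lem: fL}) to split $2\Delta\widetilde\varphi=-\big(V(1-\varphi)\big)_>+h_1$, where the subscript $>$ denotes the $\widehat\chi_>$-projection and $\|h_1\|_{L^1(\Lambda)}\to0$. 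The part with $V(1-\varphi)$ satisfies $|V(1-\varphi)|\le CV$ pointwise and is handled exactly as $\mathbb{T}_2$; the part with $h_1$ contributes $o(L^3)_{L\to\infty}$; and the part with $\big(V(1-\varphi)\big)_<$ has a purely low-momentum kernel, so $\widehat u_\sigma$ may be replaced by $\widehat u^<_\sigma$ of \eqref{eq: u<}, after which $\|a_\sigma(u^<_x)\|\le\|u^<_\sigma\|_2\le C\rho^{\frac12-\frac{3\gamma}2}$ is bounded and this part is bounded by $C\rho^{\frac32-\frac{3\gamma}2}L^{3/2}\langle\psi_\lambda,\mathcal{N}\psi_\lambda\rangle^{1/2}\le CL^3\rho^{\frac73-2\gamma-\frac\delta{16}}=O(L^3\rho^2)$ thanks to $2\gamma+\frac\delta{16}\le\frac13$. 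Hence $|\langle\psi_\lambda,(\mathbb{T}_1+\mathbb{T}_2)\psi_\lambda\rangle|\le 2\epsilon f(\lambda)+C_\epsilon L^3\rho^2+o(L^3)_{L\to\infty}$.

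For the error terms: by \eqref{eq: est err Q4 after  T1} and $\langle\psi_\lambda,\mathbb{Q}_4\psi_\lambda\rangle\le f(\lambda)$,
\[
|\langle\psi_\lambda,\mathcal{E}_{\mathbb{Q}_4}\psi_\lambda\rangle|\le C\rho^{\frac56+\frac\gamma2}\langle\psi_\lambda,\mathcal{N}\psi_\lambda\rangle^{1/2}\langle\psi_\lambda,\mathbb{Q}_4\psi_\lambda\rangle^{1/2}\le\epsilon f(\lambda)+C\rho^{\frac53+\gamma}\langle\psi_\lambda,\mathcal{N}\psi_\lambda\rangle\le\epsilon f(\lambda)+CL^3\rho^2 .
\]
The term $\mathcal{E}_{\mathbb{H}_0}$ requires the further Duhamel step of Remark~\ref{rem: Duhamel err kin energy}: $\langle\psi_\lambda,\mathcal{E}_{\mathbb{H}_0}\psi_\lambda\rangle=\langle T_2\Omega,\mathcal{E}_{\mathbb{H}_0}T_2\Omega\rangle+\int_0^\lambda\partial_{\lambda'}\langle\psi_{\lambda'},\mathcal{E}_{\mathbb{H}_0}\psi_{\lambda'}\rangle\,d\lambda'$. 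In the integral, \eqref{eq: est err H0} together with $\|\mathbb{H}_0^{1/2}\psi_{\lambda'}\|^2\le f(\lambda')$ and the $\mathcal{N}$-bound gives $|\partial_{\lambda'}\langle\cdot\rangle|\le\epsilon f(\lambda')+CL^3\rho^2$, i.e.\ a contribution $\epsilon\int_0^\lambda f+CL^3\rho^2$. For the boundary term I would estimate $\langle T_2\Omega,\mathcal{E}_{\mathbb{H}_0}T_2\Omega\rangle$ directly from the configuration-space form of \eqref{eq: error kin energy}: the momentum factors $k\cdot(s-s')$ in $\mathcal{E}_{\mathbb{H}_0}$ are of the size of a kinetic-energy difference (indeed $\sum_s\lambda_{s,k}b_{k,s,\sigma}=-[\mathbb{H}_0,b_{k,\sigma}]$), so after an integration by parts that moves a derivative onto a $u$-kernel one produces a factor controlled by $\mathbb{H}_0^{1/2}$; this yields $|\langle T_2\Omega,\mathcal{E}_{\mathbb{H}_0}T_2\Omega\rangle|\le\epsilon\langle T_2\Omega,\mathbb{H}_0 T_2\Omega\rangle+CL^3\rho^2+o(L^3)_{L\to\infty}=\epsilon f(0)+CL^3\rho^2+o(L^3)_{L\to\infty}$.

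Assembling, $|f'(\lambda)|\le C\epsilon f(\lambda)+\epsilon\int_0^\lambda f+\epsilon f(0)+CL^3\rho^2+o(L^3)_{L\to\infty}$, and the Grönwall argument described above gives the stated bound. The main obstacle is precisely the handling of $\mathcal{E}_{\mathbb{H}_0}$: since only its $\lambda$-derivative is controlled by \eqref{eq: est err H0} (and moreover in terms of $\mathbb{H}_0^{1/2}$ rather than $\mathcal{N}^{1/2}$), one is forced into the extra Duhamel iteration and a careful (double-integral) Grönwall bookkeeping, and in particular into a separate \emph{direct} estimate of the boundary term $\langle T_2\Omega,\mathcal{E}_{\mathbb{H}_0}T_2\Omega\rangle$, where a crude bound only gives $O(L^3\rho^{2-\gamma-\delta/16})$ and one genuinely needs the integration-by-parts trick to trade a momentum factor for a power of $\mathbb{H}_0$. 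The remaining work — verifying that every remainder exponent beats $\rho^2$ given $\delta\le 8\gamma$ and $2\gamma+\frac\delta{16}\le\frac13$ — is routine bookkeeping.
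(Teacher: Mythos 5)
Your proposal is correct and rests on the same Gr\"onwall scaffolding as the paper's proof. The bookkeeping for the remainder terms, however, diverges in two places.

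\textbf{Handling of $\mathbb{T}_1+\mathbb{T}_2$.} The paper does not split $\mathbb{T}_1$ directly via the scattering equation. It instead invokes Proposition~\ref{prop: scatt canc} to write $\mathbb{T}_1+\mathbb{T}_2=-\mathbb{Q}_2^\udarrow+\mathbb{Q}_{2;<}+\widetilde{\mathbb{Q}}_{2;<}+\mathcal{E}_{\mathrm{scatt}}$ and then bounds each summand via \eqref{eq: error scatt eq canc}, \eqref{eq: est Q2*<}, \eqref{eq: est Q2< prop est}, and a direct $\mathbb{Q}_4^{1/2}$-weighted Cauchy--Schwarz on $\mathbb{Q}_2^\udarrow$. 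Your decomposition --- writing the kernel of $\mathbb{T}_1$ as $2\Delta\widetilde\varphi$ and splitting off $-V(1-\varphi)$, its $\chi_<$-projection, and the finite-size remainder $h_1$ --- packages the same cancellation differently: the dominant piece $-V(1-\varphi)$ plays the role of $-\mathbb{Q}_2^\udarrow$ (and is bounded the same way), the $(V(1-\varphi))_<$ piece corresponds to $\mathbb{Q}_{2;<}+\widetilde{\mathbb{Q}}_{2;<}$ and is handled by the $u^<$ replacement, and $h_1$ gives the thermodynamic-limit error. Your route is legitimate; the paper's has the minor advantage of reusing Proposition~\ref{prop: scatt canc} and the bounds of Proposition~\ref{pro: propagation of Q2< and Q2*<}, which are needed anyway.

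\textbf{Handling of $\mathcal{E}_{\mathbb{H}_0}$.} Here your reasoning contains a misconception, though not a gap. You claim that because only the $\lambda$-derivative of $\langle T_{1;\lambda}\psi,\mathcal{E}_{\mathbb{H}_0}T_{1;\lambda}\psi\rangle$ is controlled by \eqref{eq: est err H0}, one is ``forced into the extra Duhamel iteration'' of Remark~\ref{rem: Duhamel err kin energy}. In fact, inside the proof of Proposition~\ref{pro: propagation estimates}, the paper bounds $\langle\xi_\lambda,\mathcal{E}_{\mathbb{H}_0}\xi_\lambda\rangle$ \emph{directly}, uniformly in $\lambda$, by rewriting $\mathcal{E}_{\mathbb{H}_0}$ in configuration space as in \eqref{a2t} (the $k\cdot s=-s^2+s\cdot(s+k)$ integration-by-parts that you describe for the ``boundary term'') and arriving at \eqref{ato} and \eqref{eq: est Err kin prop est}: $|\langle\xi_\lambda,\mathcal{E}_{\mathbb{H}_0}\xi_\lambda\rangle|\le CL^{3/2}\rho^{3/2+\gamma/2}\|\mathcal{N}^{1/2}\xi_\lambda\|+CL^{3/2}\rho^{7/6+\gamma/2}\langle\xi_\lambda,\mathbb{H}_0\xi_\lambda\rangle^{1/2}$. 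So the direct estimate you apply at $\lambda=0$ already works for every $\lambda$, and the Duhamel layer --- with its double-integral Gr\"onwall bookkeeping --- is a harmless but unnecessary detour: you end up doing the same configuration-space work to bound the boundary term, then reproducing it in disguise through \eqref{eq: est err H0}. (The derivative bound \eqref{eq: est err H0} and Remark~\ref{rem: Duhamel err kin energy} \emph{are} needed later, in the proof of Proposition~\ref{lem:conjugation-T1}, but not here.)
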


Note that since $\mathbb{H}_0$ and $\mathbb{Q}_4$ are nonnegative, the above estimate gives two separate bounds for $\langle T_{1;\lambda}T_2\Omega, \mathbb{H}_0 T_{1;\lambda}T_2\Omega \rangle $ and $\langle T_{1;\lambda}T_2\Omega,  \mathbb{Q}_4 T_{1;\lambda}T_2\Omega \rangle$. The term $\langle T_2\Omega, (\mathbb{H}_0 + \mathbb{Q}_4) T_2\Omega \rangle$ on the right-hand side is actually small compared to  $L^3\rho^2$, as will be proved later in Propositions~\ref{pro: prop est H0 T2} and~\ref{pro: prop est Q4 T2}.

\begin{proof} Let  $\xi_\lambda=T_{1;\lambda}T_2\Omega$.  We prove the propagation estimate by Gr\"onwall's Lemma. By Propositions~\ref{pro: H0} and~\ref{pro: Q4}, we have
\begin{equation}\label{eq: der h0 q4 prop est }
 \partial_\lambda \langle \xi_\lambda, (\mathbb{H}_0 + \mathbb{Q}_4) \xi_\lambda\rangle 
 =   \langle\xi_\lambda, (\mathbb{T}_1 + \mathbb{T}_2+  \mathcal{E}_{\mathbb{H}_0}+ \mathcal{E}_{\mathbb{Q}_4})\xi_\lambda\rangle, 
\end{equation}
where the four terms on the right-hand side are defined in \eqref{eq: def T1}, \eqref{eq: def T2}, \eqref{eq: error kin energy} and \eqref{eq: conj Q4 T1}, respectively.  Eq.~\eqref{eq: est err Q4 after  T1} in Proposition \ref{pro: Q4} gives a bound on $\mathcal{E}_{\mathbb{Q}_4}$. 
For the term  $\mathcal{E}_{\mathbb{H}_0} $ in  \eqref{eq: error kin energy}, we first consider 
the terms involving $k\cdot s$ in \eqref{def:E1} and write, as in the proof of Proposition~\ref{pro: H0}, $k=(k+s) - s$.
Correspondingly, writing the resulting expression in configuration space and recalling the definition of $b_\sigma$ in \eqref{eq: def b phi}, we find 
\begin{equation}\label{a2t}
 -\frac{2}{L^3}\sum_{k,s,s^\prime \in \Lambda^*} k\cdot s \hat{\varphi}(k)\widehat{\chi}_>(k)\, b_{k,s,\uparrow}b_{-k,s^\prime, \downarrow} 
 =  2\sum_{\ell=1}^3\int dx\,  b_\downarrow(\widetilde\varphi_x) \left( a_\uparrow(\partial_\ell v_x)a_\uparrow(\partial_\ell \nu^>_x) - a_\uparrow(\partial_\ell^2 v_x) a_\uparrow(u_x)\right)    
\end{equation}
where in the first term we again replaced $u_\uparrow$ by $\nu^>$ defined in \eqref{eq: def nu >}, which is possible due the presence of the cutoff $\chi_>$. 
Using Lemma \ref{lem: bounds b b*} together with the bounds in \eqref{eq: est a v H0}, we can estimate via the Cauchy--Schwarz inequality
\begin{equation}\label{ato}
\int\, dx\,  \|b_\downarrow(\widetilde{\varphi}_x)a_\uparrow(\partial_j^2 v_x)a_\uparrow(u_x)\xi_\lambda\|
\leq CL^{\frac{3}{2}}\rho^{\frac{3}{2} +\frac{\gamma}{2}}\|\mathcal{N}^{\frac{1}{2}}\xi_\lambda\| .
\end{equation}
For the first term on the right-hand side of \eqref{a2t}, we can proceed as in \eqref{eq: est partial u H0} to bound
\begin{equation}\label{eq: est Err kin prop est}
  \int\, dx\,  \|b_\downarrow(\widetilde{\varphi}_x)a_\uparrow(\partial_\ell v_x)a_\uparrow(\partial_\ell \nu^>_x)\xi_\lambda\| 
  \leq CL^{\frac{3}{2}}\rho^{\frac{7}{6}  + \frac{\gamma}{2}}   \langle \xi_\lambda, \mathbb{H}_0 \xi_\lambda\rangle^{1/2} .
\end{equation}
For the terms  involving $k\cdot s'$  in $\mathcal{E}_{\mathbb{H}_0} $ in  \eqref{eq: error kin energy} we can proceed in the same way, obtaining the same bound.

We now consider the expectation of $\mathbb{T}_1 + \mathbb{T}_2$ in \eqref{eq: der h0 q4 prop est }. From Proposition \ref{prop: scatt canc}, we know that 
$\mathbb{T}_1 + \mathbb{T}_2 = - \mathbb{Q}_2^\udarrow + {\mathbb{Q}}_{2;<} + \widetilde{\mathbb{Q}}_{2;<} +  \mathcal{E}_{\mathrm{scatt}}$,
%
with  $\mathcal{E}_{\mathrm{scatt}}$ bounded as in \eqref{eq: error scatt eq canc}, 
and Eqs.~\eqref{eq: est Q2*<} and~\eqref{eq: est Q2< prop est}  from Proposition \ref{pro: propagation of Q2< and Q2*<} give a bound on $\mathbb{Q}_{2;<} + \widetilde{\mathbb{Q}}_{2;<}$. 
It thus remains to give a bound on $ \mathbb{Q}_2^\udarrow$. By the Cauchy--Schwarz inequality and \eqref{eq: est a v H0}, 
$$
  |\langle \xi_\lambda, \mathbb{Q}_2^\udarrow \xi_\lambda \rangle| \leq C\rho\int dxdy\, V(x-y) \|a_\uparrow(u_x)a_\downarrow(u_y)\xi_\lambda\| \leq CL^{\frac{3}{2}}\rho \|\mathbb{Q}_4^{\frac{1}{2}}\xi_\lambda\| \leq CL^3\rho^2 + \langle \xi_\lambda, \mathbb{Q}_4 \xi_\lambda\rangle
$$
where we used $V \ge 0$ and $\|V\|_{L^1(\Lambda)} \le C$. Therefore,  
\begin{align} \nonumber
  |\langle \xi_\lambda, \left(\mathbb{T}_1 + \mathbb{T}_2\right)\xi_\lambda\rangle | & \leq CL^3 \rho^{2} + C\langle \xi_\lambda, \mathbb{Q}_4 \xi_\lambda\rangle + 
  C \left( \rho^{\frac{5}{3} -2\gamma} +  o(1)_{L\to \infty} \right)
 \langle \xi_\lambda, \mathcal{N}\xi_\lambda\rangle \\ & \quad + CL^{\frac{3}{2}} \left( \rho^{\frac 32(1-\gamma)} + \rho^{\frac{13}{6} -\frac{7}{2}\gamma}+  o(1)_{L\to \infty} \right)\|\mathcal{N}^{\frac{1}{2}}\xi_\lambda\| .  \label{eq: est T1+T2 prop}
\end{align}
Inserting the bounds \eqref{eq: est err Q4 after  T1}, \eqref{ato}, \eqref{eq: est Err kin prop est} and \eqref{eq: est T1+T2 prop} in \eqref{eq: der h0 q4 prop est }, we find  
\begin{align*}
  \left| \partial_\lambda \langle \xi_\lambda, (\mathbb{H}_0 + \mathbb{Q}_4)\xi_\lambda\rangle\right| 
   & \leq  C\langle \xi_\lambda, \left(\mathbb{H}_0 + \mathbb{Q}_4 \right) \xi_\lambda\rangle  + CL^3\rho^2 +   C \left( \rho^{\frac{5}{3} -2\gamma} +  o(1)_{L\to \infty} \right)
\langle \xi_\lambda, \mathcal{N}\xi_\lambda\rangle \\ & \quad + CL^{\frac{3}{2}}\left(\rho^{\frac{3}{2}(1-\gamma)}  + \rho^{\frac{13}{6} -\frac{7}{2}\gamma}+ o(1)_{L\to \infty} \right)\|\mathcal{N}^{\frac{1}{2}}\xi_\lambda\|  . 
\end{align*}
Using the estimate for the number operator in \eqref{eq: est number operator T1 final} together with the constraints on the parameters  $\gamma <1/3$, $\delta \leq  8\gamma$ and $ 2\gamma + \frac{\delta}{16} \leq \frac{1}{3}$, we conclude that 
\[
\left|  \partial_\lambda \langle \xi_\lambda, (\mathbb{H}_0 + \mathbb{Q}_4) \xi_\lambda\rangle\right| \leq  C \langle \xi_\lambda, (\mathbb{H}_0 + \mathbb{Q}_4)\xi_\lambda\rangle + CL^3\rho^2 + o(L^3)_{L\to \infty}. 
\]
The desired result then follows from Gr\"onwall's Lemma.
\end{proof}


\subsection{Conclusion of Proposition \ref{lem:conjugation-T1}}\label{sec: T1}\label{sec: conj undert T1}

Now we have all the necessary ingredients  in order to give the proof of Proposition \ref{lem:conjugation-T1}. 
Recall the definition of $\mathcal{H}^{\mathrm{eff}}_{\mathrm{corr}}= \mathbb{H}_0 + \mathbb{Q}_2^\udarrow + \mathbb{Q}_4$ in \eqref{eq: def eff corr en strategy}.  
We start by writing
\begin{equation}\label{eq: prop h0 up bd}
  \langle T_1T_2\Omega, \left( \mathbb{H}_0  + \mathbb{Q}_4 \right) 
  T_1T_2\Omega\rangle = \langle T_2\Omega,  \left( \mathbb{H}_0  + \mathbb{Q}_4 \right)
  T_2\Omega\rangle + \int_0^1 d\lambda\, \partial_\lambda \langle T_{1;\lambda}T_2\Omega,  \left( \mathbb{H}_0  + \mathbb{Q}_4 \right)
  T_{1;\lambda}T_2\Omega\rangle . 
\end{equation}
From Proposition \ref{pro: H0} and Remark \ref{rem: Duhamel err kin energy}, 
\begin{multline}\label{eq: duhamel H0 trial state}
  \int_0^1 d\lambda\, \partial_\lambda \langle T_{1;\lambda}T_2\Omega, \mathbb{H}_0 T_{1;\lambda}T_2\Omega\rangle 
  \\
  = \int_0^{1} d\lambda\, \langle T_{1;\lambda}T_2\Omega, \mathbb{T}_1 T_{1;\lambda}T_2\Omega\rangle + \langle T_2\Omega, \mathcal{E}_{\mathbb{H}_0} T_2\Omega\rangle  + \int_0^1 d\lambda  \, (1-\lambda)   \partial_{\lambda}\langle  T_{1;\lambda}T_2\Omega, \mathcal{E}_{\mathbb{H}_0}T_{1;\lambda}T_2\Omega\rangle,
\end{multline}
where $ \mathbb{T}_1$ is defined in \eqref{eq: def T1}, 
and the last error term can be controlled by combining the bound \eqref{eq: est err H0} from Proposition \ref{pro: H0} with Propositions~\ref{pro: est number op} and~\ref{pro: propagation estimates} as
\begin{align}\nonumber
  \left| \partial_{\lambda}\langle  T_{1;\lambda}T_2\Omega, \mathcal{E}_{\mathbb{H}_0}T_{1;\lambda}T_2\Omega\rangle\right| &\leq C\rho^{1+\gamma}\|\mathbb{H}_0^{\frac{1}{2}}T_{1;\lambda}T_2\Omega\|\|\mathcal{N}^{\frac{1}{2}}T_{1;\lambda}T_2\Omega\|   + C\rho^{\frac 43+\gamma} \langle T_{1;\lambda}T_2\Omega, \mathcal{N} T_{1;\lambda}T_2\Omega\rangle \nonumber
  \\
    &\leq C L^{3/2} \rho^{ \frac {17}6 + \frac \gamma 2 - \frac \delta{16} }\| ( \mathbb{H}_0 + \mathbb{Q}_4)^{\frac{1}{2}}T_2\Omega\|  + C L^3 \left( \rho^{\frac {17}6 + \frac \gamma 2 - \frac \delta{16}}     +  \rho^{3 -\frac \delta 8 }\right) + o(L^3)_{L\to \infty}. 
   \label{eq: est err H0 up bd}
\end{align}
From Proposition \ref{pro: Q4} we get 
\begin{equation}\label{eq: prop Q4 up bd}
  \int_0^1 d\lambda\, \partial_\lambda \langle T_{1;\lambda}T_2\Omega, \mathbb{Q}_4 T_{1;\lambda}T_2\Omega\rangle = 
  \int_0^1 d\lambda\, \langle T_{1;\lambda}T_2\Omega, \left( \mathbb{T}_2 + \mathcal{E}_{\mathbb{Q}_{4}} \right)  T_{1;\lambda}T_2\Omega\rangle 
\end{equation}
where $\mathbb{T}_2$ is defined in \eqref{eq: def T2}
and the error term  $\mathcal{E}_{\mathbb{Q}_{4}}$ can be controlled, again with Proposition~\ref{pro: est number op} and~\ref{pro: propagation estimates}, as
\begin{align}\label{eq: est err Q4 after  T1 XX}
  |\langle T_{1;\lambda}T_2\Omega,  \mathcal{E}_{\mathbb{Q}_{4}}  T_{1;\lambda}T_2\Omega\rangle| &\leq  C\rho^{\frac{5}{6} +\frac{\gamma}{2}}\|\mathcal{N}^{\frac{1}{2}}T_{1;\lambda}T_2\Omega\|\|\mathbb{Q}^{\frac{1}{2}}_4 T_{1;\lambda}T_2\Omega\| \nonumber\\
  &\le C \rho^{\frac 53 - \frac \delta{16}} L^{3/2} \left( \|(\mathbb{H}_0+\mathbb{Q}_4)^{\frac{1}{2}} T_2\Omega\| + L^{3/2} \rho + o(L^{3/2})_{L\to \infty} \right) . 
\end{align}
The terms $\mathbb{T}_1$ and $\mathbb{T}_2$ can be combined as in Proposition \ref{prop: scatt canc} to $\mathbb{T}_1 + \mathbb{T}_2 = - \mathbb{Q}_2^\udarrow + {\mathbb{Q}}_{2;<} + \widetilde{\mathbb{Q}}_{2;<} +  \mathcal{E}_{\mathrm{scatt}}$
%
with $\mathbb{Q}_{2;<}$ and $\widetilde{\mathbb{Q}}_{2;<}$ defined in \eqref{eq: definition Q2<} and \eqref{def:q2tl}, and $\mathcal{E}_{\mathrm{scatt}}$ satisfying
\begin{equation}\label{eq: est err scatt eq up bd}
 | \langle T_{1;\lambda}T_2\Omega,\mathcal{E}_{\mathrm{scatt}}T_{1;\lambda}T_2\Omega\rangle|  
 \leq CL^3\rho^{3 -4\gamma - \frac{\delta}{16}} + o(L^3)_{L\to \infty},
\end{equation}
where we used the estimate in \eqref{eq: error scatt eq canc} together with Proposition \ref{pro: est number op}.
Therefore, combining \eqref{eq: prop h0 up bd}--\eqref{eq: est err scatt eq up bd}, we obtain 
\begin{align}\nonumber
  \langle T_1 T_2\Omega, \mathcal{H}^{\mathrm{eff}}_{\mathrm{corr}}T_1T_2\Omega\rangle 
 &= \langle T_2\Omega, (\mathbb{H}_0 + \mathbb{Q}_4 + \mathcal{E}_{\mathbb{H}_0})T_2\Omega\rangle 
 + \int_0^1 d\lambda\, \langle T_{1;\lambda}T_2\Omega, ({\mathbb{Q}}_{2;<} + \widetilde{\mathbb{Q}}_{2;<}) T_{1;\lambda}T_2\Omega \rangle 
  \\
  &  \quad + \langle T_1 T_2\Omega, \mathbb{Q}_2^\udarrow T_1 T_2\Omega \rangle  -\int_0^1 d\lambda\, \langle T_{1;\lambda}T_2\Omega, \mathbb{Q}_2^\udarrow T_{1;\lambda}T_2\Omega\rangle+ \mathcal{E}  
\label{eq: up db after prop}
\end{align}
with 
\begin{equation}\label{eq: err after prop T1 up bd}
|\mathcal{E}| \leq    C \rho^{\frac 53 - \frac \delta{16}} L^{3/2}  \|(\mathbb{H}_0+\mathbb{Q}_4)^{\frac{1}{2}} T_2\Omega\|  + C L^{3} \rho^{\frac 83 - \frac \delta{16}} +CL^3\rho^{3 -4\gamma - \frac{\delta}{16}}   +  o(L^3)_{L\to \infty}.  
\end{equation}
%

The terms involving $ \mathbb{Q}_2^\udarrow $ in the second line on the right-hand of \eqref{eq: up db after prop} above can be written as 
$$
 \langle T_1 T_2\Omega, \mathbb{Q}_2^\udarrow T_1 T_2\Omega \rangle  -\int_0^1 d\lambda\, \langle T_{1;\lambda}T_2\Omega, \mathbb{Q}_2^\udarrow T_{1;\lambda}T_2\Omega\rangle =
 \int_0^1d\lambda\, \lambda \,\partial_\lambda \langle  T_{1;\lambda}T_2\Omega, \mathbb{Q}_2^\udarrow T_{1;\lambda}T_2\Omega\rangle ,
$$
which, according to  Proposition \ref{pro: Q2}, equals
\begin{equation}\label{eq: prop Q2 up bd}
 -\rho_\uparrow \rho_\downarrow \sum_{p\in\Lambda^*} \hat{V}(p)\hat{\varphi}(p)\widehat{\chi}_>(p) 
  + \int_0^1 d\lambda\, \lambda\, \langle T_{1;\lambda}T_2\Omega,\mathcal{E}_{\mathbb{Q}_2^\udarrow}T_{1;\lambda}T_2\Omega\rangle 
\end{equation}
with 
\begin{equation}\label{eq: est err Q2 up bd}
  |\langle T_{1;\lambda}T_2\Omega,\mathcal{E}_{\mathbb{Q}_2^\udarrow}T_{1;\lambda}T_2\Omega\rangle| \leq CL^3\rho^{\frac{8}{3} - \gamma -\frac{\delta}{8}} + C \rho^{\frac 53 - \frac \delta{16}} L^{3/2}  \|(\mathbb{H}_0+\mathbb{Q}_4)^{\frac{1}{2}} T_2\Omega\|    + o(L^{3})_{L\to \infty} 
\end{equation}
for all $\lambda \in [0,1]$, where we used again Proposition \ref{pro: est number op} and Proposition \ref{pro: propagation estimates}.
%
%
Next we consider the contribution coming from $\mathbb{Q}_{2;<} + \widetilde{\mathbb{Q}}_{2;<}$ in the first line on the right-hand side of \eqref{eq: up db after prop}. 
By Proposition \ref{pro: propagation of Q2< and Q2*<},  
\begin{align*}
  \int_0^1 d\lambda\, \langle T_{1;\lambda}T_2\Omega, ({\mathbb{Q}}_{2;<}  &+ \widetilde{\mathbb{Q}}_{2;<})T_{1;\lambda}T_2\Omega\rangle  =  \langle T_2\Omega, ({\mathbb{Q}}_{2;<} + \widetilde{\mathbb{Q}}_{2;<})T_2\Omega\rangle  - 8\pi a \rho_\uparrow\rho_\downarrow \sum_{p\in\Lambda^*}\widehat{\chi}_<(p)\hat{\varphi}(p)\widehat{\chi}_>(p) 
  \\
   &-  \rho_\uparrow\rho_\downarrow\sum_{p \in\Lambda^*} \hat{V}\ast_{_{\sum}}(\widehat{\chi}_<\hat{\varphi})(p)\hat{\varphi}(p)\widehat{\chi}_>(p)
  + \int_0^1 d\lambda \,(1-\lambda) \langle  T_{1;\lambda}T_2\Omega, \mathcal{E}_{{\mathbb{Q}}_{2;<} + \widetilde{\mathbb{Q}}_{2;<}} T_{1;\lambda}T_2\Omega\rangle
\end{align*}
with 
$$
  |\langle  T_{1;\lambda}T_2\Omega, \mathcal{E}_{{\mathbb{Q}}_{2;<} + \widetilde{\mathbb{Q}}_{2;<}}T_{1;\lambda}T_2\Omega\rangle| \leq C L^3 \rho^{3-2\gamma-\frac \delta 8} + C L^{3/2} \rho^{2 -  \gamma  -\frac \delta{16}} 
    \|(\mathbb{H}_0+\mathbb{Q}_4)^{\frac{1}{2}} T_2\Omega\|  + o(L^{3})_{L\to \infty} ,
$$  
%
where we used again the bounds in Propositions~\ref{pro: est number op} and~\ref{pro: propagation estimates}.   We can further bound  the term  $\widetilde{\mathbb{Q}}_{2;<}$ on the right hand side above using \eqref{eq: est Q2*<}.
Therefore, combining all the estimates, we proved that 
\begin{align}\nonumber
  \langle T_1 T_2\Omega, \mathcal{H}^{\mathrm{eff}}_{\mathrm{corr}} T_1T_2\Omega\rangle &\leq - \rho_\uparrow \rho_\downarrow \sum_{p\in\Lambda^*} \left(\hat{V}(p) + \hat{V}\ast_{_{\sum}} (\widehat{\chi}_<\hat{\varphi})(p) + 8\pi a \widehat{\chi}_<(p) \right)\hat{\varphi}(p)\widehat{\chi}_>(p) 
  \\ & \quad  +\langle T_2\Omega, \left(\mathbb{H}_0 + \mathbb{Q}_4 + {\mathbb{Q}}_{2;<}+ \mathcal{E}_{\mathbb{H}_0}   \right) T_2\Omega\rangle 
  \nonumber\\ \nonumber
  & \quad + C \rho^{\frac 53 - \frac \delta{16}} L^{3/2}  \|(\mathbb{H}_0+\mathbb{Q}_4)^{\frac{1}{2}} T_2\Omega\|    + CL^{3/2}\rho^{\frac{4}{3}  -\gamma} \|  \mathbb{Q}_4^{\frac 12} T_2\Omega\| \\ & \quad + CL^3\rho^{\frac{8}{3} - \gamma -\frac{\delta}{8}}   +CL^3\rho^{3 -4\gamma - \frac{\delta}{16}}  +  o(L^3)_{L\to \infty}.  
  \label{eq: conjugation eff corr}
\end{align}

To complete the proof of Proposition~\ref{lem:conjugation-T1}, we need the following Lemma, evaluating the constant contribution in the first line on the right-hand side of \eqref{eq: conjugation eff corr}.

\begin{lemma}[Constant contribution to $8\pi a \rho_\uparrow\rho_\downarrow$]\label{lem: constant contribution after T1} Let $V$ be as in Assumption \ref{asu: potential V}. Let $\widehat{\chi}_<, \widehat{\chi}_>$ be defined as in \eqref{eq: def chi< chi>}, with $0<\gamma<1/3$, and let ${\varphi}$ be as in Definition~\eqref{def:scattering-phi-eta}. Then
\begin{align}\nonumber
&\frac 1{L^3} \sum_{p\in\Lambda^*} \left(\hat{V}(p) + \hat{V}\ast_{_{\sum}} (\widehat{\chi}_<\hat{\varphi})(p) + 8\pi a \widehat{\chi}_<(p) \right)\hat{\varphi}(p)\widehat{\chi}_>(p)
  \\ &= \hat V(0) - 8\pi a  - \frac {1}{L^3}  \sum_{{0\ne p \in\Lambda^*}} \frac{(8\pi a)^2(\widehat{\chi}_<(p))^2}{2|p|^2} + \mathcal{E}_{\mathrm{const}}
\label{eq: constant terms after T1}
\end{align}
with 
$|\mathcal{E}_{\mathrm{const}}| \leq C  \rho^{\frac{2}{3} - 2\gamma} + o(1)_{L\to \infty}$.
\end{lemma}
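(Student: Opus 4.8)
The plan is to expand the left-hand side of \eqref{eq: constant terms after T1} into three pieces according to the three summands $\hat V(p)$, $\hat V \ast_{\sum}(\widehat\chi_< \hat\varphi)(p)$ and $8\pi a \widehat\chi_<(p)$, each multiplied by $\hat\varphi(p)\widehat\chi_>(p)$, and to identify the leading contribution of each, absorbing the rest into $\mathcal{E}_{\mathrm{const}}$. First I would treat the $\hat V(p)\hat\varphi(p)\widehat\chi_>(p)$ term: writing $\widehat\chi_> = 1 - \widehat\chi_<$, the full sum $L^{-3}\sum_p \hat V(p)\hat\varphi(p)$ equals $\int_\Lambda V\varphi = \int_{\R^3} V_\infty\varphi_\infty + o(1)_{L\to\infty}$ by \eqref{ca5} (or the analogous estimate referenced there), which together with $\hat V(0) = \int V_\infty$ and the definition $8\pi a = \int V_\infty(1-\varphi_\infty)$ produces $\hat V(0) - 8\pi a$ up to $o(1)_{L\to\infty}$; the subtracted piece $L^{-3}\sum_p \hat V(p)\hat\varphi(p)\widehat\chi_<(p)$ is bounded using $|\hat V(p)| \le C$, $|p|^2|\hat\varphi(p)| \le C$ (from \eqref{eq: zero energy scatt eq-intro}) and the support of $\widehat\chi_<$ on $|p| \le 5\rho^{1/3-\gamma}$, giving $C\rho^{-1/3+\gamma}\cdot\rho^{(1/3-\gamma)\cdot 3}\cdot L^{-3}\cdot(\text{volume count}) = C\rho^{2/3-2\gamma}$, hence an admissible error.

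Next I would handle the $8\pi a \widehat\chi_<(p)\hat\varphi(p)\widehat\chi_>(p)$ term. Since $2|p|^2\hat\varphi(p) = 8\pi a + (\hat V(p)\hat\varphi(p) \text{-type correction})$, more precisely $2|p|^2\hat\varphi(p) = \mathcal{F}(V_\infty(1-\varphi_\infty))(p) + o(1)_{L\to\infty}$ and $\mathcal{F}(V_\infty(1-\varphi_\infty))(p) = 8\pi a + O(|p|^2)$ by compact support of $V_\infty$, we may replace $8\pi a\,\hat\varphi(p)$ by $(8\pi a)^2/(2|p|^2)$ on the support of $\widehat\chi_<\widehat\chi_>$ at the cost of $O(|p|^2\cdot|p|^{-2}) = O(1)$ pointwise times the restricted volume — but one must be careful, since $\widehat\chi_<\widehat\chi_>$ is supported on the annulus $4\rho^{1/3-\gamma} \le |p| \le 5\rho^{1/3-\gamma}$, so the error is again $C\rho^{2/3-2\gamma}$ after summation. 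Note that on this annulus $\widehat\chi_<^2\widehat\chi_> = \widehat\chi_<^2 - \widehat\chi_<^3$, and since $\widehat\chi_< \equiv 1$ on $|p| < 4\rho^{1/3-\gamma}$ the difference between $L^{-3}\sum_p (8\pi a)^2\widehat\chi_<(p)^2\widehat\chi_>(p)/(2|p|^2)$ and the full $L^{-3}\sum_{p\ne 0}(8\pi a)^2\widehat\chi_<(p)^2/(2|p|^2)$ is supported on the same annulus, of size $\lesssim \rho^{-1/3+\gamma}\cdot\rho^{(1/3-\gamma)} = $ bounded, times $L^{-3}$ times the count, again $O(\rho^{2/3-2\gamma})$; this reproduces the $-L^{-3}\sum_{p\ne 0}(8\pi a)^2\widehat\chi_<(p)^2/(2|p|^2)$ in \eqref{eq: constant terms after T1} up to admissible error. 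Here I should double-check the sign and the exact form of the $\widehat\chi_<^2$ versus $\widehat\chi_<^2\widehat\chi_>$ bookkeeping, matching the convention already used in \eqref{eq: second step app part I}.

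Finally, the cross term $\big(\hat V\ast_{\sum}(\widehat\chi_<\hat\varphi)(p)\big)\hat\varphi(p)\widehat\chi_>(p)$: here $\widehat{\chi_<\hat\varphi}$ in configuration space is $\chi_<\ast\varphi$-type with $L^1$ and $L^\infty$ control from \eqref{eq: def phi<} (giving $\|\,\cdot\,\|_\infty \le C\rho^{1/3-\gamma}$), so $|\hat V\ast_{\sum}(\widehat\chi_<\hat\varphi)(p)| \le \|V\|_1\|\varphi_<\|_\infty \le C\rho^{1/3-\gamma}$ uniformly in $p$; combined with $|p|^2|\hat\varphi(p)| \le C$ and the support of $\widehat\chi_>$ intersected with the region where $\hat\varphi\widehat\chi_>$ is appreciable, i.e. effectively $|p| \gtrsim \rho^{1/3-\gamma}$, this term is bounded by $C\rho^{1/3-\gamma}\cdot(\text{sum of }|p|^{-2}\text{ over }|p|\ge\rho^{1/3-\gamma})/L^3$; the tail sum $L^{-3}\sum_{|p|\gtrsim\rho^{1/3-\gamma}}|p|^{-2}$ converges (it is $\simeq\int_{|p|\gtrsim\rho^{1/3-\gamma}}|p|^{-2}\,dp$ which diverges — so instead one uses that $\hat\varphi(p)$ itself decays, $|\hat\varphi(p)| = |\mathcal F(\varphi_\infty)(p)| \le C|p|^{-2}$, making $\hat V\ast(\cdots)\cdot\hat\varphi\cdot\widehat\chi_>$ summable with bound $C\rho^{1/3-\gamma}\cdot C = C\rho^{1/3-\gamma}$), so this whole term is $o(\rho^{2/3-2\gamma})$ and contributes only to $\mathcal{E}_{\mathrm{const}}$. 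The main obstacle I anticipate is the careful tracking of which quantities vanish as $L\to\infty$ versus which are genuinely $\rho$-small, and ensuring that the finite-box Riemann-sum-versus-integral replacements needed to pass from $\sum_p$ to $\int dp$ (for the $|p|^{-2}$ factors) are uniform — the relevant $o(1)_{L\to\infty}$ bounds should come from Lemma \ref{lem: fL} and the estimates \eqref{ca5}, \eqref{eq: phiinfty - phi} referenced earlier, applied termwise. Collecting, we obtain \eqref{eq: constant terms after T1} with $|\mathcal{E}_{\mathrm{const}}| \le C\rho^{2/3-2\gamma} + o(1)_{L\to\infty}$, completing the proof of Lemma \ref{lem: constant contribution after T1} and hence, in combination with \eqref{eq: conjugation eff corr} and the number-operator bounds, Proposition \ref{lem:conjugation-T1}.
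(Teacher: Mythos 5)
Your term-by-term decomposition does not close, because the three pieces it produces are each individually of order $\rho^{1/3-\gamma}$, which is \emph{larger} than the permitted error $\rho^{2/3-2\gamma}$ (for $0<\gamma<1/3$), and the lemma actually hinges on a cancellation between them that your plan never exhibits. Concretely, since $\hat\varphi(p)\approx 4\pi a/|p|^2$ for small $p$ (from \eqref{eq: zero energy scatt eq-intro}), one has
\[
\frac{1}{L^3}\sum_p \hat V(p)\hat\varphi(p)\widehat\chi_<(p) \approx \frac{4\pi a\,\hat V(0)}{L^3}\sum_{p\neq 0}\frac{\widehat\chi_<(p)}{|p|^2} \simeq C\rho^{\frac13-\gamma},
\]
not $C\rho^{2/3-2\gamma}$: the correct count is $\int_{|p|\le R}|p|^{-2}\,dp\simeq R$, and your arithmetic effectively replaces $1/|p|^2$ by $\rho^{-1/3+\gamma}$ on the whole ball, which is not a valid pointwise bound near $p=0$. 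For the cross term you in fact arrive at the bound $C\rho^{1/3-\gamma}$ yourself, but then assert this is $o(\rho^{2/3-2\gamma})$, which is false since $\rho^{1/3-\gamma}/\rho^{2/3-2\gamma}=\rho^{-(1/3-\gamma)}\to\infty$. Finally, the piece $8\pi a\,\widehat\chi_<\hat\varphi\widehat\chi_>$ is a \emph{positive} quantity, so it cannot by itself ``reproduce'' the \emph{negative} term $-L^{-3}\sum_{p\neq 0}(8\pi a)^2\widehat\chi_<(p)^2/(2|p|^2)$; moreover your bookkeeping involves $\widehat\chi_<^2\widehat\chi_>$, which does not arise. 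The relevant identity is $\widehat\chi_<\widehat\chi_>=\widehat\chi_<-\widehat\chi_<^2$: the $\widehat\chi_<^2$ contribution gives the announced constant with the correct sign, but the $\widehat\chi_<$ contribution leaves an extra $O(\rho^{1/3-\gamma})$ piece, and only the \emph{combination} with the other two terms cancels it.

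The paper makes the cancellation manifest by grouping with $\hat h_2(p)=(\hat V(p)-(\hat V\ast_\sum\hat\varphi)(p)-8\pi a)\widehat\chi_<(p)$: the left-hand side is rewritten exactly as the sum $L^{-3}\sum_p\hat V(p)\hat\varphi(p)$ (which yields $\hat V(0)-8\pi a+o(1)$), minus $L^{-3}\sum_p\hat h_2(p)\hat\varphi(p)$, minus a bilinear-in-$\widehat\chi_<\hat\varphi$ term, minus $8\pi a L^{-3}\sum_p\hat\varphi(p)\widehat\chi_<(p)^2$. The point is that $|\hat h_2(p)|\le C|p|^2\widehat\chi_<(p)+o(1)$, because the scattering equation forces $\hat V(p)-(\hat V\ast_\sum\hat\varphi)(p)-8\pi a=O(|p|^2)+o(1)$; this kills the $1/|p|^2$ singularity of $\hat\varphi$ and makes that piece $O(\rho^{1-3\gamma})+o(1)$. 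The bilinear term is a genuine product of two $O(\rho^{1/3-\gamma})$ quantities, hence $O(\rho^{2/3-2\gamma})$. The last piece gives the constant $-L^{-3}\sum_{p\neq 0}(8\pi a)^2\widehat\chi_<(p)^2/(2|p|^2)$ up to $O(\rho^{1-3\gamma})$, using $|p^2\hat\varphi(p)-4\pi a|\le C|p|^2$. Without a grouping that exposes the scattering-equation cancellation, a naive term-by-term bound cannot reach the claimed error.
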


\begin{proof}
Using  $\widehat{\chi}_< + \widehat{\chi}_>=1$, we can rewrite the left-hand side of \eqref{eq: constant terms after T1} (multiplied by $L^3$) as
\begin{equation}\label{eq: constant term 8pia}
    \sum_p  \hat{V}(p)\hat{\varphi}(p)  - \sum_p 
   \hat h_2(p) \hat{\varphi}(p)    
    -\sum_p \hat{V}\ast_{_{\sum}} (\widehat{\chi}_<\hat{\varphi})(p)\hat{\varphi}(p)\widehat{\chi}_<(p) - {8\pi a}\sum_p \hat{\varphi}(p)(\widehat{\chi}_<(p))^2
\end{equation}
where we used the definition of $\hat h_2(p) = ( \hat V(p) -  (\hat{V}\ast_{_{\sum}} \hat\varphi) (p) - 8\pi a  ) \widehat\chi_<(p)$ in \eqref{def:h2}. As in \eqref{ca5}, we have $L^{-3} \sum_p \hat V(p) \hat \varphi(p) = \int_{\R^3} V_\infty \varphi_\infty + o(1)_{L\to \infty} = \hat V(0) - 8\pi a + o(1)_{L\to \infty}$. 
At the end of the proof of Proposition~\ref{prop: scatt canc}, we also showed that $|\hat{h}_2(p)| \leq C|p|^2|\widehat{\chi}_<(p)| + o(1)_{L\to \infty}$. 
The zero-energy scattering equation  \eqref{eq: zero energy scatt eq-intro} 
implies that $p^2 \hat\varphi(p)$ is bounded, hence
$$
\left| \frac 1{L^3} \sum_p \hat h_2(p) \hat{\varphi}(p) \right| \leq  \frac{C}{L^3}\sum_p\widehat{\chi}_<(p)  + o(1)_{L\to \infty} \frac 1{L^3} \sum_{0\neq |p|\leq 5 \rho^{1/3-\gamma}} \frac 1{p^2}\leq  C\rho^{1 - 3\gamma} + o(1)_{L\to \infty} .
$$
Using that $\|\hat{V}\|_\infty \leq C$, the third term in \eqref{eq: constant term 8pia} is bounded as 
\begin{equation}\label{eq: est term I}
  \left|\frac{1}{L^6}\sum_{q,p}\hat{V}(p-q)\widehat{\chi}_<(q)\hat{\varphi}(q)\widehat{\chi}_<(p)\hat{\varphi}(p)\right| \leq C\left(\frac{1}{L^3}\sum_{p\neq 0} \frac{\widehat{\chi}_<(p)}{|p|^2}\right)^2 \leq C\rho^{\frac{2}{3} - 2\gamma}.
\end{equation}
%
%
Finally, an argument as at the end of the proof of Proposition~\ref{prop: scatt canc} shows at $|p^2 \hat \varphi(p) - 4\pi a | \leq Cp^2$, hence
$$
\sum_p \hat{\varphi}(p)(\widehat{\chi}_<(p))^2 = 4\pi a \sum_{p\neq 0}\frac{(\widehat{\chi}_<(p))^2}{2|p|^2} + \mathcal{E}
$$
with $| \mathcal{E}| \leq C \sum_p \chi_<(p)^2 \leq C L^3 \rho^{1-3\gamma}$. This completes the proof.
\end{proof}

Proposition~\ref{lem:conjugation-T1} follows from combining \eqref{eq: conjugation eff corr} with \eqref{eq: constant terms after T1} and a simple Cauchy--Schwarz inequality.

\section{Second quasi-bosonic Bogoliubov transformation $T_2$}\label{sec: T2}

In this section we complete the second step discussed in Section~\ref{sec:sub-strategy}. After the conjugation with $T_1$, the new effective correlation operator is 
$ \mathbb{H}_0 + \mathbb{Q}_{2;<}$. By conjugating this operator  with $T_2$, we will extract the full constant of order $\rho^{7/3}$. The main result of this section is a  rigorous version of  \eqref{eq: secon step appr part II}, plus estimates on the conjugations of $\mathbb{H}_0$, $\mathbb{Q}_4$ and $\mathcal{E}_{\mathbb{H}_0}$, which are needed to control the error terms in \eqref{eq: final T1 up bd} from Proposition \ref{lem:conjugation-T1}. 

Recall that $\mathbb{H}_0$ and $\mathbb{Q}_4$ are defined in \eqref{eq: def H-corr}, while $\mathbb{Q}_{2;<}$ and $\mathcal{E}_{\mathbb{H}_0}$ are  given in \eqref{eq: definition Q2<} and \eqref{eq: error kin energy}, respectively.

\begin{proposition}[Conjugation by $T_2$] \label{prop:T2} Let $0 <\gamma < \frac 16$, $0 < \delta \leq 8\gamma$ with $ 2\gamma + \frac{\delta}{16} \leq \frac{1}{3} $. Then 
\begin{align}\label{eq:main-T2}
  \langle T_2\Omega, (\mathbb{H}_0 + \mathbb{Q}_{2;<}) T_2\Omega\rangle &=- \frac{(8\pi a)^2}{L^6}\hspace{-0.1cm}\sum_{p,r,r^\prime \in\Lambda^*}  \frac{\hat{u}_\uparrow(r+p) \hat{u}_\downarrow(r^\prime - p)\hat{v}_\uparrow(r) \hat{v}_\downarrow(r^\prime)}{|r+p|^2 - |r|^2 + |r^\prime - p|^2 - |r^\prime|^2 + 2\varepsilon}   (\widehat{\chi}_<(p))^2 + \mathcal{E}_{T_2}
  \end{align}
with 
\[
  |\mathcal{E}_{T_2}|\leq CL^3 \left( \rho^{\frac{7}{3} -\gamma + \frac{7}{8}\delta} + \rho^{3  - 3\gamma -\frac{3}{16}\delta} \right). 
\]
Moreover, 
\begin{align}
  \langle T_{2;\lambda}\Omega, \mathbb{H}_0 T_{2;\lambda}\Omega\rangle &\leq CL^3\rho^{\frac{7}{3} -2\gamma -\frac{\delta}{16}},\label{eq:H0 T2-intro}\\
   \langle T_{2;\lambda}\Omega, \mathbb{Q}_4 T_{2;\lambda}\Omega\rangle &\leq CL^3\rho^{\frac{8}{3} - 2\gamma},\label{eq:Q4 T2-intro}\\
     |\langle T_2\Omega, \mathcal{E}_{\mathbb{H}_0} T_2\Omega\rangle |  &\leq CL^3 \left( \rho^{\frac{7}{3} +\gamma} +  \rho^{ 3  - 2\gamma -  \frac \delta 4} 
     \right). \label{eq:cH0 T2-intro}
\end{align}
\end{proposition}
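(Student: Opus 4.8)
## Proof plan for Proposition~\ref{prop:T2}

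\textbf{Overall strategy.} The plan is to treat each of the four assertions by a Duhamel expansion in $\lambda$ around $T_{2;\lambda}=\exp(\lambda(B_2-B_2^\ast))$, exactly parallel to the structure of Section~\ref{sec:T1}, but now the algebra is driven by the identity \eqref{tov}, i.e.\ $[\mathbb{H}_0,B_2-B_2^\ast]\simeq-\mathbb{Q}_{2;<}$, which holds because $\hat\eta^\varepsilon_{r,r'}$ was chosen precisely so that $(\lambda_{r,p}+\lambda_{r',-p}+2\varepsilon)\hat\eta^\varepsilon_{r,r'}(p)=8\pi a$ on the relevant momentum set. First I would establish \eqref{eq:H0 T2-intro} and \eqref{eq:Q4 T2-intro}, since the bounds on $\langle T_{2;\lambda}\Omega,\mathbb{H}_0 T_{2;\lambda}\Omega\rangle$ and $\langle T_{2;\lambda}\Omega,\mathbb{Q}_4T_{2;\lambda}\Omega\rangle$ are needed as inputs to control the error terms in \eqref{eq:main-T2} and \eqref{eq:cH0 T2-intro}. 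For \eqref{eq:H0 T2-intro}, write $\partial_\lambda\langle T_{2;\lambda}\Omega,\mathbb{H}_0 T_{2;\lambda}\Omega\rangle=\langle T_{2;\lambda}\Omega,[\mathbb{H}_0,B_2-B_2^\ast]T_{2;\lambda}\Omega\rangle$; the leading piece is (minus) a constant obtained by normal-ordering, of size $\frac{(8\pi a)^2}{L^3}\sum_{p,r,r'}\frac{\hat u_\uparrow(r+p)\hat u_\downarrow(r'-p)\hat v_\uparrow(r)\hat v_\downarrow(r')}{\lambda_{r,p}+\lambda_{r',-p}+2\varepsilon}(\widehat\chi_<(p))^2$, which the momentum counting (using that $p$ is restricted to $|p|\lesssim\rho^{1/3-\gamma}$, that $r,r'$ range over balls of radius $\sim\rho^{1/3}$, and the logarithmic $t$-integral bound from Lemma~\ref{lem:t}) bounds by $CL^3\rho^{7/3-2\gamma-\delta/16}$; the remaining operator terms are controlled using the configuration-space representation \eqref{def:B2}, the bounds on $\|u^t_\sigma\|_2\|v^t_\sigma\|_2$ from \eqref{eq: est int t uv final}, $\|\chi_<\|_1\le C$, and the number-operator estimate \eqref{eq: est N T2psi} fed into Gr\"onwall. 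The bound \eqref{eq:Q4 T2-intro} is analogous but simpler: since $\mathbb{Q}_4\ge0$ is quartic in $u$'s only, conjugating produces terms estimated by $\|u^<_\sigma\|_2$ ($\lesssim\rho^{1/2-3\gamma/2}$) together with $\|V\|_1\le C$ and \eqref{eq: est int t uv final}, and Gr\"onwall gives $CL^3\rho^{8/3-2\gamma}$.

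\textbf{The main identity \eqref{eq:main-T2}.} Here I would use the two-term Duhamel expansion exactly as sketched in Step~2 of Section~\ref{sec:sub-strategy}:
\[
  \langle T_2\Omega,(\mathbb{H}_0+\mathbb{Q}_{2;<})T_2\Omega\rangle
  =\int_0^1 d\lambda\,\lambda\,\langle T_{2;\lambda}\Omega,[\mathbb{Q}_{2;<},B_2-B_2^\ast]T_{2;\lambda}\Omega\rangle
  +\langle T_2\Omega,\mathcal{E}_{\mathbb{H}_0,T_2}T_2\Omega\rangle,
\]
where $\mathcal{E}_{\mathbb{H}_0,T_2}$ collects the discrepancy in \eqref{tov} (i.e.\ $[\mathbb{H}_0,B_2-B_2^\ast]+\mathbb{Q}_{2;<}$, which involves the ``off-shell'' terms where the kinetic-energy denominator does not exactly cancel $8\pi a$, plus terms where a momentum constraint $r\in\mathcal{B}_F^\uparrow$, $r+p\notin\mathcal{B}_F^\uparrow$ fails). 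Normal-ordering $[\mathbb{Q}_{2;<},B_2-B_2^\ast]$ produces, to leading order, exactly the stated constant $-\frac{(8\pi a)^2}{L^6}\sum_{p,r,r'}\frac{\hat u_\uparrow(r+p)\hat u_\downarrow(r'-p)\hat v_\uparrow(r)\hat v_\downarrow(r')}{\lambda_{r,p}+\lambda_{r',-p}+2\varepsilon}(\widehat\chi_<(p))^2$ (the $\int_0^1\lambda\,d\lambda=1/2$ combining with the factor $2$ from the two orderings), and the integration over $\lambda$ of the remaining non-constant operator terms is estimated via the configuration-space form of both $\mathbb{Q}_{2;<}$ (with $\hat u^<_\sigma$) and $B_2$ (via \eqref{def:B2}), the number-operator bounds \eqref{eq: est N T2psi}–\eqref{eq: est number operator T1 final}, and the $t$-integrals \eqref{fac1}–\eqref{fac2}. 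The error $\mathcal{E}_{\mathbb{H}_0,T_2}$ is bounded by the same tools, the key point being that the ``off-shell error'' from replacing $(\lambda_{r,p}+\lambda_{r',-p}+2\varepsilon)\hat\eta^\varepsilon=8\pi a$ is actually exact, so the only genuine errors come from the finitely many quadratic remainder terms after normal-ordering; these carry at least one extra factor of $\|u^t\|_2$ or $\rho^{1/3}$ beyond the constant, yielding the claimed $CL^3(\rho^{7/3-\gamma+7\delta/8}+\rho^{3-3\gamma-3\delta/16})$.

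\textbf{The estimate \eqref{eq:cH0 T2-intro} for $\mathcal{E}_{\mathbb{H}_0}$.} For the last bound I would again Duhamel-expand, $\langle T_2\Omega,\mathcal{E}_{\mathbb{H}_0}T_2\Omega\rangle=\int_0^1 d\lambda\,\langle T_{2;\lambda}\Omega,[\mathcal{E}_{\mathbb{H}_0},B_2-B_2^\ast]T_{2;\lambda}\Omega\rangle$ (the expectation of $\mathcal{E}_{\mathbb{H}_0}$ in $\Omega$ itself vanishes, since $\mathcal{E}_{\mathbb{H}_0}$ as given in \eqref{eq: error kin energy} is purely off-diagonal in particle number). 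The structure of the commutator is the same as the $[\mathcal{E}_{\mathbb{H}_0},B_1]$ computation in Proposition~\ref{pro: H0}, with $\hat\varphi\widehat\chi_>$ in one factor and $\hat\eta^\varepsilon\widehat\chi_<$ in the other; using $|k\cdot(s-s')|$ bounded by $\rho^{1/3-\gamma}\cdot\rho^{1/3}$ on the relevant support, the $L^1$ and $L^\infty$ bounds on $\widetilde\varphi$ from Lemma~\ref{lem: bounds phi}, the configuration-space form \eqref{def:B2}, \eqref{eq: est int t uv final}, $\|\chi_<\|_1\le C$, and the number-operator bound from Proposition~\ref{pro: est number op}, one arrives at $CL^3(\rho^{7/3+\gamma}+\rho^{3-2\gamma-\delta/4})$.

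\textbf{Expected main obstacle.} The delicate part is the bookkeeping of the many quadratic remainder terms generated by normal-ordering the commutators $[\mathbb{Q}_{2;<},B_2-B_2^\ast]$ and $[\mathbb{H}_0,B_2-B_2^\ast]$ — there are six of them, as in Propositions~\ref{pro: H0} and~\ref{pro: Q2} — and in particular verifying that each one, after using the $t$-representation \eqref{def:B2} and the sharp logarithmic gain $\rho^{-\delta/8}$ from Lemma~\ref{lem:t} in place of a naive $\varepsilon^{-1/2}$, stays strictly below $o(\rho^{7/3})$ for the admissible parameter window $\delta\in[16/63,8/9]$, $\gamma=1/9$. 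The competition between the $\rho^{-\delta/8}$-type logarithmic losses (which want $\delta$ small) and the $\rho^{7\delta/8}$-type gains in $\mathcal{E}_{T_2}$ (which want $\delta$ small as well, but other error terms want $\delta$ large) is what forces the stated constraints, and getting every exponent to land on the correct side is the real content of the proof; the rest is a careful but routine repetition of the Cauchy--Schwarz / number-operator machinery already developed for $T_1$.
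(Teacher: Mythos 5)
Your proposal follows the same Duhamel--Gr\"onwall architecture as the paper, and the plan for \eqref{eq:main-T2} and \eqref{eq:cH0 T2-intro} is correct in substance. Two points deserve correction, one substantive.

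First, and most importantly, your description of how \eqref{eq:H0 T2-intro} is obtained is wrong as an algebraic statement: you claim the leading piece of $[\mathbb{H}_0,B_2-B_2^\ast]$ is ``(minus) a constant obtained by normal-ordering.'' In fact $[\mathbb{H}_0,B_2-B_2^\ast]$ is \emph{purely off-diagonal} in particle number (it changes $\mathcal{N}$ by $\pm4$) and contains no constant term at all; normal-ordering has nothing to contract. Thanks to the explicit choice of $\hat\eta^\varepsilon_{r,r'}$, the commutator equals exactly $-\mathbb{Q}_{2;<}+2\varepsilon(B_2+B_2^\ast)$, and the bound \eqref{eq:H0 T2-intro} is obtained from Gr\"onwall using the \emph{state-dependent} estimate $|\langle\psi,\mathbb{Q}_{2;<}\psi\rangle|\le CL^{3/2}\rho^{3(1-\gamma)/2}\|\mathcal{N}^{1/2}\psi\|$ (cf.\ \eqref{eq: est Q2< prop est}), together with the number-operator bound and $\langle\Omega,\mathbb{H}_0\Omega\rangle=0$. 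The constant term you display only appears at the next order, inside $[\mathbb{Q}_{2;<},B_2-B_2^\ast]$, which is indeed where the paper extracts it for \eqref{eq:main-T2}. Relatedly, in your discussion of $\mathcal{E}_{\mathbb{H}_0,T_2}$ for \eqref{eq:main-T2}, the ``off-shell terms where the kinetic-energy denominator does not exactly cancel $8\pi a$'' and ``momentum-constraint failures'' do not occur: the identity $(\lambda_{r,p}+\lambda_{r',-p}+2\varepsilon)\hat\eta^\varepsilon_{r,r'}(p)=8\pi a$ is exact on the momentum set where $B_2$ lives, and the \emph{only} discrepancy in \eqref{tov} is the $2\varepsilon(B_2+B_2^\ast)$ term; you do eventually acknowledge this, but the initial framing is misleading.

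Second, for \eqref{eq:Q4 T2-intro} your plan glosses over a non-trivial device: the paper does not conjugate $\mathbb{Q}_4$ directly by Gr\"onwall. It first splits $\mathbb{Q}_4 = \mathbb{Q}_4^< + \mathbb{Q}_4^>$ using a momentum cut-off at scale $\sim k_F$ (cf.\ \eqref{eq: def zeta Q4}), estimates $\mathbb{Q}_4^<$ directly in terms of $\mathcal{N}$ and $\mathbb{Q}_4^>$ without any Duhamel expansion, and applies Gr\"onwall only to $\mathbb{Q}_4^>$. Without this decomposition the naive conjugation estimate would not close with the required exponent $\rho^{8/3-2\gamma}$. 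Apart from these two points, your proposal is essentially the paper's argument.
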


We will study the individual terms  $\langle T_{2;\lambda} \Omega, \mathbb{H}_0  T_{2;\lambda} \Omega\rangle$, $\langle T_{2;\lambda} \Omega, \mathbb{Q}_4 T_{2;\lambda} \Omega\rangle$, $\langle T_2 \Omega,  \mathcal{E}_{\mathbb{H}_0} T_2 \Omega\rangle$ and $\langle T_{2;\lambda} \Omega,  \mathbb{Q}_{2;<} T_{2;\lambda} \Omega\rangle$ in Sections~\ref{sec: prop est H0}--\ref{sec: main T2 Q2}, and then complete the proof of Proposition \ref{prop:T2} in Section \ref{sec:conc-T2}. In the proof, it will be very helpful to use the configuration space representation of $B_2$ in \eqref{def:B2}.

\subsection{Conjugation of $\mathbb{H}_0$}\label{sec: prop est H0}
In this subsection we investigate $ T_2^\ast \mathbb{H}_0T_2$, and in particular validate \eqref{tov}. 
%
Moreover, we also prove a rough estimate for $\langle T_2\Omega, \mathbb{H}_0T_2\Omega \rangle$, which is helpful to control the corresponding error term  in Proposition \ref{lem:conjugation-T1}.

\begin{proposition}[Conjugation of $\mathbb{H}_0$ by $T_2$]\label{pro: prop est H0 T2} Let $\lambda\in [0,1]$, $0 <\gamma < \frac 16$, $0<\delta \leq 8\gamma$ with $2\gamma + \frac{\delta}{16} \leq \frac{1}{3}$. Under the same assumptions as in Theorem \ref{thm: main up bd},
\begin{align}\label{eq: conj H0 T2}
\partial_\lambda T_{2;\lambda}^\ast \mathbb{H}_0T_{2;\lambda} = - T^\ast_{2;\lambda}({\mathbb{Q}}_{2;<} + \mathfrak{e}_{\mathbb{H}_0} )T_{2;\lambda}   
 \end{align}
where $\mathfrak{e}_{\mathbb{H}_0}$ is such that for any normalized $\psi\in \mathcal{F}_{\mathrm{f}}$
\begin{equation}\label{eq: est err kin T2 psi}
  |\langle \psi, \mathfrak{e}_{\mathbb{H}_0} \psi\rangle| \leq CL^{\frac{3}{2}}\rho^{\frac{3}{2} - \frac{\gamma}{2} + \frac{15}{16}\delta}\|\mathcal{N}^{\frac{1}{2}}\psi\| .
 \end{equation}
Moreover, for any normalized $\psi\in \mathcal{F}_{\mathrm{f}}$
\begin{equation}\label{eq: est err kin T2 psi2} 
  \langle T_{2;\lambda}\psi, \mathbb{H}_0 T_{2;\lambda}\psi\rangle \leq \langle\psi , \mathbb{H}_0 \psi\rangle + CL^{\frac{3}{2}}\rho^{\frac{3}{2}(1-\gamma)}
  \|\mathcal{N}^{\frac{1}{2}}\psi\|.
\end{equation}
\end{proposition}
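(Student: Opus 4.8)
\textbf{Proof proposal for Proposition~\ref{pro: prop est H0 T2}.}

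The plan is to proceed exactly as for the conjugation by $T_1$: write $\partial_\lambda T_{2;\lambda}^\ast \mathbb{H}_0 T_{2;\lambda} = T_{2;\lambda}^\ast [\mathbb{H}_0, B_2 - B_2^\ast] T_{2;\lambda}$ and compute the commutator using the explicit form of $B_2$ in \eqref{eq: def T2 unitary}, $B_2 = L^{-3}\sum_{p,r,r'}\hat\eta_{r,r'}^\varepsilon(p)\widehat\chi_<(p)\, b_{p,r,\uparrow} b_{-p,r',\downarrow}$. Since $\mathbb{H}_0 = \sum_\sigma\sum_k \big||k|^2 - (k_F^\sigma)^2\big|\, \hat a_{k,\sigma}^\ast \hat a_{k,\sigma}$ is diagonal, the commutator $[\mathbb{H}_0, b_{p,r,\uparrow} b_{-p,r',\downarrow}]$ multiplies each summand by the factor $-\big(||r+p|^2-(k_F^\uparrow)^2| + |r|^2 \text{ term} + \dots\big)$; on the support of the coefficients ($r\in\mathcal B_F^\uparrow$, $r+p\notin\mathcal B_F^\uparrow$, and likewise for the $\downarrow$ sector), all four absolute values open up with the correct signs, so the total factor is exactly $-(\lambda_{r,p} + \lambda_{r',-p})$. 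Multiplying by $\hat\eta_{r,r'}^\varepsilon(p) = 8\pi a/(\lambda_{r,p}+\lambda_{r',-p}+2\varepsilon)$ gives $-8\pi a\,\frac{\lambda_{r,p}+\lambda_{r',-p}}{\lambda_{r,p}+\lambda_{r',-p}+2\varepsilon} = -8\pi a + 8\pi a\,\frac{2\varepsilon}{\lambda_{r,p}+\lambda_{r',-p}+2\varepsilon}$. The first term, summed against $\widehat\chi_<(p)$, reconstructs exactly $-\mathbb{Q}_{2;<}$ as defined in \eqref{eq: definition Q2<} (after adding the h.c. and noting $b_{p,r,\uparrow}b_{-p,r',\downarrow}$ summed over $r,r'$ gives $b_{p,\uparrow}b_{-p,\downarrow}$); the second (the $\varepsilon$-remainder) together with all the ``contraction'' terms that arise when $[\mathbb{H}_0, \cdot]$ hits the creation operators inside $b^\ast$ (i.e.\ the non-diagonal pieces in the commutator identity analogous to \eqref{eq: comm H0 T1}) constitute $\mathfrak e_{\mathbb{H}_0}$. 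This establishes \eqref{eq: conj H0 T2}.

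For the estimate \eqref{eq: est err kin T2 psi} I would pass to configuration space via \eqref{def:B2} and the functions $\hat v_\sigma^t,\hat u_\sigma^t$ in \eqref{eq: def ut vt}: the $\varepsilon$-remainder becomes, after the representation $\hat\eta_{r,r'}^\varepsilon(p) = 8\pi a\int_0^\infty dt\, e^{-(\lambda_{r,p}+\lambda_{r',-p}+2\varepsilon)t}$ and using $\varepsilon\, e^{-2\varepsilon t}\le C/t$ combined with $t$-integration against $\|u_\sigma^t\|_2\|v_\sigma^t\|_2$, controllable by Lemma~\ref{lem:t}, in particular \eqref{eq: est int t uv final}, which gives the decisive factor $\rho^{1/3-\gamma/2-\delta/16}$ (the $\varepsilon$ itself contributes an extra $\rho^{2/3+\delta}$ but the logarithmic $t$-integral costs $\rho^{-\delta/8}$-type factors, and bookkeeping these powers is where the exponent $\frac32-\frac\gamma2+\frac{15}{16}\delta$ comes from). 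Each term is then bounded by Cauchy--Schwarz exactly as in Lemma~\ref{lem: bounds b b*} and in the proof of Proposition~\ref{pro: est number op}: one $a_\sigma(u_\cdot^t)$ is paired with $\mathcal N^{1/2}\psi$ via \eqref{eq: from aut to Number op}, the $v$-factors and $\chi_<$-factor are bounded in $L^2$ resp.\ $L^1$, producing the volume factor $L^{3/2}$ and the stated power of $\rho$. The same scheme, now with the diagonal factor $8\pi a$ in place of the $\varepsilon$-remainder (hence no $\log$ loss, but also no $\varepsilon$ gain, giving the cleaner exponent $\frac32(1-\gamma)$), handles the $\mathbb{Q}_{2;<}$ contribution in Duhamel's formula and yields \eqref{eq: est err kin T2 psi2}: write $\langle T_{2;\lambda}\psi,\mathbb{H}_0 T_{2;\lambda}\psi\rangle = \langle\psi,\mathbb{H}_0\psi\rangle - \int_0^\lambda d\lambda' \langle T_{2;\lambda'}\psi, (\mathbb{Q}_{2;<}+\mathfrak e_{\mathbb{H}_0})T_{2;\lambda'}\psi\rangle$ and bound $\mathbb{Q}_{2;<}$ by \eqref{eq: est Q2< prop est} together with the number-operator propagation bound \eqref{eq: est N T2psi} (which after Gr\"onwall is uniform), then integrate in $\lambda'$.

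The main obstacle is the careful treatment of the ``contraction'' terms in the commutator $[\mathbb{H}_0, B_2 - B_2^\ast]$ — the analogues of $\mathrm I_1,\dots,\mathrm I_6$ in \eqref{eq: comm H0 T1} — since now the coefficient $\hat\eta_{r,r'}^\varepsilon(p)$ depends on \emph{all three} momenta, so the splitting trick $k\cdot r = -r^2 + r\cdot(r+k)$ used for $T_1$ does not directly apply and one must instead exploit the $t$-integral representation throughout, with the $t$-dependent kernels $u_\sigma^t, v_\sigma^t$ absorbing the kinetic factors $|k|^2$; keeping track of whether each such term carries the good $\rho^{1/3-\gamma/2-\delta/16}$ factor from \eqref{eq: est int t uv final} or instead only $\rho^{1/3-\delta/8}$ from \eqref{fac1}--\eqref{fac2}, and making sure the worst case still beats the target $o(\rho^{7/3})$ after multiplication by the number-operator bound \eqref{eq: est number operator T1 final}, is the delicate part of the bookkeeping. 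I expect all individual terms to be genuinely smaller than, or comparable to, the $\mathbb{Q}_{2;<}$-type term already analyzed, so that \eqref{eq: est err kin T2 psi} and \eqref{eq: est err kin T2 psi2} follow without new ideas beyond those already deployed for $T_1$.
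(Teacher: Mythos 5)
Your computation of $[\mathbb{H}_0, b_{p,r,\uparrow} b_{-p,r',\downarrow}]$ is correct, and you rightly identify the $-8\pi a$ piece reconstructing $-\mathbb{Q}_{2;<}$ as well as the $\varepsilon$-remainder. However, there is a genuine misconception in your proposal that leads you to overcomplicate both the identity \eqref{eq: conj H0 T2} and the estimate.

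You claim that $\mathfrak{e}_{\mathbb{H}_0}$ consists of ``the $\varepsilon$-remainder together with all the contraction terms that arise when $[\mathbb{H}_0,\cdot]$ hits the creation operators inside $b^\ast$ (the non-diagonal pieces \ldots analogous to \eqref{eq: comm H0 T1}).'' There are no such contraction terms. The identity \eqref{eq: comm H0 T1} in Proposition~\ref{pro: H0} computes $[\mathcal{E}_{\mathbb{H}_0;1}, B_1]$, a commutator of \emph{two quartic} operators, which is why one gets the menagerie of terms $\mathrm{I}_1,\dots,\mathrm{I}_6$. In contrast, $\mathbb{H}_0$ is a \emph{diagonal quadratic}, particle-number-preserving operator: for any creation or annihilation operator $[\mathbb{H}_0, \hat a_{k,\sigma}^{(\ast)}] = \mp E_{k}^\sigma\, \hat a_{k,\sigma}^{(\ast)}$, so its commutator with any monomial of $\hat a,\hat a^\ast$ is pure multiplication by the net energy. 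The same multiplication-by-energy happens for $B_2^\ast$, with the opposite sign, and nothing else. Consequently the $\varepsilon$-remainder you identified is \emph{all} of $\mathfrak{e}_{\mathbb{H}_0}$; in fact it equals exactly $2\varepsilon(B_2 + B_2^\ast)$, because $8\pi a\,\frac{2\varepsilon}{\lambda_{r,p}+\lambda_{r',-p}+2\varepsilon} = 2\varepsilon\,\hat{\eta}^\varepsilon_{r,r'}(p)$.

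This observation also makes the estimate \eqref{eq: est err kin T2 psi} immediate. Once you see $\mathfrak{e}_{\mathbb{H}_0} = 2\varepsilon(B_2+B_2^\ast)$, you simply cite the bound on $|\langle\psi, (B_2+B_2^\ast)\psi\rangle|$ already established in the proof of Proposition~\ref{pro: est number op} (specifically \eqref{eq: est der N T2} combined with \eqref{eq: est int t uv final}, giving $CL^{3/2}\rho^{5/6-\gamma/2-\delta/16}\|\mathcal{N}^{1/2}\psi\|$), and multiply by $2\varepsilon = 2\rho^{2/3+\delta}$; the exponents add up to $\frac32-\frac\gamma2+\frac{15}{16}\delta$. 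Your suggested device ``$\varepsilon\,e^{-2\varepsilon t}\le C/t$'' is not only unnecessary but wrong: inserting a $1/t$ factor into the $t$-integral of $\|u^t\|_2\|v^t\|_2$ produces a divergence at $t=0$, since the integrand does not vanish there. The correct move is to keep the factor $2\varepsilon$ \emph{outside} the $t$-integral, which the exact identity makes manifest. Your outline for \eqref{eq: est err kin T2 psi2} via Duhamel, \eqref{eq: est Q2< prop est}, and Gr\"onwall is in line with the paper.
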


Applying the bounds to $\psi = T_{2;\lambda}\Omega$ and $\psi = \Omega$, respectively, we obtain from \eqref{eq: est number operator T1 final} (with $\lambda=0$)
\begin{equation}\label{eq: prop est H0 T2}
 |\langle T_{2;\lambda}\Omega, \mathfrak{e}_{\mathbb{H}_0} T_{2;\lambda}\Omega\rangle|\leq CL^3\rho^{\frac{7}{3} -\gamma + \frac{7}{8}\delta }, \qquad  \langle T_{2;\lambda}\Omega, \mathbb{H}_0 T_{2;\lambda}\Omega\rangle \leq CL^3\rho^{\frac{7}{3} -2\gamma -\frac{\delta}{16}}  
 .
\end{equation}

\begin{proof}
We have 
\begin{equation}\label{eq: Duhamel H0 T2}
 \partial_\lambda T_{2;\lambda}^\ast \mathbb{H}_0 T_{2;\lambda} =  T_{2;\lambda}^\ast  [\mathbb{H}_0, B_2 - B_2^\ast]T_{2;\lambda^\prime}.
\end{equation}
An explicit computation, using 
%
that for $s+k \notin\mathcal{B}_F^{\uparrow} ,s^\prime-k \notin\mathcal{B}_F^{\downarrow}$,  $s\in \mathcal{B}_F^{\uparrow}$, $s^\prime\in \mathcal{B}_F^{\downarrow}$,  
\[
  -\big(|s+k|^2 - |s|^2 + |s^\prime - k|^2 - |s^\prime|^2\big)\hat{\eta}^\varepsilon_{s,s^\prime}(k) = - 8\pi a  + 2\varepsilon \hat{\eta}^{\varepsilon}_{s,s^\prime}(k)
\]
by the definition of $\hat{\eta}^\varepsilon_{s,s^\prime}$ in \eqref{eq: def eta epsilon}, gives 
$[\mathbb{H}_0, B_2 -B_2^\ast] =  - {\mathbb{Q}}_{2;<}  + 2 \eps (B_2 + B_2^*)$. 
%
We have already estimated the expectation value of $B_2 + B_2^*$  in the proof of Proposition~\ref{pro: est number op}. 
Recalling that $\varepsilon = \rho^{\frac{2}{3} + \delta}$ and applying this bound, we directly obtain \eqref{eq: est err kin T2 psi}. 
%
To prove \eqref{eq: est err kin T2 psi2} we use Gr\"onwall's Lemma together with  the bound  \eqref{eq: est Q2< prop est} on ${\mathbb{Q}}_{2;<} $. 
\end{proof}

\subsection{Conjugation of $\mathbb{Q}_4$}\label{sec: prop Q4 T2}

In this subsection we conjugate $\mathbb{Q}_4$ with respect to $T_{2;\lambda}$. We will show that   $\langle T_{2;\lambda}\Omega, \mathbb{Q}_4 T_{2;\lambda}\Omega\rangle$ does not contribute to the energy to order $\rho^{7/3}$. 

\begin{proposition}[Conjugation of $\mathbb{Q}_4$ by $T_2$]\label{pro: prop est Q4 T2} Let $\lambda\in [0,1]$, $0< \gamma < \frac 16$, $0< \delta\leq 8\gamma$ with $ 2\gamma + \frac{\delta}{16}\leq \frac{1}{3}$. Under the same assumptions as in Theorem \ref{thm: main up bd}
\begin{equation}\label{t2q4}
  \langle T_{2;\lambda}\Omega, \mathbb{Q}_4 T_{2;\lambda}\Omega\rangle \leq CL^3\rho^{\frac{8}{3} - 2\gamma}.
\end{equation}
\end{proposition}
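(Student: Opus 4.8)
The strategy is to differentiate $\langle T_{2;\lambda}\Omega, \mathbb{Q}_4 T_{2;\lambda}\Omega\rangle$ in $\lambda$ and apply a Grönwall argument, exactly as in the proofs of Propositions~\ref{pro: prop est H0 T2} and~\ref{pro: est number op}. First I would write
\[
  \partial_\lambda T_{2;\lambda}^\ast \mathbb{Q}_4 T_{2;\lambda} = T_{2;\lambda}^\ast [\mathbb{Q}_4, B_2 - B_2^\ast] T_{2;\lambda} = T_{2;\lambda}^\ast\big([\mathbb{Q}_4, B_2] + \mathrm{h.c.}\big) T_{2;\lambda},
\]
and then estimate the expectation of $[\mathbb{Q}_4,B_2]$ in a general state. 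The commutator has exactly the same algebraic structure as the commutator $[\mathbb{Q}_4, B_1]$ computed in the proof of Proposition~\ref{pro: Q4}, the only difference being that the kernel $\hat\varphi(p)\widehat\chi_>(p)$ of $B_1$ is replaced by the $t$-integrated kernel coming from \eqref{def:B2}. Using the configuration-space representation \eqref{def:B2} of $B_2$, the leading term $\mathrm{I}_3$ (which for $T_1$ produced $\mathbb{T}_2$) now takes the form of a bounded operator $\propto \int_0^\infty dt\, e^{-2t\varepsilon}\int dxdy\,\chi_<(x-y)\,(\text{product of }u^t,v^t\text{ kernels})\cdot a^\ast a^\ast a a$ acting on $T_{2;\lambda}\Omega$, which can be bounded using \eqref{bound:a}, $\|\chi_<\|_1\le C$, and Lemma~\ref{lem:t}; the remaining terms (analogues of $\mathrm{I}_1,\mathrm{I}_2$ in Proposition~\ref{pro: Q4}) are controlled by $\|\mathbb{Q}_4^{1/2}\psi\|\|\mathcal{N}^{1/2}\psi\|$ as in \eqref{eq: est err Q4 after T1}, again with $\hat\varphi\widehat\chi_>$ traded for the $t$-integral.

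Concretely, I expect to obtain a bound of the schematic form
\[
  |\partial_\lambda \langle T_{2;\lambda}\Omega, \mathbb{Q}_4 T_{2;\lambda}\Omega\rangle| \leq C\rho^{\alpha}\|\mathbb{Q}_4^{1/2}T_{2;\lambda}\Omega\|\,\|\mathcal{N}^{1/2}T_{2;\lambda}\Omega\| + C\rho^{\beta}L^{3/2}\|\mathbb{Q}_4^{1/2}T_{2;\lambda}\Omega\| ,
\]
where the powers $\alpha,\beta$ come from combining $\|v_\sigma\|_2\le C\rho^{1/2}$, the $t$-integral estimates $\int dt\, e^{-2t\varepsilon}\|u^t_\sigma\|_2\|v^t_\sigma\|_2 \le C\rho^{1/3-\gamma/2-\delta/16}$ from \eqref{eq: est int t uv final}, and a factor $\|\chi_<\|_1\le C$. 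Applying \eqref{eq: est number operator T1 final} (with $\lambda=0$) to bound $\langle T_{2;\lambda}\Omega,\mathcal{N}T_{2;\lambda}\Omega\rangle\le CL^3\rho^{5/3-\gamma-\delta/8}$ turns the right-hand side into $C(\cdots)\big(\langle T_{2;\lambda}\Omega,\mathbb{Q}_4 T_{2;\lambda}\Omega\rangle + L^3\rho^{?}\big)$ after a Cauchy--Schwarz step, and Grönwall's Lemma then yields $\langle T_{2;\lambda}\Omega,\mathbb{Q}_4 T_{2;\lambda}\Omega\rangle \le CL^3\rho^{8/3-2\gamma}$ provided the exponents match up; one must check, using the parameter constraints $0<\delta\le 8\gamma$, $2\gamma+\delta/16\le 1/3$, that all error exponents are at least $8/3-2\gamma$ (this is where the $8/9$ factor in Lemma~\ref{lem:t} is designed to make the $\delta$-dependent losses negligible).

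The main obstacle is bookkeeping rather than conceptual: one has to carefully retrace the six-term decomposition of $[\mathbb{Q}_4,B_2]$ — including the $\delta$-type contractions that generate lower-particle-number terms — and verify that with the sharp projections $\hat u,\hat v$ and the $t$-representation every term is either of order $L^3\rho^{8/3-2\gamma}$ or smaller, or is controllable by $\mathbb{Q}_4$ itself (so that it can be absorbed on the left after Grönwall). The delicate point is that the "main" term, which for $T_1$ became the genuine contribution $\mathbb{T}_2$, must here be shown to be \emph{negligible}: because it is tested in the vacuum-like state $T_{2;\lambda}\Omega$ rather than producing a constant, its contribution is suppressed by the smallness of $\langle T_{2;\lambda}\Omega,\mathcal{N}T_{2;\lambda}\Omega\rangle$ together with the $\chi_<$ support restriction $|p|\lesssim\rho^{1/3-\gamma}$, and one needs the $t$-integral bounds to quantify this. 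I would present the $\mathrm{I}_1$-type estimate in detail and remark that the others follow by the same computation, mirroring the style of Proposition~\ref{pro: Q4}.
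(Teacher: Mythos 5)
Your plan — differentiate in $\lambda$, trade $\hat\varphi\widehat\chi_>$ for the $t$-integral kernel of $B_2$, estimate the commutator terms, and close via Gr\"onwall — is the right framework, and the algebraic bookkeeping of $[\mathbb{Q}_4,B_2]$ is indeed parallel to $[\mathbb{Q}_4,B_1]$. But there is a genuine gap: applying this directly to the full $\mathbb{Q}_4$ does not work, because the $t$-integrals in the estimates of $\mathrm{I}_1,\mathrm{I}_2,\mathrm{I}_3$ fail to converge.

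The problem is in the exponential factors. After bounding $\|a_\sigma(v_z^t)\| \le \rho^{1/2}e^{t(k_F^\sigma)^2}$, every commutator term carries a factor $e^{t(k_F^\uparrow)^2}e^{t(k_F^\downarrow)^2}$, which must be killed by the decay $e^{-t|k|^2}$ built into the $u^t$-kernels. That decay comes from the support of the excitation-momentum factors $\hat u_\sigma(\cdot)$ appearing in $\mathbb{Q}_4$. But $\hat u_\sigma(k)$ only enforces $|k|>k_F^\sigma$, so for $k$ in the shell $k_F^\sigma < |k| < ((k_F^\uparrow)^2+(k_F^\downarrow)^2)^{1/2}$ the exponent $|k|^2 - (k_F^\uparrow)^2 - (k_F^\downarrow)^2$ is negative, and $\int_0^\infty dt\, e^{-2t(\,|k|^2 - (k_F^\uparrow)^2 - (k_F^\downarrow)^2 + \varepsilon\,)}$ diverges for $\varepsilon \to 0$ (and is badly unbounded for the relevant $\varepsilon = \rho^{2/3+\delta}$). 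This divergence is not a bookkeeping nuisance but a real obstruction: the ``concrete bound'' you propose never materializes for the low-momentum portion of $\mathbb{Q}_4$.

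The missing idea is a preliminary splitting $\mathbb{Q}_4 = \mathbb{Q}_4^> + \mathbb{Q}_4^<$, with a smooth cut-off at $\sim 5\max\{k_F^\uparrow,k_F^\downarrow\}$ on each excitation leg. The low part $\mathbb{Q}_4^<$ does not require Gr\"onwall at all: it involves only $\hat u^<_\sigma$ supported on a small ball, so $\|u^<_\sigma\|_2 \le C\rho^{1/2}$ and a crude Cauchy--Schwarz gives $|\langle\xi,\mathbb{Q}_4^<\xi\rangle| \le C\rho\langle\xi,\mathcal{N}\xi\rangle + C\langle\xi,\mathbb{Q}_4^>\xi\rangle$. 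Gr\"onwall is then applied to $\mathbb{Q}_4^>$ only; there the constraints $|r|\le k_F^\uparrow$, $|r'|\le k_F^\downarrow$, $|p|\le 5\rho^{1/3-\gamma}$ combined with $\hat u^>_\sigma$ force all relevant excitation momenta into the shell $3\max\{k_F^\uparrow,k_F^\downarrow\}\le|k|\le 6\rho^{1/3-\gamma}$, which makes $|k|^2 - (k_F^\uparrow)^2 - (k_F^\downarrow)^2$ uniformly positive and renders the $t$-integral finite, cf.\ \eqref{int:eta}, with the precise value $\lesssim \rho^{1/3-\gamma}$ that produces the advertised $\rho^{8/3-2\gamma}$. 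Relatedly, your claim that the ``main'' term is small ``because of the smallness of $\langle T_{2;\lambda}\Omega,\mathcal{N}T_{2;\lambda}\Omega\rangle$'' is not the mechanism actually used: the paper bounds $\mathrm{I}_3$ by $CL^{3/2}\rho^{4/3-\gamma}\|(\mathbb{Q}_4^>)^{1/2}\xi_\lambda\|$, i.e.\ the self-absorbing $\mathbb{Q}_4^>$ expectation, and this step again relies entirely on the momentum restriction to make the $t$-integral converge.
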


\begin{proof}
For the proof we find it convenient to split the operator $\mathbb{Q}_4$
 according to the support of the momenta of each of the $\hat{u}_\sigma$. To make this precise, we introduce smooth functions $\hat{\zeta}^<, \hat{\zeta}^>: \R^3\to [0,1]$ with $\hat\zeta^< + \hat \zeta^> = 1$  such that
\begin{equation}\label{eq: def zeta Q4}
  \hat\zeta^<(k) := \begin{cases} 1 &\mbox{if}\,\,\, |k| \leq 5 \max\{k_F^\uparrow, k_F^\downarrow\}, \\
                                    0 &\mbox{if}\,\,\, |k| > 6 \max\{k_F^\uparrow, k_F^\downarrow\}.
                    \end{cases} 
\end{equation}
We denote $\zeta^{<}:\Lambda\to \R$ the periodic function with Fourier coefficients $\hat{\zeta}^< (k)$ 
for $k\in (2\pi/L)\mathbb{Z}^3$, and also $ \hat{u}^<_\sigma(k) = \hat{u}_\sigma(k) \hat\zeta^<_\uparrow(k)$, 
$ \hat{u}^>_\sigma(k) = \hat{u}_\sigma(k) \hat\zeta^>_\uparrow(k)$. Accordingly, we split
\[
  \mathbb{Q}_4 = \mathbb{Q}_4^> + \mathbb{Q}_4^<,
\]
where 
$$
  \mathbb{Q}_4^> 
 = \sum_{\sigma,\sigma^\prime} \int dxdy\,  V(x-y) a^\ast_\sigma(u_x^>)a^\ast_{\sigma^\prime}(u_y^>)a_{\sigma^\prime}(u_y^>)a_\sigma(u_x^>)
$$
and  ${\mathbb{Q}}_4^<$ contains all the other possible terms. Specifically, 
$
{\mathbb{Q}}_4^< = \sum_{j=0}^3 {\mathbb{Q}}_{4;j}^<  
$, 
where ${\mathbb{Q}}_{4;j}^<$ contains all the terms that have  $j$ functions $u^>_\sigma$ and $4-j$ functions $u^<_\sigma$. 

Let $\xi_\lambda =  T_{2;\lambda} \Omega$. Using that $\|a_\sigma(u_x^<)\| \leq C\rho^{1/2}$, it is easy to see that 
\begin{equation}\label{eq: est Q4<2 and Q4<3}
  |\langle \xi_\lambda , ({\mathbb{Q}}_{4;0}^< + {\mathbb{Q}}_{4;1}^< + \mathbb{Q}_{4;2}^{<,a}) \xi_\lambda \rangle| \leq C\rho\langle \xi_\lambda, \mathcal{N} \xi_\lambda\rangle
\end{equation}
where $\mathbb{Q}_{4;2}^{<,a}$ denotes the part of $\mathbb{Q}_{4;2}^{<}$ with one $a_\sigma(u^>)$ and one $a^\ast_\sigma(u^>)$. The other contributions to $\mathbb{Q}_4^<$ can be bounded via the  Cauchy--Schwarz inequality as
\begin{equation}\label{eq: est Q4<1}
  | \langle \xi_\lambda, ( {\mathbb{Q}}_{4;2}^{<,b} +  {\mathbb{Q}}_{4;3}^<) \xi_\lambda \rangle| \leq C\rho^{\frac{1}{2}}\|\mathcal{N}^{\frac{1}{2}}\xi_\lambda \|\|(\mathbb{Q}_4^>)^{\frac{1}{2}}\xi_\lambda\|.
\end{equation}
In particular, from \eqref{eq: est Q4<2 and Q4<3} and \eqref{eq: est Q4<1} we  find that 
\begin{equation}\label{eq: est Q4 tilde}
  |\langle \xi_\lambda, {\mathbb{Q}}_4^< \xi_\lambda \rangle| \leq 
  C\rho\langle \xi_\lambda, \mathcal{N}\xi_\lambda\rangle + C\langle\xi_\lambda, \mathbb{Q}_4^> \xi_\lambda\rangle.
\end{equation}

We are thus left with  estimating ${\mathbb{Q}}_4^>$, and we will use Gr\"onwall's Lemma. We start by computing 
\[
 \partial_\lambda T_{2;\lambda}^\ast \mathbb{Q}_4^>T_{2;\lambda} =   T_{2;\lambda}^\ast [\mathbb{Q}_4^>, B_2 - B_2^\ast]T_{2;\lambda}. 
\]
The terms we find from the commutator above are of the same type as the ones in Proposition \ref{pro: Q4}. 
Explicitly, 
$  [\mathbb{Q}_4^>, B_2-B_2^\ast] = \mathrm{I}_1 + \mathrm{I}_2 + \mathrm{I}_3 + \mathrm{h.c.}$ 
with
\begin{multline}\label{eq: def I1 Q4 T2}
   \mathrm{I}_1 =  \frac{8\pi a}{L^6}\sum_{\sigma}\sum_{k,p,r,r^\prime, s} \hat{V}(k)\widehat{\chi}_<(p) \hat{\eta}_{r,r^\prime}^{\varepsilon}(p) \hat{u}_\uparrow^>(r+p)\hat{u}_\downarrow(r^\prime - p)\hat{u}_\uparrow^>(r+p-k)\hat{u}_{\sigma}^>(s -k)\hat{u}_{\sigma}^>(s)\hat{v}_\uparrow(r)\hat{v}_\downarrow(r^\prime)\times 
    \\
    \times \hat{a}^\ast_{s - k, \sigma}\hat{a}_{r^\prime - p, \downarrow}\hat{a}_{-r,\uparrow}\hat{a}_{-r^\prime, \downarrow}\hat{a}_{s, \sigma}\hat{a}_{r+p-k,\uparrow},
\end{multline}
\begin{multline*}
   \mathrm{I}_2 =  \frac{1}{L^6}\sum_{\sigma}\sum_{k,p,r,r^\prime, s} \hat{V}(k)\widehat{\chi}_<(p) \hat{\eta}_{r,r^\prime}^{\varepsilon}(p) \hat{u}_\uparrow(r+p)\hat{u}_\downarrow^>(r^\prime - p)\hat{u}_\downarrow^>(r^\prime -p-k)\hat{u}_{\sigma}^>(s -k)\hat{u}_{\sigma}^>(s)\hat{v}_\uparrow(r)\hat{v}_\downarrow(r^\prime)\times 
    \\
    \times \hat{a}^\ast_{s - k, \sigma}\hat{a}_{r + p, \uparrow}\hat{a}_{-r^\prime,\downarrow}\hat{a}_{-r, \uparrow}\hat{a}_{s, \sigma}\hat{a}_{r^\prime -p-k,\downarrow},
\end{multline*}
\begin{multline}\label{eq: I3 Q4 T2}
  \mathrm{I}_3 = - \frac{1}{L^6}\sum_{k,q, r,r^\prime}\hat{V}(k)\widehat{\chi}_<(p) \hat{\eta}_{r,r^\prime}^{\varepsilon}(p) \hat{u}_\uparrow^>(r+p)\hat{u}_\downarrow^>(r^\prime - p) \hat{u}_\uparrow^>(r+p+k)\hat{u}_\downarrow^>(r^\prime -p-k)\hat{v}_\uparrow(r)\hat{v}_\downarrow(r^\prime)\times
  \\
   \times \hat{a}_{r+p+k, \uparrow}\hat{a}_{-r,\uparrow}\hat{a}_{r^\prime -p-k, \downarrow}\hat{a}_{-r^\prime, \downarrow}.
 \end{multline} 
 In what follows we will 
 often use that $\|V\|_1 \leq C$, the bounds in Lemma \ref{lem:cut-off-chi} and Lemma \ref{lem: zeta t L1} and that 
 \begin{equation}\label{eq: est avt Q4}
  \|a_\sigma(v^t_x)\| = \|{v}^t_\sigma\|_2 \leq \rho^{\frac{1}{2}}e^{t(k_F^\sigma)^2}
 \end{equation}
 with $v^t_\sigma$ defined  in \eqref{eq: def ut vt}.
 
We start by considering $\mathrm{I}_1$. 
Since  $\hat{u}^>_\uparrow(r+p)= \hat{\zeta}^>(r+p) \hat{u}_\uparrow (r+p)= \hat\zeta^>(r+p)$, the definition of $\hat{\zeta}^>$ implies that $|r+p| \geq 5\max\{k_F^\uparrow,k_F^\downarrow\}$, and hence 
$|p-r'| \geq 3 \max\{k_F^\downarrow,k_F^\uparrow\}$ since $|r| \leq k_F^\uparrow$ and $|r'|\leq k_F^\downarrow$. 
Moreover, also $|r^\prime - p| \leq 6\rho^{1/3 - \gamma}$ and hence we can replace $\hat{u}_\downarrow(r^\prime - p)$ in \eqref{eq: def I1 Q4 T2}  by
$\hat \eta(r'-p)$ with $\hat \eta$ the characteristic function of the set $3\max\{k_F^\uparrow,k_F^\downarrow\} \leq |k| \leq 6\rho^{1/3 - \gamma}$. Proceeding then as in Remark 
\ref{rem: T1 and T2 configuration space} we can rewrite $I_1$ in configuration space as
\begin{equation}\label{eq: def It Q4>}
   {\mathrm{I}}_1  
= 8\pi a \sum_\sigma\int_0^{\infty} dt\, e^{-2t\varepsilon}\int dxdydzdz^\prime\, V(x-y)\chi_<(z-z^\prime)\zeta^{t}(z;y)\,  a^\ast_{\sigma}(u_y^>) a_\uparrow(v_z^t)a_{\downarrow}(\eta_{z^\prime}^t)a_\downarrow(v_{z^\prime}^t)a_\downarrow (u_x^>)a_{\sigma}(u_y^>)
\end{equation}
with 
\begin{equation}\label{eq: def u>t}
 \zeta^{t} (z;y) := \frac{1}{L^3}\sum_k \zeta^> (k) e^{-t|k|^2} e^{ik\cdot (z-y)},\qquad \eta^t(x;y) = \frac{1}{L^3}\sum_k \hat{\eta}(k) e^{-t|k|^2} e^{ik\cdot (x-y)}.
\end{equation}
We shall write the right-hand side of \eqref{eq: def It Q4>} as 
$8\pi a \int_0^{\infty} dt \, e^{-2t\varepsilon}\, \mathrm{I}_1^t$. With the aid of \eqref{eq: est avt Q4}, Lemma \ref{lem: zeta t L1} and the Cauchy--Schwarz inequality, we find
\begin{align*}
  |\langle \xi_\lambda, {\mathrm{I}}^t_1\xi_\lambda\rangle| &\leq C\sum_\sigma\int dxdydzdz^\prime\, V(x-y)|\chi_<(z-z^\prime)||\zeta^{t}(z;y)| \|\hat{v}^t_\downarrow\|_2\|\hat{\eta}^t\|_2\|\hat{v}_{\uparrow}^t\|_2\| a_{\sigma}(u_y^>)\xi_{\lambda}\| \| a_\downarrow (u_x^>)a_{\sigma}(u_y^>)\xi_{\lambda}\|
  \\
  &\leq  C\|\chi_<\|_1 \|\zeta^{t}\|_1 \|\hat{v}^t_\uparrow\|_2\|\hat{v}^t_\downarrow\|_2 \|\hat{\eta}^t\|_2 \sum_\sigma \int dxdy\, V(x-y)\|a_{\sigma}(u_y^>)\xi_{\lambda}\|\|a_\downarrow(u_x^>)a_{\sigma}(u_y^>)\xi_{\lambda}\|
  \\
  &\leq  C\rho^{\frac{1}{2}}e^{t(k_F^\uparrow)^2}\|\hat{v}^t_\downarrow\|_2\|\hat{\eta}^t\|_2 \|\mathcal{N}^{\frac{1}{2}}\xi_{\lambda}\|\|(\mathbb{Q}_4^>)^{\frac{1}{2}}\xi_{\lambda}\|. 
\end{align*}
For the integration over $t$, we can use the 
Cauchy--Schwarz's inequality to get 
\[
  \int_0^\infty dt\, e^{-2t\varepsilon} e^{t(k_F^\uparrow)^2} \|\hat{\eta}^t\|_2 \|\hat{v}_\downarrow^t\|_2 \leq \left( \int_0^\infty dt\, e^{-2t\varepsilon} e^{2t(k_F^\uparrow)^2} e^{2t(k_F^\downarrow)^2}\|\hat{\eta}_\downarrow^t\|_2^2 \right)^{\frac{1}{2}}\left(\int_0^\infty dt\, e^{-2t\varepsilon} e^{-2t(k_F^\downarrow)^2}  \|\hat{v}_\downarrow^t\|_2^2\right)^{\frac{1}{2}}.
\]
The second  integral above is bounded   in \eqref{fac2}. The first equals
\begin{equation}\label{int:eta}
\frac 12  \frac{1}{L^3}\sum_{3\max\{k_F^\uparrow,k_F^\downarrow\} \leq |k| \leq 6\rho^{1/3 - \gamma}} \frac{1}{ |k|^2 - (k_F^\downarrow)^2 - (k_F^\uparrow)^2   +\varepsilon} \leq 
C\rho^{\frac{1}{3} -\gamma}.
\end{equation}
%
In combination, we have thus shown that 
\begin{equation}\label{eq: est I1 Q4 T2}
  |\langle \xi_\lambda, {\mathrm{I}}_1 \xi_\lambda\rangle| \leq  C\rho^{\frac{5}{6}  -\frac{\gamma}{2} - \frac{\delta}{16}}\|\mathcal{N}^{\frac{1}{2}}\xi_\lambda\| \|(\mathbb{Q}_4^>)^{\frac{1}{2}}\xi_\lambda\|\leq  C \rho^{\frac{5}{3} - \gamma - \frac{\delta}{8}}\langle \xi_\lambda, \mathcal{N}\xi_\lambda \rangle + C\langle \xi_\lambda, \mathbb{Q}_4^> \xi_\lambda \rangle. 
\end{equation}
The term $\mathrm{I}_2$ can be bounded in the same way.

To conclude the analysis of $\mathbb{Q}_4^>$, we now consider $\mathrm{I}_3$ in \eqref{eq: I3 Q4 T2}. Arguing as above, 
we can replace $\hat{u}^>_\uparrow(r+p)\hat{u}^>_\downarrow(r^\prime - p)$ 
by $\hat{\eta}(r+p)\hat{\eta}(r^\prime-p)$. 
In configuration space, we then obtain
\[
 {\mathrm{I}}_3  = 8\pi a\int_0^{\infty}dt\, e^{-2t\varepsilon}\int\, dxdydzdz^\prime\, V(x-y)\chi_<(z-z^\prime) \eta^{t}(z;x) \eta^{t}(z^\prime;y)\, a_\uparrow(u_x^>)a_\uparrow(v^t_z)a_\downarrow(u_y^>)a_\downarrow(v^t_{z^\prime})  
\]
with $\eta^t$ as in \eqref{eq: def u>t}.
Writing $\mathrm{I}_3 = \int_0^{\infty}dt\, e^{-2t\varepsilon} \, \mathrm{I}_3^t$ and using \eqref{eq: est avt Q4}, we can  estimate 
$$
  |\langle \xi_\lambda, \mathrm{I}_3^t\xi_\lambda\rangle|  
 \leq C\rho e^{t(k_F^\uparrow)^2}e^{t(k_F^\downarrow)^2} \int dxdydzdz^\prime V(x-y)|\chi_<(z-z^\prime)| |\eta^t(z;x)||\eta^t(z^\prime;y)| \|a_\uparrow(u^>_x)a_\downarrow(u^>_y)\xi_{\lambda}\| .
$$
With
\[
  \left|\int dzdz^\prime \chi_<(z-z^\prime) \eta^{t}(z;x)\eta^{t}(z^\prime;y)\right|\leq C\|\chi_<\|_1 \|\eta^{t}\|_2^2\leq C\|\eta^{t}\|_2^2
\]
and the Cauchy--Schwarz inequality  we then get
\[
  |\langle \xi_\lambda, \mathrm{I}_3\xi_\lambda\rangle| 
  \leq CL^{\frac{3}{2}}\rho \left(\int_0^{\infty} dt\, e^{-2t\varepsilon} e^{t(k_F^\uparrow)^2} e^{t(k_F^\downarrow)^2}  \|\hat{\eta}^{t}\|_2^2 \right) \|(\mathbb{Q}_4^>)^{\frac{1}{2}}\xi_{\lambda}\|.
\]
The integral is again bounded as in \eqref{int:eta}. Inserting this bound, we obtain
%
\begin{equation}\label{eq: est term II}
  |\langle \xi_\lambda, \mathrm{I}_3 \xi_\lambda \rangle| \leq CL^{\frac{3}{2}}\rho^{ \frac{4}{3} - \gamma}\|(\mathbb{Q}_4^>)^{\frac{1}{2}}\xi_{\lambda}\| \leq CL^3 \rho^{ \frac{8}{3} - 2\gamma} + \langle \xi_\lambda, \mathbb{Q}_4^> \xi_\lambda \rangle.
\end{equation}
Combining \eqref{eq: est I1 Q4 T2} and \eqref{eq: est term II}, we get
\begin{align*}
  |\partial_\lambda\langle T_{2;\lambda}\Omega, \mathbb{Q}_4^> T_{2;\lambda}\Omega\rangle|  &\leq CL^3\rho^{\frac{8}{3} - 2\gamma} + C\rho^{\frac{5}{3} -\gamma - \frac{\delta}{8}}\langle T_{2;\lambda}\Omega, \mathcal{N} T_{2;\lambda}\Omega\rangle  + C\langle T_{2;\lambda}\Omega, \mathbb{Q}_4^>  T_{2;\lambda}\Omega \rangle
\\
&\leq 
CL^3 \rho^{\frac{8}{3} -2\gamma} + C\langle T_{2;\lambda}\Omega, \mathbb{Q}_4^> T_{2;\lambda}\Omega \rangle,
\end{align*}
where we used the bound on the number operator $\mathcal{N}$ from  Proposition \ref{pro: est number op} (Eq.~\eqref{eq: est number operator T1 final} for $\lambda=0$) and the constraints $0<\gamma<1/3$ and $\delta \leq 8\gamma$ in the last estimate. Therefore, by Gr\"onwall's Lemma, for any $\lambda\in [0,1]$, we have,
\[
  |\langle T_{2;\lambda}\Omega, \mathbb{Q}_4^> T_{2;\lambda}\Omega\rangle | \leq  CL^3\rho^{\frac{8}{3} - 2\gamma}.
\]
Inserting this estimate in \eqref{eq: est Q4 tilde} and using again the bound for the number operator from Proposition \ref{pro: est number op}, we  find \eqref{t2q4}.  
\end{proof}


\subsection{Conjugation of $\mathcal{E}_{\mathbb{H}_0}$}\label{sec: error kinetic T2}
The purpose of this subsection is to show that $  \langle T_2\Omega, \mathcal{E}_{\mathbb{H}_0} T_2\Omega\rangle = L^3 o(\rho^{\frac{7}{3}}) $, with $\mathcal{E}_{\mathbb{H}_0}$  defined in \eqref{eq: error kin energy} 
the kinetic error term coming from Proposition \ref{pro: H0} from the conjugation of $\mathbb{H}_0$ by $T_1$. 

\begin{proposition}[Final estimate for $\mathcal{E}_{\mathbb{H}_0}$]\label{pro: estimate EH0 fin} Let $0< \gamma < \frac 16$, $0<\delta \leq 8\gamma$ with $ 2\gamma + \frac{\delta}{16} \leq \frac{1}{3} $. Under the same assumptions of Theorem \ref{thm: main up bd}
\begin{equation}\label{eq: est err kin T2Omega}
  |\langle T_2\Omega, \mathcal{E}_{\mathbb{H}_0} T_2\Omega\rangle |   \leq CL^3 \left( \rho^{\frac{7}{3} +\gamma} + \rho^{3 - 2\gamma-  \frac{\delta}{4} }  
   \right).
\end{equation}
\end{proposition}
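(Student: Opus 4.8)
The plan is to estimate $\langle T_2\Omega, \mathcal{E}_{\mathbb{H}_0} T_2\Omega\rangle$ by first using Duhamel's formula in $\lambda$ for the conjugation by $T_{2;\lambda}$, writing
\[
\langle T_2\Omega, \mathcal{E}_{\mathbb{H}_0} T_2\Omega\rangle = \langle \Omega, \mathcal{E}_{\mathbb{H}_0}\Omega\rangle + \int_0^1 d\lambda\, \langle T_{2;\lambda}\Omega, [\mathcal{E}_{\mathbb{H}_0}, B_2 - B_2^\ast] T_{2;\lambda}\Omega\rangle .
\]
Since $\mathcal{E}_{\mathbb{H}_0}$ is quadratic in the $b,b^\ast$'s (in the $b_{k,s,\sigma}$ variables) and annihilates the vacuum, the first term vanishes; more importantly, the naive bound obtained by expanding $\mathcal{E}_{\mathbb{H}_0}$ directly in the state $T_2\Omega$ (using only $\|\mathcal N^{1/2}T_2\Omega\|$ and $\|\mathbb H_0^{1/2}T_2\Omega\|$) is not good enough to reach $o(\rho^{7/3})$, so the commutator structure has to be exploited. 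The key point is that $\mathcal E_{\mathbb H_0}$ carries a factor $k\cdot(s-s')\,\hat\varphi(k)\widehat\chi_>(k)$, and the cutoff $\widehat\chi_>(k)$ forces $|k| \gtrsim \rho^{1/3-\gamma}$ so that $p^2\hat\varphi(p)$ bounded gives $|\hat\varphi(k)\widehat\chi_>(k)| \lesssim \rho^{-2/3+2\gamma}$, while $|k\cdot(s-s')| \lesssim |k|\rho^{1/3}$; combining these and the configuration-space factorization of $B_2$ in \eqref{def:B2}, together with the $t$-integral bounds of Lemma~\ref{lem:t} and the number-operator estimate \eqref{eq: est number operator T1 final}, should produce the claimed powers.

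\textbf{Key steps in order.} First I would, as in the proof of Proposition~\ref{pro: H0}, decompose $\mathcal E_{\mathbb H_0} = \mathcal E_{\mathbb H_0;1} + (\text{term with }k\cdot s')$ and treat only $\mathcal E_{\mathbb H_0;1}$, writing $k = (k+s) - s$ to integrate by parts; the piece where the derivative hits a $v$ gains a factor $\rho^{1/3}$, and the piece where it hits $u$ can be controlled with $\mathbb H_0$. Second, I would insert the configuration-space representation \eqref{def:B2} of $B_2$ and compute $[\mathcal E_{\mathbb H_0;1}, B_2]$; the commutator produces contracted terms (Kronecker deltas) and genuinely quartic-in-excitation remainders, exactly parallel to the six terms $\mathrm I_1,\dots,\mathrm I_6$ in Proposition~\ref{pro: H0}, but now with $\hat u_\sigma,\hat v_\sigma$ replaced by $\hat u^t_\sigma,\hat v^t_\sigma$ and an extra $\int_0^\infty dt\, e^{-2t\varepsilon}$. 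Third, for each such term I would estimate its expectation in $T_{2;\lambda}\Omega$ using $\|a_\sigma(v^t_x)\| \le \rho^{1/2} e^{t(k_F^\sigma)^2}$, the bounds $\|b_\sigma(\widetilde\varphi_z)\| \le C\rho^{1/3+\gamma/2}$ and $\|b_{j,\sigma}(\widetilde\varphi_z)\| \le C\rho^{2/3+\gamma/2}$ from Lemma~\ref{lem: bounds b b*}, the Cauchy--Schwarz inequality to replace spatial integrals by $\|\mathcal N^{1/2}T_{2;\lambda}\Omega\|$ and $\|\mathbb H_0^{1/2}T_{2;\lambda}\Omega\|$, and the $t$-integral estimates \eqref{fac1}, \eqref{fac2}, \eqref{eq: est int t uv final}. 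Fourth, I would substitute the a~priori bounds $\langle T_{2;\lambda}\Omega, \mathcal N T_{2;\lambda}\Omega\rangle \le CL^3\rho^{5/3-\gamma-\delta/8}$ from \eqref{eq: est number operator T1 final} (with $\lambda = 0$, combined with \eqref{eq: est N T2psi}) and $\langle T_{2;\lambda}\Omega, \mathbb H_0 T_{2;\lambda}\Omega\rangle \le CL^3\rho^{7/3-2\gamma-\delta/16}$ from \eqref{eq: prop est H0 T2}, and collect powers; this should yield the two error contributions $L^3\rho^{7/3+\gamma}$ (from the $v$-derivative / $\mathcal N$ terms) and $L^3\rho^{3-2\gamma-\delta/4}$ (from the $\mathbb H_0$ / $u$-derivative terms, where the two factors of $\varepsilon^{-1}$-type logarithms in the $t$-integral each cost $\rho^{-\delta/8}$ and some of the $\rho^{1/3-\gamma}$ factors are used up).

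\textbf{Main obstacle.} The hard part will be keeping track of which terms in $[\mathcal E_{\mathbb H_0}, B_2]$ are controlled by $\mathcal N$ alone and which genuinely need $\mathbb H_0$ (equivalently, bookkeeping the integration by parts $k = (k+s) - s$ in the presence of the extra $e^{\pm t|\cdot|^2}$ weights), and verifying that the resulting $t$-integrals converge with only the claimed logarithmic loss $\rho^{-\delta/8}$ per factor --- in particular that no term forces both a $\|v^t\|_2^2$ and a large-momentum weight simultaneously in a way that would produce a worse power. A secondary subtlety is that the normal-ordering of the fully contracted term must be checked to vanish (by parity in the relevant momentum, as in the vanishing of $\mathrm I_{6;b}$ in Proposition~\ref{pro: H0}) rather than contributing a spurious constant of order $\rho^{7/3}$ or larger; since $T_2\Omega$ is in the range of $B_2^\ast$ applied to the vacuum and $\mathcal E_{\mathbb H_0}$ changes the particle number by $\pm 4$ in pairs of opposite spin, this is consistent, but the explicit computation still needs to be carried out.
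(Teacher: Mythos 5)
Your overall strategy agrees with the paper's: Duhamel (or Grönwall) in $\lambda$ for $T_{2;\lambda}$, restriction to $\mathcal E_{\mathbb H_0;1}$, the splitting $k=(k+s)-s$, the six commutator terms $\mathrm I_1,\dots,\mathrm I_6$ with $\hat\eta^\varepsilon_{r,r'}(p)\widehat\chi_<(p)$ in place of $\hat\varphi(p)\widehat\chi_>(p)$, the $t$-integral bounds of Lemma~\ref{lem:t}, and the a~priori bounds on $\mathcal N$ and $\mathbb H_0$. Most of this would go through.

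However, there is a genuine gap at exactly the point you flagged as uncertain. The fully contracted (constant) term $\mathrm I_{6;a}$ does \emph{not} vanish by parity. In Proposition~\ref{pro: H0} the analogous term $\mathrm I_{6;b}$ carries $\left|\hat\varphi(p)\widehat\chi_>(p)\right|^2$, which is even in $p$, so $\sum_p p\cdot r\,|\hat\varphi(p)\widehat\chi_>(p)|^2 = 0$. Here the corresponding weight is $\widehat\chi_<(p)\,\hat\eta^\varepsilon_{r,r'}(p)\,\widehat\chi_>(p)\,\hat\varphi(p)$, and $\hat\eta^\varepsilon_{r,r'}(p)$ is manifestly \emph{not} even in $p$ (it depends on $r+p$ and $r'-p$). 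Moreover, the particle-number/spin counting you invoke does not rule out a vacuum-to-vacuum contribution, since $[\mathcal E_{\mathbb H_0},B_2-B_2^\ast]$ contains a genuine number-preserving piece. A naive bound using $|\hat\eta^\varepsilon_{r,r'}(p)|\le C/|p|^2$ and $|\hat\varphi(p)|\le C/|p|^2$ gives
\begin{equation*}
|\mathrm I_{6;a}| \lesssim \rho^{1/3}\,\frac{1}{L^6}\sum_{p,r,r'}\widehat\chi_<(p)\widehat\chi_>(p)\frac{1}{|p|^{3}}\hat v_\uparrow(r)\hat v_\downarrow(r') \lesssim L^3 \rho^{7/3},
\end{equation*}
which is \emph{not} $o(\rho^{7/3})$ and would sink the proposition. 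The correct argument (and the origin of the $\rho^{7/3+\gamma}$ term in the claimed bound, which you instead attribute to the $\mathcal N$-controlled pieces) is to subtract the symmetric profile: since $\sum_p p\cdot r\,\widehat\chi_<(p)\widehat\chi_>(p)\,\hat\varphi(p)/(2|p|^2)=0$ by parity, one may write
\begin{equation*}
\mathrm I_{6;a} = \frac{1}{L^6}\sum_{p,r,r'}\widehat\chi_<(p)\Bigl(\hat\eta^\varepsilon_{r,r'}(p) - \frac{1}{2|p|^2}\Bigr)\widehat\chi_>(p)\,\hat\varphi(p)\,p\cdot r\,\hat v_\uparrow(r)\hat v_\downarrow(r')\,,
\end{equation*}
and on $|p|\sim\rho^{1/3-\gamma}$ (with $|r|,|r'|\lesssim\rho^{1/3}$ and $\varepsilon\lesssim\rho^{2/3}$) one has $|\hat\eta^\varepsilon_{r,r'}(p)-1/(2|p|^2)|\lesssim\rho^{1/3}/|p|^3$. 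Inserting this extra factor $\rho^{1/3}/|p|$ yields $|\mathrm I_{6;a}|\lesssim L^3\rho^{8/3}\int_{|p|\sim\rho^{1/3-\gamma}}|p|^{-4}dp\lesssim L^3\rho^{7/3+\gamma}$, which is precisely the first term of the claimed error. Without this partial-cancellation step your proof cannot reach the target order.

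A minor further remark: since the estimates of the commutator terms never involve $\langle T_{2;\lambda}\Omega,\mathcal E_{\mathbb H_0}T_{2;\lambda}\Omega\rangle$ itself, Duhamel alone suffices; Grönwall is not truly needed despite the paper's phrasing, and your Duhamel formulation is equivalent.
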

\begin{proof}
Again we use Gr\"onwall's Lemma, and compute   
\[
   \partial_\lambda T_{2;\lambda}^\ast\mathcal{E}_{\mathbb{H}_0} T_{2;\lambda} = T_{2;\lambda} [\mathcal{E}_{\mathbb{H}_0}, B_2 - B_2^\ast] T_{2;\lambda}.
\]
The terms in $[\mathcal{E}_{\mathbb{H}_0}, B_2 - B_2^\ast]$ are of the same type as those in Proposition \ref{pro: H0}, see \eqref{eq: def Ia}--\eqref{eq: term III H0 T1}.  The only difference is that each of them has $\hat{\eta}^{\varepsilon}_{r,r^\prime}(p) \widehat{\chi}_<(p)$ in place of $\hat\varphi(p)\widehat{\chi}_>(p)$, which distinguishes $B_1$ from $B_2$. As in the proof of Proposition \ref{pro: H0}, it is enough to prove the bound in \eqref{eq: est err kin T2Omega} with $ \mathcal{E}_{\mathbb{H}_0}$ replaced by $ \mathcal{E}_{\mathbb{H}_0;1}$, defined in \ref{def:E1}. We shall again write $[\mathcal{E}_{\mathbb{H}_0;1}, B_2] = \sum_{j=1}^6 \mathrm{I}_j$ and consider the terms individually. 
We will estimate the expectation value of all the terms in the state $\xi_\lambda = T_{2;\lambda}\Omega$. In the estimates we will often use the bounds in Lemma \ref{lem: bounds phi}, Lemma \ref{lem: bounds b b*}, Lemma \ref{lem:cut-off-chi} as well as \eqref{eq: est a v H0} and \eqref{eq: est avt Q4}. 
Similarly to \eqref{eq: from aut to Number op}, we shall also need that  
\begin{equation}\label{eq: ut numb op H0 T2}
  \int dx \, \|a_\sigma( v^t_{x})\xi_\lambda\|^2 \leq  e^{2 t(k_F^\sigma)^2}\langle \xi_\lambda, \mathcal{N}\xi_\lambda\rangle .
\end{equation} 
Moreover, we will often use the constraints on $p,r,r^\prime$ to replace  $\hat{u}_\uparrow(r+p)$ and $ \hat{u}_\downarrow(r^\prime-p)$ by $\hat{u}^<_\uparrow(r+ p)$ and $\hat{u}^<_\downarrow(r^\prime - p)$, respectively, with $\hat{u}^<_\sigma$ defined as in \eqref{eq: u<}. 
Similarly as in Proposition \ref{pro: H0}, it is convenient to split $\mathrm{I}_1$, $\mathrm{I}_2$ and $\mathrm{I}_3$ into a sum of two terms each.

Proceeding as in the proof of  Proposition \ref{pro: H0}, taking into account Remark~\ref{rem: T1 and T2 configuration space}, we have for the analogue of \eqref{eq: def Ia} 
\begin{equation}\label{spliti}
  \mathrm{I}_{1} = 8\pi a\int_0^{\infty}dt\, e^{-2t\varepsilon} \left( \mathrm{I}_{1;a}^t + \mathrm{I}_{1;b}^t \right)
\end{equation}
with 
$$
\mathrm{I}_{1;a}^t  = \ \int dxdydzdz^\prime\, \widetilde{\varphi}(x-y) \chi_<(z-z^\prime)v^t_\downarrow(y;z^\prime)\Delta v^t_\uparrow(x;z)\, a^\ast_\uparrow({u}_x)a^\ast_\downarrow(u_y)a_\downarrow(u^t_{z^\prime})a_\uparrow(u^t_z)
$$
and
$$
 \mathrm{I}_{1;b}^t = \sum_{\ell=1}^3 \int dxdydzdz^\prime\, \widetilde{\varphi}(x-y) \chi_<(z-z^\prime) v^t_\downarrow(y;z^\prime)\partial_\ell v^t_\uparrow(x;z) a^\ast_\uparrow(\partial_\ell {\nu}_x^>)a^\ast_\downarrow(u_y)a_\downarrow(u^t_{z^\prime})a_\uparrow(u^t_z)
$$
with $\nu^>$ defined in \eqref{eq: def nu >}.
With the aid of Cauchy--Schwarz inequality, together with bounds as in  \eqref{eq: est conv L2} and \eqref{eq: est conv L2 partial ell}, using that $0\leq \hat{u}_\downarrow^t\leq e^{-t(k_F^\downarrow)^2}$ and $0\leq \hat{u}_\uparrow\leq 1$, we obtain for the first term
\begin{align*}
  |\langle \xi_\lambda, \mathrm{I}_{1;a}^t\xi_\lambda\rangle| & \leq  \left( \int dydz\, \left \|\int dx\, \widetilde{\varphi}(x-y) \Delta v_\uparrow^t(x;z) a_\uparrow(u_x)\right\|^2 \|a_\downarrow(u_y)\xi_\lambda\|^2\right)^{\frac{1}{2}} 
  \\ & \qquad 
  \times \left(\int dydz\,   \left\|\int dz^\prime\, \chi_<(z-z^\prime) v_\downarrow^t(y;z^\prime) a_\downarrow(u_{z^\prime}^t)\right\|^2\| a_\uparrow(u_{z}^t)\xi_\lambda\|^2   \right)^{\frac{1}{2}}
  \\
 &  \leq e^{-t(k_F^\downarrow)^2} \left( \int dydzdz^\prime\, | \widetilde{\varphi}(z^\prime-y)|^2|\Delta v_\uparrow^t(z;z^\prime)|^2  \|a_\downarrow(u_y)\xi_\lambda\|^2\right)^{\frac{1}{2}} \\ & \qquad \times \left(\int dydzdz^\prime\, |\chi_<(z-z^\prime)|^2 |v^t_\downarrow(y;z^\prime)|^2 \|a_\uparrow(u^t_z)\xi_\lambda\|^2 \right)^{\frac{1}{2}}.
\end{align*}
%
To bound the terms, we shall use Lemma~\ref{lem: bounds phi} and  $\|\chi_<\|_2 \leq C\rho^{1/2 -(3/2)\gamma}$, as well as \eqref{eq: from aut to Number op}. 
This way, we obtain
%
\[
  |\langle \xi_\lambda ,\mathrm{I}_{1;a}^t\xi_\lambda\rangle|  \leq   e^{-t\sum_{\sigma}(k_F^\sigma)^2 }\|\widetilde{\varphi}\|_2\|\chi_<\|_2\|\Delta v_\uparrow^t\|_2\|v_\downarrow^t\|_2 \langle \xi_\lambda, \mathcal{N}\xi_\lambda\rangle\leq C\rho^{1-\gamma} e^{-t\sum_{\sigma}(k_F^\sigma)^2 }\|{v}_\uparrow^t\|_2\|{v}_\downarrow^t\|_2 \langle \xi_\lambda, \mathcal{N}\xi_\lambda\rangle.
\]
The Cauchy--Schwarz inequality and \eqref{fac2} for the integral with respect to $t$ then yields
\begin{equation}\label{eq: est Ia1 T2}
\int_0^\infty dt\, e^{-2t \varepsilon} |\langle \xi_\lambda, \mathrm{I}_{1;a}^t\xi_\lambda\rangle| \leq 
C\rho^{\frac{4}{3}-\gamma - \frac{\delta}{8}}\langle \xi_\lambda, \mathcal{N}\xi_\lambda\rangle.
\end{equation}
The term $\mathrm{I}_{1;b}$ can be bounded very similarly, using in addition  
%
%
\eqref{eq: est partial u H0}. We obtain
\begin{equation}\label{eq: est Ia2 T2}
\int_0^\infty dt\, e^{-2t \varepsilon}  |\langle \xi_\lambda, \mathrm{I}_{1;b}^t\xi_\lambda\rangle|  
  \leq C\rho^{1 - \gamma - \frac{\delta}{8}} \|\mathbb{H}_0^{\frac{1}{2}}\xi_\lambda\|\|\mathcal{N}^{\frac{1}{2}}\xi_\lambda\|.
\end{equation}

Next we consider the term $\mathrm{I}_2$. Proceeding as above, we find again \eqref{spliti} with $I_1$ replaced by $I_2$ everywhere, and 
%
$$
  \mathrm{I}_{2;a}^t
= \frac{1}{L^3} \sum_r |r|^2 \hat{v}_\uparrow^t(r) \left(\int dx\, e^{ir\cdot x} a_\uparrow^\ast (u_x) b_\downarrow^\ast (\widetilde{\varphi}_x) \right)\left(\int dz\, e^{-ir\cdot z} b_\downarrow^t({\chi_{<}}_{z})a_\uparrow(u_z^t)\right) ,
$$
where we  used the notation in \eqref{eq: def b phi} and introduced in addition 
\begin{equation}\label{eq: def bt chi <}
  b_\sigma^t({\chi_{<}}_{z}) := \int dz^\prime\, \chi_<(z-z^\prime)a_\sigma(u^t_{z^\prime})a_\sigma(v^t_{z^\prime}).
\end{equation}  
Using $|r|^2\hat{v}^t_\uparrow(r) \leq C\rho^{2/3}e^{t(k_F^\uparrow)^2}$ and the Cauchy--Schwarz inequality, we obtain
$$
  |\langle \xi_\lambda, \mathrm{I}_{2;a}^t\xi_\lambda\rangle|
  \leq C\rho^{\frac{2}{3}} e^{t(k_F^\uparrow)^2}\sqrt{ \int dx \left\| b_\downarrow(\widetilde{\varphi}_x)a_\uparrow(u_x) \xi_\lambda\right\|^2}\sqrt{ \int dz  \left\| b_\downarrow^t({\chi_<}_z)a_\uparrow(u_z^t)\xi_\lambda\right\|^2}.
$$ 
Using the bound from Lemma \ref{lem: bounds b b*} for $\| b_\downarrow(\widetilde{\varphi}_x)\|$, we find
$$
   \int dx\, \| b_\downarrow(\widetilde{\varphi}_x) a_\uparrow(u_x)\xi_\lambda\|^2 \leq C\rho^{\frac{2}{3} +\gamma}\langle \xi_\lambda,\mathcal{N}\xi_\lambda\rangle.
$$
Similarly, from
\begin{equation}\label{eq: bound bt}
  \|b^t_\downarrow({\chi_<}_z)\|\leq \int dz^\prime |\chi_<(z-z^\prime)|\|a_\downarrow(u^t_{z^\prime})\| \|a_\downarrow(v^t_{z^\prime})\| \leq \|\chi_<\|_1\|{u}^t_\downarrow\|_2 \|{v}^t_\downarrow\|_2 \leq C\|{u}^t_\downarrow\|_2 \|{v}^t_\downarrow\|_2
\end{equation}
together with \eqref{eq: ut numb op H0 T2}, we find that 
\[
\int dz \left\| b_\downarrow^t({\chi_<}_z) a_\uparrow(u_z^t)\xi_\lambda\right\|^2 \leq  Ce^{-2t(k_F^\uparrow)^2} \|{u}^t_\downarrow\|_2^2 \| {v}^t_\downarrow\|_2^2\langle \xi_\lambda, \mathcal{N}\xi_\lambda\rangle.
\]
Using  \eqref{eq: est int t uv final} for the integral with respect to $t$, we obtain
$$ 
\int_0^\infty  dt\, e^{-2t \varepsilon} |\langle \xi_\lambda , \mathrm{I}_{2;a}^t\xi_\lambda\rangle  |\leq C\rho^{ 1 + \frac{\gamma}{2}} \left(\int_0^{\infty} dt\, e^{-2t\varepsilon}\|\hat{u}^t_\downarrow\|_2 \|\hat{v}^t_\downarrow\|_2\right) \langle \xi_\lambda, \mathcal{N}\xi_\lambda\rangle = C\rho^{\frac{4}{3} -\frac{\delta}{16}}\langle \xi_\lambda, \mathcal{N}\xi_\lambda\rangle.
$$
The estimate for $\mathrm{I}_{2;b}^t$ is similar. It is given by
$$
 \mathrm{I}_{2;b}^t  = \frac{1}{L^3}  \sum_{\ell=1}^3\sum_r r_\ell \hat{v}_\uparrow^t(r) \int dx dz  \, e^{ir\cdot (x-z)}   a_\uparrow^\ast(\partial_\ell \nu^>_x) 
 b_\downarrow^\ast (\widetilde{\varphi}_x)
 b_\downarrow^t({\chi_{<}}_{z})
 a_\uparrow(u_z^t).
$$
%
%
Using  \eqref{eq: est partial u H0} and proceeding as above, we obtain
\[
 \int_0^\infty  dt\, e^{-2t \varepsilon} |\langle \xi_\lambda, \mathrm{I}_{2;b}^t\xi_\lambda\rangle| 
  \leq C\rho^{1-\frac{\delta}{16}}\|\mathcal{N}^{\frac{1}{2}}\xi_\lambda\| \|\mathbb{H}_0^{\frac{1}{2}}\xi_\lambda\|.
\]
%
The term $I_3$ can be estimated in the same way as $I_2$, obtaining the same final bound. We omit the details. 

We now consider the  term $\mathrm{I}_4$. 
In contrast to the corresponding one in Proposition \ref{pro: H0} in \eqref{eq: term II H0 T1}, to estimate this term we rewrite it in normal order using that 
\begin{align}\nonumber
  &\hat{a}_{s - k,\downarrow}^\ast\hat{a}_{r^\prime - p,\downarrow}\hat{a}_{-r,\uparrow}\hat{a}_{-r^\prime, \downarrow}\hat{a}_{-s, \downarrow}^\ast\hat{a}_{-r-p+k,\uparrow}^\ast = \hat{a}_{s-k,\downarrow}^\ast\hat{a}^\ast_{-s,\downarrow}\hat{a}^\ast_{-r-p+k, \uparrow}\hat{a}_{-r,\uparrow}\hat{a}_{-r^\prime, \downarrow}\hat{a}_{r^\prime - p, \downarrow}
  \\
  &  +\delta_{p,k}\hat{a}^\ast_{s-k,\downarrow}\hat{a}_{-s, \downarrow}^\ast \hat{a}_{-r^\prime,\downarrow}\hat{a}_{r^\prime - p,\downarrow} - \delta_{r^\prime, s}\hat{a}^\ast_{s-k,\downarrow}\hat{a}_{-r-p+k, \uparrow}^\ast\hat{a}_{-r,\uparrow}\hat{a}_{r^\prime - p, \downarrow} + \delta_{r^\prime, s}\delta_{p,k} \hat{a}_{s-k, \downarrow}^\ast \hat{a}_{r^\prime -p, \downarrow}. \label{I4no}
\end{align}
Correspondingly, we have $\mathrm{I}_4 = 8\pi a \int_0^\infty dt\, e^{-2t\varepsilon} (\mathrm{I}_{4;a}^t 
+ \mathrm{I}_{4;b}^t + \mathrm{I}_{4;c}^t + \mathrm{I}_{4;d}^t)$.
The first term equals 
$$
  \mathrm{I}_{4;a}^t = \sum_{\ell=1}^3 \frac{1}{L^3}\sum_k\hat{u}_\uparrow^t(k)  \int dydz \, e^{ik\cdot (z-y)}   
  a^\ast_{\uparrow}(\partial_\ell v_y) b^\ast_\downarrow(\partial_\ell\widetilde{\varphi}_y) 
  b^t_\downarrow({\chi_<}_{z})
  a_\uparrow(v^t_{z})
$$
where we used again the notations introduced in \eqref{eq: def b phi} and \eqref{eq: def bt chi <}.
Since $0\leq \hat{u}_\uparrow ^t(k) \leq e^{-t(k_F^\uparrow)^2}$,  the Cauchy--Schwarz inequality yields
$$
 |\langle\xi_\lambda,\mathrm{I}_{4;a}^t\xi_\lambda\rangle| \leq  e^{-t(k_F^\uparrow)^2} \sum_{\ell=1}^3 \sqrt{  \int dy \left\|  b_\downarrow(\partial_\ell\widetilde{\varphi}_y) a_\uparrow(\partial_\ell v_y)\xi_\lambda\right\|^2}\sqrt{ \int dz \left\| b^t_\downarrow({\chi_<}_{z})a_\uparrow(v^t_z)\xi_\lambda\right\|^2}. 
$$  
The last factor can be bounded with the aid of \eqref{eq: bound bt}
and \eqref{eq: ut numb op H0 T2}. 
For the first factor, we use 
\begin{align*}
 \left\|  b_\downarrow(\partial_\ell\widetilde{\varphi}_y) a_\uparrow(\partial_\ell v_y)\xi_\lambda\right\| & \leq \int dx\, | \partial_\ell \widetilde \varphi(x-y)| |\| a_\downarrow(u_x) a_\downarrow(v_x) a_\uparrow(\partial_\ell v_y)\xi_\lambda\| \\ &\leq  \| v_\downarrow\|_2 \| \| \partial_\ell v_\uparrow\|_2 \int dx\, | \partial_\ell \widetilde \varphi(x-y)| |\| a_\downarrow(u_x) \xi_\lambda\|
\end{align*}
%
%
and hence 
\begin{align*}
&\int dy \left\|  b_\downarrow(\partial_\ell\widetilde{\varphi}_y) a_\uparrow(\partial_\ell v_y)\xi_\lambda\right\|^2
  \\
 &\leq  C\rho^{\frac{8}{3}} \int dy dx dx^\prime\,  |\partial_\ell\widetilde{\varphi}(y-x)||\partial_\ell\widetilde{\varphi}(y-x^\prime)|\| a_\downarrow(u_x) \xi_\lambda\|\| a_\downarrow(u_{x^\prime}) \xi_\lambda\|
 \leq C\rho^{\frac{8}{3}} \|\partial_\ell\widetilde{\varphi}\|_1^2 \langle \xi_\lambda, \mathcal{N}\xi_\lambda\rangle .
\end{align*}
%
In combination with 
 Lemma \ref{lem: bounds phi} and \eqref{eq: est int t uv final}, this yields
\begin{equation}\label{I4at}
  \int_0^\infty  dt\, e^{-2t \varepsilon} |\langle \xi_\lambda, \mathrm{I}_{4;a}^t\xi_\lambda\rangle| 
  \leq C\rho^{\frac{4}{3} + \frac{\gamma}{2} - \frac{\delta}{16}}\langle \xi_\lambda, \mathcal{N}\xi_\lambda\rangle.
\end{equation}
Next we consider $ \mathrm{I}_{4;b}^t  $, which is given by 
 \[
 \mathrm{I}_{4;b}^t  = \sum_{\ell=1}^3 \int dydz\, 
 u^{t}_\uparrow(y;z)\partial_\ell v^t_\uparrow(y;z)
 b^\ast_\downarrow(\partial_\ell\widetilde{\varphi}_y) 
  b_\downarrow^t({\chi_{<}}_{z}) 
 \]
 and can be bounded as 
\begin{multline*}
   |\langle \xi_\lambda, \mathrm{I}_{4;b}^t\xi_\lambda\rangle| \\ \leq 
   \| v_\downarrow \|_2 \| v_\downarrow^t\|_2\sum_{\ell=1}^3\int dxdydzdz^\prime\, |\partial_\ell\widetilde{\varphi}(x-y)||\chi_<(z-z^\prime)||u^{t}_\uparrow(y;z)||\partial_\ell v^t_\uparrow(y;z)| \|a_\downarrow(u_x)\xi_\lambda\|  \|a_\downarrow(u^t_{z^\prime})\xi_\lambda\|.
\end{multline*}
With  \eqref{eq: from aut to Number op}, 
$\|\partial_\ell^n v_\sigma^t\|_2 \leq C \rho^{n/3}\| v_\sigma^t\|_2$, 
$\| v_\sigma^t\|_2 \leq C \rho^{1/2} e^{t (k_F^\sigma)^2}$,
Lemma~\ref{lem: bounds phi} and Cauchy--Schwarz we thus obtain
$$
   |\langle \xi_\lambda, \mathrm{I}_{4;b}^t\xi_\lambda\rangle|  \leq e^{-t (k_F^\downarrow)^2} \sum_{\ell=1}^3
   \| v_\downarrow \|_2 \| v_\downarrow^t\|_2 \|\partial_\ell\widetilde{\varphi}\|_1\| \chi_<\|_1  \|u^{t}_\uparrow\|_2 \|\partial_\ell v^t_\uparrow\|_2 
   \langle \xi_\lambda, \mathcal{N}\xi_\lambda\rangle
   \leq C   \rho^{1+\gamma}  \| v_\uparrow^t\|_2   \|u^{t}_\uparrow\|_2   
   \langle \xi_\lambda, \mathcal{N}\xi_\lambda\rangle.
   $$  
In combination with \eqref{eq: est int t uv final} this yields the same bound as in \eqref{I4at} after integration over $t$. The same applies to $\mathrm{I}_{4;c}^t$ and we omit the details.
The last  term in $\mathrm{I}_4$ is
$$
\mathrm{I}_{4;d}^t = \frac{1}{L^6}\sum_{\ell=1}^3\sum_{p,r,r^\prime}p_\ell\hat{\varphi}(p)\widehat{\chi}_>(p)\widehat{\chi}_<(p) \hat{u}_\uparrow^t(r+p) r_\ell\hat{v}_\uparrow^t(r)\hat{u}_\downarrow^t(r^\prime) \hat{v}_\downarrow^t(r^\prime+p)  \hat{a}_{r^\prime,\downarrow}^\ast \hat{a}_{r^\prime , \downarrow}.
$$
Since $| p_\ell\hat{\varphi}(p)\widehat{\chi}_>(p)\widehat{\chi}_<(p) | \leq \| \partial_\ell \widetilde\varphi\|_1 \| \chi_<\|_1 $ and 
$$
\frac{1}{L^6}\sum_{p,r}  \hat{u}_\uparrow^t(r+p) |r_\ell| \hat{v}_\uparrow^t(r)\hat{u}_\downarrow^t(r^\prime) \hat{v}_\downarrow^t(r^\prime+p)  \leq C \rho^{4/3} \|u_\uparrow^t\|_2 \|v_\uparrow^t\|_2 
$$
for all $r'\in \Lambda^*$, we can again obtain a bound of the same kind. Altogether, we have thus shown that
\begin{equation}\label{eq: est I4 H0 T2}
  |\langle \xi_\lambda, \mathrm{I}_4 \xi_\lambda\rangle| \leq C\rho^{\frac{4}{3} +\frac{\gamma}{2} - \frac{\delta}{16}}\langle \xi_\lambda, \mathcal{N}\xi_\lambda\rangle.
\end{equation}
The term  $\mathrm{I}_5$ can be estimated very similarly, obtaining the same result.

To conclude, we consider the last error term $I_6$, which corresponds to \eqref{eq: term III H0 T1} in Proposition \ref{pro: H0}. We shall again start by writing it in normal order, using
$$
\hat{a}_{-r,\uparrow}\hat{a}_{-r^\prime, \downarrow}\hat{a}_{-r^\prime + p-k, \downarrow}^\ast\hat{a}_{-r-p+k,\uparrow}^\ast = \delta_{p,k} \left( 1 -  \hat{a}^\ast_{-r,\uparrow}\hat{a}_{-r,\uparrow}
 - \hat{a}_{-r^\prime, \downarrow}^\ast \hat{a}_{-r^\prime, \downarrow} \right) + \hat{a}_{-r^\prime + p-k, \downarrow}^\ast \hat{a}^\ast_{-r-p+k,\uparrow}\hat{a}_{-r,\uparrow}\hat{a}_{-r^\prime, \downarrow}
$$
and correspondingly obtain $4$ terms, which we write as  $\mathrm{I}_6 =  \mathrm{I}_{6;a} +  8\pi a \int_0^\infty dt\, e^{-2t\varepsilon} ( \mathrm{I}_{6;b}^t + \mathrm{I}_{6;c}^t + \mathrm{I}_{6;d}^t)$. The term $\mathrm{I}_{6;a}$ is a constant, given by 
\[
  \mathrm{I}_{6;a} = \frac{1}{L^6}\sum_{p,r,r^\prime} \widehat{\chi}_<(p)\hat{\eta}^\varepsilon_{r,r^\prime}(p) \widehat{\chi}_>(p)   \hat{\varphi}(p) p \cdot r \hat{v}_\uparrow(r) \hat{v}_\downarrow(r^\prime).
\]
To arrive at this expression, we  used that support properties of the various functions imply that
$\hat u_\uparrow (r+p)  \hat u_\downarrow (r'-p) = 1$. 
If we replace $\hat{\eta}^\varepsilon_{r,r^\prime}(p)$ by $1/(2|p|^2)$, the resulting sum over $p$ is zero by symmetry. Hence we can also write
\[
  \mathrm{I}_{6;a} = \frac{1}{L^6}\sum_{p,r,r^\prime} \widehat{\chi}_<(p)\left( \hat{\eta}^\varepsilon_{r,r^\prime}(p) - \frac 1{2 |p|^2}\right) \widehat{\chi}_>(p)   \hat{\varphi}(p) p \cdot r \hat{v}_\uparrow(r) \hat{v}_\downarrow(r^\prime).
\]
On the support of $\widehat{\chi}_<(p)\widehat{\chi}_>(p)$, i.e., for $4\rho^{1/3 - \gamma}\leq |p| \leq 5\rho^{1/3 - \gamma}$, one readily checks that $| \hat{\eta}^\varepsilon_{r,r^\prime}(p) - 1/(2 |p|^2) | \leq C \rho^{1/3} /|p|^3$ 
%
for $r\in \mathcal{B}_F^\uparrow$ and $r^\prime \in \mathcal{B}_F^\downarrow$ (and $\epsilon \lesssim \rho^{2/3}$). Moreover, as already used earlier,  $|\hat\varphi(p)| \leq C/ |p|^2$  as a consequence of \eqref{eq: zero energy scatt eq-intro}. Hence
\begin{equation}\label{eq: est I6a H0 T2}
  |  \mathrm{I}_{6;a}|  \leq C \rho^{\frac 13}  \frac{1}{L^6}\sum_{p,r,r^\prime} \widehat{\chi}_<(p) \frac 1{|p|^4} \widehat{\chi}_>(p)     |r|    \hat{v}_\uparrow(r) \hat{v}_\downarrow(r^\prime) 
  \leq C L^3 \rho^{\frac 83} \int_{4\rho^{1/3 - \gamma}\leq |p| \leq 5\rho^{1/3 - \gamma}} \frac{dp}{|p|^4} = C L^3 \rho^{7/3+\gamma}.
\end{equation}
The term  $\mathrm{I}_{6;b}^t$ 
is given by 
\[
  \mathrm{I}_{6;b}^t = -\frac{1}{L^6} \sum_{p, r,r^\prime} p\cdot r  \hat{\varphi}(p)\widehat{\chi}_>(p)\widehat{\chi}_<(p) \hat{u}^t_\uparrow(r+p)\hat{u}^t_\downarrow(r^\prime - p) \hat{v}^t_\uparrow(r)\hat{v}^t_\downarrow(r^\prime)  \hat{a}^\ast_{-r,\uparrow}\hat{a}_{-r,\uparrow}.
\]
Proceeding as with $\mathrm{I}_{4;d}$ above, we obtain
$$
|\langle \xi_\lambda, \mathrm{I}_{6;b}^t \xi_\lambda\rangle| \leq C \rho^{1+\gamma}  \| u_\uparrow^t\|_2 \|u_\downarrow^t\|_2 e^{t (k_F^\uparrow)^2}  e^{t (k_F^\downarrow)^2}  \langle \xi_\lambda, \mathcal{N}\xi_\lambda\rangle .
$$
The integral over $t$ can then be bounded with the aid of \eqref{fac1}, with the result that 
\begin{equation}\label{reho}
\int_0^\infty dt\, e^{-2t \varepsilon} |\langle \xi_\lambda, \mathrm{I}_{6;b}^t \xi_\lambda\rangle| \leq C \rho^{\frac 4 3}    \langle \xi_\lambda, \mathcal{N}\xi_\lambda\rangle .  
\end{equation}
The same bounds holds for $\mathrm{I}_{6;c}^t$. 
%
%

The last term to study is $\mathrm{I}_{6;d}^t$, which in configuration space can be written as 
\[
 \mathrm{I}_{6;d}^t  = \sum_{\ell=1}^3  \int dxdydzdz^\prime\, \partial_\ell\widetilde{\varphi}(x-y)\chi_<(z-z^\prime)u^{t}_\uparrow(y;z)u^{t}_\downarrow(x;z^\prime) a^\ast_\downarrow(v_x)a^\ast_{\uparrow}(\partial_\ell v_y)a_\downarrow(v^t_{z^\prime})a_\uparrow(v^t_{z})
\]
Using  \eqref{eq: est a v H0}, \eqref{eq: est avt Q4} 
 and \eqref{eq: ut numb op H0 T2} together with the Cauchy--Schwarz inequality 
we can bound
\begin{align*}
  &|\langle\xi_\lambda, \mathrm{I}_{6;d}\xi_\lambda\rangle| \\ &  \leq  \sum_{\ell =1}^3 \| \partial_\ell v_\uparrow\|_2  \| v_\downarrow^t\|_2 \int  dxdydzdz^\prime  \, |\partial_\ell \widetilde{\varphi}(x-y)||\chi_<(z-z^\prime)||u^{t}_\uparrow(y;z)||u^t_\downarrow(x;z^\prime)|  
   \| a_\downarrow(v_x)\xi_\lambda\|\|a_\uparrow(v^t_{z})\xi_\lambda\|
  \\
  &  \leq  C e^{t (k_F^\uparrow)^2} \sum_{\ell =1}^3 \| \partial_\ell v_\uparrow\|_2  \| v_\downarrow^t\|_2   \|\partial_\ell \widetilde{\varphi}\|_1 \|\chi_<\|_1   \, \|\hat{u}_\uparrow^t\|_2 \|\hat{u}_\downarrow^t\|_2  \langle \xi_\lambda, \mathcal{N}\xi_\lambda\rangle
  \\
  & \leq C\rho^{1+\gamma} e^{t(k_F^\downarrow)^2} e^{t(k_F^\uparrow)^2} \|\hat{u}_\uparrow^t\|_2 \|\hat{u}_\downarrow^t\|_2  \langle\xi_\lambda, \mathcal{N}\xi_\lambda\rangle,
\end{align*}
where we also used the bounds from Lemma \ref{lem: bounds phi} and Lemma \ref{lem:cut-off-chi} in the last step. Again we can bound the integral over $t$ using \eqref{fac1} to conclude that \eqref{reho} also holds for $\mathrm{I}_{6;d}^t$.
%
%
Combining all the bounds above, 
we obtain 
\begin{equation}\label{eq: est I6 H0 T2}
|\langle \xi_\lambda, \mathrm{I}_6\xi_\lambda\rangle| \leq CL^3\rho^{\frac{7}{3} + \gamma} + C\rho^{\frac{4}{3}}\langle \xi_\lambda, \mathcal{N}\xi_\lambda\rangle .
\end{equation}
Combining all these bounds 
with \eqref{eq: est number operator T1 final} (for $\lambda=0$) and the second bound in \eqref{eq: prop est H0 T2}, 
%
 and using Gr\"onwall's Lemma, we obtain \eqref{eq: est err kin T2Omega}.
\end{proof}



\subsection{Conjugation of $\mathbb{Q}_{2;<}$} \label{sec: main T2 Q2}

In the next proposition we finally take into account the conjugation of ${\mathbb{Q}}_{2;<}$ by $T_2$. The contribution from $\langle T_2 \Omega, \mathbb{Q}_{2;<} T_2\Omega\rangle$, when combined with the main contribution of $\langle T_2 \Omega, \mathbb{H}_0 T_2\Omega\rangle$ in Proposition \ref{pro: prop est H0 T2}, will give the correct energy of order $\rho^{7/3}$. 

\begin{proposition}[Conjugation of $\mathbb{Q}_{2;<}$ by $T_2$]\label{pro: T2 Q2<} Let $0<\gamma<\frac 16$, $\delta \leq 8\gamma$, $ 2\gamma + \frac{\delta}{16} \leq \frac{1}{3}$. 
Under the same assumptions as in Theorem \ref{thm: main up bd}
\begin{equation}\label{eq: T2 Q2<}
 \partial_\lambda T_{2;\lambda}^\ast {\mathbb{Q}}_{2;<}T_{2;\lambda} = -\frac{16 \pi a}{L^6}\sum_{p,r,r^\prime\in \frac{2\pi}{L}\mathbb{Z}^3} (\widehat{\chi}_<(p))^2\hat{\eta}^{\varepsilon}_{r,r^\prime}(p)\hat{u}_\uparrow(r+p)\hat{u}_\downarrow(r^\prime - p) \hat{v}_\uparrow(r)\hat{v}_\downarrow(r^\prime)  + T^\ast_{2;\lambda}\mathcal{E}_{{\mathbb{Q}}_{2;<}}T_{2;\lambda},
\end{equation}
where $\mathcal{E}_{{\mathbb{Q}}_{2;<}}$ is such that for any $\psi \in\mathcal{F}_{\mathrm{f}}$ 
\begin{equation}\label{eq: conj Q2< T2 psi}
  |\langle \psi, \mathcal{E}_{{\mathbb{Q}}_{2;<}} \psi\rangle| \leq C\rho^{\frac{4}{3} - 2\gamma - \frac{\delta}{16}}\langle \psi, \mathcal{N} \psi\rangle.
\end{equation}
\end{proposition}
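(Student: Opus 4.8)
The plan is to proceed exactly as in the conjugation results for $T_1$, via Duhamel's formula and an explicit commutator computation. Since $\mathbb{Q}_{2;<}$ is self-adjoint, I would first write
\[
\partial_\lambda T_{2;\lambda}^\ast {\mathbb{Q}}_{2;<} T_{2;\lambda} = T_{2;\lambda}^\ast [{\mathbb{Q}}_{2;<}, B_2 - B_2^\ast] T_{2;\lambda} = T_{2;\lambda}^\ast\big([{\mathbb{Q}}_{2;<}, B_2] + \mathrm{h.c.}\big) T_{2;\lambda}.
\]
Expanding $[{\mathbb{Q}}_{2;<}, B_2]$ through the canonical anticommutation relations produces exactly the same six-term structure $[{\mathbb{Q}}_{2;<}, B_2] = \sum_{j=1}^6 \mathrm{I}_j$ as the commutators $[\mathbb{Q}_2^\udarrow, B_1]$ of Proposition~\ref{pro: Q2} and $[\mathbb{Q}_{2;<}, B_1-B_1^\ast]$ of Proposition~\ref{pro: propagation of Q2< and Q2*<}; the only changes are that $\hat V(k)$ is replaced by $8\pi a\,\widehat\chi_<(k)$ (from $\mathbb{Q}_{2;<}$) and $\hat\varphi(p)\widehat\chi_>(p)$ is replaced by $\hat\eta^\varepsilon_{r,r'}(p)\widehat\chi_<(p)$ (from $B_2$). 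Because both operators carry the cut-off $\widehat\chi_<$ and $r,r'$ range over the Fermi balls, all momenta $r+p$ and $r'-p$ occurring here are bounded by $6\rho^{1/3-\gamma}$, so throughout one may replace $\hat u_\sigma$ by $\hat u_\sigma^<$ as in \eqref{eq: u<}.

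The constant comes from normal-ordering $\mathrm{I}_6$ as in \eqref{noI6}--\eqref{eq: aaa*a* in}, which produces the c-number $-\frac{8\pi a}{L^6}\sum_{p,r,r'} (\widehat\chi_<(p))^2 \hat\eta^\varepsilon_{r,r'}(p)\,\hat u_\uparrow(r+p)\hat u_\downarrow(r'-p)\hat v_\uparrow(r)\hat v_\downarrow(r')$; adding the Hermitian conjugate (the sum is real) doubles it to the stated main term of \eqref{eq: T2 Q2<}, with prefactor $16\pi a = 2\cdot 8\pi a$. All remaining contributions — the subleading pieces of normal-ordered $\mathrm{I}_6$ (a number-operator term with one $\hat v$ summed out and a genuinely quartic term) together with $\mathrm{I}_1,\dots,\mathrm{I}_5$ — are collected into $\mathcal{E}_{\mathbb{Q}_{2;<}}$.

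To estimate $\langle\psi,\mathcal{E}_{\mathbb{Q}_{2;<}}\psi\rangle$ I would rewrite $B_2$ in configuration space using \eqref{def:B2}, so that each error term becomes $8\pi a\int_0^\infty dt\, e^{-2t\varepsilon}(\cdots)$, with the inner expression built from $\chi_<$, the kernels $u_\sigma^t, v_\sigma^t, u_\sigma^<, v_\sigma$ and at most four fermionic operators, and — where convenient — with some $\hat u^t_\sigma$ replaced by characteristic functions of shifted Fermi shells, exactly as in the proof of Proposition~\ref{pro: prop est Q4 T2}. Each term is then bounded by Cauchy--Schwarz as in Propositions~\ref{pro: propagation of Q2< and Q2*<} and~\ref{pro: prop est Q4 T2}, using $\|\chi_<\|_1\le C$ (Lemma~\ref{lem:cut-off-chi}), $\|u_\sigma^<\|_2^2\le C\rho^{1-3\gamma}$, $\|v_\sigma\|_2\le C\rho^{1/2}$, $\|v^t_\sigma\|_2\le C\rho^{1/2}e^{t(k_F^\sigma)^2}$, the bound $\int dy\,\|a_\sigma(v^t_y)\psi\|^2\le e^{2t(k_F^\sigma)^2}\langle\psi,\mathcal N\psi\rangle$, and the $t$-integration estimate $\int_0^\infty dt\, e^{-2t\varepsilon}\|u_\sigma^t\|_2\|v_\sigma^t\|_2\le C\rho^{1/3-\gamma/2-\delta/16}$ from \eqref{eq: est int t uv final} (Lemma~\ref{lem:t}). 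A typical term picks up one factor $\|u_\sigma^<\|_2\le C\rho^{1/2-3\gamma/2}$, one factor $\|v_\sigma\|_2\le C\rho^{1/2}$ and one such $t$-integral, yielding $\rho^{1/2-3\gamma/2}\cdot\rho^{1/2}\cdot\rho^{1/3-\gamma/2-\delta/16}=\rho^{4/3-2\gamma-\delta/16}$, while the two leftover fermionic operators are absorbed into $\mathcal N$; this gives \eqref{eq: conj Q2< T2 psi}.

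I expect the main obstacle to be bookkeeping rather than conceptual: one must faithfully reproduce the six-term commutator expansion with the new coefficients, check that the c-number extracted from $\mathrm{I}_6$ has precisely the claimed form (the squared cut-off $(\widehat\chi_<(p))^2$ and the factor $16\pi a$), and verify that every error term — several requiring normal-ordering manipulations of the type used for $\mathcal{E}_{\mathbb{H}_0}$ in Proposition~\ref{pro: estimate EH0 fin} — is bounded uniformly by $\rho^{4/3-2\gamma-\delta/16}\langle\psi,\mathcal N\psi\rangle$ and not something larger; in particular one must ensure the small denominator in $\hat\eta^\varepsilon_{r,r'}$ is never used in a way that would deteriorate the power of $\rho$. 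No new idea beyond the techniques already deployed for $T_1$ and for the $\mathbb{Q}_4$ and $\mathcal{E}_{\mathbb{H}_0}$ conjugations under $T_2$ should be needed.
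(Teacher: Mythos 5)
Your proposal matches the paper's proof essentially step by step: the same Duhamel expansion, the same six-term commutator structure $[\mathbb{Q}_{2;<},B_2]=\sum_{j=1}^6\mathrm{I}_j$, the same extraction of the constant (with $16\pi a = 2\cdot 8\pi a$ and the squared cut-off $(\widehat\chi_<(p))^2$) by normal-ordering $\mathrm{I}_6$, the same passage to configuration space via \eqref{def:B2}, and the same Cauchy--Schwarz estimates using $\|\chi_<\|_1\le C$, $\|u_\sigma^<\|_2\le C\rho^{1/2-3\gamma/2}$, $\|v_\sigma\|_2\le C\rho^{1/2}$ and the $t$-integral bound \eqref{eq: est int t uv final}. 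Your representative power count $\rho^{1/2-3\gamma/2}\cdot\rho^{1/2}\cdot\rho^{1/3-\gamma/2-\delta/16}=\rho^{4/3-2\gamma-\delta/16}$ is exactly what the paper obtains for the dominant error terms.
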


Combining \eqref{eq: conj Q2< T2 psi}  for  $\psi = T_{2;\lambda} \Omega$ with \eqref{eq: est number operator T1 final} (for $\lambda=0$), we get for any $\lambda\in [0,1]$ 
\begin{equation}\label{eq: conj Q2< T2 Om}
   |\langle T_{2;\lambda}\Omega, \mathcal{E}_{{\mathbb{Q}}_{2;<}}T_{2;\lambda}\Omega\rangle| \leq CL^3\rho^{3(1 -\gamma)  -\frac{3}{16}\delta}.
\end{equation}

\begin{proof}
Since $\partial_\lambda  T^\ast_{2;\lambda} {\mathbb{Q}}_{2;<} T_{2;\lambda} =  T_{2;\lambda}^\ast [{\mathbb{Q}}_{2;<}, B_2 - B_2^\ast]T_{2;\lambda}$, we have to compute the commutator  $[{\mathbb{Q}}_{2;<}, B_2 - B_2^\ast]$. 
The structure of the various  terms  is the same as those in the commutator in \eqref{eq: derivative lambda Q2} in Proposition \ref{pro: Q2}. 
Following the same strategy, we have 
$$  
[{\mathbb{Q}}_{2;<}, B_2 - B_2^\ast] = (8\pi a)^2 \sum_{j=1}^6 \int_0^\infty dt\, e^{-2t\varepsilon} \,\mathrm{I}_j^t + \mathrm{h.c.}.
$$
The leading order constant term comes from normal ordering $I_6^t$, all the other terms are error terms, which we now estimate individually.
In all the error terms, 
we can replace each $\hat{u}_\sigma$ by $\hat{u}^<_\sigma$ defined in \eqref{eq: u<} as a consequence of the  constraints on $p,k,r,r^\prime$, which will be convenient for the estimates. 

The first term to consider is the analogue of \eqref{46:I1}, 
\[
 \mathrm{I}_1^t  =  \frac{1}{L^3}\sum_k \hat{v}^t_\uparrow(k)  \int dx dz\, e^{ik\cdot (x-z)} a^\ast_\uparrow(u_x)    b^{<\ast}_\downarrow({\chi_<}_{x}) b_\downarrow^t({\chi_<}_{z})a_\uparrow(u_z^{t}) 
\]
where we used the definition \eqref{eq: def bt chi <} and introduced 
\begin{equation}\label{eq:  def b< bt operators Q2<}
  b_\sigma^<({\chi_<}_y) = \int dx\, \chi_<(x-y)a_\sigma(u_x^<)a_\sigma(v_x). 
\end{equation}
By the Cauchy--Schwarz inequality and $0\leq \hat{v}^t_\uparrow(k)\leq e^{t (k_F^\uparrow)^2}$, 
$$
  |\langle \psi, \mathrm{I}_1^t \psi\rangle| 
\leq  e^{ t(k_F^\uparrow)^2}
\left(\int dx\,\left\|b_\downarrow^<({\chi_<}_{x})a_\uparrow(u_x) \psi\right\|^2\right)^{1/2} \left(\int dz\, \left\|b_\downarrow^t({\chi_<}_{z})a_\uparrow(u_z^{t}) \psi\right\|^2\right)^{1/2}.
$$
From the bounds \eqref{eq: bound bt} and $\|b^<_\downarrow({\chi_<}_{y})\| \leq \|\chi_<\|_1\|{v}_\downarrow\|_2 \| {u}^<_\downarrow\|_2 \leq C\rho^{1-\frac{3}{2}\gamma}$, we find with \eqref{eq: from aut to Number op} that
%
\begin{equation}\label{eq: est I1 Q2< T2}
  |\langle \psi, \mathrm{I}_1^t \psi\rangle| \leq C\rho^{1-\frac{3}{2}\gamma}  \|{u}^{t}_\downarrow\|_2 \|{v}^t_\downarrow\|_2 \langle \psi, \mathcal{N}\psi\rangle .
\end{equation}
The integral over $t$ is bounded  in \eqref{eq: est int t uv final}, resulting in
\begin{equation}\label{eq: est I2 Q2< T2}
 \int_0^\infty dt\, e^{-2t\varepsilon}  |\langle \psi, \mathrm{I}_1^t \psi\rangle| \leq C\rho^{\frac{4}{3} -2\gamma -\frac{\delta}{16}}\langle \psi,\mathcal{N}\psi\rangle.
\end{equation}
The estimate of  $\mathrm{I}_2$ works in the same way, with the same result.

Next we consider the analogue of \eqref{46:I3},  given by 
$$
  \mathrm{I}_3^t = \int dxdydzdz^\prime\, \chi_<(x-y) \chi_<(z-z^\prime)v^t_\downarrow(y;z^\prime)v^t_\uparrow(x;z) 
    a^\ast_\uparrow(u_x^<)a^\ast_\downarrow(u_y^<)a_\downarrow(u^t_{z^\prime})a_\uparrow(u^t_z).
 $$
Using  the Cauchy--Schwarz inequality together with the bound  \eqref{eq: from aut to Number op}, we can estimate 
\begin{align*}
  |\langle \psi, \mathrm{I}_3\psi\rangle| &\leq \|{u}^t_\downarrow\|_2\|{u}^<_\downarrow\|_2\int dxdydzdz^\prime\, |\chi_<(x-y)| |\chi_<(z-z^\prime)||v^t_\downarrow(y;z^\prime)||v^t_\uparrow(x;z)| 
 \|a_\uparrow(u_x^<)\psi\| \|a_\uparrow(u^t_z)\psi\|
  \\
   & \leq  e^{-t (k_F^\uparrow)^2}\|{u}^t_\downarrow\|_2\|{u}^<_\downarrow\|_2 \|\chi_<\|_1^2 \|v^t_\downarrow\|_2 \|v^t_\uparrow\|_2  \langle \psi, \mathcal{N}\psi\rangle
\leq C\rho^{1 - \frac{3}{2}\gamma} \|{u}_\downarrow^t\|_2 \|{v}_\downarrow^t\|_2 \langle \psi, \mathcal{N}\psi\rangle,
\end{align*}
where we also used Lemma \ref{lem:cut-off-chi} in the last step. Applying again \eqref{eq: est int t uv final} for the integration over $t$ shows that \eqref{eq: est I2 Q2< T2} also holds for $\mathrm{I}_3^t$.  

%
%
The term $\mathrm{I}_4^t$ is the analogue of \eqref{46:I4}. Similarly as for the term $\mathrm{I}_4$ in the proof of Proposition \ref{pro: estimate EH0 fin},  it is convenient to write $\mathrm{I}_4^t$ in normal order. Using \eqref{I4no}, we can write $\mathrm{I}_4^t = \mathrm{I}_{4;a}^t + \mathrm{I}_{4;b}^t + \mathrm{I}_{4;c}^t + \mathrm{I}_{4;d}^t$, and the individual terms can be 
analyzed in a similar way as the  corresponding terms in Proposition \ref{pro: estimate EH0 fin}. 
The first term equals
  \[
  \mathrm{I}_{4;a}^t  = \frac{1}{L^3} \sum_k \hat{u}_\uparrow^t(k) \int dy dz \, e^{ik\cdot (y-z)}  a_\uparrow^\ast (v_y) b_{\downarrow}^{<\ast}({\chi_<}_y) b_\downarrow^t({\chi_<}_{z})a_\uparrow(v^t_{z}) 
  \]
where we used again the operators introduced in \eqref{eq: def bt chi <} and \eqref{eq:  def b< bt operators Q2<}. Using $0\leq \hat{u}_\uparrow^t(k) \leq e^{-t(k_F^\uparrow)^2}$
and the Cauchy--Schwarz inequality, we find 
\[
    |\langle \psi, \mathrm{I}_{4;a}\psi\rangle|\leq e^{-t(k_F^\uparrow)^2}\left(\int dy \left\|b_\downarrow^<({\chi_<}_{y})a_\uparrow(v_y)\psi\right\|^2\right)^{\frac{1}{2}}\left(\int dz \left\|  b_\downarrow^t({\chi_<}_{z})a_\uparrow(v_z^t)\psi\right\|^2\right)^{\frac{1}{2}}.
\]
We can further bound $\|b_\downarrow^<({\chi_<}_{y})a_\uparrow(v_y)\psi\| \leq \|v_\uparrow\|_2 \|v_\downarrow\|_2 \int dx\, |\chi_<(x-y)| \|a_\downarrow(u^<_x) \psi\|$ and  similarly for $\|b_\downarrow^t({\chi_<}_{z})a_\uparrow(v_z^t)\psi\|$, with the result that 
\[
    |\langle \psi, \mathrm{I}_{4;a}\psi\rangle|\leq \rho \|\chi_<\|_1^2 e^{-t(k_F^\uparrow)^2-t(k_F^\downarrow)^2 } 
     \|v^t_\uparrow\|_2 \|v^t_\downarrow\|_2 \langle \psi, \mathcal{N}\psi\rangle .
\]
where we also used again \eqref{eq: from aut to Number op}. 
Applying now \eqref{fac2} for the integration over $t$, we arrive at
  \[
\int_0^\infty dt\, e^{-2t \varepsilon}   |\langle \psi, \mathrm{I}_{4;a}^t\psi\rangle| 
\leq C\rho^{\frac{4}{3} -\frac{\delta}{8}}\langle \psi, \mathcal{N}\psi\rangle. 
  \]
 The term $\mathrm{I}_{4;b}^t$ is given by 
   \[
    \mathrm{I}_{4;b}^t = \int dxdydzdz^\prime\, \chi_<(x-y)\chi_<(z-z^\prime)u^{t}_\uparrow(z;y) v^t_\downarrow(z^\prime;x) a^\ast_\downarrow(u_x^<)a^\ast_\uparrow(v_y)a_\downarrow(u^{t}_{z^\prime})a_\uparrow(v^{t}_z).
  \]
Similarly as above, 
we can bound it with the aid of the  Cauchy--Schwarz inequality as
  \begin{align*}
|\langle \psi, \mathrm{I}_{4;b}\psi\rangle| &\leq  \|{v}^t_\uparrow\|_2 \|v_\downarrow\|_2   \int dxdydzdz^\prime\, |\chi_<(x-y)||\chi_<(z-z^\prime)| |u^{t}_\uparrow(z;y)| |v^{t}_\downarrow(z^\prime; x)|\|a_\downarrow(u^<_x)\psi\|  \|a_\downarrow(u^t_{z^\prime})\psi\|
\\
  &\leq  e^{ -t(k_F^\downarrow)^2} \|{v}^t_\uparrow\|_2 \|v_\downarrow\|_2    \|{v}_\downarrow^t\|_2 \|{u}_\uparrow^t\|_2 \|\chi_<\|_1^2 \langle \psi, \mathcal{N}\psi\rangle
\leq C\rho  \|{u}^{t}_\uparrow\|_2\|{v}^t_\uparrow\|_2 \langle \psi, \mathcal{N}\psi\rangle.
  \end{align*}
With \eqref{eq: est int t uv final} to bound the integral over $t$ we therefore obtain 
\begin{equation}\label{vabo}
\int_0^\infty dt\, e^{-2t \varepsilon}  |\langle \psi, \mathrm{I}_{4;b}^t\psi\rangle| \leq  C\rho^{\frac{4}{3} - \frac{\gamma}{2} - \frac{\delta}{16}}\langle \psi, \mathcal{N}\psi\rangle.
\end{equation}
The estimate for the term $\mathrm{I}_{4;c}^t$, which is given by
\[
  \mathrm{I}_{4;c}^t = \int dxdydzdz^\prime\, \chi_<(x-y)\chi_<(z-z^\prime)u^{t}_\uparrow(z;y) v^t_\uparrow(z;y) a^\ast_\downarrow(u_x^<)a^\ast_\downarrow(v_x)a_\downarrow(u^{t}_{z^\prime})a_\downarrow(v^{t}_{z^\prime}),
\]
can be done in a similar way, obtaining the same result. 
Finally, the term 
  \[
    \mathrm{I}_{4;d}^t = \int dxdydzdz^\prime\, \chi_<(x-y)\chi_<(z-z^\prime) u^{t}_\uparrow(z;y)v^t_\downarrow(z^\prime;x)v^t_\uparrow(z;y) a^\ast_\downarrow(u_x^<)a_\downarrow(u^{t}_{z^\prime})
  \]
  can be estimated similarly as $\mathrm{I}_{4;a}^t$ above, using also that $|v^t_\downarrow(z^\prime;x)| \leq \rho e^{t(k_F^\downarrow)^2}$, with the result that 
  \[
    |\langle \psi, \mathrm{I}_{4;d}^t\psi\rangle| \leq \rho \|\chi_<\|_1^2  \|{u}^{t}_\uparrow\|_2\|v^t_\uparrow\|_2  \langle \psi, \mathcal{N}\psi\rangle. 
  \] 
Again \eqref{eq: est int t uv final}  then yields the validity of the bound   \eqref{vabo} also for $\mathrm{I}_{4;d}^t$. Combining the bounds, we conclude that
for $\delta \leq 8\gamma$,
\begin{equation}\label{eq: est I4 Q2 T2}
\int_0^\infty dt\, e^{-2t \varepsilon}  |\langle \psi, \mathrm{I}_4^t\psi\rangle| \leq C\rho^{\frac{4}{3} -\frac{\gamma}{2} -\frac{\delta}{16}}\langle \psi, \mathcal{N}\psi\rangle.
\end{equation}
The estimate for the term $\mathrm{I}_5^t$  can be done in the same way, obtaining the same result.

Finally we analyze $\mathrm{I}_6^t$, which is the analogue of \eqref{eq: aaa*a* in}. As already mentioned at the beginning of the proof, we shall put it in normal order in order to extract the constant contribution. Using \eqref{noI6}, we obtain $\mathrm{I}_6^t = \mathrm{I}_{6;a}^t + \mathrm{I}_{6;b}^t + \mathrm{I}_{6;c}^t$, where the first is the desired constant,  
\[
(8\pi a)^2 \int_0^\infty dt\, e^{-2t\varepsilon}\,  \mathrm{I}_{6;a}^t = -\frac{8 \pi a}{L^6}\sum_{p,r,r^\prime} (\widehat{\chi}_<(p))^2 \hat{\eta}^{\varepsilon}_{r,r^\prime}(p)\hat{u}_\uparrow(r+p)\hat{u}_\downarrow(r^\prime - p) \hat{v}_\uparrow(r)\hat{v}_\downarrow(r^\prime).
\]
The second term equals
\[
  \mathrm{I}_{6;b}^t =   \frac 1{L^6} \sum_{p,r,r'} \widehat \chi_<(p)^2 \hat u_\uparrow^t(r+p) \hat u_\downarrow^t(r'-p) \hat v^t_\uparrow(r) \hat v^t_\downarrow(r') \left( \hat a^\ast_{-r,\uparrow} \hat a_{-r,\uparrow} + \hat a^\ast_{-r',\downarrow} \hat a_{-r',\downarrow} \right)
\]
and is thus bounded by $\rho \|u_\uparrow^t\|_2 \|u_\downarrow^t\|_2  e^{t (k_F^\uparrow)^2 + t (k_F^\downarrow)^2} \mathcal{N}$, using $0\leq \widehat \chi_<(p) \leq 1$. The third term we can write in configuration space as
$$
  \mathrm{I}_{6;c}^t = \int dxdydzdz^\prime\, \chi_<(x-y)\chi_<(z-z^\prime)u^{t}_\uparrow(z;x) u^t_\downarrow(z^\prime;y) a^\ast_\uparrow(v_x)a^\ast_\downarrow(v_y)a_\uparrow(v^{t}_{z}) a_\downarrow(v^{t}_{z'}).
$$
Proceeding similarly as for $\mathrm{I}_{4;b}^t$ above, we can bound it as 
$$
  |\langle \psi,\mathrm{I}_{6;c}^t\psi\rangle| \leq   \|\chi_<\|_1^2  \| v_\downarrow\|_2 \| v_\uparrow^t\|_2  \|u_\uparrow^t\|_2 \|u_\downarrow^t\|_2  e^{t(k_F^\downarrow)^2}  \langle \psi, \mathcal{N}\psi\rangle  \leq C \rho  \|{u}^{t}_\uparrow\|_2 \|{u}^t_\downarrow\|_2 e^{t(k_F^\uparrow)^2 + t(k_F^\downarrow)^2} \langle \psi, \mathcal{N}\psi\rangle .
$$
For the integration over $t$, we can again use Lemma~\ref{lem:t} to conclude that
\begin{equation}\label{eq: est I6b I6c I6d }
\int_0^\infty dt\, e^{-2t\varepsilon}  |\langle \psi, (\mathrm{I}_{6;b}^t + \mathrm{I}_{6;c}^t )\psi\rangle|\leq C\rho^{\frac{4}{3} - \gamma} \langle \psi, \mathcal{N}\psi\rangle.
\end{equation}
Combining all the estimates 
we proved \eqref{eq: conj Q2< T2 psi}. 
\end{proof} 


\subsection{Conclusion of Proposition \ref{prop:T2}} \label{sec:conc-T2}

\begin{proof}[Proof of Proposition \ref{prop:T2}]
We shall now combine the results of Propositions~\ref{pro: prop est H0 T2}--\ref{pro: T2 Q2<} in order to prove Proposition~\ref{prop:T2}. 
The bounds \eqref{eq:H0 T2-intro}, \eqref{eq:Q4 T2-intro} and \eqref{eq:cH0 T2-intro} were already proved in Proposition \ref{pro: prop est H0 T2}, Proposition \ref{pro: prop est Q4 T2} and Proposition \ref{pro: estimate EH0 fin}, respectively. Moreover, combining Proposition \ref{pro: prop est H0 T2} and Proposition \ref{pro: T2 Q2<}  we obtain
\begin{align*}
  \langle T_2\Omega, (\mathbb{H}_0 + \mathbb{Q}_{2;<}) T_2\Omega\rangle 
  &=\int_0^1 d\lambda \, \lambda \, \partial_\lambda \langle T_{2;\lambda}\Omega, {\mathbb{Q}}_{2;<}T_{2;\lambda}\Omega\rangle + \int_0^{1}d\lambda\, \langle T_{2,\lambda}\Omega\, \mathfrak{e}_{\mathbb{H}_0}T_{2;\lambda}\Omega\rangle \\
  &= -\frac{8\pi a}{L^6}\sum_{p,r,r^\prime}(\widehat{\chi}_<(p))^2 \hat{\eta}^{\varepsilon}_{r,r^\prime}(p) \hat{u}_\uparrow(r+p)\hat{u}_{\downarrow}(r^\prime - p)\hat{v}_\uparrow(r)\hat{v}_{\downarrow}(r^\prime) \\ & \quad + \int_0^{1}d\lambda\, \langle T_{2,\lambda}\Omega , (\mathfrak{e}_{\mathbb{H}_0} + \lambda \mathcal{E}_{\mathbb{Q}_{2;<}} )T_{2;\lambda}\Omega\rangle.
\end{align*}
Inserting the definition of $\hat{\eta}_{r,r^\prime}^\varepsilon$ in \eqref{eq: def eta epsilon},  the desired estimate \eqref{eq:main-T2} follows from \eqref{eq: prop est H0 T2} and \eqref{eq: conj Q2< T2 Om}. 
\end{proof}

\section{Conclusion of Theorem \ref{thm: main up bd}} \label{sec: conclusions}

We shall now complete the proof of Theorem \ref{thm: main up bd}. As already explained in
Section~\ref{sec:sub-strategy}, the starting point is the bound 
\[
  E_{L}(N_\uparrow, N_\downarrow) \leq E_{\mathrm{FFG}} +  \langle \Omega,T_2^* T_1^* \mathcal{H}_{\mathrm{corr}} T_1T_2\Omega\rangle 
\]
using the trial state \eqref{eq: def trial state} and Proposition~\ref{pro: fermionic transf}. From Proposition~\ref{lem:eff-Hamiltonian} we further obtain
$$
\langle T_1T_2 \Omega, \mathcal{H}_{\mathrm{corr}} T_1T_2 \Omega\rangle \le
\langle T_1T_2 \Omega, \mathcal{H}_{\mathrm{corr}}^{\mathrm{eff}}   T_1T_2 \Omega\rangle + C L^3\rho^{\frac{8}{3} -\gamma - \frac{\delta}{8}}.
$$
Combining now Propositions~\ref{lem:conjugation-T1} and~\ref{prop:T2}
and taking the thermodynamic limit we find that 
\begin{align}\label{eq: final section T2 before lemmas} 
e(\rho_\uparrow, \rho_\downarrow) &\le \frac{3}{5}(6\pi^2)^{\frac{2}{3}} \left(\rho_\uparrow^{\frac{5}{3}}  + \rho_\downarrow^{\frac{5}{3}}\right)  + 8\pi a\rho_\uparrow\rho_\downarrow    \nonumber
  \\
  &\quad +\frac{(8\pi a)^2}{(2\pi)^9} \int dpdrdr^\prime \left(\frac{1}{2|p|^2} - \frac{\hat{u}_\uparrow(r+p) \hat{u}_\downarrow(r^\prime - p)}{|r+p|^2 - |r|^2 + |r^\prime - p|^2 - |r^\prime|^2 + 2\varepsilon}\right)   (\widehat{\chi}_<(p))^2\hat{v}_\uparrow(r) \hat{v}_\downarrow(r^\prime) \nonumber \\ & \quad + \mathcal{E}_{T_1 + T_2}
\end{align}
with
\[
  |\mathcal{E}_{T_1 + T_2}| \leq C \left( \rho^{\frac{7}{3} + \gamma} + \rho^{\frac{8}{3} - 2\gamma}+ 
  \rho^{\frac{7}{3} -\gamma + \frac{7}{8}\delta } + \rho^{3 - 3\gamma - \frac{3}{16}\delta}\right).
\]
To obtain \eqref{eq: final section T2 before lemmas}, we also used \eqref{eq: FFG energy} and that $(2\pi)^{-3} \int  dr\, \hat v_\sigma(r) = \rho_\sigma$. 
Recall the conditions on the parameters,  $0<\gamma< \frac{1}{3}$, $0<\delta \leq 8\gamma$, and $2\gamma + \frac{\delta}{16}\leq \frac{1}{3}$.
%
By choosing $\gamma=1/9$ and  $16/63\leq \delta \leq 8/9$, we fulfill these conditions 
and find that
$|\mathcal{E}_{T_1 + T_2}| \leq C\rho^{\frac{7}{3}+\frac{1}{9}}$.

To complete the proof of Theorem~\ref{thm: main up bd}, we still have to remove the parameter $2\varepsilon = 2\rho^{2/3 + \delta}$ in \eqref{eq: final section T2 before lemmas}, which was used to have strictly positive lower bound on the relevant denominator, and the cut-off $(\widehat{\chi}_<(p))^2$ which restricts the integration to $|p| \lesssim \rho^{1/3 -\gamma}$. 
This is the content of the following lemma.

\begin{lemma}[Removing the gap and the cut-off]\label{lem: gap constant} 
We have
  \begin{multline}\label{tocomp}
 \int dpdrdr^\prime \left(   \frac{ \hat{u}_\uparrow(r+p)\hat{u}_\downarrow(r^\prime - p)}{|r+p|^2 - |r|^2 + |r^\prime - p|^2 - |r^\prime|^2+ 2\rho^{\frac{2}{3} + \delta}}    - \frac{1}{2|p|^2}\right)(\widehat{\chi}_<(p))^2\hat{v}_\uparrow(r)\hat{v}_\downarrow(r^\prime)\\ 
  =\int dpdrdr^\prime \left(   \frac{\hat{u}_\uparrow(r+p)\hat{u}_\downarrow(r^\prime - p)}{|r+p|^2 - |r|^2 + |r^\prime - p|^2 - |r^\prime|^2}    - \frac{1}{2|p|^2}\right)\hat{v}_\uparrow(r)\hat{v}_\downarrow(r^\prime) + {O}(\rho^{\frac{7}{3} + \delta})+ {O}(\rho^{\frac{7}{3} + \gamma}) .
\end{multline}
\end{lemma}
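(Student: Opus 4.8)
The plan is to show that two separate modifications of the triple integral on the left-hand side of \eqref{tocomp}—first removing the momentum cut-off $(\widehat\chi_<(p))^2$, and then removing the gap parameter $2\rho^{2/3+\delta}$ in the denominator—each produce only an error of the claimed order. It is convenient to first note that the full integral $\int dpdrdr'\,(\cdots)\hat v_\uparrow(r)\hat v_\downarrow(r')$ with the difference of the two denominators is absolutely convergent: for large $|p|$ the bracket is $O(|p|^{-4})$ by a Taylor expansion (the leading $1/(2|p|^2)$ cancels, and the next term involves $r\cdot p$ which integrates to zero over the balls, leaving $O(|p|^{-4})$), and the $r,r'$ integrations are over bounded balls of radius $k_F^\sigma = (6\pi^2)^{1/3}\rho_\sigma^{1/3}$. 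Hence $|k_F^\sigma|\le C\rho^{1/3}$, so $\hat v_\uparrow(r)\hat v_\downarrow(r')$ restricts $(r,r')$ to a region of measure $O(\rho^2)$.

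First I would remove the cut-off. On the region where $\widehat\chi_<(p)\ne 1$, i.e. $|p|\gtrsim \rho^{1/3-\gamma}$, the integrand (the bracket times $\hat v_\uparrow\hat v_\downarrow$) is bounded in absolute value by $C\rho^2 |p|^{-4}$, and integrating $|p|^{-4}$ over $|p|\gtrsim\rho^{1/3-\gamma}$ gives $O(\rho^{-(1/3-\gamma)})$. Thus replacing $(\widehat\chi_<(p))^2$ by $1$ costs $O(\rho^{2-1/3+\gamma})=O(\rho^{5/3+\gamma})$—which is actually much smaller than the claimed $O(\rho^{7/3+\gamma})$, so this is comfortable. (One should be slightly careful that $\widehat\chi_<$ is a smooth cut-off supported in $|p|\le 5\rho^{1/3-\gamma}$ with $\widehat\chi_<=1$ on $|p|\le 4\rho^{1/3-\gamma}$, so the transition region is of the same scale and the same bound applies.)

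Next I would remove the gap. Write the difference of the two versions (with and without $2\eps$, $\eps=\rho^{2/3+\delta}$) as
\[
\int dpdrdr'\,\hat u_\uparrow(r+p)\hat u_\downarrow(r'-p)\,\frac{2\eps}{D(D+2\eps)}\,\hat v_\uparrow(r)\hat v_\downarrow(r'),
\]
where $D=|r+p|^2-|r|^2+|r'-p|^2-|r'|^2=\lambda_{r,p}+\lambda_{r',-p}>0$ on the support of $\hat u_\uparrow(r+p)\hat u_\downarrow(r'-p)\hat v_\uparrow(r)\hat v_\downarrow(r')$. Here one keeps the cut-off or not—having just removed it, I work without it, so $p$ ranges over all of $\R^3$, but the $1/(2|p|^2)$ term is unaffected by $\eps$ and cancels out of the difference. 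The integrand is nonnegative and bounded by $\min\{1/D,\,1/(2\eps)\}$ times the characteristic functions. Splitting the $D$-integration into $\{D\le\rho^{1/3}\}$ and $\{D>\rho^{1/3}\}$ and using that on the region of integration $|p|\lesssim$ (anything, but the relevant contribution has $|p|\lesssim\rho^{1/3}$ because for $|p|\gg\rho^{1/3}$ one has $D\sim|p|^2$ and the $\eps/D^2$ bound gives a convergent, smaller tail)... more concretely: bound $2\eps/(D(D+2\eps))\le 2\eps/D^2$ always, and also $\le 1/\eps\cdot(2\eps/D)\cdot$... Actually the clean estimate: $2\eps/(D(D+2\eps))\le \min(2\eps/D^2, 1/(2\eps))$. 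The contribution of $\{D\le 2\eps\}$ is bounded by $\frac1{2\eps}\cdot 2\eps\cdot(\text{measure of }\{D\le 2\eps\}\cap\text{supp})$; since near $D=0$ the set $\{D\le s\}$ in $(p,r,r')$ with $r,r'$ in balls of radius $O(\rho^{1/3})$ has measure $O(\rho^{2}\cdot s^{1/2}\cdot\rho^{1/3})$ or so (a careful computation using that $D$ vanishes to first order gives measure $\lesssim \rho^{?}s$), one gets $O(\eps\cdot\rho^{2+1/3})$-type bounds; and the contribution of $\{D>2\eps\}$ is $\le 2\eps\int_{\{D>2\eps\}}D^{-2}$, which after the $r,r'$ integration (measure $O(\rho^2)$) and the $p$-integration over a ball of radius $O(\rho^{1/3})$ is $\lesssim \eps\cdot\rho^2\cdot\rho^{1/3}\cdot\eps^{-1}\cdot(\text{something})$. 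The upshot, which I would verify by a dimensional/scaling analysis, is that the total error is $O(\eps\,\rho^{5/3})=O(\rho^{7/3+\delta})$, matching the claim.

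The main obstacle will be the second step: getting the sharp $O(\rho^{7/3+\delta})$ bound on the gap-removal error requires a careful analysis of the measure of the set where the denominator $D=\lambda_{r,p}+\lambda_{r',-p}$ is small, on the support of the four characteristic/cut-off functions. The naive bound $2\eps/D^2$ is not integrable near $D=0$, so one must genuinely use the structure: rescale $p,r,r'$ by $k_F=(6\pi^2)^{1/3}\rho^{1/3}$ so that $D=k_F^2\tilde D$ with $\tilde D$ of order one on an order-one domain, and $2\eps=2\rho^{2/3+\delta}=2k_F^2\rho^\delta\cdot(6\pi^2)^{-2/3}\cdot\rho^{?}$—i.e. after rescaling the effective gap is $O(\rho^\delta)$, and the error becomes $\rho^{7/3}$ times $\int d\tilde p\,d\tilde r\,d\tilde r'\,\tilde u\tilde u'\,\frac{O(\rho^\delta)}{\tilde D(\tilde D+O(\rho^\delta))}\tilde v\tilde v'$, which is $O(\rho^\delta\log(1/\rho^\delta))$ or $O(\rho^\delta)$ by a one-dimensional computation in $\tilde D$ near its zero set (the zero set is where $\tilde r+\tilde p, \tilde r'-\tilde p$ lie on the unit sphere, a codimension-one situation, so $\int \tilde D^{-1}\ind_{\tilde D\le s}$ converges and the full $\int \frac{s}{\tilde D(\tilde D+s)}$ is $O(s)$ up to logs). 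One then absorbs the log into $\rho^{\delta}$ by noting that, as in the proof of Lemma~\ref{lem:t}, $\log(1/\rho^\delta)\le C\rho^{-\delta'}$ for any $\delta'>0$, or more simply that $O(\rho^{7/3+\delta}\log(1/\rho))$ is still $o(\rho^{7/3})$ and can be written $O(\rho^{7/3+\delta})$ after slightly decreasing $\delta$—but since the statement already allows $O(\rho^{7/3+\delta})$ alongside $O(\rho^{7/3+\gamma})$, and in the end $\gamma=1/9$, $\delta\ge 16/63$, the log is harmless. I would therefore present the estimate with the rescaling made explicit, reduce to the one-dimensional integral $\int_0^C \frac{s\,d\tilde D}{\tilde D(\tilde D+s)}\le C s|\log s|$, and conclude.
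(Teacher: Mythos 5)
Your proposal has a genuine gap in the cut-off-removal step. You bound the bracket (the difference of the two denominators) by $C|p|^{-4}$ for large $|p|$, and then multiply by the $(r,r')$-measure $\rho^2$ to get an error $O(\rho^{5/3+\gamma})$, which you then describe as ``much smaller than the claimed $O(\rho^{7/3+\gamma})$.'' This is backwards: since $\rho\to 0$, $\rho^{5/3+\gamma}\gg\rho^{7/3+\gamma}$, so even if your exponent were right the argument would fail. And your exponent is not right. After the odd-in-$p$ term $p\cdot(r'-r)/(2|p|^4)$ is eliminated by symmetry, the surviving contribution to the bracket is $\bigl(p\cdot(r'-r)\bigr)^2/\bigl(|p|^4(\lambda_{r,p}+\lambda_{r',-p})\bigr)$, and on the support of $1-\widehat\chi_<(p)^2$ (where $\lambda_{r,p}+\lambda_{r',-p}\gtrsim |p|^2$ and $|r|,|r'|\lesssim\rho^{1/3}$) this is $\lesssim |r-r'|^2/|p|^4\lesssim\rho^{2/3}/|p|^4$. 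This extra factor $\rho^{2/3}$ is precisely what you are missing; with it, $\rho^2\cdot\rho^{2/3}\int_{|p|\gtrsim\rho^{1/3-\gamma}}|p|^{-4}dp = \rho^{8/3}\cdot\rho^{\gamma-1/3}=\rho^{7/3+\gamma}$, matching the claim. Without it, your error term dominates the $\rho^{7/3}$ order you are trying to resolve.

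On the gap-removal step, your end result is essentially correct but your reasoning contains a false claim: you assert that ``the naive bound $2\varepsilon/D^2$ is not integrable near $D=0$'' and therefore resort to a bound of the form $O(\varepsilon\log\varepsilon)$. In fact $\int D^{-2}$ \emph{is} finite on the relevant domain. After rescaling by $k_F^\uparrow$, the denominator is $D=\lambda_{r,p}+\lambda_{r',-p}$ with all four constituent terms $(|r+p|^2-1)$, $(1-|r|^2)$, $(|r'-p|^2-x^2)$, $(x^2-|r'|^2)$ nonnegative on the support; the zero set $\{D=0\}$ requires all four to vanish and is of codimension four in $(p,r,r')\in\mathbb R^9$, while $D$ vanishes only linearly in the transverse distance, so $D^{-2}$ is locally integrable (this is the content of Lemma~\ref{lem:F2}, which uses the elementary lower bound $D\ge e_{r+p}+e_r+x(x-|r'|)$ and a one-dimensional radial estimate). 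Once that lemma is in hand, the estimate is immediate: $2\varepsilon/(D(D+2\varepsilon))\le 2\varepsilon D^{-2}$, so the error is $O(\varepsilon\rho^{5/3})=O(\rho^{7/3+\delta})$ with no logarithm. Your ``codimension-one'' heuristic for the zero set is also off: the zero set is not $\{|r+p|=1\}\cup\{|r'-p|=x\}$ but the much smaller set where \emph{all four} spherical constraints are simultaneously active. In short, your route to the gap-removal bound works but is both more complicated and based on an incorrect integrability claim, while your cut-off-removal bound as written does not close.
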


\begin{proof}  Recall the notation $\lambda_{r,p} = |r+p|^2 - |r|^2$. We split the proof in two parts. We first prove that 
\begin{multline}\label{eq: diff gap}
 \int dpdrdr^\prime  \frac{ \hat{u}_\uparrow(r+p)\hat{u}_\downarrow(r^\prime - p)  \hat{v}_\uparrow(r)\hat{v}_\downarrow(r^\prime)}{\lambda_{r,p} + \lambda_{r^\prime, -p}+ 2\rho^{\frac{2}{3} + \delta} }(\widehat{\chi}_<(p))^2
  \\
   = \int dpdrdr^\prime  \frac{ \hat{u}_\uparrow(r+p)\hat{u}_\downarrow(r^\prime - p)\hat{v}_\uparrow(r)\hat{v}_\downarrow(r^\prime)}{\lambda_{r,p} + \lambda_{r^\prime, -p}} (\widehat{\chi}_<(p))^2 + {O}(\rho^{\frac{7}{3} + \delta}).
  \end{multline}
The difference between the two terms in \eqref{eq: diff gap} is given by 
\begin{align}
  \mathrm{I}:= 2\rho^{\frac 2 3+\delta}\int dpdrdr^\prime    \frac{ \hat{u}_\uparrow(r+p)\hat{u}_\downarrow(r^\prime - p)\hat{v}_\uparrow(r)\hat{v}_\downarrow(r^\prime)}{(\lambda_{r,p} + \lambda_{r^\prime, -p} ) (\lambda_{r,p} + \lambda_{r^\prime, -p} + 2\rho^{\frac{2}{3} + \delta}) }(\widehat{\chi}_<(p))^2  .
 \end{align}
Without loss of generality let us assume that  $k_F^\uparrow \geq k_F^\downarrow$. Rescaling the variables respect to $k_F^\uparrow$, we get
\begin{equation}\label{eq: integral with singularity}
  |\mathrm{I}| \leq C\rho^{\frac{7}{3} +\delta } \int_{\substack{|r|<1<|r+p| \\ |r^\prime|< k_F^\downarrow/k_F^\uparrow < |r^\prime - p|}} dpdrdr^\prime \frac{1 
  }{(\lambda_{r,p} + \lambda_{r^\prime, -p})^2}\leq C\rho^{\frac{7}{3} + \delta } 
\end{equation}
where the boundedness of the integral follows from Lemma \ref{lem:F2} in Appendix \ref{app: integral singularity}. 
To conclude it remains to show that 
$$
\mathrm{II}:= \int dpdrdr^\prime  \left(   \frac{ \hat{u}_\uparrow(r+p)\hat{u}_\downarrow(r^\prime - p)}{\lambda_{r,p} + \lambda_{r^\prime, -p}}    - \frac{1}{2|p|^2}\right)\hat{v}_\uparrow(r)\hat{v}_\downarrow(r^\prime)
 \left(1-\widehat{\chi}_<(p)^2\right) 
 = {O}(\rho^{\frac{7}{3} + \gamma}).
$$
The integration over $p$ runs over $|p|\geq 4\rho^{1/3-\gamma}$ only, hence we have  $\hat{u}_\uparrow(r+p)\hat{u}_\downarrow(r^\prime - p) =1$. Moreover,  using that
\begin{equation}\label{inte}
     \frac{1}{\lambda_{r,p} + \lambda_{r^\prime, -p}}    - \frac{1}{2|p|^2} 
     = \frac{p\cdot (r'-r) }{2|p|^4 } + \frac{\left( p\cdot (r'-r) \right)^2}{|p|^4 (\lambda_{r,p} + \lambda_{r^\prime, -p})} 
\end{equation}
 we can write
 $$
\mathrm{II}= \int dpdrdr^\prime  \frac{\left( p\cdot (r'-r) \right)^2}{|p|^4 (\lambda_{r,p} + \lambda_{r^\prime, -p})} \hat{v}_\uparrow(r)\hat{v}_\downarrow(r^\prime)
 \left(1-\widehat{\chi}_<(p)^2\right) 
$$
%
since the first term in \eqref{inte} integrates to zero by symmetry. On the support of $1-\chi_<(p)^2$, $\lambda_{r,p} + \lambda_{r^\prime, -p}$ is bounded from below by $C p^2$, hence
%
$$
\mathrm{II} \leq  C\rho^{\frac{8}{3}}\int_{|p| > 4\rho^{\frac{1}{3} -\gamma}}  \frac{dp}{|p|^4} = C\rho^{\frac{7}{3} +\gamma}.
$$
 This completes the proof of the lemma. 
\end{proof}


Applying Lemma~\ref{lem: gap constant} to the bound in \eqref{eq: final section T2 before lemmas} completes the proof of  Theorem \ref{thm: main up bd}. To get the explicit form \eqref{eq:def-F} of the Huang--Yang correction  one has to compute the integral on the right-hand side of  \eqref{tocomp}.  This will be carried out in Appendix \ref{app: function g}.

%
%
%
%
%
%
%
%

\bigskip

\noindent\textit{Acknowledgments.} We thank the referees for valuable remarks. This work was partially funded by the Deutsche Forschungsgemeinschaft (DFG, German Research Foundation) via the TRR 352 – Project-ID 470903074. PTN was partially supported by the European Research Council via the ERC Consolidator Grant RAMBAS – Project-Nr. 10104424. 
\appendix
\section{Useful bounds on scattering solution and cut-off functions}\label{app: bounds phi}
In this section we prove  bounds on the solution of the zero-energy scattering equation \eqref{eq: zero energy scatt eq-intro},  and on the cut-off functions  we use in our analysis.

\subsection{Bounds on scattering solution}
Let us start this section by collecting some well-known estimates for the scattering solution $\varphi_\infty$. 

\begin{lemma}[Properties of $\varphi_\infty$]\label{lem: prop of phi infty} Let $V_\infty$ be as in Assumption~\ref{asu: potential V}, 
and $\varphi_\infty$ the solution of \eqref{eq: zero energy scatt eq-intro}. Then
  \begin{equation}\label{eq: point bd phi infty}
    0 \leq \varphi_\infty(x)\leq \min \left\{ 1 , \frac a {|x|} \right\} \quad \forall x\in\mathbb{R}^3 \ , \qquad  
    \varphi_\infty(x) = \frac{a}{|x|}\quad \forall x\in\mathbb{R}^3\setminus (\mathrm{supp}V_\infty) ,
  \end{equation}
  where $a= (8\pi)^{-1} \int_{\mathbb{R}^3} V_\infty (1-\varphi_\infty)$ is the scattering length of  $V_\infty$. 
\end{lemma}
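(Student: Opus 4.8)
\textbf{Proof plan for Lemma~\ref{lem: prop of phi infty} (properties of $\varphi_\infty$).}

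The statement is classical, so the plan is to assemble the standard potential-theoretic argument. First I would recast the zero-energy scattering equation $2\Delta\varphi_\infty + V_\infty(1-\varphi_\infty)=0$ in integral form. Writing $f := (8\pi)^{-1}V_\infty(1-\varphi_\infty) \ge 0$ (nonnegativity of $f$ will be part of what we bootstrap), the decaying solution of $2\Delta\varphi_\infty = -8\pi f$ is the Newtonian potential
\[
  \varphi_\infty(x) = \int_{\R^3} \frac{f(y)}{|x-y|}\, dy = \frac{1}{8\pi}\int_{\R^3}\frac{V_\infty(y)(1-\varphi_\infty(y))}{|x-y|}\,dy .
\]
Existence and uniqueness of a solution $\varphi_\infty$ with $0\le\varphi_\infty\le 1$ follow from a standard fixed-point/variational argument (minimize $\int |\nabla u|^2 + \tfrac12\int V_\infty(1-u)^2$ over $0\le u\le 1$, or iterate the integral equation starting from $0$), so I would just cite this; $\varphi_\infty$ is radial because $V_\infty$ is. Since $V_\infty\ge 0$ and $0\le\varphi_\infty\le 1$, the density $f\ge 0$, and then the Newtonian potential representation immediately gives $\varphi_\infty\ge 0$. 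Maximum principle (or the same representation together with $1-\varphi_\infty\le 1$) gives $\varphi_\infty\le 1$.

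Next, the two quantitative claims. Outside $\supp V_\infty$ the function $\varphi_\infty$ is harmonic, radial and $\to 0$ at infinity, hence of the form $c/|x|$; the constant is identified by letting $|x|\to\infty$ in the integral representation, $c = \int_{\R^3} f = (8\pi)^{-1}\int_{\R^3}V_\infty(1-\varphi_\infty) =: a$. This proves $\varphi_\infty(x)=a/|x|$ for $x\notin\supp V_\infty$ and in particular $a\ge 0$; positivity $a>0$ follows since $V_\infty$ is nonnegative and not identically zero (if $a=0$ then $\varphi_\infty\equiv 0$ outside the support, harmonicity forces $\varphi_\infty\equiv 0$, contradicting the equation unless $V_\infty\equiv 0$). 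For the bound $\varphi_\infty(x)\le a/|x|$ everywhere, I would use that $f$ is supported in $\supp V_\infty$ together with the elementary inequality $|x-y|\ge \big||x|-|y|\big|$ is not quite enough; instead the clean argument is: the radial function $g(r):=\varphi_\infty(x)$, $r=|x|$, satisfies $(r\varphi_\infty)'' = -4\pi r f \le 0$ on $[0,\infty)$ (from the radial form of $2\Delta$), so $r\mapsto r\varphi_\infty(r)$ is concave and increasing (its derivative is $\ge 0$ since it tends to $a\ge 0$ and is nonincreasing), hence it lies below its asymptote, $r\varphi_\infty(r)\le a$, i.e.\ $\varphi_\infty(x)\le a/|x|$. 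Combined with $\varphi_\infty\le 1$ this gives the stated $\varphi_\infty(x)\le\min\{1,a/|x|\}$.

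I do not expect any genuine obstacle here — everything is one-dimensional ODE/potential theory once the integral representation is in place. The only point requiring a line of care is justifying the passage to the integral equation (that the decaying $H^1_{\loc}$ solution of the PDE coincides with the Newtonian potential of $f$), which is standard elliptic regularity since $V_\infty\in L^2$ is compactly supported; and the identification of the constant $a$ with the scattering length, which is true by definition as given in the paper right after \eqref{eq: zero energy scatt eq-intro}. I would present the argument in the order: (i) integral representation; (ii) $0\le\varphi_\infty\le1$ and hence $f\ge0$; (iii) exterior formula $\varphi_\infty=a/|x|$ and $a>0$; (iv) concavity of $r\varphi_\infty$ and the bound $\varphi_\infty\le a/|x|$.
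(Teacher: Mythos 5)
The paper itself does not prove this lemma: it simply points to \cite{LY2d} and \cite{LSSY}. Your argument is, in substance, exactly the standard one those references contain (Newtonian-potential representation, nonnegativity of the source, harmonicity outside the support, identification of the coefficient with $a$), so there is no ``genuinely different route'' here to compare against --- you have supplied the proof the paper delegates to the literature, and it is essentially correct.

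Two small remarks. First, the parenthetical justifying $g'(r)\ge 0$ is garbled: the derivative of $r\mapsto r\varphi_\infty(r)$ tends to $0$, not to $a$ (it is the function itself that tends to $a$). The correct one-liner is: by concavity the derivative is nonincreasing, and a nonincreasing derivative of a function converging to a finite limit must tend to $0$, hence is everywhere $\ge0$; a nondecreasing function with limit $a$ is $\le a$. You also set $g(r):=\varphi_\infty(x)$ but then reason about $r\varphi_\infty(r)$, so the definition of $g$ should read $g(r):=r\varphi_\infty(r)$. Second, since $V_\infty$ is radial you could bypass the ODE concavity step entirely via Newton's theorem: the spherical average of $|x-y|^{-1}$ over $|y|=\rho$ equals $1/\max(|x|,\rho)\le 1/|x|$, so
\[
\varphi_\infty(x)=\int_{\R^3}\frac{f(y)}{|x-y|}\,dy=\int_0^\infty \frac{4\pi\rho^2 f(\rho)}{\max(|x|,\rho)}\,d\rho\le \frac{1}{|x|}\int_{\R^3} f = \frac{a}{|x|},
\]
with equality as soon as $|x|$ exceeds the support radius, which simultaneously gives the exterior formula. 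That is slightly cleaner than the concavity route, though both are fine. Finally, note the lemma only asserts $a\ge 0$ implicitly through the bound (which requires $f\ge0$), so the digression on $a>0$ is not needed for this statement, though it is used elsewhere in the paper.
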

For the proof of Lemma \ref{lem: prop of phi infty} we refer to \cite{LY2d} or \cite{LSSY}.
Next we investigate $\tilde \varphi$ defined in \eqref{eq: def tilde phi}, a periodized version of $\varphi_\infty$ 
 regularized by a momentum cut-off. Recall also the definition of $\varphi$ in Definition~\ref{def:scattering-phi-eta}.

\begin{lemma}[Bounds for $\tilde \varphi$]\label{lem: bounds phi} Let $0<\gamma < 1/3$. Let $V_\infty$ as in Assumption \ref{asu: potential V} and let $\tilde \varphi$ be as in \eqref{eq: def tilde phi}. Then the following bounds hold uniformly for $L$ large,  
  \begin{equation}\label{eq: L2 norms phi}
    \|\tilde \varphi\|_{L^\infty(\Lambda)} \leq C, \qquad \|\tilde \varphi\|_{L^2(\Lambda)} \leq C\rho^{-\frac{1}{6} +\frac{\gamma}{2}},
  \end{equation}
  and 
  \begin{equation}\label{eq: L1 norms phi}
    \|\tilde \varphi\|_{L^1(\Lambda)} \leq C\rho^{-\frac{2}{3} +2\gamma}, \qquad \|\nabla\tilde \varphi\|_{L^1(\Lambda)}\leq C\rho^{-\frac{1}{3} +\gamma}, \qquad \|\Delta\tilde \varphi\|_{L^1(\Lambda)} \leq C.
  \end{equation}
  Furthermore, with $\varphi$ as in Definition~\ref{def:scattering-phi-eta},  for any $R>0$
    \begin{equation}\label{eq: phiinfty - phi}
 \lim_{L\to \infty}   \sup_{|x|\leq R}\left|\varphi_{\infty}(x) - \varphi(x)\right| = 0 .
  \end{equation}
\end{lemma}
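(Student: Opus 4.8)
The final statement to prove is Lemma~\ref{lem: bounds phi}, giving $L^p$-bounds on $\widetilde\varphi$ and the local uniform convergence $\varphi\to\varphi_\infty$. Let me sketch how I would prove it.

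\textbf{Plan.} The starting point is the pointwise decay of $\varphi_\infty$ from Lemma~\ref{lem: prop of phi infty}: $0\le \varphi_\infty(x)\le \min\{1,a/|x|\}$, together with $2\Delta\varphi_\infty = -V_\infty(1-\varphi_\infty)$, which shows $\Delta\varphi_\infty\in L^1(\R^3)\cap L^\infty(\R^3)$ (it is supported in $\mathrm{supp}\,V_\infty$ and bounded there since $\varphi_\infty$ is bounded and $V_\infty\in L^2\subset L^1_{\mathrm{loc}}$; actually $V_\infty\in L^2$ with compact support gives $V_\infty(1-\varphi_\infty)\in L^2\cap L^1$). Recall $\widehat{\widetilde\varphi}(p)=\mathcal{F}(\varphi_\infty)(p)\widehat\chi_>(p)$, so $\widetilde\varphi$ is obtained from the periodization of $\varphi_\infty$ by removing the low-momentum part $|p|\lesssim \rho^{1/3-\gamma}$.

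First I would establish the bounds in \eqref{eq: L2 norms phi}--\eqref{eq: L1 norms phi}. For $\|\widetilde\varphi\|_\infty$: write $\widetilde\varphi = \varphi_{\mathrm{per}} - \varphi_<$ where $\varphi_{\mathrm{per}}$ is the periodization of (a suitable compactly-modified) $\varphi_\infty$ and $\varphi_<$ carries the low-momentum cutoff; the cleanest route is to use that $\widehat{\widetilde\varphi}(p) = -\mathcal{F}(\Delta\varphi_\infty)(p)\widehat\chi_>(p)/|p|^2$ for $p\ne 0$, so $\widetilde\varphi = \widetilde u \ast_{\R^3\text{-sense, periodized}} (\text{something bounded})$... more precisely $\widetilde\varphi(x) = \frac{1}{L^3}\sum_{p\ne0}\frac{-\mathcal{F}(\Delta\varphi_\infty)(p)\widehat\chi_>(p)}{|p|^2}e^{ip\cdot x}$, and since $|\mathcal{F}(\Delta\varphi_\infty)(p)|\le \|\Delta\varphi_\infty\|_1\le C$ uniformly, one gets $\|\widetilde\varphi\|_\infty\le C\frac{1}{L^3}\sum_{p\ne0, |p|\ge 3\rho^{1/3-\gamma}}|p|^{-2}\cdot\|\cdots\|$ — hmm, that sum diverges. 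Better: split $\Delta\varphi_\infty = V_\infty\varphi_\infty/2 - V_\infty/2$; the convolution of $|x|^{-1}$-type kernels with these $L^1\cap L^\infty$ functions is bounded, and the cutoff only improves matters. In practice I would mimic the estimates of \cite{Gia1}: bound $\|\widetilde\varphi\|_\infty$ using $\varphi_\infty$ bounded plus the tail estimate for the removed piece, $\|\Delta\widetilde\varphi\|_1\le C$ from $\Delta\widetilde\varphi = (\text{periodized }V_\infty(1-\varphi_\infty))\ast(\text{cutoff kernel})$ with $\|V_\infty(1-\varphi_\infty)\|_1\le C$, then interpolate: $\|\nabla\widetilde\varphi\|_1$ and $\|\widetilde\varphi\|_1$ follow by writing $\widehat{\widetilde\varphi}(p)=|p|^{-2}\widehat{\Delta\widetilde\varphi}(p)$ and cutting the sum at $|p|\sim\rho^{1/3-\gamma}$: the region $|p|\gtrsim\rho^{1/3-\gamma}$ contributes the stated powers after comparing sum to integral ($\int_{|p|\ge c\rho^{1/3-\gamma}}|p|^{-2}dp\sim\rho^{1/3-\gamma}$ for $\|\nabla\widetilde\varphi\|_1$-type, $\int|p|^{-4}\sim\rho^{-1/3+\gamma}$... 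I need to be careful with which norm pairs with which power, but the powers $\rho^{-2/3+2\gamma}$, $\rho^{-1/3+\gamma}$, $\rho^{-1/6+\gamma/2}$ are exactly what a Cauchy--Schwarz/comparison-to-integral argument yields; $\|\widetilde\varphi\|_2^2 = L^{-3}\sum|\widehat{\widetilde\varphi}(p)|^2 \le C L^{-3}\sum_{|p|\ge c\rho^{1/3-\gamma}}|p|^{-4}\sim\rho^{-1/3+\gamma}$, giving $\|\widetilde\varphi\|_2\le C\rho^{-1/6+\gamma/2}$ as claimed).

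For \eqref{eq: phiinfty - phi}: here $\varphi$ has full Fourier support (no $\chi_>$ cutoff, only $\widehat\varphi(0)=0$), so $\varphi$ is just the periodization of $\varphi_\infty$ minus its mean. I would show $\varphi(x) = \sum_{z\in\Z^3}\varphi_\infty(x+Lz) - L^{-3}\int_\Lambda(\cdots)$; since $\varphi_\infty(y)\le a/|y|$, the sum over $z\ne0$ is $O(1/L)$ uniformly on $|x|\le R$ for $L\gg R$, and the mean is $O(L^{-3}\int_{|y|\le L}|y|^{-1}dy)=O(L^{-1})$. Hence $\sup_{|x|\le R}|\varphi(x)-\varphi_\infty(x)|\to 0$. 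The only subtlety is justifying the Poisson-summation identification of the periodic Fourier series with the periodized function — standard since $\varphi_\infty$ and $\widehat{\varphi_\infty}$ have enough decay ($\varphi_\infty\sim|x|^{-1}$ at infinity, $\widehat{\varphi_\infty}(p)\sim|p|^{-2}$ near $0$ from the scattering equation, both giving conditionally-but-adequately convergent series when combined with the mean-zero condition).

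\textbf{Main obstacle.} The genuinely delicate point is getting the $L^1$ bounds $\|\widetilde\varphi\|_1\le C\rho^{-2/3+2\gamma}$ and $\|\nabla\widetilde\varphi\|_1\le C\rho^{-1/3+\gamma}$ with the \emph{correct} powers of $\rho$ uniformly in $L$: one cannot just bound $L^1$ by comparing Fourier sums, one must work in configuration space where $\widetilde\varphi$ behaves like (cutoff version of) $a/|x|$ on a ball of radius $\sim\rho^{-1/3+\gamma}$ (the cutoff length scale) and decays/oscillates beyond it. So $\int_{|x|\lesssim\rho^{-1/3+\gamma}}|x|^{-1}dx\sim\rho^{-2/3+2\gamma}$ gives the $L^1$ power, and $|\nabla\widetilde\varphi|\sim|x|^{-2}$ on the same ball gives $\rho^{-1/3+\gamma}$. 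Making this rigorous requires a careful decomposition of the kernel $L^{-3}\sum_{p\ne0}|p|^{-2}\widehat\chi_>(p)e^{ipx}$ and controlling the contribution of momenta near the cutoff scale — this is where I expect the technical work to concentrate, and it is presumably handled by importing/adapting the corresponding lemma from \cite{Gia1} (Lemma~4.7 there), adjusting the cutoff scale from $\rho^{1/3}$ to $\rho^{1/3-\gamma}$.
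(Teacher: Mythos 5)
Your outline of the $L^p$ bounds is broadly on the right track and aligns with the paper's actual argument: the paper also bases everything on $\widehat{\widetilde\varphi}(p)=\mathcal{F}(V_\infty f_\infty)(p)\,\widehat\chi_>(p)/(2|p|^2)$ with $f_\infty=1-\varphi_\infty$. For the $L^1$ bounds it writes $\widetilde\varphi$ as the periodization of $V_\infty f_\infty\ast g$ with $\mathcal{F}(g)(p)=\mathcal{F}(\chi_{>,\infty})(p)/(2|p|^2)$ and applies Young's inequality, $\|\widetilde\varphi\|_{L^1(\Lambda)}\le\|V_\infty f_\infty\|_{L^1(\R^3)}\|g\|_{L^1(\R^3)}$, with the cut-off scale showing up in $\|g\|_1\lesssim\rho^{-2/3+2\gamma}$, $\|\nabla g\|_1\lesssim\rho^{-1/3+\gamma}$; for $\Delta\widetilde\varphi$ it uses that $\Delta\varphi=\Delta\varphi_\infty$ on $\Lambda$ (compact support of $V$) and $\|\chi_<\|_1\le C$. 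Your $L^2$ computation matches the paper. For $\|\widetilde\varphi\|_\infty$, however, you first write down a bound that you yourself note diverges and then retreat to ``mimic \cite{Gia1}''; the paper's clean argument, which you do not reach, is a Cauchy--Schwarz split of the Fourier sum at $|p|=1$, using that $V_\infty f_\infty\in L^1\cap L^2$ uniformly in $L$.

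The genuine gap is in your argument for \eqref{eq: phiinfty - phi}. You write $\varphi(x)=\sum_{z\in\Z^3}\varphi_\infty(x+Lz)-L^{-3}\int_\Lambda(\cdots)$ and claim the off-lattice sum is $O(1/L)$. This is false: since $\varphi_\infty(y)=a/|y|$ for $|y|$ outside $\mathrm{supp}\,V_\infty$, each term with $z\ne0$ is of order $a/(L|z|)$ and $\sum_{0\ne z\in\Z^3}|z|^{-1}$ diverges, so the periodization $\sum_z\varphi_\infty(x+Lz)$ does not converge (and ``combining with the mean-zero condition'' cannot repair a term-by-term divergent series). The paper sidesteps this entirely by staying in momentum space: it writes
\[
\varphi(x)-\varphi_\infty(x)=\frac1{L^3}\sum_{p\ne0}\frac{\mathcal{F}(V_\infty f_\infty)(p)}{|p|^2}e^{ip\cdot x}-\frac1{(2\pi)^3}\int_{\R^3}\frac{\mathcal{F}(V_\infty f_\infty)(p)}{|p|^2}e^{ip\cdot x}\,dp,
\]
and then compares the Riemann sum to the integral, using that $\mathcal{F}(V_\infty f_\infty)$ is a smooth, square-integrable function. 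To salvage a configuration-space approach you would need to periodize the compactly supported $\Delta\varphi_\infty$ (which is absolutely convergent, with at most one nonzero term on $\Lambda$) and then invert the Laplacian on the torus, rather than periodize $\varphi_\infty$ directly.
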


\begin{proof}
  We start by proving  \eqref{eq: L2 norms phi}. Let $f_\infty := 1-\varphi_\infty$. By the definition  \eqref{eq: def tilde phi} and the zero-energy scattering equation \eqref{eq: zero energy scatt eq-intro} we have that 
\[
    |\widetilde{\varphi}(x)| \leq  \frac{1}{L^3}\sum_{p} |\hat{\varphi}(p)| \leq  \frac{C}{L^3}\sum_{p\neq 0} \frac{|\mathcal{F}(V_\infty f_\infty)(p)|}{|p|^2} \leq \frac{C}{L^3}\sum_{0\neq |p|\leq 1} \frac{1}{|p|^2}   + \frac{C}{L^3}\sum_{|p|>1} \frac{|\mathcal{F}(V_\infty f_\infty)(p)|}{|p|^2},
\]
where we used that $V_\infty f_\infty \in L^1(\R^3)$ because of Lemma \ref{lem: prop of phi infty}. Assumption~\ref{asu: potential V} implies that also $V_\infty f_\infty \in L^2(\R^3)$ and the function has compact support. Hence, for large enough $L$, the $L^2(\Lambda)$-norm of its periodization equals the $L^2(\R^3)$ norm, and in particular,
  \begin{equation}\label{eq: bound L2 norm Vf}
   \frac{1}{L^3}\sum_{p\in \Lambda^*}|\mathcal{F}(V_\infty f_\infty)(p)|^2 
   =\int_{\mathbb{R}^3}dx \, |V_\infty(x)f_\infty(x)|^2\leq C\|V_\infty\|_{L^2(\mathbb{R}^3)}^2\leq C.
  \end{equation}
  Therefore, by Cauchy--Schwarz and using that $\frac{1}{L^3}\sum_{|p|>1} \frac{1}{|p|^4}\leq C$, we can conclude that $|\widetilde{\varphi}(x)| \leq C$. For the $L^2$-norm, we have
  \[
    \|\widetilde{\varphi}\|_{L^2(\Lambda)}^2 = \frac{1}{L^3}\sum_p |\hat{\varphi}(p)|^2 |\widehat{\chi}_>(p)|^2 \leq \frac{C}{L^3}\sum_{|p|\geq 4\rho^{{1}/{3} -\gamma}} \frac{1}{|p|^4} \leq C\rho^{-\frac 13 +\gamma},
  \]
  where we used again that $|\hat\varphi(p)|\leq C/ p^2$. 
  \begin{remark}\label{rm:varphi} The proof above actually shows that $\varphi$ in Definition~\ref{def:scattering-phi-eta} is uniformly bounded, 
  %
  i.e.,  $\|\varphi\|_{L^\infty(\Lambda)}\le C$. However, $\|\varphi\|_{L^2(\Lambda)}$ diverges as $L\to \infty$. 
  \end{remark}
  
  In the following, we bound the $L^1$ norms of $\widetilde{\varphi}$ and its derivatives. 
With the aid of \eqref{eq: zero energy scatt eq-intro} and \eqref{eq: def chi< chi>} we can write 
  \[
    \hat{\widetilde{\varphi}}(p) = \mathcal{F}(V_\infty f_\infty \ast g),
  \]
  where $g:\mathbb{R}^3\rightarrow \mathbb{C}$ has Fourier transform $\mathcal{F}(g)(p) = \mathcal{F}({\chi}_{>,\infty})(p)/(2|p|^2)$. The function $\widetilde \varphi$ is thus the periodization of $V_\infty f_\infty \ast g$,
  and hence
 $$
 \| \widetilde \varphi \|_{L^1(\Lambda)} \leq \| V_\infty f_\infty \ast g \|_{L^1(\R^3)} \leq  \| V_\infty f_\infty \|_{L^1(\R^3)} \|  g \|_{L^1(\R^3)}.
 $$ 
It is a simple exercise, using the smoothness and support properties of $\chi_{>\infty}$, to check that $\| g\|_{L^1(\R^3)} \leq C \rho^{-2/3+2\gamma}$, yielding the first bound in \eqref{eq: L1 norms phi}. The second bound on $\nabla \widetilde\varphi$ follows in the same way, using $\| \nabla g\|_{L^1(\R^3)} \leq C \rho^{-1/3+\gamma}$. We omit the details. For the last bound in  \eqref{eq: L1 norms phi}, we can directly use that 
 $\Delta\varphi(x) = \sum_{n\in\mathbb{Z}^3}\Delta\varphi_\infty(x + nL)$ with at most one non-vanishing term in the sum, due to the compact support of $V$. This implies that $\Delta\varphi = \Delta {\varphi}_\infty$ on $\Lambda$. Using that $\widehat{\chi}_> = 1 - \widehat{\chi}_<$, it then follows that 
\[
  \Delta\widetilde{\varphi}(x) =   \Delta\varphi (x) - \int_{\Lambda}dy\, \Delta\varphi (y)\chi_<(x-y) .
\]
Hence 
\[
  \|\Delta\widetilde{\varphi}\|_{L^1(\Lambda)} 
  \leq C\|V_\infty\|_{L^1(\mathbb{R}^3)} \left( 1+ \|\chi_<\|_{L^1(\Lambda)}\right) \leq C,
\]
where we used again Assumption~\ref{asu: potential V}, \eqref{eq: zero energy scatt eq-intro},  Lemma~\ref{lem: prop of phi infty}  and the fact that $\|\chi_<\|_{L^1(\Lambda)}\leq C$ by Lemma \ref{lem:cut-off-chi} below.

It remains to prove \eqref{eq: phiinfty - phi}. We have
$$
\varphi(x) - \varphi_\infty(x) = \frac 1{L^3} \sum_{p\neq 0} \frac{\mathcal{F}(V_\infty f_\infty)(p)}{|p|^2}e^{ip\cdot x} - \frac 1{(2\pi)^3} \int_{\R^3}  dp\,\frac{\mathcal{F}(V_\infty f_\infty)(p)}{|p|^2}e^{ip\cdot x} .
$$
The function $p\mapsto \mathcal{F}(V_\infty f_\infty)(p) e^{ip\cdot x}$ is square integrable and smooth, uniformly in $x$ for bounded $x$. Approximating the integral by a Riemann sum, it is then a simple exercise to show that the right-hand side is $O(1/L)$ for large $L$, uniformly in $x$ for bounded $x$. We leave the details to the reader.
\end{proof}



\begin{lemma}[Estimates of the $L^1, L^2$ norms of $h_1$]\label{lem: fL} Let $h_1:\Lambda\to \R$ be the periodic function with the Fourier coefficients defined in \eqref{def:h1}.
Then 
$ \|h_1\|_{L^1(\Lambda)} \rightarrow 0$ and $\|h_2\|_{L^2(\Lambda)} \rightarrow 0$ as $L\rightarrow \infty$.
\end{lemma}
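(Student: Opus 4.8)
```latex
\textbf{Proof plan for Lemma \ref{lem: fL}.}
Recall that $\hat h_1(p) = \bigl( \hat V(p) - (\hat V \ast_{\sum} \hat\varphi)(p) - 2|p|^2 \hat\varphi(p)\bigr)\widehat\chi_>(p)$, and that the zero-energy scattering equation \eqref{eq: zero energy scatt eq-intro} gives $\mathcal F(V_\infty)(p) - 2|p|^2 \mathcal F(\varphi_\infty)(p) = \mathcal F(V_\infty\varphi_\infty)(p)$ on all of $\R^3$. The whole point is that $\hat h_1$ records only the discrepancy between the periodic ``discrete convolution'' $\ast_{\sum}$ and the continuum convolution on $\R^3$. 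So the first step is to rewrite $\hat h_1$ as the difference of a periodic quantity and (the restriction to $\Lambda^*$ of) a continuum quantity. Concretely, for $p\ne 0$ in $\Lambda^*$ one has $\hat V(p) - 2|p|^2\hat\varphi(p) = \mathcal F(V_\infty)(p) - 2|p|^2\mathcal F(\varphi_\infty)(p) = \mathcal F(V_\infty\varphi_\infty)(p)$ exactly (using $\hat V(p) = \mathcal F(V_\infty)(p)$ and $\hat\varphi(p) = \mathcal F(\varphi_\infty)(p)$ for $p\ne 0$, valid since $V_\infty$ has compact support and $L$ is large), while $(\hat V\ast_{\sum}\hat\varphi)(p) = \widehat{(V\varphi)}(p)$ is the $p$-th Fourier coefficient of the periodization of $V_\infty\varphi_\infty$ (again using compact support). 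Hence $\hat h_1(p) = \widehat\chi_>(p)\bigl[\widehat{(V\varphi)}(p) - \mathcal F(V_\infty\varphi_\infty)(p)\bigr]$; equivalently, $h_1$ is (up to the harmless cut-off $\widehat\chi_>$, i.e.\ subtracting a piece supported on $|p|\lesssim\rho^{1/3-\gamma}$ which is itself small by the argument below) the difference between the periodization of the compactly supported $L^2$ function $w := V_\infty\varphi_\infty$ and $w$ itself, restricted to $\Lambda$.

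The second step is to control this difference. Since $w\in L^2(\R^3)$ has compact support, for $L$ large enough the periodization $w_L(x) := \sum_{z\in\Z^3} w(x+Lz)$ coincides with $w$ on $\Lambda$ as a function — so the ``periodization minus $w$'' contribution vanishes identically for large $L$ once restricted to $\Lambda$! The only genuine discrepancy is therefore the Fourier-truncation artifact: the periodic function with coefficients $\widehat\chi_>(p)\,\widehat{w_L}(p) = \widehat{w_L}(p) - \widehat\chi_<(p)\widehat{w_L}(p)$ differs from $w_L$ by the periodic function $g_L$ with coefficients $\widehat\chi_<(p)\widehat{w_L}(p)$. Thus $h_1 = -g_L$ on $\Lambda$ (for $L$ large). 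Now $\|g_L\|_{L^2(\Lambda)}^2 = L^{-3}\sum_{p}\widehat\chi_<(p)^2|\widehat{w_L}(p)|^2 \le \|\chi_<\|_\infty^2 \cdot L^{-3}\sum_{|p|\le 5\rho^{1/3-\gamma}}|\widehat{w_L}(p)|^2$; actually it is cleaner to bound $\|g_L\|_{L^2} \le \|\widehat\chi_<\|_{\ell^\infty}\,\|w_L\|_{L^2}$... but that does not vanish. Instead: $g_L = w_L \ast \chi_<$ (convolution on $\Lambda$), so $\|g_L\|_{L^1(\Lambda)} \le \|w_L\|_{L^1(\Lambda)}\|\chi_<\|_{L^1(\Lambda)} = \|w\|_{L^1(\R^3)}\|\chi_<\|_{L^1(\Lambda)} \le C$ uniformly (using Lemma \ref{lem:cut-off-chi}), which is not yet $o(1)$. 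To get decay one uses that $\widehat{w_L}(p)\to \mathcal F(w)(p)$ and that $\mathcal F(w)$ is continuous (as $w\in L^1\cap L^2$ with compact support, so $\mathcal F(w)$ is even smooth), together with the shrinking support of $\widehat\chi_<$: write $\|g_L\|_{L^2(\Lambda)}^2 = L^{-3}\sum_{0\ne|p|\le 5\rho^{1/3-\gamma}} \widehat\chi_<(p)^2 |\widehat{w_L}(p)|^2 + L^{-3}\widehat\chi_<(0)^2|\widehat{w_L}(0)|^2$; the second term is $O(L^{-3})$, and the first is bounded by a Riemann sum approximating $\int_{|p|\le 5\rho^{1/3-\gamma}} |\mathcal F(w)(p)|^2\,dp \le C\rho^{1-3\gamma}$ (here $\rho$ is fixed while $L\to\infty$, so this is a finite constant that does \emph{not} go to zero in the thermodynamic limit).

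I realize at this point that the cleanest correct statement is simply: \emph{for every fixed $\rho>0$, $\|h_1\|_{L^1(\Lambda)}\to 0$ and $\|h_1\|_{L^2(\Lambda)}\to 0$ as $L\to\infty$}, and the mechanism is entirely a discrete-vs-continuum (Riemann sum) approximation, not smallness in $\rho$. So the plan is: (i) establish the identity $\hat h_1(p) = \widehat\chi_>(p)\bigl(\widehat{w_L}(p) - \mathcal F(w)(p)\bigr)$ where $w = V_\infty\varphi_\infty$, using compact support and \eqref{eq: zero energy scatt eq-intro} (note $\widehat{w_L}(p)$ and $\mathcal F(w)(p)$ agree exactly for $L$ large since $\supp w$ fits in $\Lambda$ — so this difference is $0$, and $h_1$ is exactly $-w_L\ast\chi_<$ restricted to $\Lambda$ minus... let me be careful). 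Re-examining: $\hat h_1(p) = \widehat\chi_>(p)\mathcal F(w)(p) = \mathcal F(w)(p) - \widehat\chi_<(p)\mathcal F(w)(p)$, and since $w_L = w$ on $\Lambda$ this says $h_1 = w - g_L$ where $\widehat{g_L}(p) = \widehat\chi_<(p)\mathcal F(w)(p)$. Hmm, but then $h_1$ does not vanish. The resolution: in \eqref{def:h} the quantity $-8\pi a\,\widehat\chi_<(p) - 2|p|^2\hat\varphi(p)\widehat\chi_>(p)$ is split so that $\hat h_1$ only contains $\widehat\chi_>$ pieces and the $8\pi a\widehat\chi_<$ piece goes into $\hat h_2$; so indeed $\hat h_1(p) = \widehat\chi_>(p)\bigl(\hat V(p) - (\hat V\ast_{\sum}\hat\varphi)(p) - 2|p|^2\hat\varphi(p)\bigr)$, and I must NOT assume $\hat V(p) = \mathcal F(V_\infty)(p)$ exactly — rather $\hat V(p) = \sum_z \mathcal F(V_\infty)(p) = \mathcal F(V_\infty)(p)$ for $p\in\Lambda^*$ IS exact by the Poisson-type identity for periodizations (the periodization of $V_\infty$ has Fourier coefficients $\mathcal F(V_\infty)(p)$ at $p\in\Lambda^*$). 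The subtle point is $(\hat V\ast_{\sum}\hat\varphi)(p)$ vs $\mathcal F(V_\infty\varphi_\infty)(p)$: these differ because $\hat\varphi(0)=0$ whereas $\mathcal F(\varphi_\infty)(0)$ is undefined/large, and because $\varphi$ is a periodization of $\varphi_\infty$ only approximately (its $L^2$ norm diverges, per Remark \ref{rm:varphi}), so $V\varphi$ is not exactly the periodization of $V_\infty\varphi_\infty$. This is where \eqref{eq: phiinfty - phi} of Lemma \ref{lem: bounds phi} enters: $\varphi\to\varphi_\infty$ uniformly on compacts, so $V\varphi\to$ periodization of $V_\infty\varphi_\infty$ in $L^1$ and $L^2$ (dominated convergence, using $\|\varphi\|_\infty\le C$ and $V_\infty$ compactly supported and in $L^2$), hence $\widehat{(V\varphi)}(p)\to\mathcal F(V_\infty\varphi_\infty)(p)$ with vanishing $L^2(\Lambda^*)$ and (after Riemann-sum control) $L^1$-type errors. \textbf{The main obstacle} is thus the bookkeeping around $\hat\varphi(0)=0$ and the fact that $\varphi$ is only approximately a periodization: one must show $\|V\varphi - (\text{periodization of }V_\infty\varphi_\infty)\|_{L^1(\Lambda)} + \|V\varphi - (\text{periodization of }V_\infty\varphi_\infty)\|_{L^2(\Lambda)}\to 0$, which follows from \eqref{eq: phiinfty - phi} plus dominated convergence on the fixed compact set $\supp V_\infty$, and then propagate this (the cut-off $\widehat\chi_<$ removes at most a fixed-$\rho$ low-momentum piece which is handled as above and does not obstruct the $L\to\infty$ limit since $2|p|^2\hat\varphi(p)$ is bounded and $\widehat\chi_>$ kills the would-be singular contribution at $p=0$). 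Assembling these gives $\|h_1\|_{L^1(\Lambda)}\to 0$ and $\|h_1\|_{L^2(\Lambda)}\to 0$.
```
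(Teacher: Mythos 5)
Your plan is, after some false starts that you eventually correct, essentially the paper's proof: you identify $\hat h_1(p) = \widehat\chi_>(p)\bigl(\mathcal F(V_\infty\varphi_\infty)(p) - \widehat{(V\varphi)}(p)\bigr)$ via the scattering equation, recognize that the periodic function with those Fourier coefficients restricted to $\Lambda$ is exactly $V_\infty(\varphi_\infty-\varphi)$ (compact support of $V_\infty$), invoke the uniform-on-compacts convergence \eqref{eq: phiinfty - phi} together with $\|\varphi\|_\infty\le C$ and $V_\infty\in L^1\cap L^2$ to get $L^1$ and $L^2$ smallness, and then propagate through the cut-off. One remark on the last step: your closing parenthetical about ``$2|p|^2\hat\varphi(p)$ is bounded'' is not the right mechanism; the clean argument (which the paper uses) is simply $\widehat\chi_> = 1 - \widehat\chi_<$, so $h_1 = g - g\ast\chi_<$, and Young's inequality with $\|\chi_<\|_{L^1(\Lambda)}\le C$ (Lemma \ref{lem:cut-off-chi}) transfers the $L^1,L^2$ smallness of $g$ to $h_1$.
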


\begin{proof}
Using \eqref{eq: zero energy scatt eq-intro} and $\widehat\chi_< + \widehat\chi_> =1$, we  can write 
$$
  \hat{h}_1(p) = \hat g(p) \left( 1- \widehat\chi_<(p)\right)  \quad \text{with} \ 
\hat g(p) =  \mathcal{F} (V_\infty \varphi_\infty)  (p)  - \widehat {V\varphi}(p).
$$
The function $g:\Lambda\to \R$ with Fourier coefficients $\hat g(p)$ is thus the periodization of $V_\infty (\varphi_\infty - \varphi)$, and because of the compact support of $V$, we have for $x\in \Lambda = [-L/2,L/2]^3$ (and $L$ large enough)
$$
g(x) = V_\infty(x) \left( \varphi_\infty(x) - \varphi(x) \right) .
$$
From Assumption~\ref{asu: potential V} and \ref{eq: phiinfty - phi}, it readily follows that $\| g\|_{L^1(\Lambda)} \to 0$ and $\| g\|_{L^2(\Lambda)} \to 0$ as $L\to \infty$. Now 
$$
h_1(x)  = g(x)  - \int_\Lambda dy\, g(x-y) \chi_<(y),
$$
hence the same holds for $h_1$ since $\|\chi_<\|_{L^1(\Lambda)} < C$, as will be shown in Lemma~\ref{lem:cut-off-chi} below. This completes the proof.
\end{proof}

\subsection{Bounds on the cut-off functions}

The following Lemma is an easy exercise that we leave to the reader. (Compare with \cite[Prop.~4.2]{FGHP}.) 

\begin{lemma}[$L^1$ norm of the cut-off functions $\chi_<$, $\zeta^<$] \label{lem:cut-off-chi} Let $\chi_<:\Lambda \rightarrow \mathbb{R}$ and $\zeta^<: \Lambda \rightarrow \mathbb{R}$ be the cut-off functions with Fourier coefficients introduced in \eqref{eq: def chi< chi>} and \eqref{eq: def zeta Q4}, respectively. Then
\[
  \|\chi_<\|_{L^1(\Lambda)}\le C, \qquad \|\zeta^<\|_{L^1(\Lambda)}\le C
\]
\end{lemma}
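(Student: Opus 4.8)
The plan is to identify each periodic cut-off with the periodization of a fixed Schwartz profile rescaled to the relevant momentum scale, and then use scaling invariance of the $L^1$ norm.

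First I would make the choices in \eqref{eq: def chi< chi>} and \eqref{eq: def zeta Q4} explicit. Fix once and for all smooth, compactly supported radial functions $\theta_1,\theta_2:\R^3\to[0,1]$ with $\theta_1(q)=1$ for $|q|\le 4$, $\theta_1(q)=0$ for $|q|\ge 5$, $\theta_2(q)=1$ for $|q|\le 5$, $\theta_2(q)=0$ for $|q|\ge 6$, and set $\mathcal{F}(\chi_{<,\infty})(p)=\theta_1\bigl(p\,\rho^{-1/3+\gamma}\bigr)$ and $\hat\zeta^<(k)=\theta_2\bigl(k/m_F\bigr)$ with $m_F:=\max\{k_F^\uparrow,k_F^\downarrow\}$; since $k_F^\sigma=(6\pi^2)^{1/3}\rho_\sigma^{1/3}+o(1)_{L\to\infty}$ we have $m_F\le C\rho^{1/3}$ for small $\rho$. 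These are admissible choices in \eqref{eq: def chi< chi>} and \eqref{eq: def zeta Q4}.

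Next I would record the elementary periodization identity. Let $\chi_{<,\infty}:=\mathcal{F}^{-1}\bigl(\mathcal{F}(\chi_{<,\infty})\bigr)$ be the corresponding Schwartz function on $\R^3$, and let $g(x):=\sum_{z\in\Z^3}\chi_{<,\infty}(x+Lz)$ be its $L$-periodization. Unfolding the integral over $\Lambda$ and using $e^{ip\cdot Lz}=1$ for $p\in\Lambda^*=(2\pi/L)\Z^3$ gives $\int_\Lambda g(x)e^{-ip\cdot x}\,dx=\mathcal{F}(\chi_{<,\infty})(p)$ for all such $p$; since $\mathcal{F}(\chi_{<,\infty})$ is Schwartz the series defining $g$ converges absolutely and $g$ is smooth, so $g=\chi_<$. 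Hence
\[
  \|\chi_<\|_{L^1(\Lambda)}=\int_\Lambda\Bigl|\sum_{z\in\Z^3}\chi_{<,\infty}(x+Lz)\Bigr|\,dx\le\int_{\R^3}|\chi_{<,\infty}(x)|\,dx=\|\chi_{<,\infty}\|_{L^1(\R^3)},
\]
and the identical argument gives $\|\zeta^<\|_{L^1(\Lambda)}\le\|\zeta^<_\infty\|_{L^1(\R^3)}$ with $\zeta^<_\infty:=\mathcal{F}^{-1}\hat\zeta^<$.

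Finally, the scaling step: writing $\ell:=\rho^{1/3-\gamma}$, a change of variables in the inverse Fourier transform yields $\chi_{<,\infty}(x)=\ell^{3}\,\Theta_1(\ell x)$ with $\Theta_1:=\mathcal{F}^{-1}\theta_1$, so the substitution $y=\ell x$ gives $\|\chi_{<,\infty}\|_{L^1(\R^3)}=\|\Theta_1\|_{L^1(\R^3)}$, a finite constant depending only on the fixed profile $\theta_1$ — in particular independent of $\rho$ and $L$. The same computation with $m_F$ in place of $\ell$ gives $\|\zeta^<_\infty\|_{L^1(\R^3)}=\|\Theta_2\|_{L^1(\R^3)}=C$, and combining the three steps proves the lemma. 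I do not anticipate a genuine obstacle: this is the routine exercise mentioned below \eqref{eq: def chi< chi>} and parallels \cite[Prop.~4.2]{FGHP}. The only points worth stating carefully are the absolute convergence underlying the periodization identity (so that the $L^1$ bound genuinely transfers from $\R^3$ to $\Lambda$) and the explicit dilation structure of the profiles (so that the resulting constant is manifestly uniform in the density $\rho$).
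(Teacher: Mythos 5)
Your proof is correct. The paper leaves this lemma as a routine exercise (pointing to \cite[Prop.~4.2]{FGHP}), so there is no proof in the paper to compare against, but your argument — realizing $\chi_<$ and $\zeta^<$ as periodizations of fixed Schwartz profiles $\Theta_j$ rescaled to the cut-off scale, the bound $\|g\|_{L^1(\Lambda)}\le\|f\|_{L^1(\R^3)}$ for a periodization $g$ of $f\in L^1(\R^3)$, and scale-invariance of $\|\ell^3\Theta(\ell\,\cdot\,)\|_{L^1(\R^3)}$ — is precisely the standard one the authors intend.
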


The next lemma is somewhat more subtle, as it requires uniformity in $t$.

\begin{lemma}[$L^1$ norm of the function $\zeta^t$]\label{lem: zeta t L1}
Let $\zeta^{t}$ be as in \eqref{eq: def u>t} for $t\in (0,\infty)$. Then
\begin{equation}\label{zetat}
  \|\zeta^t\|_{L^1(\Lambda)} \leq C.
\end{equation}
\end{lemma}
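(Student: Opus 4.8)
\textbf{Proof plan for Lemma~\ref{lem: zeta t L1}.}
The plan is to exploit the structure of $\zeta^t$ as a (periodization of a) smooth function multiplied by a Gaussian heat kernel, and to bound the $L^1$-norm uniformly in $t$ by splitting into the regimes of small and large $t$. Recall that $\hat\zeta^t(k) = \hat\zeta^>(k) e^{-t|k|^2}$, where $\hat\zeta^>$ is smooth, vanishes on the ball $|k|\le 5\max\{k_F^\uparrow,k_F^\downarrow\}$, and equals $1$ for $|k|>6\max\{k_F^\uparrow,k_F^\downarrow\}$. In particular $\hat\zeta^> = 1 - \hat\zeta^<$ with $\hat\zeta^<$ compactly supported and smooth. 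Since passing from $\mathbb{R}^3$ to the torus only decreases $L^1$-norms (the periodization of an $L^1(\mathbb{R}^3)$ function has $L^1(\Lambda)$-norm no larger, for $L$ large enough that the supports involved do not wrap around), it suffices to bound $\|\mathcal{F}^{-1}(\hat\zeta^> e^{-t|\cdot|^2})\|_{L^1(\mathbb{R}^3)}$ uniformly in $t$, where $\mathcal{F}$ now denotes the $\mathbb{R}^3$ Fourier transform. Write this inverse transform as $G_t = K_t - (K_t \ast \zeta^<_\infty)$, where $K_t = \mathcal{F}^{-1}(e^{-t|\cdot|^2})$ is the Gaussian with $\|K_t\|_{L^1(\mathbb{R}^3)} = 1$ for all $t>0$, and $\zeta^<_\infty = \mathcal{F}^{-1}(\hat\zeta^<)$ is a fixed Schwartz function. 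By Young's inequality, $\|K_t \ast \zeta^<_\infty\|_{L^1} \le \|K_t\|_{L^1}\|\zeta^<_\infty\|_{L^1} = \|\zeta^<_\infty\|_{L^1} \le C$, so this term is harmless; the real issue is controlling $\|K_t\|_{L^1} $ — which is fine, it is $1$ — but then $G_t = K_t - K_t\ast\zeta^<_\infty$ would only give a bound, not obviously the subtraction structure we need.

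So the cleaner route: note that $G_t = \mathcal{F}^{-1}(\hat\zeta^>)\ast K_t$ only makes sense distributionally since $\hat\zeta^>$ is not integrable; instead keep $G_t = K_t - K_t\ast\zeta^<_\infty$ directly, which already yields $\|G_t\|_{L^1(\mathbb{R}^3)}\le 1 + \|\zeta^<_\infty\|_{L^1(\mathbb{R}^3)} \le C$ uniformly in $t>0$. Hence $\|\zeta^t\|_{L^1(\Lambda)}\le \|G_t\|_{L^1(\mathbb{R}^3)} \le C$. The only point requiring a little care is the first inequality, i.e.\ that $\zeta^t$ — defined via its Fourier coefficients $\hat\zeta^>(k)e^{-t|k|^2}$ on $\Lambda^* = (2\pi/L)\mathbb{Z}^3$ — is indeed the periodization $\sum_{z\in\mathbb{Z}^3} G_t(\cdot + Lz)$ of $G_t$, by the Poisson summation formula, and that $\|\sum_z G_t(\cdot+Lz)\|_{L^1(\Lambda)} \le \|G_t\|_{L^1(\mathbb{R}^3)}$ by the triangle inequality and a change of variables. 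Poisson summation applies since $G_t$ and $\hat G_t = \hat\zeta^> e^{-t|\cdot|^2}$ are both Schwartz-class (for fixed $t>0$; and the bound is uniform).

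A subtle alternative, if one prefers to avoid the subtraction and the Schwartz function $\zeta^<_\infty$: one can bound $\|\zeta^t\|_{L^1(\Lambda)}$ by writing $\hat\zeta^t(k) = \hat\zeta^>(k) e^{-t|k|^2}$ and splitting at $t \simeq \max\{k_F^\sigma\}^{-2}$. For $t \gtrsim \max\{k_F^\sigma\}^{-2}$, on the support of $\hat\zeta^>$ we have $|k|^2 \gtrsim \max\{k_F^\sigma\}^2 \gtrsim t^{-1}$, so $e^{-t|k|^2}$ provides strong Gaussian decay and one controls $\|\zeta^t\|_{L^1}$ by $\|\zeta^t\|_{L^2}^{1/2}$ times a volume factor via Cauchy--Schwarz on a ball, with all factors of $t$ and $\rho$ combining to $O(1)$; for $t\lesssim \max\{k_F^\sigma\}^{-2}$, the Gaussian is essentially $O(1)$ on the relevant momentum range and one uses the smoothness of $\hat\zeta^>$ to get decay of $\zeta^t$ in $x$ (integrating by parts, i.e.\ bounding $|x|^N|\zeta^t(x)|$ by $L^1$-norms of $k$-derivatives of $\hat\zeta^t$, which are uniformly bounded in $t\le \max\{k_F^\sigma\}^{-2}$). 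I expect the main obstacle to be precisely this uniformity in $t$ in the second approach — tracking that the derivative bounds $\|\partial_k^\alpha \hat\zeta^t\|_{L^1}$ do not blow up, using that each $t$-derivative falling on $e^{-t|k|^2}$ brings down a factor $|k|^2 e^{-t|k|^2}$ which is bounded by $t^{-1}\lesssim \max\{k_F^\sigma\}^2 \lesssim \rho^{2/3}$, harmless. But the first approach (periodization of $K_t - K_t\ast\zeta^<_\infty$ plus Poisson summation) sidesteps this entirely and is what I would write up, since it reduces everything to Young's inequality and the fact that the heat kernel is an $L^1$-normalized approximate identity.
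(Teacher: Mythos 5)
Your argument is correct and is essentially the same as the paper's: both decompose $\hat\zeta^t = e^{-t|\cdot|^2} - \hat\zeta^< e^{-t|\cdot|^2}$, bound the first piece by the fact that the heat kernel has $L^1$-norm $1$, and bound the second by Young's inequality against $\zeta^<$, whose $L^1$-norm is controlled by Lemma~\ref{lem:cut-off-chi}. The only difference is cosmetic — you apply Young on $\mathbb{R}^3$ and periodize at the end, whereas the paper periodizes the heat kernel first and performs the convolution and Young's inequality directly on $\Lambda$ — and your parenthetical about ``supports not wrapping around'' is unnecessary, as you yourself note that $\|\sum_z G_t(\cdot+Lz)\|_{L^1(\Lambda)} \le \|G_t\|_{L^1(\mathbb{R}^3)}$ holds unconditionally by the triangle inequality.
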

\begin{proof}
We start by writing
\[ 
\zeta^{t} (x) = \frac{1}{L^3}\sum_k  e^{-t|k|^2} e^{ik\cdot x} - \frac{1}{L^3}\sum_k \zeta^< (k) e^{-t|k|^2} e^{ik\cdot x} = \zeta^{t}_{1}(x) + \zeta^{t}_{2}(x),
\]
with $\zeta^<$ as in \eqref{eq: def zeta Q4}. The function $\zeta_1^t$ is the periodization of $\zeta^t_\infty(x):=(2\pi)^{-3}\int_{\R^3} dp\, e^{-tp^2}e^{ip\cdot x}= (4\pi t)^{-3/2} e^{-x^2/(4t)}$, hence $\|\zeta_1^t\|_{L^1(\Lambda)} \leq \|\zeta^t_\infty\|_{L^1(\R^3)} = 1$.
To estimate the $L^1$ norm of $\zeta^t_{2}$ it is enough to notice that 
\[
  \zeta^t_{2}(x) = \int_{\Lambda}dy\, \zeta^t_1(x-y)\zeta^<(y),  
\]
which implies that $\|\zeta^t_{2}\|_{L^1(\Lambda)}\leq \|\zeta^t_1\|_{L^1(\Lambda)}\|\zeta^<\|_{L^1(\Lambda)}\leq C$, using  Lemma \ref{lem:cut-off-chi}. This yields \eqref{zetat}. 
\end{proof}


\section{Evaluation of $F(x)$}\label{app: function g}

In this section we explicitly evaluate the integral in \eqref{inte:F}, and show that it equals 
 $a^2 \rho_\uparrow^{\frac{7}{3}} F({\rho_\downarrow}/{\rho_\uparrow})$ with $F$ given in \eqref{eq:def-F}. In fact, rescaling the integrations over 
 $p,r,r^\prime$ in   \eqref{inte:F} by  $k_{F}^\uparrow = (6\pi^2)^{\frac{1}{3}}\rho_\uparrow^{1/3}$, 
we find
\begin{align}\label{eq:def-F-proof}
F(x)  = \frac 4 \pi  (6\pi^2)^{1/3}  \int_{\R^3} dp \left(   \frac x{ p^2}  - g(x,p)\right)  
\end{align}
and 
\begin{equation}\label{eq: def g}
g(x,p) = \frac 9{8\pi^2} \int_{|k+p|>1>|k|} dk \int_{|q-p|>x^{1/3}>|q|} dq\,\frac 1{(k+p)^2 - k^2 + (q-p)^2 - q^2}    .
\end{equation}
%
%

\begin{lemma}
For every $x>0$, $F(x)$ can be written explicitly as in \eqref{eq:def-F}.
\end{lemma}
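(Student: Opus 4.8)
The plan is to compute the momentum integral in \eqref{eq: def g} by carrying out the two inner integrations (over $k$ and $q$) explicitly, using the standard trick that splits the denominator $(k+p)^2-k^2+(q-p)^2-q^2 = 2p^2 + 2p\cdot k - 2p\cdot q$ via an auxiliary integral. First I would introduce the Feynman-type parametrization
\[
\frac{1}{2p^2+2p\cdot k - 2p\cdot q} = \int_0^\infty ds\, e^{-s(2p^2+2p\cdot k - 2p\cdot q)},
\]
valid on the domain of integration where the denominator is strictly positive (this is exactly the region $|k+p|>1>|k|$, $|q-p|>x^{1/3}>|q|$, as noted in the text after Definition~\ref{def:scattering-phi-eta}). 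This factorizes the $k$ and $q$ integrals. Each of them becomes an integral of $e^{\pm 2 s p\cdot k}$ over a spherical shell intersected with a half-space (a "lens"), which after choosing polar coordinates with axis along $p$ reduces to a one-dimensional integral of elementary functions. One then performs the $s$-integral last. An equivalent and perhaps cleaner route is to first do the angular integral over the direction of $p$ in the outer integral $\int_{\R^3}dp$, and reduce everything to nested radial integrals; either way the computation is elementary but lengthy.

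The key steps, in order: (i) fix $x>0$ and write $g(x,p)$ with the denominator exponentiated as above; (ii) evaluate $J_1(s):=\int_{|k+p|>1>|k|}dk\, e^{2s p\cdot k}$ and $J_2(s):=\int_{|q-p|>x^{1/3}>|q|}dq\, e^{-2s p\cdot q}$ in closed form — these are of the form (volume/exponential integrals over the symmetric difference of two balls) and produce expressions in $e^{\pm 2s|p|}$, $e^{\pm 2s|p|x^{1/3}}$ divided by powers of $s|p|$; (iii) carry out $\int_0^\infty ds\, e^{-2sp^2}J_1(s)J_2(s)$, which yields $g(x,p)$ as a function of $|p|$ with logarithmic and rational pieces, having the three distinguished regimes $|p|$ small, $|p|$ near $1\pm x^{1/3}$, and $|p|$ large; (iv) subtract the counterterm $x/p^2$ and integrate over $p\in\R^3$ (i.e.\ $4\pi\int_0^\infty |p|^2 d|p|$), splitting the radial integral at the break points $|1-x^{1/3}|$ and $1+x^{1/3}$; (v) collect terms and match with \eqref{eq:def-F}, verifying the prefactor $\frac{4}{\pi}(6\pi^2)^{1/3}$ against $\frac{(6\pi^2)^{1/3}}{35}$ and checking the special value $F(1)=\frac{48}{35}(11-2\log 2)(6\pi^2)^{1/3}$ as well as continuity and monotonicity, and the scaling identity $F(x^{-1})=x^{-7/3}F(x)$ as a consistency check. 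It is worth noting that this integral already appears in the physics literature (\cite{Kanno,ChaWoj,Pera}), so the closed form is known; the task here is just to reproduce it rigorously.

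The main obstacle will be bookkeeping: the inner integrals $J_1, J_2$ over lens-shaped regions, and especially the final radial $|p|$-integral after subtracting the counterterm, produce many terms with $\ln(1\pm x^{1/3})$, $\ln|1-x^{1/3}|$ and rational functions of $x^{1/3}$, and one must be careful about (a) the absolute convergence of the $p$-integral near $|p|=0$ (where the $x/p^2$ counterterm is essential — the symmetry argument as in \eqref{inte} of Lemma~\ref{lem: gap constant} shows the $O(1/p^2)$ singularity cancels), (b) the integrable singularities at $|p|=|1-x^{1/3}|$ and $|p|=1+x^{1/3}$ coming from the boundaries of the integration regions, and (c) the large-$|p|$ decay. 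A clean way to organize (a) is to use the identity $\frac{1}{2p^2+2p\cdot(k-q)}-\frac{1}{2p^2}=\frac{p\cdot(q-k)}{2p^4}+\frac{(p\cdot(q-k))^2}{2p^4(p^2+p\cdot(k-q))}$ before integrating, as in \eqref{inte}, killing the linear term by symmetry and leaving a manifestly integrable expression. Apart from this care near the singular loci, the remaining work is elementary calculus; I would relegate the bulk of it to a computation and only display the intermediate closed form for $g(x,p)$ and the final antiderivatives, then evaluate at the endpoints to arrive at \eqref{eq:def-F}.
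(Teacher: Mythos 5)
Your plan is sound in principle, but it takes a genuinely different and substantially more laborious route than the paper. You propose to exponentiate the denominator via the Schwinger trick, evaluate the two lens-shaped $k$- and $q$-integrals explicitly, then perform the $s$-integral and finally the radial $p$-integral after subtracting the counterterm. Nothing in this path is wrong — the result is known from \cite{Kanno}, and your identification of the delicate points (a)--(c), in particular the cancellation at $p\to 0$ via the decomposition used in \eqref{inte}, is accurate. However, carrying this out rigorously requires computing the lens integrals $J_1,J_2$ case-by-case in $|p|$, handling the borderline singularities of $\int_0^\infty ds$ near $s=0$ where the $1/s^n$ factors from the shell integrals appear, and then performing a piecewise radial integration with logarithmic integrands — all before one can even begin matching to \eqref{eq:def-F}.

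The paper sidesteps most of this with three structural observations, each of which removes a layer of bookkeeping. First, an $i\varepsilon$ regularization and a reflection symmetry kill the $\chi(k+p)\chi(x^{-1/3}(q-p))$ cross term, so $g$ decomposes as $g_0-g_1-g_2$ with only one momentum cutoff per term. Second, the ``free'' piece $g_0$ cancels the counterterm $x/p^2$ identically after $p$-integration, by a short Fourier-transform argument, so the problematic $p\to 0$ region never needs to be analyzed. Third, the scaling identity $g_2(x,p)=x^{4/3}g_1(x^{-1},x^{-1/3}p)$ reduces everything to a single function $f(x)=\int dp\,g_1(x,p)$, and two derivatives in $x$ convert both spherical-shell integrals into sphere integrals (so only the $p$-integral over a ball remains, which has a simple closed form). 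The price is two integration constants, one of which is fixed by $F(0)=0$ and the other drops out of \eqref{eq:F-symmetric-form}. In short: your approach trades conceptual economy for symbolic manipulation; the paper's approach buys a much shorter computation at the cost of noticing the two cancellations and the scaling relation up front. If you pursue your route, be prepared to distinguish the regimes $|p|<|1-x^{1/3}|$, $|1-x^{1/3}|<|p|<1+x^{1/3}$, and $|p|>1+x^{1/3}$ throughout, and to verify the internal consistency checks ($F(1)$, monotonicity, $F(x^{-1})=x^{-7/3}F(x)$) at the end; these are the natural places where sign or factor errors in a long direct computation get caught.
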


\begin{proof} Let us rewrite $g$ as
\begin{align*}
g(x,p) &= \frac 9{8\pi^2} \lim_{\eps\to 0} \int dk\, \chi(k) \int dq\, \chi(x^{-1/3}q) \frac {\chi^c(k+p) \chi^c(x^{-1/3}(q-p))}{(k+p)^2 - k^2 + (q-p)^2 - q^2 + i \eps}     \\ 
&= \frac 9{8\pi^2}  \lim_{\eps\to 0} \Re \int dk\, \chi(k) \int dq\, \chi(x^{-1/3}q)\frac {1 - \chi(k+p) - \chi(x^{-1/3}(q-p))}{(k+p)^2 - k^2 + (q-p)^2 - q^2 + i \eps}   
\end{align*}
where $\chi$ stands for the characteristic function of the unit ball, $\chi^c = 1-\chi$ and we have used that 
$$
 \Re \int dk\, \chi(k) \int dq\, \chi(x^{-1/3}q) \frac {\chi(k+p)  \chi(x^{-1/3}(q-p))}{(k+p)^2 - k^2 + (q-p)^2 - q^2 + i \eps}   =0
$$
because of symmetry. In particular, $g = g_0-g_1-g_2$ with
\begin{align*}
g_0(x,p) &= \frac 9{8\pi^2}  \lim_{\eps\to 0} \Re \int dk\, \chi(k) \int dq\, \chi(x^{-1/3}q)\frac {1}{(k+p)^2 - k^2 + (q-p)^2 - q^2 + i \eps}   , \\
g_1(x,p) &= \frac 9{8\pi^2}  \lim_{\eps\to 0} \Re \int dk\, \chi(k) \int dq\, \chi(x^{-1/3}q)\frac {\chi(k+p)}{(k+p)^2 - k^2 + (q-p)^2 - q^2 + i \eps} , \\
g_2(x,p) &= \frac 9{8\pi^2}  \lim_{\eps\to 0} \Re \int dk\, \chi(k) \int dq\, \chi(x^{-1/3}q)\frac { \chi(x^{-1/3}(q-p))}{(k+p)^2 - k^2 + (q-p)^2 - q^2 + i \eps} .
\end{align*}
We claim that
$$
\int_{\R^3} dp \left(   \frac x{ p^2}  - g_0(x,p)\right)  = 0
$$
which follows from
$$
\lim_{\eps\to 0} \Re \int_{\R^3} dp \left(   \frac 1{ 2p^2}  - \frac {1}{(k+p)^2 - k^2 + (q-p)^2 - q^2 + i \eps} \right)   = 0
$$
for fixed $k$ and $q$. In fact, the Fourier transform of the integrand above is proportional to 
$$
\frac 1{|y|} \left( 1 - e^{iy (k-q)/2} e^{-\sqrt{ - (k-q)^2/4 + i \eps} |y| }\right).
$$
Taking the real part and $\eps\to 0$, this evaluates to $0$ at $y=0$. 

We conclude that 
\begin{align*}
F(x) & = \frac 4 \pi  (6\pi^2)^{1/3}  \int_{\R^3} dp \left(   g_1(x,p) + g_2(x,p) \right)  .
\end{align*}
By simple scaling, $g_2(x,p) = x^{4/3} g_1(x^{-1},x^{-1/3}p)$, hence also
\begin{align}\label{eq:F-symmetric-form}
F(x)  = \frac 4 \pi  (6\pi^2)^{1/3} \left( f(x) + x^{7/3} f(x^{-1})\right) \qquad \text{with} \quad f(x) =  \int_{\R^3}  dp\,  g_1(x,p)   .
\end{align}
We have
\begin{align*}
f(x) & = \frac 9{8\pi^2}  \lim_{\eps\to 0} \Re \int dp \int dk\, \chi(k) \int dq\, \chi(x^{-1/3}q)\frac {\chi(k+p)}{(k+p)^2 - k^2 + (q-p)^2 - q^2 + i \eps}  \\
 & = \frac 9{8\pi^2}  \lim_{\eps\to 0} \Re   \int dk \, \chi(k) \int dq\, \chi(x^{-1/3}q) \int dp\, \frac {\chi(p)}{p^2 - k^2 + (q+k-p)^2 - q^2 + i \eps} dp dq dk .
\end{align*}
Let us compute
\begin{align*}
f'(x) & =  \frac 3{8\pi^2}  \lim_{\eps\to 0} \Re   \int_{\R^3} dk\, \chi(k) \int_{\Ss^2} d\omega  \int_{\R^3} dp\, \frac {\chi(p)}{p^2 - k^2 + (x^{1/3} \omega+k-p)^2 - x^{2/3} + i \eps}  \\ 
& =  \frac 3{8\pi^2} x^{4/3} \lim_{\eps\to 0} \Re   \int_{\R^3} dk\, \chi(x^{1/3} k) \int_{\Ss^2} d\omega \int_{\R^3} dp\, \frac {\chi(x^{1/3} p)}{p^2 - k^2 + (\omega+k-p)^2 - 1+ i \eps} 
\end{align*}
and
\begin{align*}
x^2 \left( x^{-4/3} f'(x)\right)' & =    - \frac 1{8\pi^2}   \lim_{\eps\to 0} \Re   \int_{\Ss^2} d\nu  \int_{\Ss^2} d\omega \int_{\R^3} dp\, \frac {\chi(x^{1/3} p)}{p^2 - x^{-2/3}  + (\omega+x^{-1/3}\nu -p)^2 - 1+ i \eps}  \\ 
& \quad -   \frac 1{8\pi^2}  \lim_{\eps\to 0} \Re   \int_{\R^3} dk\, \int_{\Ss^2} d\omega \int_{\Ss^2}d\nu\, \frac { \chi(x^{1/3} k) }{ x^{-2/3} - k^2 + (\omega+k- x^{-1/3} \nu)^2 - 1+ i \eps} 
\end{align*}
which can further be rewritten as
\begin{align*}
x^{7/3} \left( x^{-4/3} f'(x)\right)' & =    - \frac 1{8\pi^2}   \lim_{\eps\to 0} \Re   \int_{\Ss^2} d\nu  \int_{\Ss^2} d\omega  \int_{\R^3} dp\, \frac {\chi( p)}{p^2 - 1  + (x^{1/3}\omega+\nu -p)^2 - x^{2/3}+ i \eps}  \\ 
& \quad -   \frac 1{8\pi^2}  \lim_{\eps\to 0} \Re   \int_{\Ss^2} d\omega \int_{\Ss^2}d\nu \int_{\R^3}dp\, \frac {\chi( p)}{1- p^2   + (x^{1/3}\omega+\nu -p)^2 - x^{2/3}+ i \eps}  . 
\end{align*}
For the integration over $p$, we use that (for $a^2 \geq 4b$)
$$
\lim_{\eps\to 0} \Re  \int dp\, \frac {\chi(p)}{p^2 + a p + b + i \eps}  = 2\pi -\frac \pi 2 \sqrt{a^2 - 4 b} \ln \left| \frac { b-1 - \sqrt{a^2-4b}}{b-1+ \sqrt{a^2 - 4b} }\right| + \frac \pi 2 \frac{a^2 -2b -2}{|a|} \ln \left| \frac{ b+1-|a|}{b+1+|a|}\right|
$$
as well as
$$
\lim_{\eps\to 0} \Re  \int dp]m \frac {\chi(p)}{ a p + b + i \eps}  =  \frac {2\pi b}{a^2} - \frac \pi {|a|^3} \left( a^2 - b^2\right) \ln \left| \frac{ b -|a|}{b+|a|}\right|. 
$$
Straightforward manipulations then lead to
$$
- 8 \pi x^{7/3} \left( x^{-4/3} f'(x)\right)' =  8\pi^2\left( 2   + \frac 1{x^{1/3}} (x^{2/3}-1) \ln  \frac {|1-x^{1/3}|}{1+x^{1/3}}  \right).
$$

Consequently, we conclude that 
$$
x^{-4/3} f'(x) - \frac{3}{7}A = \frac 15 \left( 9 x^{-4/3} -  2 x^{-2/3} - \left( 3 x^{-5/3} - 5x^{-1} \right)  \ln  \frac {|1- x^{1/3}|}{1+x^{1/3}}  + 2 \ln \frac{x^{2/3}}{|1-x^{2/3}|}  \right)
$$
%
and hence 
$$
f(x) - A x^{7/3} - B = \frac 9{14} x^{1/3} + \frac{99}{70} x - \frac 6{35} x^{5/3} + \frac{3}{20} \left( \frac{15}{7}  -6 x^{2/3} + 5x^{4/3} \right) \ln  \frac {|1- x^{1/3}|}{1+x^{1/3}}  + 24 x^{7/3} \ln \frac{x^{2/3}}{|1-x^{2/3}|}
$$
%
for some constants $A,B\in \R$. Since 
$$ 
0= F(0) = \frac 4\pi (6\pi^2)^{1/3}\left( f(0) + \lim_{x\to 0} x^{7/3} f(x^{-1}) \right) =  \frac 4\pi (6\pi^2)^{1/3}\left( B + A \right)
$$
necessarily $B=-A$. The value of $A$ is irrelevant as it drops out in the calculation of $F$ using \eqref{eq:F-symmetric-form}, and we arrive at the desired formula \eqref{eq:def-F}.
\end{proof}



\section{Estimate of the integral in \eqref{eq: integral with singularity}}\label{app: integral singularity}

In this section we prove the bound claimed in \eqref{eq: integral with singularity}.


\begin{lemma}\label{lem:F2} 
\[
 \sup_{0<x\leq 1}  \int_{\R^3}dp\int _{|r| < 1 < |r+p|} dr \int_{|r^\prime| <x < |r^\prime -p|} dr^\prime\frac{1}{(|r+p|^2 - |r|^2 + |r^\prime - p|^2 - |r^\prime|^2 )^2}  <\infty.  
\]
\end{lemma}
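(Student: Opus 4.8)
The plan is to decouple the $r$- and $r'$-integrations by an elementary AM--GM step, remove the parameter $x$ by scaling, and then control the resulting one-variable integral by combining a uniform bound with the correct decay at infinity.

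On the domain of integration both $\lambda_{r,p}:=|r+p|^2-|r|^2$ and $\lambda_{r',-p}:=|r'-p|^2-|r'|^2$ are strictly positive, so $(\lambda_{r,p}+\lambda_{r',-p})^2\ge 4\lambda_{r,p}\lambda_{r',-p}$ and the triple integral is bounded by $\tfrac14\int_{\R^3}I(p)J_x(p)\,dp$, where $I(p):=\int_{|r|<1<|r+p|}\lambda_{r,p}^{-1}\,dr$ and $J_x(p):=\int_{|r'|<x<|r'-p|}\lambda_{r',-p}^{-1}\,dr'$. Substituting $r'=xs$ (and then $s\mapsto-s$) identifies $J_x(p)=x\,I(p/x)$, and substituting $p=xq$ afterwards reduces the claim to $\sup_{0<x\le1}x^4\int_{\R^3}I(xq)\,I(q)\,dq<\infty$.

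The key input is that $I$ is bounded, $I(p)\le C_0$ for all $p\in\R^3$, and decays, $I(p)\le C_1|p|^{-2}$ for $|p|\ge1$. The decay is immediate: for $|p|\ge4$ and $|r|<1$ one has $\lambda_{r,p}=|p|^2+2r\cdot p\ge|p|^2/2$, hence $I(p)\le\tfrac{8\pi}{3}|p|^{-2}$; for $1\le|p|\le4$ it follows from boundedness. The uniform bound is the one point requiring a small computation: in cylindrical coordinates with axis along $p$, writing $t=|p|$ and $\lambda_{r,p}=t(2r_3+t)$, the radial integral $\int 2\pi\rho\,d\rho$ over the slice $\{r_3=z\}$ equals $\pi\big((1-z^2)-\max(0,1-(z+t)^2)\big)$; on the portion of the slice where the inner radial limit is set by the constraint $|r+p|>1$ (i.e. where the $\max$ equals $1-(z+t)^2$) this equals exactly $\pi\,\lambda_{r,p}$, so the quotient is $\pi$ and the corresponding range of $z$ has length $\le1$, while the complementary slices (inner limit $\rho=0$) lie in a thin shell near $|r|=1$ and are controlled using $1-z^2\le2(1-z)$ together with the available lower bound on $\lambda_{r,p}$ there. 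An explicit evaluation of $I(p)$ as an elementary function of $|p|$ is also available, if preferred.

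Finally, split $\int_{\R^3}I(xq)I(q)\,dq$ over $|q|\le1/x$ and $|q|>1/x$. On the first region bound $I(xq)\le C_0$ and $\int_{|q|\le1/x}I(q)\,dq\le \tfrac{4\pi}{3}C_0+C_1\int_{1<|q|\le1/x}|q|^{-2}\,dq\le Cx^{-1}$. On the second region $x|q|>1$ and $|q|>1$, so $I(xq)\le C_1(x|q|)^{-2}$ and $I(q)\le C_1|q|^{-2}$, giving $C_1^2\,x^{-2}\int_{|q|>1/x}|q|^{-4}\,dq=Cx^{-1}$. In both cases the contribution to $x^4\int_{\R^3}I(xq)I(q)\,dq$ is $\le Cx^3\le C$, uniformly in $x\in(0,1]$, which is the assertion. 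The main obstacle is thus exactly the uniform bound $I(p)\le C_0$: $\lambda_{r,p}$ degenerates on part of the boundary of $\{|r|<1<|r+p|\}$ — in particular as $|p|$ crosses the value $2$, where this set becomes the whole unit ball — and the cancellation $\pi\big((1-z^2)-(1-(z+t)^2)\big)=\pi\lambda_{r,p}$ is precisely what makes the singularity integrable uniformly in $p$.
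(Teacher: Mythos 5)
Your proof is correct, but it takes a genuinely different route from the paper's. The paper bounds the squared denominator below by a sum of three squares $e_{r+p}^2+e_r^2+x^2(x-|r'|)^2$, treats $|p|\geq 2$ separately, then splits according to which of $e_{r+p}$, $e_r$ is larger, integrates out the variable carrying the larger coefficient over a spherical shell, and finishes with the explicit Lorentzian integral \eqref{eq:int-1/1+s^2} that produces the compensating factor $x$. You instead use AM--GM $(a+b)^2\geq 4ab$ to \emph{decouple} the $r$- and $r'$-integrations into a product $I(p)J_x(p)$, observe the scaling identity $J_x(p)=x\,I(p/x)$ which eliminates $x$ from the problem, and reduce everything to two properties of a single one-variable function $I$: uniform boundedness and $|p|^{-2}$ decay. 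The decay is trivial; the boundedness is the real content, and you obtain it by a cylindrical integration exploiting the exact cancellation $\bigl((1-z^2)-(1-(z+t)^2)\bigr)=\lambda_{r,p}$. The advantages of your route are that it is symmetric in $r,r'$, removes $x$ from the picture at the outset, and isolates the essential difficulty (uniform boundedness of $I$) in a modular way; the paper's route is more direct but keeps $r$ and $r'$ entangled until the last step. One small remark: your phrase that the complementary slices (where $z+t\geq 1$, inner radial limit $\rho=0$) lie ``in a thin shell near $|r|=1$'' is imprecise when $|p|$ is of order one or large; but the estimate you need there is simply $\lambda_{r,p}=(z+t)^2-z^2\geq 1-z^2$ whenever $z+t\geq 1$, which bounds the quotient by $\pi$ over a $z$-range of length at most $2$, and that is what you implicitly use, so the argument stands.
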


\begin{proof} 
We start by observing that it suffices to integrate only over $|p|\leq 2$. For $|p|\geq 2$, the integrand is bounded by $C/|p|^4$, hence the resulting integral is finite and bounded uniformly in $x$. 
%
%

In the following, let us use the notation $e_{k} := ||k|^2 - 1|$. Using that $|r'-p|\geq x$ and $x^2 - |r'|^2 \geq x(x-|r'|)$, the integral to consider is bounded above by 
\[
  \mathrm{I} :=  \int_{|p|< 2}dp\int _{|r| < 1 < |r+p|} dr \int_{|r^\prime| < x < |r^\prime -p|} dr^\prime\frac{1}{e_{r+p}^2 + e_{r}^2 + x^2 (x - |r^\prime|)^2 }. 
\]
%
Let us split the domain of integration into two domains, depending on whether $e_{r+p}\leq e_r$ or $e_r < e_{r+p}$, and denote the corresponding integrals by $\mathrm{I}_1$ and $\mathrm{I}_2$. 
%
In the first case, we have $1 < |r+p|^2 \leq 2 - r^2$. Dropping the positive term $e_{r+p}^2$ in the denominator, we thus get the upper bound
%
$$
  \mathrm{I}_1 \leq \frac {4\pi} 3 \int_{|r| <1} dr \int_{|r^\prime| < x} dr^\prime \,    \frac{ (2-|r|^2)^{3/2} - 1}{ e_r^2 + x^2 (x-|r'|)^2}
\leq C \int_{|r| <1} dr \int_{|r^\prime| < x} dr^\prime\, \frac{e_r}{e_r^2 + x^2 (x-|r'|)^2}
$$
where we have used that $(2-|r|^2)^{3/2} - 1 \le C(1-r^2) =Ce_r$ for $|r|\le 1$.
We further have, for any $y>0$, 
\begin{align}\label{eq:int-1/1+s^2}
\int_{|r^\prime| < x} dr'\,  \frac{y}{y^2 + x^2 (x- |r'|)^2} = 4\pi   \int_0^x ds\, \frac{ y s^2  }{y^2 + x^2(x-s)^2}\le  4\pi \int_0^\infty ds\, \frac{ y x^2 }{y^2 + x^2 s^2} = 2\pi^2  x.  
\end{align}
Hence $\mathrm{I}_1\leq C$. For the second term $I_2$ we proceed similarly. 
%
If $e_r < e_{r+p}$ we have $(2- |r+p|^2)_+ \le |r|^2 \le 1$, with $(\,\cdot\,)_+$ denoting the positive part.  
We change variables to $q=r+p$. Using $1<|q|\le |r|+|p|\le 3$ and $1- (2- |q|^2)_+^{3/2} \le C( |q|^2-1) = C e_q$ we get
\begin{eqnarray*}
  \mathrm{I}_2 &\leq& \int_{|r| <1} dr \int_{|r^\prime| < x} dr^\prime\int_{1<|q|< 3} dq\,  \ind_{\{  (2- |q|^2)_+ \le |r|^2 \le 1\}}\frac{1}{ e_q^2 + x^2 (x-|r^\prime|)^2}
  \\
  &=& \frac {4\pi}3  \int_{1<|q|< 3} dq \int_{|r^\prime| < x} dr^\prime \, \frac{  ( 1- (2- |q|^2)_+^{3/2}}{e_q^2 + x^2 (x-|r^\prime|)^2} \leq C \int_{|q| <3} dq   \int_{|r^\prime| < x} dr' \, \frac{e_q}{e_q^2 +x^2 (x-|r^\prime|)^2}
\end{eqnarray*}
which is finite due to \eqref{eq:int-1/1+s^2}. 
\end{proof}

\end{document}